\newtheorem{theorem}{Theorem}[section]
\newtheorem{lem}[theorem]{Lemma}
\newtheorem{cor}[theorem]{Corollary}
\newtheorem{prop}[theorem]{Proposition}
\theoremstyle{definition}
\newtheorem{defi}[theorem]{Definition}
\newtheorem{example}[theorem]{Example}
\theoremstyle{remark}
\newtheorem{rem}[theorem]{Remark}
\newtheorem{que}[theorem]{Question}
\newcommand{\BC}{\mathbb{C}}            %% field of complex numbers
\newcommand{\BZ}{\mathbb{Z}}            %% ring of natural numbers  
\newcommand{\BM}{\mathbb{M}}           %% field of meromorphic functions
\newcommand{\BD}{\mathbb{D}}            %% difference operators
\newcommand{\BI}{\mathbf{i}}                 %% imaginary number
\newcommand{\CM}{\mathcal{M}}                %% elliptic monoid
\newcommand{\CMw}{\mathcal{M}_{\mathrm{w}}}                %% weighted elliptic monoid
\newcommand{\CMt}{\mathcal{M}_{\mathrm{t}}}          %% infinite sums of weighted elliptic monoids 
\newcommand{\CR}{\mathcal{R}}               %% highest weights
\newcommand{\CRf}{\mathcal{R}_{\mathrm{fd}}}   %% finite-dimensional highest weights
\newcommand{\CQ}{\mathcal{Q}}      %% elliptic root lattice
\newcommand{\CP}{\mathcal{P}}         %% elliptic weight lattice
\newcommand{\End}{\mathrm{End}}          %% Endomorphism algebra
\newcommand{\Id}{\mathrm{Id}}            %% Identity
\newcommand{\Glie}{\mathfrak{g}}     %% Lie algebra
\newcommand{\Hlie}{\mathfrak{h}}      %% Cartan subalgebra sl
\newcommand{\BQ}{\mathbf{Q}}       %% root lattice
\newcommand{\SW}{\mathcal{W}}         %% branching weights
\newcommand{\SP}{\mathscr{P}}         %% partitions
\newcommand{\SB}{\mathscr{B}}         %% tableaux
\newcommand{\CFm}{\mathcal{F}_{\mathrm{mer}}}   %% meromorphic abelian category
\newcommand{\Rep}{\mathbf{Rep}}    %% category of representations
\newcommand{\rep}{\mathbf{rep}}
\newcommand{\BGG}{\mathcal{O}}         %% category BGG
\newcommand{\tBGG}{\widetilde{\mathcal{O}}}    %% extended category BGG
\newcommand{\HJ}{\BGG_{\mathrm{HJ}}}
\newcommand{\BGGf}{\mathcal{O}_{\mathrm{fd}}}   %% finite-dim BGG
\newcommand{\CF}{\mathcal{F}}         %% finite-dim
\newcommand{\wt}{\mathrm{wt}}       %% the set of weights
\newcommand{\ewt}{\mathrm{wt}_{\mathrm{e}}}         %% set of e-weights
\newcommand{\qc}{\chi_{\mathrm{q}}}            %% q-character
\newcommand{\CV}{\mathcal{V}}       %% weighted vector spaces
\newcommand{\CVf}{\mathcal{V}_{\mathrm{ft}}}   %% finite-dimensional weight spaces
\newcommand{\BV}{\mathbf{V}}         %% vector representation
\newcommand{\CW}{\mathscr{W}}     %% asymptotic representation
\newcommand{\SF}{\mathscr{F}}         %% inductive map
\newcommand{\SG}{\mathscr{G}}     %% inductive map difference
\newcommand{\SL}{\mathscr{L}}     %% asymptotic formula
\newcommand{\SD}{\mathscr{D}} 
\newcommand{\CE}{\mathcal{E}}    %% elliptic quantum group
\newcommand{\SE}{\mathfrak{e}}     %% small elliptic quantum group
\newcommand{\BR}{\mathbf{R}}     %% natural R-matrix
\newcommand{\eD}{\mathcal{D}}   %% elliptic determinant
\newcommand{\hL}{\hat{L}}   %% auxiliary L
\newcommand{\dt}{\widetilde{\otimes}}     %% dynamical tensor product
\newcommand{\wtimes}{\bar{\otimes}}     %% weighted tensor products
\newcommand{\mol}{\mu_{\mathbf{l}}}  %% left moment map
\newcommand{\mor}{\mu_{\mathbf{r}}}  %% right moment map
\newcommand{\ev}{\mathrm{ev}}         %% evaluation map
\newcommand{\Ba}{\mathbf{a}}  
\newcommand{\Bb}{\mathbf{b}}
\newcommand{\Bd}{\mathbf{d}}       % e-weights
\newcommand{\Be}{\mathbf{e}}
\newcommand{\Bf}{\mathbf{f}}
\newcommand{\Bg}{\mathbf{g}}
\newcommand{\Bw}{\mathbf{w}}       % asymptotic e-weights
\newcommand{\Bm}{\mathbf{m}}     % Demazure e-weights
\newcommand{\Bs}{\mathbf{s}}
\begin{document}
\title[Elliptic quantum groups]{Elliptic quantum groups and Baxter relations}
\author{Huafeng Zhang}
\address{Laboratoire Paul Painlev\'e \& 
Universit\'e Lille 1, 
59655 Villeneuve d'Ascq, France  }
\email{Huafeng.Zhang@math.univ-lille1.fr}
\begin{abstract}
We introduce a category $\mathcal O$ of modules over the
elliptic quantum group of $\mathfrak{sl}_N$ with
well-behaved q-character theory. We construct asymptotic modules as analytic continuation of a family of finite-dimensional modules, the Kirillov--Reshetikhin modules. In the Grothendieck ring of
this category we prove two types of identities: generalized Baxter relations in the spirit of Frenkel--Hernandez between finite-dimensional modules and asymptotic modules;  three-term Baxter TQ relations of infinite-dimensional modules.
\end{abstract}
\maketitle
\setcounter{tocdepth}{1}
\tableofcontents
\section*{Introduction}
Fix $\mathfrak{sl}_N$ a special linear Lie algebra, $\BC/(\BZ + \BZ \tau)$ an elliptic curve, and $\hbar$ a complex number. Associated to this triple is the {\it elliptic quantum group} $\CE_{\tau,\hbar}(\mathfrak{sl}_N)$ introduced by G. Felder \cite{F}. It is a Hopf algebroid (neither commutative nor co-commutative) in the sense of Etingof--Varchenko \cite{EV}, so that the tensor product of two $\CE_{\tau,\hbar}(\mathfrak{sl}_N)$-modules is naturally endowed with a module structure. In this paper we study (finite- and infinite-dimensional) representations of the elliptic quantum group.

\medskip

Suppose $\hbar$ is a formal variable. $\CE_{\tau,\hbar}(\mathfrak{sl}_2)$ is an $\hbar$-deformation \cite{EF} of the universal enveloping algebra of a Lie algebra $\mathfrak{sl}_2 \otimes R_{\tau}$, where $R_{\tau}$ is an algebra of meromorphic functions of $z \in \BC$ built from the Jacobi theta function of the elliptic curve. For $\Glie$ an arbitrary finite-dimensional simple Lie algebra, $\CE_{\tau,\hbar}(\Glie)$ is defined \cite{twist} to be a quasi-Hopf algebra twist of the {\it affine quantum group} $U_{\hbar}(L\Glie)$, an $\hbar$-deformation of the loop Lie algebra $\Glie \otimes \BC[z,z^{-1}]$. It admits a universal dynamical R-matrix in a completed tensor square, which provides solutions $R(z;\lambda) \in \mathrm{End}(V\otimes V)$, for $V$ a suitable $\CE_{\tau,\hbar}(\Glie)$-module, to the {\it quantum dynamical Yang--Baxter Equation}:
\begin{multline*}
R_{12}(z-w;\lambda+\hbar h^{(3)}) R_{13}(z;\lambda) R_{23}(w;\lambda + \hbar h^{(1)}) \\
= R_{23}(w;\lambda)R_{13}(z;\lambda+\hbar h^{(2)}) R_{12}(z-w;\lambda) \in \mathrm{End}(V^{\otimes 3}).
\end{multline*}
Here $z, w$ are complex spectral parameters, $\lambda$ is the dynamical parameter lying in a Cartan subalgebra of $\Glie$, the sub-indexes of $R$ indicate the tensor factors of $V^{\otimes 3}$ to be acted on, and the $h^{(i)}$ are grading operators arising from the weight grading on $V$ by the Cartan subalgebra. See the comments following Eq.\eqref{equ: dYBE elliptic}.

Such R-matrices $R(z;\lambda)$ appeared previously in face-type integrable models \cite{FV2,HSY}; for instance, the R-matrix of the Andrews--Baxter--Forrester model comes from two-dimensional irreducible modules of $\CE_{\tau,\hbar}(\mathfrak{sl}_2)$, as does the 6-vertex model from the affine quantum group $U_{\hbar}(L\mathfrak{sl}_2)$. The definition of $\CE_{\tau,\hbar}(\mathfrak{sl}_N)$ in \cite{F}, by RLL exchange relations, is in the spirit of Faddeev--Reshetikhin--Takhatajan, originated from Quantum Inverse Scattering Method. We mention that elliptic R-matrices describe the monodromy of the quantized Knizhnik--Zamolodchikov equation associated with representations of affine quantum groups, e.g. \cite{IFR,GS,K0,TV0}. 

Recently Aganagic--Okounkov \cite{AO} proposed the elliptic stable envelope in equivariant elliptic cohomology, as a geometric framework to obtain elliptic R-matrices. This was made explicit \cite{FRV} for cotangent bundles of Grassmannians, resulting in tensor products of two-dimensional irreducible representations of $\CE_{\tau,\hbar}(\mathfrak{sl}_2)$. The higher rank case of $\mathfrak{sl}_N$  was studied later by H. Konno \cite{K3}.

 Meanwhile, Nekrasov--Pestun--Shatashvili \cite{NPS} from the 6d quiver gauge theory predicted the elliptic quantum group associated to an arbitrary Kac--Moody algebra, the precise definition of which (as an associative algebra) was proposed by Gautam--Toledano Laredo \cite{GTL2}. See also \cite{YZ} in the context of quiver geometry.

We are interested in the representation theory of $\CE_{\tau,\hbar}(\Glie)$ with $\hbar \in \BC$ generic. The formal twist constructions \cite{EF,twist} from $U_{\hbar}(L\Glie)$ might reduce the problem to the representation theory of affine quantum groups, which is a subject developed intensively in the last three decades from algebraic, geometric and combinatorial aspects. However {\it loc.cit.} involve formal power series of $\hbar$ and infinite products in the comultiplication of $\CE_{\tau,\hbar}(\Glie)$. Some of these divergence issues was addressed \cite{EM} by Etingof--Moura, who defined a fully faithful tenor functor between representation categories of BGG type for $U_{\hbar}(L\mathfrak{sl}_N)$ and $\CE_{\tau,\hbar}(\mathfrak{sl}_N)$. Towards this functor not much is known: its image, the induced homomorphism of Grothendieck rings, etc.

In this paper we study representations of $\CE_{\tau,\hbar}(\mathfrak{sl}_N)$ via the RLL presentation \cite{F} so as to bypass affine quantum groups, yet along the way we borrow ideas from the affine case. Compared to other works \cite{C,EM,FV1,GTL2,K2,K1,TV,YZ}, our approach emphasizes more on the Grothendieck ring structure of representation category.  It is a higher rank extension of a recent joint work with G. Felder \cite{FZ}.  

\medskip

The presence of the dynamical parameter $\lambda$ is one of the technical difficulties of elliptic quantum groups. 
To resolve this, we need a commuting family of elliptic Cartan currents $\phi_j(z) \in \CE_{\tau,\hbar}(\mathfrak{sl}_N)$ for $j \in J := \{1,2,\cdots,N-1\}$. They act as difference operators on an $\CE_{\tau,\hbar}(\mathfrak{sl}_N)$-module $V$, and their matrix entries are meromorphic functions of $(z,\lambda) \in \BC \times \Hlie$ where $\Hlie$ denotes the Cartan subalgebra of $\mathfrak{sl}_N$. As in \cite{FZ}, we impose the following triangularity condition: \footnote{In terms of the $K_i(z)$ from Eq.\eqref{def: elliptic diagonal}, we have $\phi_j(z) = K_j(z+\ell_j\hbar)K_{j+1}(z+\ell_j\hbar)^{-1}$ where $\ell_j = (N-j-1)/2$. These are elliptic deformations of diagonal matrices in $\mathfrak{sl}_N$. }
\begin{itemize}
\item[(i)] there exists a basis of $V$, with respect to which the matrices $\phi_j(z)$ are upper triangular and their diagonal entries are independent of $\lambda$.
\end{itemize}
Our category $\BGG$ is the full subcategory of category BGG \cite{EM} of $\CE_{\tau,\hbar}(\mathfrak{sl}_N)$-modules subject to Condition (i); see Definition \ref{def: O}. It is abelian and monoidal. It contains most of the modules in \cite{C,EM,K2,K1,TV}, although the proof is rather indirect. (We believe category $\BGG$ to be the image of the functor \cite{EM}.)

We extend the {\it q-character} of H. Knight \cite{Kn} and Frenkel--Reshetikhin \cite{FR} to the elliptic case. The q-character of a module $V$ encodes the spectra of the $\phi_j(z)$, which are meromorphic functions of $z$ thanks to Condition (i). It distinguishes the isomorphism class $[V]$ in the Grothendieck ring $K_0(\BGG)$, and embeds $K_0(\BGG)$ in a commutative ring. Our main results are summarized as follows.
\begin{itemize}
\item[(ii)] Proposition \ref{prop: asymptotic representations} on limit construction of infinite-dimensional {\it asymptotic} modules $\CW_{r,x}$, for $r \in J$ and $x \in \BC$, from a distinguished family of finite-dimensional modules, the {\it Kirillov--Reshetikhin} modules.
\item[(iii)] Theorem \ref{thm: generalized TQ} on generalized Baxter relations \`a la Frenkel--Hernandez \cite{FH}: the isomorphism class of any finite-dimensional module is a polynomial of the $\frac{[\CW_{r,x}]}{[\CW_{r,y}]}$ for $r \in J$ and $x,y \in \BC$.
\item[(iv)] Corollary \ref{cor: TQ} relating an asymptotic module $\CW$ to a module $D$ and tensor products $S',S''$ of asymptotic modules such that $[D] [\CW] = [S'] + [S'']$.
\end{itemize}

\medskip

The above results are known in category $\HJ$ of Hernandez--Jimbo \cite{HJ} for representations over a Borel subalgebra of an affine quantum group $U_{\hbar}(L\Glie)$. Category $\HJ$ contains the modules $L_{r,a}^{\pm}$ for $a \in \BC$ and $r$ a Dynkin node of $\Glie$. The $L_{r,a}^{\pm}$ are ``prefundamental" in that their tensor products realize all irreducible objects of $\HJ$ as sub-quotients, and they are not modules over $U_{\hbar}(L\Glie)$, which makes Borel subalgebras indispensable. The Grothendieck ring of $\HJ$ is commutative.

(ii) is the asymptotic limit construction \cite{HJ} of the $L_{r,a}^-$. (iii) is the relation \cite{FH} between finite-dimensional modules and the $L_{r,a}^+$. (iv) is either $QQ^*$-system \cite{Jimbo2,HL} or $Q\widetilde{Q}$-system \cite{FH2}, as there are two choices of the modules $D$ for $\CW = L_{r,a}^+$.

Hernandez--Leclerc \cite{HL} interpreted the $QQ^*$ system \cite{HL} as cluster mutations of Fomin--Zelevinsky. They provided conjectural monoidal categorifications of infinite rank cluster algebras by certain subcategories of $\HJ$. 

In a quantum integrable system associated to $U_{\hbar}(L\Glie)$, the transfer-matrix construction  defines an action of the Grothendieck ring $K_0(\HJ)$ on the quantum space; to an isomorphism class $[V]$ is attached a transfer matrix $t_V(z)$.  

(iii) is one key step \cite{FH} in solving the conjecture of Frenkel--Reshetikhin \cite{FR} on the spectra of the quantum integrable system, which connects the eigenvalues of the $t_V(z)$ to the q-character of $V$ by the so-called Baxter polynomials \cite{Baxter72}. These polynomials are  eigenvalues of the $t_{L_{r,a}^+}(z)$ up to an overall factor \cite{FH}. In this sense the $L_{r,a}^+$ have simpler structures than finite-dimensional modules, and the $t_{L_{r,a}^+}(z)$ are defined as {\it Baxter Q operators}, as an extension of earlier works of V. Bazhanov et al. \cite{BLZ2,BLZ3,BT} for $\Glie$ a special linear Lie (super)algebra. (iv) has as consequence the Bethe Ansatz Equations for the roots of Baxter polynomials \cite{Jimbo2,FH2}. 

 Recently category $\HJ$ was studied for quantum toroidal algebras \cite{Jimbo1}.

\medskip

For elliptic quantum groups there are no obvious Borel subalgebras. Our idea is to replace the $L_{r,a}^{\pm}$ over Borel subalgebras by the asymptotic modules $\CW_{d,a}^{(r)}$ (with a new parameter $d\in \BC$) over the {\it entire} quantum group, which we now explain.

Let $\theta(z) := \theta(z|\tau)$ be the Jacobi theta function. 
For $r \in J$ a Dynkin node, $a \in \BC$ a spectral parameter, and $k$ a positive integer, by \cite{C,TV} there exists a unique finite-dimensional irreducible module $W_{k,a}^{(r)}$ which contains a non-zero vector $\omega$ (highest weight with respect to a triangular decomposition) such that:
$$ \phi_j(z) \omega = \omega \ \mathrm{if}\ j \neq r,\quad \phi_r(z) \omega = \frac{\theta(z+a\hbar + k\hbar)}{\theta(z+a\hbar)} \omega. $$
This is a Kirillov--Reshetikhin (KR) module, a standard terminology for affine quantum groups and Yangians once the $\theta$ symbol is removed.

The core of this paper (Section \ref{sec: asym}) is analytic continuation with respect to $k$. We modify the asymptotic limits $L_{r,a}^-$ of Hernandez--Jimbo \cite{HJ}, as in \cite{FZ,Z3}. 

Firstly the existence of the inductive system  $(W_{k,a}^{(r)})_{k > 0}$ in \cite{HJ} relied on a cyclicity property of M. Kashiwara, Varagnolo--Vasserot and V. Chari, which is unavailable in the elliptic case. We reduce the problem to $\CE_{\tau,\hbar}(\mathfrak{sl}_2)$ by counting ``dominant weights" in q-characters (Theorem \ref{thm: Demazure KR}), as in the proofs of T-system of KR modules over affine quantum groups by H. Nakajima \cite{Nakajima} and D. Hernandez \cite{H1}. 

Secondly we express the matrix coefficients of any element of $\CE_{\tau,\hbar}(\mathfrak{sl}_N)$ acting on the $W_{k,a}^{(r)}$, viewed as functions of $k \in \BZ_{>0}$, in products of the $\theta(k\hbar+ c)$ where $c \in \BC$ is independent of $k$; see Lemma \ref{lem: asymptotic property}. In \cite{HJ} these are polynomials in $k$ by induction. Our proof relies on the RLL comultiplication and is explicit.

$\theta(k\hbar+c)$ being an entire function of $k$, we take $k$ in the matrix coefficients to be a fixed complex number $d$. This results in the asymptotic module $\CW_{d,a}^{(r)}$ on the inductive limit $\lim\limits_{\rightarrow} W_{k,a}^{(r)}$. The module $\CW_{r,x}$ in (ii) is $\CW_{x,0}^{(r)}$. All irreducible modules of category $\BGG$ are sub-quotients of tensor products of asymptotic modules.

\medskip

For $\Glie$ of general type (ii)--(iv) and their proofs can be adapted to affine quantum groups, whose asymptotic modules appeared in \cite[Appendix]{Z3}, as well as Yangians \cite{GTL,GTL1}. Borel subalgebras or double Yangians are not needed.

(ii)--(iv) were established for affine quantum general linear Lie superalgebras \cite{Z2}; their proofs require more \cite{Z1} than q-characters as counting dominant weights is inefficient. It is interesting to consider elliptic quantum supergroups \cite{GS}.

For elliptic quantum groups associated with other simple Lie algebras, one possible first step would be to derive the RLL presentation; see \cite{Guay,JLM} for Yangians.

R-matrix of Baxter--Belavin is governed by the vertex-type elliptic quantum group \cite{twist}. The equivalence \cite{ES} of representation categories between this elliptic algebra and $\CE_{\tau,\hbar}(\mathfrak{sl}_N)$, a Vertex-IRF correspondence, might give a representation theory meaning to the original Baxter Q operator of the 8-vertex model \cite{Baxter72}.

\medskip

The paper is structured as follows. In Section \ref{sec: elliptic} we review the theory
of the elliptic quantum group associated to $\mathfrak {sl}_N$, and define category $\mathcal O$ of representations. We show that the $q$-character map is an injective
ring homomorphism from the Grothendieck ring $K_0(\mathcal O)$ to a commutative ring $\CMt$ of meromorphic functions. 
Then we present the q-character formula of finite-dimensional evaluation modules.  

Section \ref{sec: evaluation} is devoted to the proof of the q-character formula. 

We derive in Section \ref{sec: KR} basic facts on tensor products of KR modules (T-system, fusion) from the q-character formula. They are needed in Section \ref{sec: asym} to construct the inductive system of KR modules and the asymptotic modules. We obtain a highest weight classification of irreducible modules in category $\BGG$. As a consequence, all standard irreducible evaluation modules of \cite{TV} are in category $\BGG$.

In Section \ref{sec: TQ} we establish the three-term Baxter TQ relations in $K_0(\BGG)$, which are infinite-dimensional analogs of the T-system. These relations are interpreted as functional relations of transfer matrices in Section \ref{sec: Q}. 

\section{Elliptic quantum groups and their representations}  \label{sec: elliptic}
Let $N \in \BZ_{>0}$. We introduce a  category $\BGG$ (abelian and monoidal) of representations of the elliptic quantum group attached to the Lie algebra $\mathfrak{sl}_N$, and prove that its Grothendieck ring is commutative, based on q-characters. 

Fix a complex number $\tau \in \BC$ with $\mathrm{Im}(\tau) > 0$. Define the Jacobi theta function 
$$ \theta(z) = \theta(z|\tau) := - \sum_{j=-\infty}^{\infty} \exp\left(\BI \pi (j+\frac{1}{2})^2\tau + 2\BI\pi (j+\frac{1}{2})(z+\frac{1}{2})\right), \quad \BI = \sqrt{-1}.  $$
It is an entire function of $z \in \BC$ with zeros lying on the lattice $\Gamma := \BZ + \BZ \tau$ and 
$$\theta(z+1) = -\theta(z),\quad \theta(z+\tau) = -e^{-\BI\pi \tau-2\BI \pi z} \theta(z),\quad \theta(-z) = - \theta(z).  $$
Fix a complex number $\hbar \in \BC \setminus (\mathbb{Q} + \mathbb{Q}\tau)$, which is the deformation parameter. 

 Let $\Hlie$ be standard Cartan subalgebra of $\mathfrak{sl}_N$; it is a complex vector space generated by the $\epsilon_i$ for $1\leq i \leq N$ subject to the relation $\sum_{i=1}^N \epsilon_i = 0$.  
Let $\BC^N := \oplus_{i=1}^N \BC v_i$ and $E_{ij} \in \End_{\BC}(\BC^N)$ be the elementary matrices: $v_k \mapsto \delta_{jk} v_i$ for $1\leq i,j,k \leq N$. Define the $ \End_{\BC}(\BC^N\otimes \BC^N)$-valued meromorphic functions of $(z;\lambda) \in \BC \times \Hlie$ by:
\begin{align*}
\BR(z;\lambda) = \sum_i E_{ii}^{\otimes 2} + \sum_{i\neq j} \left( \frac{\theta(z)\theta(\lambda_{ij}-\hbar)}{\theta(z+\hbar)\theta(\lambda_{ij})} E_{ii}\otimes E_{jj} + \frac{\theta(z+\lambda_{ij})\theta(\hbar)}{\theta(z+\hbar)\theta(\lambda_{ij})} E_{ij} \otimes E_{ji}\right).
\end{align*}
In the summations $1\leq i,j \leq N$, and $\lambda_{ij}  \in \Hlie^*$ sends $\sum_{i=1}^N c_i \epsilon_i \in \Hlie$ to $c_i - c_j \in \BC$.  By \cite{F},  $\BR(z;\lambda)$  satisfies the {\it quantum dynamical Yang--Baxter equation}:
\begin{multline}   \label{equ: dYBE elliptic}  
\quad \quad \BR^{12}(z-w;\lambda +\hbar h^{(3)}) \BR^{13}(z;\lambda) \BR^{23}(w;\lambda+\hbar h^{(1)}) \\
= \BR^{23}(w;\lambda)\BR^{13}(z;\lambda+\hbar h^{(2)})\BR^{12}(z-w;\lambda) \in \End_{\BC}(\BC^N)^{\otimes 3}.
\end{multline}
If $\BR(z;\lambda) = \sum_{p}c_{\lambda}^px_p \otimes y_p$ with $x_p,y_p \in \End_{\BC}(\BC^N)$, then
$$ \BR^{13}(z;\lambda+\hbar h^{(2)}): u \otimes v_j \otimes w \mapsto \sum_p c^p_{\lambda+\hbar \epsilon_j}  x_p(u)\otimes v_j \otimes y_p(w). $$
for $u,w \in \BC^N$ and $1\leq j \leq N$. The other symbols have a similar meaning.

Let $\BM := \BM_{\Hlie}$ be the field of meromorphic functions of $\lambda \in \Hlie$. It contains the subfield $\BC$ of constant functions. A $\BC$-linear map $\Phi$ of two $\BM$-vector spaces will sometimes be denoted by $\Phi(\lambda)$ to emphasize the dependence on $\lambda$.  

\subsection{Algebraic notions.} \label{ss-h algebras}
Since the elliptic quantum groups will act on $\BM$-vector spaces via difference operators, which are in general not $\BM$-linear, we need to recall some basis constructions about difference operators. Our exposition follows largely \cite{EV}, with minor modifications as in \cite{FZ}.

 Define the category $\CV$ as follows. An object is $X = \oplus_{\alpha\in \Hlie}X[\alpha]$ where each $X[\alpha]$ is an $\BM$-vector space and, if non-zero, is called a weight space of weight (or $\Hlie$-weight) $\alpha$. Let $\wt(X) \subseteq \Hlie$ be the set of weights of $X$. Write $\wt(v) = \alpha$ if $v \in X[\alpha]$.  
 
 A morphism $f: X \longrightarrow Y$ in $\CV$ is an $\BM$-linear map which respects the weight gradings. Let $\CVf$ be the full subcategory of $\CV$ consisting of $X$ whose weight spaces are finite-dimensional $\BM$-vector spaces. (``ft" means finite type in \cite{FV1}.)

Viewed as subcategories of the category of $\BM$-vector spaces, $\CV$ and $\CVf$ are abelian.  
 
Let $X, Y$ be objects of $\CV$. Their {\it dynamical tensor product} $X \wtimes Y$ is constructed as follows. For $\alpha,\beta \in \Hlie$, let $X[\alpha] \wtimes Y[\beta]$ be the quotient of the usual tensor product of $\BC$-vector spaces $X[\alpha]\otimes_{\BC} Y[\beta]$ by the relation 
$$g(\lambda)v \otimes_{\BC} w = v \otimes_{\BC} g(\lambda+\hbar\beta) w \quad \mathrm{for}\ v \in X[\alpha],\ w \in Y[\beta], \ g(\lambda) \in \BM.$$
Let $\wtimes$ denote the image of $\otimes_{\BC}$ under the quotient. $X[\alpha]\wtimes Y[\beta]$ becomes an $\BM$-vector space by setting $g(\lambda) (v \wtimes w) = v \wtimes g(\lambda) w$. For $\gamma \in \Hlie$, the weight space $(X\wtimes Y)[\gamma]$ is then the direct sum of the $X[\alpha] \wtimes Y[\beta]$ with $\alpha+\beta = \gamma$.

For $\alpha,\beta \in \Hlie$, a $\BC$-linear map $\Phi: X \longrightarrow Y$ is called a {\it difference map} of bi-degree $(\alpha,\beta)$ if it sends every weight space $X[\gamma]$ to $Y[\gamma + \beta - \alpha]$, and if  \cite[\S 4.2]{EV}:
  $$   \Phi(g(\lambda) v) = g(\lambda+\beta\hbar) \Phi(v) \quad \mathrm{for}\ g(\lambda) \in \BM \ \mathrm{and}\ v \in X.  $$
Such a map can be recovered from its matrix as in the case of $\BM$-linear maps. Choose $\BM$-bases $\mathcal{B}, \mathcal{B}'$ for $X$ and $Y$ respectively. Define the $\mathcal{B}' \times \mathcal{B}$ matrix $[\Phi]$ by taking its $(b',b)$-entry $[\Phi]_{b' b}(\lambda) \in \BM$, for $b\in \mathcal{B}$ and $b' \in \mathcal{B'}$, to be the coefficient of $b'$ in $\Phi(b)$.
Then for any vector $v = \sum_{b \in \mathcal{B}} g_b(\lambda) b$ of $X$ where $g_b(\lambda) \in \BM$, we have \footnote{Note that difference maps of bi-degree $(\alpha,\alpha)$ make sense for arbitrary $\BM$-vector spaces.  \label{f1: difference operators} }
$$ \Phi(v) = \sum_{b' \in \mathcal{B'}} b' \sum_{b \in \mathcal{B}}  [\Phi]_{b'b}(\lambda) \times g_b(\lambda+\hbar \beta). $$
When $X = Y$, a difference map is also called a difference operator. To define its matrix, we always assume $\mathcal{B}' = \mathcal{B}$.

%In the context of elliptic quantum groups, the difference operators $\Phi(z)$ will carry a parameter $z \in \BC$. To emphasize the simultaneous dependence on $z$ and on $\lambda \in \Hlie$, we shall write $[\Phi]_{b'b}(z;\lambda)$ as the $(b',b)$-entry of the matrix of $\Phi(z)$.
\medskip

By an algebra we mean a unital associative algebra over $\BC$. 

 As in \cite[Definition 4.1]{EV}, an $\Hlie$-{\it algebra} is an algebra $A$, endowed with $\Hlie$-bigrading $A = \oplus_{\alpha,\beta\in \Hlie} A_{\alpha,\beta}$ which respects the algebra structure and is called the {\it weight decomposition}, and two algebra embeddings $\mol,\mor: \BM \longrightarrow A_{0,0}$ called the {\it left and right moment maps}, such that for $a \in A_{\alpha,\beta}$ and $g(\lambda) \in \BM$, we have
$$ \mol(g(\lambda)) a = a \mol(g(\lambda-\hbar\alpha)),\quad \mor(g(\lambda)) a = a \mor(g(\lambda-\hbar\beta)). $$
Call $(\alpha,\beta)$ the bi-degree of elements in $A_{\alpha,\beta}$. A morphism of $\Hlie$-algebras is an algebra morphism preserving the moment maps and the weight decompositions. 

From two $\Hlie$-algebras $A,B$ we construct their tensor product $A\dt B$ as follows. For $\alpha,\beta,\gamma \in \Hlie$, let $A_{\alpha,\beta}\ \dt\ B_{\beta,\gamma}$ be $A_{\alpha,\beta} \otimes_{\BC} B_{\beta,\gamma}$ modulo the relation 
$$\mor^A(g(\lambda)) a \otimes_{\BC} b = a \otimes_{\BC} \mol^B(g(\lambda)) b \quad \mathrm{for}\ a \in A_{\alpha,\beta},\  b \in B_{\beta,\gamma},\  g(\lambda) \in \BM.$$
$(A\dt B)_{\alpha,\gamma}$ is the direct sum of the $A_{\alpha,\beta}\ \dt\ B_{\beta,\gamma}$ over $\beta \in \Hlie$. Multiplication in $A \dt  B$ is induced by $(a\dt b)(a'\dt b') = aa' \dt bb'$. The moment maps are given by ($\dt$ denotes the image of $\otimes_{\BC}$ under the quotient $\otimes_{\BC} \longrightarrow \dt$)
$$\mol^{A\dt B}: g(\lambda) \mapsto \mol^A(g(\lambda)) \dt 1, \quad \mor^{A\dt B}: g(\lambda) \mapsto 1 \dt \mor^B(g(\lambda))\quad \mathrm{for}\ g(\lambda) \in \BM. $$

\medskip

To an $\Hlie$-graded vector space one can attach naturally an $\Hlie$-algebra. Let $X$ be an object of $\CV$. 
Let $\BD^X_{\alpha,\beta}$ denote the $\BC$-vector space of difference operators $X \longrightarrow X$ of bidegree $(\alpha,\beta)$. Then the direct sum $\BD^X := \oplus_{\alpha,\beta \in \Hlie} \BD^X_{\alpha,\beta}$ is a subalgebra of $\End_{\BC}(X)$. It is an $\Hlie$-algebra structure with  the moment maps  
$$\mor(g(\lambda)) v = g(\lambda)v,\quad \mol(g(\lambda)) v = g(\lambda+\hbar\alpha) v \quad \mathrm{for}\ v \in X[\alpha],\ g(\lambda) \in \BM.$$

Tensor products of difference operators are also difference operators. To be precise,
let $X,Y$ be two objects of $\CV$. Let $\Phi: X \longrightarrow X$ and $\Psi: Y \longrightarrow Y$ be difference operators of bi-degree $(\alpha,\beta)$ and $(\beta,\gamma)$ respectively.  The $\BC$-linear map
\begin{equation*} 
 X \otimes_{\BC} Y \longrightarrow X \wtimes Y, \quad v \otimes_{\BC} w \mapsto \Phi(v) \wtimes \Psi(w)
\end{equation*}
is easily seen to factorize through $X \otimes_{\BC} Y \longrightarrow X \wtimes Y$ and 
 induces the $\BC$-linear map $\Phi \wtimes \Psi: X \wtimes Y \longrightarrow X \wtimes Y$, which is shown to be a difference operator of bi-degree $(\alpha,\gamma)$. As in \cite[Lemma 4.3]{EV}, the following defines a morphism of $\Hlie$-algebras
 $$\BD^X\ \dt\ \BD^Y \longrightarrow \BD^{X\wtimes Y}, \quad \Phi \dt \Psi \mapsto \Phi \wtimes \Psi. $$

\subsection{Elliptic quantum groups.}  \label{ss-elliptic}
For $1\leq i,j,p,q \leq N$ let $R_{pq}^{ij}(z;\lambda)$ be the coefficient of $v_p \otimes v_q$ in $\BR(z;\lambda)(v_i \otimes v_j)$; it can be viewed as an element of $\BM$ after fixing $z \in \BC$. The {\it elliptic quantum group} $\CE := \CE_{\tau,\hbar}(\mathfrak{sl}_N)$ is an $\Hlie$-algebra generated by\footnote{We use $\mathfrak{sl}_N$, as in \cite{F,FV1}, to emphasize that $\Hlie$ is the Cartan subalgebra of $\mathfrak{sl}_N$. Other works \cite{C,K1} use $\mathfrak{gl}_N$ for the reason that the elliptic quantum determinant is not fixed to be 1.   } 
$$L_{ij}(z) \in \CE_{\epsilon_i,\epsilon_j} \quad \mathrm{for}\ 1 \leq i,j \leq N $$
subject to the dynamical RLL relation \cite[\S 4.4]{EV}: for $1\leq i,j,m,n \leq N$,
\begin{multline} \label{rel: RLL explicit}
\quad \quad \sum_{p,q=1}^N \mol\left(R^{pq}_{mn}(z-w;\lambda)\right) L_{pi}(z) L_{qj}(w) \\
= \sum_{p,q=1}^N  \mor\left(R_{pq}^{ij}(z-w;\lambda)\right) L_{nq}(w)L_{mp}(z).
\end{multline}
There is an $\Hlie$-algebra morphism \cite{EV,FV1}:
\begin{align} \label{def: coproduct}
\Delta: \CE \longrightarrow \CE\  \dt\  \CE,\quad L_{ij}(z) \mapsto \sum_{k=1}^N L_{ik}(z)\ \dt\ L_{kj}(z) \quad \mathrm{for}\ 1\leq i,j \leq N
\end{align}
which is co-associative $(1\dt \Delta) \Delta = (\Delta \dt 1) \Delta$ and is called the coproduct. For $u \in \BC$,
\begin{align} \label{def: spectral shift}
\Phi_u: \CE \longrightarrow \CE,\quad L_{ij}(z) \mapsto L_{ij}(z+u\hbar) \quad \mathrm{for}\ 1 \leq i,j \leq N
\end{align}
extends uniquely to an $\Hlie$-algebra automorphism (spectral parameter shift).

Strictly speaking, $\CE$ is not well-defined as an $\Hlie$-algebra because of the additional parameter $z$; this is resolved in \cite{K1} by viewing $z, \hbar$ as formal variables. In this paper we are mainly concerned with representations in which Eq.\eqref{rel: RLL explicit}--\eqref{def: spectral shift} make sense as identities of difference operators depending analytically on $z$. 

Let $\mathfrak{S}_N$ be the group of permutations of $\{1,2,\cdots,N\}$. For $1\leq k \leq N$, let $\mathfrak{S}^k$ be the subgroup of permutations  which fix the last $k$ letters.  
The $k$-th {\it fundamental weight} $\varpi_k$ and {\it elliptic quantum minor} $\eD_k(z)$ are defined by \cite[Eq.(2.5)]{TV}:
\begin{align}
 \varpi_k &:= \sum_{i=1}^k \epsilon_{i} \in \Hlie,\quad   \Theta_{k}(\lambda) := \prod_{N-k+1\leq i < j \leq N} \theta(\lambda_{ij}) \in \BM^{\times},  \label{def: Vandermond} \\
\eD_k(z) &:= \frac{\mor(\Theta_k(\lambda))}{\mol(\Theta_k(\lambda))} \sum_{\sigma \in \mathfrak{S}^k} \mathrm{sign}(\sigma) \prod_{i=N}^{N-k+1} L_{\sigma(i),i}(z+(N-i)\hbar) \in \CE.\label{def: determinant}
\end{align}
Here $\mathrm{sign}(\sigma) \in \{\pm 1\}$ denotes the signature of the permutation $\sigma$. We take the descending product over $N \geq i \geq N-k+1$ in Eq.\eqref{def: determinant}. Set $\varpi_0 := 0$.

We shall need the following elements $\hat{L}_{k}(z)$ of $\CE_{\epsilon_k,\epsilon_k}$ as in \cite[Eq.(4.1)]{TV}:
\begin{equation}  \label{defi: auxiliary L}
\hat{L}_N(z) := L_{NN}(z),\quad \hat{L}_k(z) = L_{kk}(z) \prod_{j=k+1}^{N}  \frac{\mor(\theta(\lambda_{kj}))}{\mol(\theta(\lambda_{kj}))}.
\end{equation}

\begin{theorem}\cite[Proposition 2.1]{TV} \cite[Eq.(E.18)]{K1}  \label{thm: determinant}
$\eD_N(z)$ is central in $\CE$ and grouplike: $\Delta(\eD_N(z)) = \eD_N(z)\ \dt\ \eD_N(z)$.
\end{theorem}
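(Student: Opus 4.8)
The plan is to transport to the dynamical ($\Hlie$-algebra) setting the quantum inverse scattering argument that the quantum determinant of an $RLL$-algebra is central and grouplike (as for the Yangian or $U_q(L\mathfrak{gl}_N)$). Beyond the relations \eqref{rel: RLL explicit}--\eqref{def: coproduct}, the only extra input is a \emph{fusion} property of $\BR(z;\lambda)$: the ordered product of the $N$ matrices $\BR^{ab}$ at the spectral parameters $z_i := z+(N-i)\hbar$, $1\le i\le N$, degenerates --- up to a nonzero meromorphic scalar and conjugation by the dynamical Vandermonde $\Theta_N(\lambda)$ of \eqref{def: Vandermond} --- to the rank-one antisymmetrizer $A_N\in\End_{\BC}((\BC^N)^{\otimes N})$ with image the line $\Lambda^N\BC^N$; and the fused $R$-matrix between $\BC^N$ and that line is a nonzero scalar function $f(w)$ (since $\BC^N\otimes\Lambda^N\BC^N$ is irreducible, the intertwiner is unique up to scalar; note $\Lambda^N\BC^N$ has $\Hlie$-weight $\varpi_N=0$, which is what keeps this scalar $\lambda$-independent once the $\Theta_N$-conjugation is built in). Writing $L(w)=(L_{ij}(w))_{i,j}$ as a matrix over $\CE$ with auxiliary space $\BC^N$ and $\mathbf L(z):=L_1(z_1)\cdots L_N(z_N)$ (auxiliary copies $1,\dots,N$ in order), these facts give the antisymmetrizer form of \eqref{def: determinant}:
\[ A_N\,\mathbf L(z)\,A_N = A_N\,\mathbf L(z) = \mathbf L(z)\,A_N = \eD_N(z)\,A_N, \]
the last two equalities (one-sided antisymmetrization from either side already suffices) following from \eqref{rel: RLL explicit} together with the fusion property; the $\Theta_N$-prefactor in \eqref{def: determinant} is exactly what makes the factor multiplying $A_N$ on the right equal $\eD_N(z)$ with the \emph{constant} $A_N$.

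\emph{Centrality.} Apply \eqref{rel: RLL explicit} with $N+1$ auxiliary copies $0,1,\dots,N$ of $\BC^N$, pairing $L_0(w)$ against $\mathbf L(z)$: this reads $\BR_{0\star}\,L_0(w)\,\mathbf L(z)=\mathbf L(z)\,L_0(w)\,\BR_{0\star}$, where $\BR_{0\star}$ is the ordered product of the $\BR_{0,i}(w-z_i;\lambda+\hbar h^{(\cdots)})$ over copies $0$ and $1,\dots,N$. Right-multiply by $1_0\otimes A_N$ (antisymmetrizer on copies $1,\dots,N$) and use: (i) $\mathbf L(z)(1\otimes A_N)=\eD_N(z)(1\otimes A_N)$; (ii) $\BR_{0\star}(1_0\otimes A_N)=f(w)\,(1_0\otimes A_N)$; (iii) $L_0(w)$ and $\eD_N(z)$ commute with $1\otimes A_N$ (different tensor factors, resp.\ $\eD_N(z)$ acting as a scalar on the $\End$-factor). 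One reads off $f(w)\,L_0(w)\eD_N(z)(1\otimes A_N)=f(w)\,\eD_N(z)L_0(w)(1\otimes A_N)$; cancelling the nonzero scalar $f(w)$ and the nonzero $A_N$ yields $L_0(w)\eD_N(z)=\eD_N(z)L_0(w)$, i.e.\ $\eD_N(z)$ commutes with every $L_{ij}(w)$. Since in addition $\eD_N(z)$ has bidegree $(0,0)$, it commutes with $\mol(\BM)$ and $\mor(\BM)$; as these generate $\CE$, $\eD_N(z)$ is central.

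\emph{Grouplike.} Apply $\Delta$ (on the $\CE$-factor) to $\eD_N(z)A_N=A_N\mathbf L(z)$. Since $\Delta$ is an algebra morphism with $\Delta(L_{ij}(w))=\sum_k L_{ik}(w)\dt L_{kj}(w)$, one has $(\mathrm{id}\otimes\Delta)L(w)=L^{[1]}(w)\,L^{[2]}(w)$ in matrix form, with $L^{[1]}_{ij}=L_{ij}(w)\dt1$, $L^{[2]}_{ij}=1\dt L_{ij}(w)$, and $L^{[1]}$ on one auxiliary copy commuting with $L^{[2]}$ on another because $(a\dt1)(1\dt b)=a\dt b=(1\dt b)(a\dt1)$. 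Thus $\Delta(\eD_N(z))A_N=A_N\,\mathbf L^{[1]}(z)\,\mathbf L^{[2]}(z)$; each matrix $L^{[\bullet]}$ satisfies a dynamical $RLL$ relation with the same $\BR$, so the quantum-determinant identities $A_N\mathbf L^{[\bullet]}(z)=\mathbf L^{[\bullet]}(z)A_N=A_N\mathbf L^{[\bullet]}(z)A_N=\eD_N^{[\bullet]}(z)A_N$ hold as above; inserting $A_N$ between $\mathbf L^{[1]}(z)$ and $\mathbf L^{[2]}(z)$ and applying these twice gives $\Delta(\eD_N(z))A_N=(\eD_N(z)\dt1)(1\dt\eD_N(z))A_N=(\eD_N(z)\dt\eD_N(z))A_N$, whence $\Delta(\eD_N(z))=\eD_N(z)\dt\eD_N(z)$. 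The $\Theta_N$-prefactors telescope automatically: $\Delta$ intertwines the moment maps, so $\Delta\big(\tfrac{\mor(\Theta_N)}{\mol(\Theta_N)}\big)=\mol(\Theta_N)^{-1}\dt\mor(\Theta_N)$, which equals $\tfrac{\mor(\Theta_N)}{\mol(\Theta_N)}\dt\tfrac{\mor(\Theta_N)}{\mol(\Theta_N)}$ by the defining relation $\mor(g)a\dt b=a\dt\mol(g)b$ of $\dt$.

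\emph{Main obstacle.} The substantive ingredient is the fusion property: that the prescribed ordered product of the \emph{dynamical} $R$-matrices at the shifts $z_i=z+(N-i)\hbar$ collapses, after conjugation by $\Theta_N(\lambda)$, to the constant antisymmetrizer $A_N$, and that the vector-times-$\Lambda^N$ fused $R$-matrix is a $\lambda$-independent scalar. In the non-dynamical case this is the classical fusion procedure; here one must track the dynamical arguments $\lambda+\hbar h^{(\cdots)}$ carefully, and it is exactly this bookkeeping that forces the Vandermonde $\Theta_k(\lambda)$ into \eqref{def: determinant} (the base case $w=z+\hbar$, giving $A_2$, is already instructive and follows directly from the explicit $\BR(z;\lambda)$). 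Once the fusion statement is in place --- or imported from \cite{TV,K1} --- both assertions follow by the formal manipulations above.
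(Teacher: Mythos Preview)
The paper does not prove this theorem; it is stated with citations to \cite[Proposition 2.1]{TV} and \cite[Eq.(E.18)]{K1} and used as a black box. Your fusion/antisymmetrizer approach is indeed the strategy employed in those references, so in that sense your proposal matches the literature the paper relies on.

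One point in your grouplike argument deserves care. You assert that each of $L^{[1]}$ and $L^{[2]}$ separately satisfies a dynamical $RLL$ relation in $\CE\,\dt\,\CE$ ``with the same $\BR$'', and hence that the antisymmetrizer identity $A_N\mathbf L^{[\bullet]}(z)=\eD_N^{[\bullet]}(z)A_N$ holds for each. But in $\CE\,\dt\,\CE$ the moment maps are $\mol(g)=\mol^{\CE}(g)\dt 1$ and $\mor(g)=1\dt\mor^{\CE}(g)$, so for $L^{[1]}_{ij}=L_{ij}\dt 1$ the relation $\sum\mol(R^{pq}_{mn})L^{[1]}_{pi}L^{[1]}_{qj}=\sum\mor(R^{ij}_{pq})L^{[1]}_{nq}L^{[1]}_{mp}$ does \emph{not} follow directly from the $RLL$ relation in $\CE$: the right-hand side carries $1\dt\mor^{\CE}(R)$, whereas pushing the $\CE$-relation through $x\mapsto x\dt 1$ produces $\mor^{\CE}(R)\dt 1$, and the $\dt$-relation $\mor^A(g)a\dt b=a\dt\mol^B(g)b$ converts this to $\mol^{\CE}(R)$ on the second factor, not $\mor^{\CE}(R)$. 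The fix is that the antisymmetrizer identity you need for $L^{[1]}$ only uses the \emph{left} moment map (row antisymmetrization), and the one for $L^{[2]}$ only the \emph{right} moment map (column antisymmetrization); these one-sided versions do transfer correctly to the respective tensor legs, and combining them with $A_N$ inserted in the middle gives the grouplike formula. This is how \cite{TV,K1} organize the computation, and your ``$\Theta_N$-prefactors telescope'' observation is then exactly right. So the architecture is sound, but the dynamical bookkeeping in the grouplike step is a bit more delicate than your write-up suggests.
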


The simple roots $\alpha_i := \epsilon_i-\epsilon_{i+1}$ for $1 \leq i < N$ generate a free abelian subgroup $\BQ$ of $\Hlie$, called the root lattice. Let $\BQ_+, \BQ_-$ be submonoids of $\BQ$ generated by the $\alpha_i, -\alpha_i$ respectively.  Define the lexicographic partial ordering $\prec$ on $\Hlie$ as follows: 
\begin{itemize}
    \item[] $\alpha \prec \beta$ if $\beta - \alpha=n_l \alpha_l + \sum_{i=l+1}^{N-1} n_i \alpha_i \in \BQ$ with $n_l \in \BZ_{>0}$. 
\end{itemize}
This is weaker than the standard ordering: $\alpha\leq \beta$ if $\beta - \alpha \in \BQ_+$.

\begin{cor} \label{cor: Jucys-Murphy}
$\eD_k(z)$ commutes with the $L_{ij}(w)$ for $N-k <  i,j \leq N$ and $\Delta(\eD_k(z)) - \eD_k(z)\ \dt\ \eD_k(z)$ is a finite sum $\sum_{\alpha} x_{\alpha} \dt y_{\alpha}$ over $\{\alpha \in  \Hlie \ |\  -\varpi_{N-k} \prec \alpha\}$ where $x_{\alpha}$ and $y_{\alpha}$ are of bi-degree $(-\varpi_{N-k},\alpha)$ and $(\alpha,-\varpi_{N-k})$ respectively.
\end{cor}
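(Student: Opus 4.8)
The plan is to deduce both assertions from the definition \eqref{def: determinant} of $\eD_k(z)$ as an (appropriately normalized) antisymmetrized product of the $L_{\sigma(i),i}(z+(N-i)\hbar)$ over $N-k+1 \leq i \leq N$, together with Theorem \ref{thm: determinant} applied inside a suitable ``corner'' sub-$\Hlie$-algebra. Concretely, let $\CE^{(k)} \subseteq \CE$ be the $\Hlie$-subalgebra generated by the $L_{ij}(w)$ with $N-k < i,j \leq N$; one checks from the RLL relations \eqref{rel: RLL explicit} that these generators only involve R-matrix coefficients $R^{pq}_{mn}$ with all indices in $\{N-k+1,\dots,N\}$, so $\CE^{(k)}$ is isomorphic (after a harmless relabeling of the $\epsilon_i$ and a spectral shift) to a copy of $\CE_{\tau,\hbar}(\mathfrak{sl}_k)$ sitting inside $\CE$. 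Under this identification the element $\eD_k(z)$ becomes exactly the top quantum minor $\eD_k$ of that smaller elliptic quantum group — the normalizing factor $\mor(\Theta_k(\lambda))/\mol(\Theta_k(\lambda))$ and the shifted arguments $z+(N-i)\hbar$ are precisely those appearing in \eqref{def: determinant} for rank $k$. Hence the first claim, that $\eD_k(z)$ commutes with all $L_{ij}(w)$ for $N-k<i,j\leq N$, is immediate from the centrality part of Theorem \ref{thm: determinant} applied to $\CE^{(k)}$.

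For the coproduct statement, I would compute $\Delta(\eD_k(z))$ directly from \eqref{def: coproduct} and \eqref{def: determinant}: applying $\Delta$ to each factor $L_{\sigma(i),i}(z+(N-i)\hbar)$ inserts a sum $\sum_{k_i} L_{\sigma(i),k_i}(\cdots)\dt L_{k_i,i}(\cdots)$, so $\Delta(\eD_k(z))$ is a signed sum over all choices of intermediate indices $(k_i)_{N-k+1\leq i\leq N}$ in $\{1,\dots,N\}$. The ``diagonal'' term, where every $k_i$ ranges only over $\{N-k+1,\dots,N\}$, reassembles — after using the dynamical relations among the $L$'s and the normalization $\Theta_k$ to move the moment-map factors across — into $\eD_k(z)\dt\eD_k(z)$; this is the same antisymmetrization identity that underlies Theorem \ref{thm: determinant}, again transported through the corner embedding $\CE^{(k)}\hookrightarrow\CE$. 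Every remaining term has at least one intermediate index $k_i \leq N-k$, and I claim each such term is of bi-degree $(-\varpi_{N-k}+\text{something},\,\cdot\,)$ with the left tensor factor strictly ``heavier'' than $-\varpi_{N-k}$ in the $\prec$-ordering: indeed the left-hand product $\prod_i L_{\sigma(i),k_i}(\cdots)$ has $\Hlie$-weight $\bigl(\sum_i \epsilon_{\sigma(i)},\sum_i\epsilon_{k_i}\bigr)$ on the two gradings; the first component is $-\varpi_{N-k}$ for all terms (the $\sigma(i)$ still run over $\{N-k+1,\dots,N\}$), while $\sum_i \epsilon_{k_i}+\varpi_{N-k} = \sum_i(\epsilon_{k_i}-\epsilon_{i})$ lies in $\BQ$, is nonzero once some $k_i\leq N-k$, and one checks its leading simple-root coefficient forces $-\varpi_{N-k}\prec \sum_i\epsilon_{k_i}$. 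Collecting these terms as $\sum_\alpha x_\alpha\dt y_\alpha$ with $\alpha = \sum_i\epsilon_{k_i}$ gives exactly the asserted bi-degrees $(-\varpi_{N-k},\alpha)$ and $(\alpha,-\varpi_{N-k})$, the latter because each right factor $\prod_i L_{k_i,i}(\cdots)$ has first grading $\sum_i\epsilon_{k_i}=\alpha+\text{(null)}$... here I must be a little careful: the right factor's bi-degree is $\bigl(\sum_i\epsilon_{k_i},\sum_i\epsilon_i\bigr)=(\alpha,-\varpi_{N-k})$, so this is automatic.

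The step I expect to be the genuine obstacle is the bookkeeping that identifies the diagonal term with $\eD_k(z)\dt\eD_k(z)$ on the nose: the quantum minor is \emph{not} simply a product of commuting entries, so reassembling $\sum_{\sigma}\mathrm{sign}(\sigma)\prod_i\bigl(\sum_{k_i\in\{N-k+1,\dots,N\}}L_{\sigma(i),k_i}\dt L_{k_i,i}\bigr)$ into $\bigl(\sum_\sigma\mathrm{sign}(\sigma)\prod L_{\sigma(i),k_i}\bigr)\dt\bigl(\sum_\sigma\mathrm{sign}(\sigma)\prod L_{k_i,i}\bigr)$ requires knowing that antisymmetrization commutes with the RLL-reordering needed to split the tensor — precisely the content of the grouplikeness proof in \cite{TV,K1}, which is why I want to invoke Theorem \ref{thm: determinant} for $\CE^{(k)}$ rather than redo it. With that granted, the only remaining work is the elementary weight-lattice computation showing the off-diagonal terms satisfy $-\varpi_{N-k}\prec\alpha$, and matching the moment-map normalizations, which I would relegate to a short direct check.
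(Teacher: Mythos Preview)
Your proposal is correct and follows essentially the same route as the paper: embed the corner subalgebra $\CE^{(k)}\cong\CE_{\tau,\hbar}(\mathfrak{sl}_k)$, invoke Theorem~\ref{thm: determinant} there for both centrality and grouplikeness, and then analyze the extra terms by bi-degree. The paper's write-up is slightly cleaner on exactly the point you flag as the ``genuine obstacle'': rather than expanding $\Delta(\eD_k)$ and then reassembling the diagonal block, it introduces the restricted coproduct $\Delta_k$ on $\CE_k$ (summing only over middle indices $>N-k$), so that grouplikeness gives $\Delta_k(\eD_k)=\eD_k\dt\eD_k$ directly, and the quantity to control is $\Delta(\eD_k)-\Delta_k(\eD_k)$, handled factor by factor via $(\Delta-\Delta_k)(L_{ij})=\sum_{p\leq N-k}L_{ip}\dt L_{pj}$; this sidesteps any reassembly argument entirely.
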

The proof of the corollary is postponed to Section \ref{ss: proof of Cor}.

\subsection{Categories.} \label{ss-repr}
{\it From now on unless otherwise stated vector spaces, linear maps and bases are defined over $\BM$.} Let $X$ be an object of $\CVf$. A representation of $\CE$ on $X$ consists of difference operators $L_{ij}^X(z): X \longrightarrow X$ of bi-degree $(\epsilon_i,\epsilon_j)$ for $1\leq i,j \leq N$ depending on $z \in \BC$ with the following properties:\footnote{This is called a representation of {\it finite type} in \cite{FV1}. From condition (M1) it follows that the coefficients of the $L_{ij}^X(z)$ are meromorphic functions with respect to {\it any} basis of $X$. }
\begin{itemize}
\item[(M1)]  there exists a basis of $X$ with respect to which all the matrix entries of the difference operators $L_{ij}^X(z)$ are meromorphic functions of $(z,\lambda) \in \BC \times \Hlie$.
\item[(M2)] Eq.\eqref{rel: RLL explicit} holds in $\BD^X$ with $\mol,\mor$ being moment maps in $\BD^X$. 
\end{itemize} 
Call $X$ an $\CE$-module. Property (M2) can be interpreted as an $\Hlie$-algebra morphism $\CE \longrightarrow \BD^X$ sending $L_{ij}(z) \in \CE$ to the difference operator $L_{ij}^X(z)$ on $X$.  Applying $\rho$ to the elements of Eqs.\eqref{def: determinant}--\eqref{defi: auxiliary L}, one gets difference operators $\eD_k^X(z), \hat{L}_k^X(z)$ acting on $X$ bi-degree $(-\varpi_{N-k}, -\varpi_{N-k})$ and $(\epsilon_k, \epsilon_k)$ respectively.   When no confusion arises, we shall drop the superscript $X$ from $L^X, \eD^X, \hat{L}^X$ to simplify notations.

A morphism  $\Phi: X \longrightarrow Y$ of $\CE$-modules is a linear map which respects the $\Hlie$-gradings (so that $\Phi$ is a morphism in category $\CV$) and satisfies $\Phi L_{ij}^X(z) = L_{ij}^Y(z) \Phi$ for $1\leq i,j \leq N$. The category of $\CE$-modules is denoted by $\Rep$. It is a subcategory of $\CVf$, and is abelian, since the kernel and cokernel of a morphism of $\CE$-modules, as $\Hlie$-graded $\BM$-vector spaces, are naturally $\CE$-modules.  \footnote{In other works \cite{C,EM,FV1,GTL2,K2,K1,TV,YZ}: a module $V$ is an $\Hlie$-graded $\BC$-vector space; morphisms of modules depend on the dynamical parameter $\lambda$, so do their kernel and cokernel; the abelian category structure is non trivial. The scalar extension gives a module $V \otimes_{\BC} \BM$ in the present situation. Since our modules and morphisms are $\BM$-linear, the dependence of kernels and images on the dynamical parameter does not matter.  }

\begin{defi} \cite[\S 4]{EM}  \label{def: tilde O}
$\tBGG$ is the full subcategory of $\Rep$ whose objects $X$ are such that $\wt(X)$ is contained in a {\it finite} union of cones $\mu + \BQ_-$ with $\mu \in \Hlie$.
\end{defi}

 For $X, Y$ objects in category $\tBGG$, the $L_{ij}^{X\wtimes Y}(z) := \sum_{k=1}^N L_{ik}^X(z)\ \wtimes\ L_{kj}^Y(z)$ define a representation of $\CE$ on $X \wtimes Y$ which is easily seen to be in category $\tBGG$. So $\tBGG$ is a monoidal subcategory of $\CV$. Similarly, $\tBGG$ is an abelian subcategory of $\Rep$.

\begin{defi}\cite[\S 2]{FZ} \label{defi: merom eigen}
An object in $\CFm$ consists of a finite-dimensional vector space $V$ equipped with difference operators $D_{l}(z): V \longrightarrow V$ of bi-degree $(-\varpi_{N-l},-\varpi_{N-l})$ (see Footnote \ref{f1: difference operators}) for $1\leq l \leq N$ depending on $z \in \BC$ such that:  
\begin{itemize}
\item[(M3)] there exists an ordered basis of $V$ with respect to which the matrices of the difference operators $D_l(z)$ are upper triangular, the diagonal entries are non-zero meromorphic functions of $z \in \BC$, and the off-diagonal entries are meromorphic functions of $(z,\lambda) \in \BC \times \Hlie$. 
\end{itemize}
A morphism $\Phi: V \longrightarrow W$ in $\CFm$ is a linear map commuting with the $D_l(z)$. (Namely, $\Phi D_{l}^V(z) = D_l^W(z) \Phi: V \longrightarrow W$ for $1\leq l \leq N$. Here we add the superscripts $V, W$ in the $D_l(z)$ to indicate the space on which they act.)
\end{defi}

For $V$ an object of $\CFm$, the operators $D_l(z)$ being invertible because of the triangularity, one has a unique factorization of operators for $1\leq l \leq N$:
\begin{gather}  
D_l(z) = K_{N}(z) K_{N-1}(z+\hbar) K_{N-2}(z+2\hbar) \cdots K_{N-l+1}(z+(l-1)\hbar).  \label{def: elliptic diagonal} 
\end{gather} 
Notably $K_l(z): X \longrightarrow X$ is a difference operator of bi-degree $(\epsilon_l,\epsilon_l)$.  Property (M3) still holds if the $D_l(z)$ are replaced by the $K_l(z)$.

 The forgetful functor from $\CFm$ to the category of finite-dimensional vector spaces equips $\CFm$ with an abelian category structure. (For a proof, we refer to \cite[\S 2.1]{FZ} where another characterization of category $\CFm$ in terms of Jordan--H\"older series is given.)  Let us describe its Grothendieck group $K_0(\CFm)$.

The multiplicative group $\BM_{\BC}^{\times}$ of non-zero meromorphic functions of $z \in \BC$ contains a subgroup $\BC^{\times}$ of non-zero constant functions. Let $\CM$ be the quotient group of $(\BM_{\BC}^{\times})^N$ by its subgroup formed of $(c_1,c_2,\cdots,c_N) \in (\BC^{\times})^N$ such that $c_1c_2\cdots c_N = 1$. We show that $K_0(\CFm)$ has a $\BZ$-basis indexed by $\CM$.

%When no confusion arises, we do not distinguish an element in $(\BM_{\BC}^{\times})^N$ with its image in $\CM$.

For $\Bf = (f_1(z),f_2(z), \cdots, f_N(z)) \in (\BM_{\BC}^{\times})^N$, the vector space $\BM$  with the following difference operators $D_l(z)$ is an object in category $\CFm$ denoted by $\BM_{\Bf}$:
$$g(\lambda) \mapsto g(\lambda-\hbar \varpi_{N-l})  f_N(z)f_{N-1}(z+\hbar) f_{N-2}(z+2\hbar) \cdots f_{N-l+1}(z+(l-1)\hbar). $$
We have $K_l(z) g(\lambda) = g(\lambda+\hbar \epsilon_l) f_l(z)$. As a consequence of (M3) in Definition \ref{defi: merom eigen}, all irreducible objects of category $\CFm$ are of this form. 

\begin{lem}
Let $\Be, \Bf \in  (\BM_{\BC}^{\times})^N$. The objects $\BM_{\Be}$ and $\BM_{\Bf}$ are isomorphic in category $\CFm$ if and only if $\Be, \Bf$ have the same image under the quotient $(\BM_{\BC}^{\times})^N \twoheadrightarrow \CM$.
\end{lem}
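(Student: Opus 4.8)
The plan is to prove both implications by working directly with the explicit difference operators $D_l(z)$ on $\BM_{\Be}$ and $\BM_{\Bf}$, which are one-dimensional over $\BM$.

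\medskip

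\textbf{The ``if'' direction.} Suppose $\Be$ and $\Bf$ have the same image in $\CM$, i.e.\ there exists $(c_1,\dots,c_N)\in(\BC^\times)^N$ with $c_1c_2\cdots c_N=1$ and $e_i(z)=c_if_i(z)$ for all $i$. I would exhibit an explicit isomorphism $\Phi\colon \BM_{\Bf}\longrightarrow\BM_{\Be}$. Since both underlying spaces are $\BM$ itself (in weight $0$, say), $\Phi$ is multiplication by some nonzero scalar; the point is to choose this scalar so that $\Phi$ intertwines the two families of difference operators. Writing the $D_l^{\Bf}(z)$ and $D_l^{\Be}(z)$ from the displayed formula, the intertwining condition $\Phi D_l^{\Bf}(z)=D_l^{\Be}(z)\Phi$ amounts to the scalar identity $e_N(z)e_{N-1}(z+\hbar)\cdots e_{N-l+1}(z+(l-1)\hbar)=f_N(z)f_{N-1}(z+\hbar)\cdots f_{N-l+1}(z+(l-1)\hbar)$ up to the shift $\Phi$ introduces, for each $l$. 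Passing to the factorization \eqref{def: elliptic diagonal}, this reduces to matching the $K_l(z)$-actions, i.e.\ to $e_l(z)=f_l(z)$ after conjugating by $\Phi$; the constants $c_l$ are absorbed by taking $\Phi$ to be multiplication by a fixed element of $\BC^\times$ (here the normalization $\prod c_l=1$ is exactly what makes a single constant work simultaneously for all $l$, since the product telescopes through the $D_l$). So one simply sets $\Phi=1$ and checks that the $D_l$ already agree in $\CM$; or, if one insists on working with representatives, one picks $\Phi\in\BC^\times$ suitably. This is a routine verification.

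\medskip

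\textbf{The ``only if'' direction.} Suppose $\Psi\colon\BM_{\Be}\longrightarrow\BM_{\Bf}$ is an isomorphism in $\CFm$. Being a morphism in $\CV$, $\Psi$ preserves $\Hlie$-weights, so (placing both modules in a single weight) $\Psi$ is an $\BM$-linear automorphism of $\BM$, i.e.\ multiplication by some $h(\lambda)\in\BM^\times$. The condition $\Psi D_l^{\Be}(z)=D_l^{\Bf}(z)\Psi$ becomes, using the explicit formula for the $D_l$ acting on $g(\lambda)\in\BM$ and the fact that $D_l$ has bi-degree $(-\varpi_{N-l},-\varpi_{N-l})$,
\begin{equation*}
h(\lambda-\hbar\varpi_{N-l})\prod_{i=0}^{l-1} e_{N-i}(z+i\hbar)=h(\lambda)\prod_{i=0}^{l-1} f_{N-i}(z+i\hbar)\quad\text{in }\BM_{\BC}.
\end{equation*}
Taking $l=1$ gives $h(\lambda-\hbar\varpi_{N-1})e_N(z)=h(\lambda)f_N(z)$; separating variables, since the left factor $e_N(z)/f_N(z)$ depends only on $z$ while $h(\lambda-\hbar\varpi_{N-1})/h(\lambda)$ depends only on $\lambda$, both sides are a constant $c_N\in\BC^\times$, so $e_N(z)=c_N f_N(z)$ and $h(\lambda-\hbar\varpi_{N-1})=c_N h(\lambda)$. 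Dividing the relation for $l$ by that for $l-1$ isolates $K_{N-l+1}$ and yields $h(\lambda-\hbar\varpi_{N-l})/h(\lambda-\hbar\varpi_{N-l+1})$ times $e_{N-l+1}(z+(l-1)\hbar)=\big(\text{same with }f\big)$; the same separation-of-variables argument gives $e_{N-l+1}(z)=c_{N-l+1}f_{N-l+1}(z)$ for a constant $c_{N-l+1}\in\BC^\times$, for each $l$. Finally, combining all $N$ relations for $h$ (or just comparing the $l=N$ relation, where $\varpi_0=0$, with the telescoping product) forces $c_1c_2\cdots c_N=1$: indeed the shifts $\varpi_{N-l}$ run through $0,\varpi_1,\dots,\varpi_{N-1}$ and $h(\lambda)=h(\lambda)\cdot c_1\cdots c_N$ after going around, because $\varpi_{N-l}-\varpi_{N-l+1}=-\epsilon_{N-l+1}$ and the total shift accumulates to a relation pinning the product of the $c_i$. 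Hence $\Be$ and $\Bf$ have the same image in $\CM$.

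\medskip

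The only mildly delicate point—the main obstacle—is the bookkeeping in the ``only if'' direction: one must correctly track how the bi-degree $(-\varpi_{N-l},-\varpi_{N-l})$ of $D_l(z)$ translates into the argument shift $\lambda\mapsto\lambda-\hbar\varpi_{N-l}$ in the difference-operator formula, and then verify that the accumulated shifts of $h$ over all $l$ close up to give exactly the normalization $\prod_i c_i=1$ defining $\CM$, rather than a weaker or stronger condition. Passing to the $K_l(z)$ via \eqref{def: elliptic diagonal} streamlines this, since each $K_l(z)$ carries the single shift $\lambda\mapsto\lambda+\hbar\epsilon_l$, reducing everything to the single-factor identities $e_l(z)=c_lf_l(z)$ together with the constraint on $\prod c_l$ coming from $\sum_i\epsilon_i=0$ in $\Hlie$.
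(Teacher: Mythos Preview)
Your ``only if'' direction is essentially the paper's argument (the paper works directly with the $K_l(z)$, which is the streamlined version you describe at the end), and the separation-of-variables step is correct. The minor slip is that morphisms in $\CFm$ are not morphisms in $\CV$ (objects of $\CFm$ carry no $\Hlie$-grading); but this is harmless, since $\BM$-linearity already forces $\Psi$ to be multiplication by some $h(\lambda)\in\BM^\times$.

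The ``if'' direction, however, has a genuine gap. You claim that ``the constants $c_l$ are absorbed by taking $\Phi$ to be multiplication by a fixed element of $\BC^\times$'' and that ``one picks $\Phi\in\BC^\times$ suitably.'' This is false. If $\Phi$ is a constant, then $\Phi K_l^{\Be}(z)(1) = \Phi\cdot e_l(z)$ while $K_l^{\Bf}(z)\Phi(1) = \Phi\cdot f_l(z)$, forcing $e_l(z)=f_l(z)$ for every $l$; a nontrivial tuple $(c_1,\dots,c_N)$ cannot be absorbed this way. What is actually needed is a $\lambda$-dependent isomorphism: one must produce $\varphi(\lambda)\in\BM^\times$ satisfying $\varphi(\lambda+\hbar\epsilon_l)=c_l\,\varphi(\lambda)$ for all $l$. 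The paper does this by choosing $b_l\in\BC$ with $e^{b_l\hbar}=c_l$ and setting $\varphi(x_1\epsilon_1+\cdots+x_N\epsilon_N)=e^{b_1x_1+\cdots+b_Nx_N}$; the condition $c_1\cdots c_N=1$ is exactly what allows one to arrange $b_1+\cdots+b_N=0$, so that $\varphi$ is well-defined on $\Hlie$ (where $\sum_i\epsilon_i=0$). In short, the normalization $\prod c_l=1$ does not make a single complex scalar work; it makes a well-defined exponential on $\Hlie$ exist.
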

\begin{proof}
Write $\Be = (e_1(z), e_2(z), \cdots, e_N(z))$ and $\Bf = (f_1(z),f_2(z),\cdots, f_N(z))$.

Sufficiency: assume $e_l(z) = f_l(z) c_l$ with $c_l \in \BC^{\times}$ and $c_1 c_2 \cdots c_N = 1$. For $1\leq l < N$, choose $b_l$ such that $c_l = e^{b_l \hbar} $. Set $b_N := -b_1 - b_2-\cdots - b_{N-1}$. Then $e^{b_N \hbar} = c_1^{-1}c_2^{-1}\cdots c_{N-1}^{-1} = c_N$ and the following is a well-defined element of  $\BM^{\times}$:
$$ \varphi(x_1\epsilon_1 + x_2 \epsilon_2 + \cdots + x_N\epsilon_N ) = e^{b_1x_1 + b_2 x_2 + \cdots + b_N x_N} \quad \mathrm{for}\quad x_1,x_2,\cdots, x_N \in \BC. $$
Indeed $\varphi(\alpha+\beta) = \varphi(\alpha) \varphi(\beta)$ and $\varphi(x \epsilon_1+x\epsilon_2+\cdots+x\epsilon_N) = 1$ for $x \in \BC$. Notably,
$$ \varphi(\lambda+\hbar \epsilon_l) = \varphi(\lambda)  \varphi(\hbar \epsilon_l) = e^{\hbar b_l} \varphi(\lambda) = c_l \varphi(\lambda).   $$
So $\BM_{\Be} \longrightarrow \BM_{\Bf},\ g(\lambda) \mapsto g(\lambda) \varphi(\lambda)$ is an isomorphism in category $\CFm$. 

Necessity: let $\Phi: \BM_{\Be} \longrightarrow \BM_{\Bf}$ be an isomorphism in category $\CFm$. Set $\varphi(\lambda) := \Phi(1)$. Then $\varphi(\lambda) \in \BM^{\times}$. Applying  $\Phi K_l(z) = K_l(z) \Phi$ to $1$ we get
$$ \varphi(\lambda + \hbar \epsilon_l) f_l(z) = \varphi(\lambda) e_l(z).  $$ 
So $\frac{e_l(z)}{f_l(z)} = \frac{\varphi(\lambda+\hbar \epsilon_l)}{\varphi(\lambda)}$, being independent of $z$, is a constant function $c_l \in \BC^{\times}$. We have $e_l(z) = f_l(z) c_l$ and $\varphi(\lambda + \hbar \epsilon_l) = c_l \varphi(\lambda)$. It follows that
$$ \varphi(\lambda) = \varphi(\lambda + \hbar \epsilon_1 + \hbar \epsilon_2 + \cdots + \hbar \epsilon_N) = c_1c_2\cdots c_N \varphi(\lambda), $$
which implies $c_1c_2\cdots c_N = 1$. So $\Be$ and $\Bf$ have the same image in $\CM$.
\end{proof}

For each $\Bf \in \CM$, let us fix a pre-image $\Bf'$ in $(\BM_{\BC}^{\times})^N$ and set $\BM(\Bf) := \BM_{\Bf'}$. Then the isomorphism classes $[\BM(\Bf)]$ for $\Bf \in \CM$ form a $\BZ$-basis of $K_0(\CFm)$. When no confusion arises, we identify an element of $(\BM_{\BC}^{\times})^N$ with its image in $\CM$.

\begin{lem}  \label{lem: category F}
Let $V$ be in category $\CFm$. Assume $B$ is an ordered basis of $V$ with respect to which the matrices of the difference operators $K_l(z)$ are upper triangular. Then for $b \in B$ and $1\leq l \leq N$ there exist $\varphi_b(\lambda) \in \BM^{\times}$ and $f_{b,l}(z) \in \BM_{\BC}^{\times}$ such that 
$$ [K_l]_{bb}(z;\lambda) = f_{b,l}(z)  \frac{\varphi_b(\lambda)}{\varphi_b(\lambda+\hbar \epsilon_l)}. $$
\end{lem}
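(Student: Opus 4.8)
The plan is to read off the diagonal entries $[K_l]_{bb}(z;\lambda)$ from the one‑dimensional subquotients of the flag attached to $B$. Write $B=(b_1,\dots,b_n)$ in the given order, adopt the convention that ``upper triangular'' means $K_l(z)$ maps $V_i:=\bigoplus_{k\le i}\BM b_k$ into itself (otherwise reverse the order), so that $K_l(z)b_i\equiv [K_l]_{b_ib_i}(z;\lambda)\,b_i$ modulo $V_{i-1}$ for every $l$. Since $B$ triangularizes every $K_l(z)$, hence (by \eqref{def: elliptic diagonal}) every $D_l(z)$, each $V_i$ is stable under all these difference operators, so $0=V_0\subset V_1\subset\cdots\subset V_n=V$ is a composition series of $V$ inside the ambient abelian category of finite‑dimensional $\BM$‑vector spaces equipped with difference operators $D_l(z)$ of bidegree $(-\varpi_{N-l},-\varpi_{N-l})$ with meromorphic matrix entries; recall that $\CFm$ is a full subcategory of the latter, closed under subquotients, whose simple objects are exactly the standard modules $\BM_{\Bf}$ \cite[\S 2.1]{FZ}.

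The crucial point is that each factor $V_i/V_{i-1}$ is standard. By closure under subquotients it belongs to $\CFm$, and being one‑dimensional it is simple, hence isomorphic to some $\BM_{\Bg^{(i)}}$ with $\Bg^{(i)}=(g_{i,1}(z),\dots,g_{i,N}(z))\in(\BM_{\BC}^{\times})^N$; fix such an isomorphism $\Phi_i\colon V_i/V_{i-1}\xrightarrow{\sim}\BM_{\Bg^{(i)}}$ in $\CFm$. (If one would rather not invoke the abelian structure of $\CFm$ directly, the same conclusion follows by refining the $B$‑flag by its intersections with the triangularizing flag supplied by (M3): each resulting subquotient embeds into a one‑dimensional standard module, hence is zero or standard, and exactly one of them is nonzero since $\dim_{\BM}(V_i/V_{i-1})=1$.)

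It remains to unwind the definitions. Let $\bar b_i$ be the image of $b_i$ in $V_i/V_{i-1}$ and set $\psi_i(\lambda):=\Phi_i(\bar b_i)\in\BM_{\Bg^{(i)}}=\BM$, an invertible element. On $V_i/V_{i-1}$ the upper triangularity gives $K_l(z)\bar b_i=[K_l]_{b_ib_i}(z;\lambda)\,\bar b_i$, while on $\BM_{\Bg^{(i)}}$ one has $K_l(z)\psi_i(\lambda)=\psi_i(\lambda+\hbar\epsilon_l)\,g_{i,l}(z)$. Applying the $\BM$‑linear intertwiner $\Phi_i$ to the first identity yields
$$[K_l]_{b_ib_i}(z;\lambda)\,\psi_i(\lambda)=\psi_i(\lambda+\hbar\epsilon_l)\,g_{i,l}(z),$$
so that, with $\varphi_{b_i}(\lambda):=\psi_i(\lambda)^{-1}\in\BM^{\times}$ and $f_{b_i,l}(z):=g_{i,l}(z)\in\BM_{\BC}^{\times}$, we obtain $[K_l]_{b_ib_i}(z;\lambda)=f_{b_i,l}(z)\,\varphi_{b_i}(\lambda)/\varphi_{b_i}(\lambda+\hbar\epsilon_l)$ for all $1\le l\le N$. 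That one and the same $\Phi_i$ works for every $l$ simultaneously is precisely why $\varphi_{b_i}$ may be chosen independent of $l$.

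The one real obstacle is the second paragraph --- upgrading ``$B$ triangularizes the $K_l(z)$'' to ``$V_i/V_{i-1}$ is a standard module''. The subtlety is that (M3) only guarantees a triangularizing basis with $\lambda$‑independent diagonals for the $D_l$, and an arbitrary $K_l$‑triangularizing basis need not enjoy this; the bridge is the closure of $\CFm$ under subquotients together with the classification of its simple objects. Granting those, the rest is the routine computation above.
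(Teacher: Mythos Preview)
Your proof is correct and follows essentially the same approach as the paper: both arguments use that the $B$-flag is stable under the $D_l(z)$, that $\CFm$ is closed under subquotients so each one-dimensional factor lies in $\CFm$ and is hence some $\BM_{\Bg}$, and then read off the diagonal entry from the intertwiner. The paper packages this as an induction on $\dim V$ (treating $V'=\mathrm{span}(b_1,\dots,b_{m-1})$ and $V/V'$), whereas you unfold the induction and treat all composition factors $V_i/V_{i-1}$ at once; the base-case computation in the paper is exactly your third-paragraph identity with $\Phi$ going in the opposite direction, which is why you end up with $\varphi_{b_i}=\psi_i^{-1}$.
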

Recall that $[K_l]_{bb}(z;\lambda)$ is the coefficient of $b$ in $K_l(z) b$. This lemma says that if the matrices of the $K_l(z)$ are upper triangular, then their diagonal entries must be of the form $f(z) h(\lambda)$, and the $h(\lambda)$ can be gauged away uniformly.

More precisely,  the new basis $\{\varphi_b(\lambda) b \ |\ b \in B \}$ with the ordering induced from $B$ satisfies (M3) in Definition \ref{defi: merom eigen}; the diagonal entry of $K_l(z)$ associated to $\varphi_b(\lambda)b$ is $f_{b,l}(z)$. This yields the following identity in the Grothendieck group $K_0(\CFm)$:
$$ [V] = \sum_{b \in B} [\BM(f_{b,1}(z), f_{b,2}(z), \cdots, f_{b,N}(z))]. $$
\begin{proof}
Write $B = \{b_1 < b_2 < \cdots < b_m\}$. We proceed by induction on the dimension $m = \dim (V)$. If $m = 1$, then there exist $\Bf = (f_1(z),f_2(z),\cdots,f_N(z)) \in (\BM_{\BC}^{\times})^N$ and an isomorphism $\Phi: \BM_f \longrightarrow V$ in category $\CFm$. Let $\Phi(1) = \varphi(\lambda) b_1$. Then applying $\Phi K_l(z) = K_l(z) \Phi$ to $1$ we obtain the desired identity
$$ f_l(z) \varphi(\lambda) = [K_l]_{b_1b_1}(z;\lambda) \varphi(\lambda + \hbar \epsilon_l). $$
If $m > 1$, then the subspace $V'$ of $V$ spanned by $(b_1, b_2, \cdots,  b_{m-1})$ is stable by the $K_l(z)$ and $D_l(z)$ by the triangularity assumption. So $V'$ is an object of category $\CFm$ and we obtain a short exact sequence $0 \longrightarrow V' \longrightarrow V \longrightarrow V/V' \longrightarrow 0$. The rest is clear by applying the induction hypothesis to $V', V/V'$, which have ordered bases $\{b_1 < b_2 < \cdots < b_{m-1}\}$ and $\{b_m + V'\}$ respectively. 
\end{proof}

\begin{defi}  \label{def: O}
$\BGG$ is the full subcategory of $\tBGG$ consisting of $\CE$-modules $X$ such that $X[\mu]$ endowed with the action of the $\eD_l(z)$ belongs to $\CFm$ for all $\mu \in \wt(X)$.
\end{defi}

The definition of $\tBGG$ is standard as in the cases of Kac--Moody algebras \cite{Ka} and quantum affinizations \cite{H2}. Definition \ref{def: O} is a special feature of elliptic quantum groups. It is meant to loose the dependence on the dynamical parameter $\lambda$. \footnote{For the elliptic quantum group associated to an arbitrary finite-dimensional simple Lie algebra, Gautam--Toledano Laredo \cite[\S 2.3]{GTL2} defined a category of {\it integrable} modules on which the action of the elliptic Cartan currents, analogs of $\eD_k(z)$, is independent of $\lambda$. The asymptotic modules that we will construct in Section \ref{sec: asym} are not integrable.  }

 $\BGG$ is an abelian subcategory of $\tBGG$.  For $X$ in category $\BGG$, Eq.\eqref{def: elliptic diagonal} defines difference operators $K_l(z): X \longrightarrow X $ of bi-degree $(\epsilon_l,\epsilon_l)$ for $1\leq l \leq N$. \footnote{The $K_l(z)$ do not come from the elliptic quantum group, yet formally they are elliptic Cartan currents $K_l^+(z)$ in \cite[Corollary E.24]{K1}, arising from a Gauss decomposition of an $\hL$-matrix \cite{Ding-Frenkel}.  }

 Following \cite[Definition 2.1]{C}, a non-zero weight vector of a module $X$ in category $\tBGG$ is called {\it singular} if it is annihilated by the $L_{ij}(z)$ for $1\leq j < i \leq N$.   
  
 \begin{lem} \label{lem: K vs L}
Let $X$ be in category $\BGG$. If $v \in X$ is singular, then $K_i(z) v = \hL_{i}(z) v$ for all $1\leq i \leq N$.
 \end{lem}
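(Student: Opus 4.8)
The plan is to exploit the elliptic quantum minors $\eD_k(z)$ and their relation to the $K_l(z)$ via the factorization \eqref{def: elliptic diagonal}, together with the defining property of a singular vector. Recall that by Corollary \ref{cor: Jucys-Murphy}, $\eD_k(z)$ commutes with all $L_{ij}(w)$ with $N-k < i,j \leq N$; in particular $\eD_k(z)$ is built only from the ``lower-right'' block of generators. The key observation is that when acting on a singular vector $v$, the alternating sum defining $\eD_k(z)$ in \eqref{def: determinant} collapses: each summand $\prod_{i=N}^{N-k+1} L_{\sigma(i),i}(z+(N-i)\hbar)$ applied to $v$, when read from right to left, first hits $v$ with $L_{\sigma(N),N}(z)$; since $v$ is singular, $L_{jN}(z)v = 0$ for $j > N$ is vacuous, but more usefully one propagates the triangularity through the product. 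I would argue by a downward induction on the letters $N, N-1, \dots, N-k+1$ that only the identity permutation $\sigma = \mathrm{id}$ contributes, because any $\sigma \neq \mathrm{id}$ in $\mathfrak{S}^k$ must at some stage apply an $L_{ij}(w)$ with $i > j$ (a strictly lower-triangular generator) to a vector obtained from $v$ by applying only diagonal generators $L_{pp}(w')$ — which, by the weight grading and the fact that diagonal generators preserve the ``singular'' property in the relevant sense, still annihilates.

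**Key steps.** First I would make precise the claim that $L_{pp}(w)v$ remains singular when $v$ is singular: this follows from the RLL relation \eqref{rel: RLL explicit}, which lets one commute $L_{ij}(z)$ with $i>j$ past $L_{pp}(w)$ up to a triangular correction involving other lower generators and diagonal ones, all of which annihilate $v$. Second, using this, I would show $\eD_k(z) v = \mor(\Theta_k(\lambda))\mol(\Theta_k(\lambda))^{-1} \prod_{i=N}^{N-k+1} L_{ii}(z+(N-i)\hbar) v$; the $\Theta_k$ factors are exactly the $\mor/\mol$ ratios of $\theta(\lambda_{ij})$ over the relevant index range, which by \eqref{defi: auxiliary L} reassemble the product $\prod L_{ii}(\cdots)$ into a product of the $\hat L_i(z+\cdots)$. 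Concretely $\eD_k(z)v = \hat L_N(z)\hat L_{N-1}(z+\hbar)\cdots \hat L_{N-k+1}(z+(k-1)\hbar) v$. Third, I would compare this with \eqref{def: elliptic diagonal}, $\eD_k(z) = K_N(z)K_{N-1}(z+\hbar)\cdots K_{N-k+1}(z+(k-1)\hbar)$, which holds as an identity of difference operators on $X$. Running $k$ from $1$ up to $N$ and peeling off factors one at a time — using invertibility of each $K_l(z)$ and $\hat L_l(z)$ — gives $K_i(z)v = \hat L_i(z) v$ for all $i$ by induction on the descending chain.

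**Main obstacle.** The delicate point is bookkeeping with the moment-map factors $\mor(\Theta_k(\lambda))$ and $\mol(\Theta_k(\lambda))$ and the $\mor(\theta(\lambda_{kj}))/\mol(\theta(\lambda_{kj}))$ appearing in $\hat L_k(z)$: one must check that the $\Theta_k$ prefactor in \eqref{def: determinant} exactly accounts for the accumulated $\theta(\lambda_{ij})$-ratios produced when the diagonal generators $L_{ii}$ are reordered and when the difference-operator bidegrees shift $\lambda$. This is the step where the choice of normalization in \cite{TV} is used essentially, and where an off-by-a-shift error is easiest to make. The other point requiring care is justifying that in the expansion of $\eD_k(z)v$ only $\sigma=\mathrm{id}$ survives — strictly speaking one should induct on the number of non-fixed points of $\sigma$, commuting the first (rightmost) strictly-lower generator $L_{\sigma(i),i}(w)$ with $i$ its true position leftward through diagonal generators, each commutation producing only terms that are again lower-triangular or diagonal acting ultimately on $v$; I expect this to be routine given \eqref{rel: RLL explicit} but it is the computational heart of the argument.
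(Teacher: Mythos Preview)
Your approach is essentially the paper's: collapse $\eD_k(z)v$ to the $\sigma=\mathrm{Id}$ term, rewrite it as $\hat L_N(z)\hat L_{N-1}(z+\hbar)\cdots\hat L_{N-k+1}(z+(k-1)\hbar)v$, compare with the $K$-factorization \eqref{def: elliptic diagonal}, and induct. Two points deserve tightening.

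\emph{Singularity of intermediate vectors.} Rather than commuting $L_{ij}(z)$ past $L_{pp}(w)$ via RLL (which produces several cross terms whose triangularity is not obvious without further work), the paper argues by weights. Let $Y$ be the submodule generated by $v$; by \cite[Lemma 2.3]{C} it is spanned by products $L_{p_1q_1}(z_1)\cdots L_{p_nq_n}(z_n)v$ with $p_l\leq q_l$, so $\wt(Y)\subset\alpha+\BQ_-$ where $\alpha=\wt(v)$. Hence every nonzero vector of $Y[\alpha]$ is automatically singular. Since each diagonal $L_{pp}(w)$ preserves weight and stays in $Y$, the needed singularity at each step is immediate. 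Your RLL route is plausible but not as clean, and your one-line justification glosses over the structure of the correction terms.

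\emph{The peeling step.} You cannot invoke invertibility of $\hat L_l(z)$: nothing in category $\BGG$ guarantees it. The paper peels from the \emph{outside in}: starting from
\[
\eD_{t+1}(z)v=\hat L_N(z)\,\hat L_{N-1}(z+\hbar)\cdots\hat L_{N-t}(z+t\hbar)v,
\]
it applies the induction hypothesis $K_i=\hat L_i$ (for $i>N-t$) successively to the singular vectors $\hat L_{N-j}(\cdot)\cdots\hat L_{N-t}(\cdot)v\in Y[\alpha]$, replacing the leftmost $\hat L$'s by $K$'s one at a time. After all replacements,
\[
K_N(z)\cdots K_{N-t+1}(z+(t-1)\hbar)\,\hat L_{N-t}(z+t\hbar)v=\eD_{t+1}(z)v=K_N(z)\cdots K_{N-t}(z+t\hbar)v,
\]
and now only the invertibility of the $K_l$'s is needed to conclude $\hat L_{N-t}(z+t\hbar)v=K_{N-t}(z+t\hbar)v$. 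Your description ``peeling off factors one at a time using invertibility of each $K_l(z)$ and $\hat L_l(z)$'' is ambiguous and, taken literally, uses an unavailable hypothesis.

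A minor slip: in the descending product $\prod_{i=N}^{N-k+1}$ the rightmost factor (the one hitting $v$ first) is $L_{\sigma(N-k+1),N-k+1}(z+(k-1)\hbar)$, not $L_{\sigma(N),N}(z)$. This does not affect the argument, since the rightmost index $N-k+1$ is the smallest in the permuted range, forcing $\sigma(N-k+1)=N-k+1$ by singularity.
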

 \begin{proof}
Descending induction on $i$: for $i = N$ we have $K_N(z) = L_{NN}(z) = \hL_N(z)$. Assume the statement for $i > N-t$ where $1\leq t < N$. We need to prove the case $i = N-t$. Let $\alpha$ be the weight of $v$ and let $Y$ be the submodule of $X$ generated by $v$. By \cite[Lemma 2.3]{C}, $Y$ is linearly spanned by vectors of the form
 $$ L_{p_1q_1}(z_1) L_{p_2q_2}(z_2) \cdots L_{p_nq_n}(z_n) v $$
where $1\leq p_l \leq q_l \leq N$ and $z_l \in \BC$ for $1\leq l \leq n$. So $\alpha + \epsilon_p- \epsilon_q \notin \wt(Y)$ for $1\leq p < q \leq N$, and any non-zero vector $\omega \in  Y[\alpha]$ is singular. Apply $\eD_k(z)$ to $\omega$. At the right-hand side of Eq.\eqref{def: determinant}  only the term $\sigma = \mathrm{Id}$ is non-zero and equal to
$\hL_{N}(z) \hL_{N-1}(z+\hbar) \cdots \hL_{N-k+1}(z+(k-1)\hbar) \omega$ by Eq.\eqref{defi: auxiliary L}.
It follows that 
\begin{align*}
\eD_{t+1}(z)  v &= \underline{\hL_N(z)} \hL_{N-1}(z+\hbar) \cdots \hL_{N-t+1}(z+(t-1)\hbar) \hL_{N-t}(z+t\hbar) v \\ 
&= K_N(z) \underline{\hL_{N-1}(z+\hbar)}  \cdots \hL_{N-t+1}(z+(t-1)\hbar) \hL_{N-t}(z+t\hbar) v \\
&= \cdots = K_N(z) K_{N-1}(z+\hbar) \cdots K_{N-t+1}(z+(t-1)\hbar) \hL_{N-t}(z+t\hbar) v.
\end{align*}
Here we applied the induction hypothesis to $N,N-1,\cdots,N-t+1$ successively to singular vectors to the right of the underlines. Since the $K_l(z)$ are invertible, in view of Eq.\eqref{def: elliptic diagonal} we must have $\hL_{N-t}(z+t\hbar) v = K_{N-t}(z+t\hbar) v$.
 \end{proof}

We extend the q-character theory of H. Knight and Frenkel--Reshetikhin to category $\BGG$, as in \cite[\S 3]{FZ}. Take the product group 
$ \CMw := \CM \times \Hlie$, by viewing $\Hlie$ as an additive group.
Let $\varpi: \CMw \twoheadrightarrow \Hlie$ be the projection to the second component.

As in \cite[\S 3.2]{HL}, let $\CMt$ be the set of formal sums  $\sum_{\Bf \in \CMw} c_{\Bf} \Bf$ with integer coefficients $c_{\Bf} \in \BZ$ such that: for $\mu \in \Hlie$, all but finitely many $c_{\Bf}$ with $\varpi(\Bf) = \mu$ is zero; the set $\{\varpi(\Bf): c_{\Bf} \neq 0 \}$ is contained in a finite union of cones $\nu + \BQ_-$ with $\nu \in \Hlie$.  
Make $\CMt$ into a ring: addition is the usual one of formal sums; multiplication is induced from that of $\CMw$. 

\begin{defi} \label{defi: q-char}
Let $X$ be in category $ \BGG$. For $\mu \in \wt(X)$, since $X[\mu]$ equipped with the difference operators $\eD_k(z)$ is in category $\CFm$, in the Grothendieck group of which we have $[X[\mu]] = \sum_{i=1}^{\dim X[\mu]} [\BM(\Bf^{\mu,i})]$
where $\Bf^{\mu,i} \in \CM$ for $1\leq i \leq \dim X[\mu]$. Each of the $(\Bf^{\mu,i}; \mu) \in \CMw$ is called an {\it e-weight} of $X$. Let $\ewt(X)$ be the set of e-weights of $X$. The {\it q-character} of $X$ is defined to be
$$ \qc(X) := \sum_{\mu \in \wt(X)} \sum_{i=1}^{\dim X[\mu]} (\Bf^{\mu,i}; \mu) \in \CMt.  $$
\end{defi}

\begin{prop} \label{prop: monoidal O}
Let $X,Y$ be in category $\BGG$. The $\CE$-module $X \wtimes Y$ is also in category $\BGG$ and $\qc(X\wtimes Y) = \qc(X)\qc(Y)$.
\end{prop}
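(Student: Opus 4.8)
The plan is to verify the two assertions separately: first that $X \wtimes Y$ lies in category $\BGG$, and then the multiplicativity of the $q$-character. For the first part, I already know from the discussion after Definition~\ref{def: tilde O} that $X\wtimes Y$ is an object of $\tBGG$, so the only thing to check is Definition~\ref{def: O}: that each weight space $(X\wtimes Y)[\gamma]$, equipped with the difference operators $\eD_l^{X\wtimes Y}(z)$, belongs to $\CFm$. The key structural input is the coproduct formula \eqref{def: coproduct} together with the $\Hlie$-algebra morphism $\BD^X \dt \BD^Y \to \BD^{X\wtimes Y}$ from \S\ref{ss-h algebras}: applying $\Delta$ iteratively to the elliptic quantum minor $\eD_l(z)$ of Eq.~\eqref{def: determinant}, and using that $\Delta(\eD_l(z)) - \eD_l(z)\dt\eD_l(z)$ is supported in bi-degrees $(-\varpi_{N-l},\alpha)\otimes(\alpha,-\varpi_{N-l})$ with $-\varpi_{N-l}\prec\alpha$ (Corollary~\ref{cor: Jucys-Murphy}), one sees that $\eD_l^{X\wtimes Y}(z)$ acting on $(X\wtimes Y)[\gamma]$ is, relative to a suitable basis built from bases of the $X[\alpha]$ and $Y[\beta]$, block upper triangular, with diagonal blocks given by $\eD_l^X(z)\wtimes\eD_l^Y(z)$ on the "extremal" summand $X[\alpha]\wtimes Y[\beta]$.

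Concretely, I would fix $\gamma$ and order the finitely many pairs $(\alpha,\beta)$ with $\alpha+\beta=\gamma$ so that the lexicographic order $\prec$ on $\alpha$ is respected (this uses that $\wt(X),\wt(Y)$ lie in finite unions of cones $\mu+\BQ_-$, so there are only finitely many such pairs and the order is well-founded). On each $X[\alpha]\wtimes Y[\beta]$ choose the bases provided by membership of $X[\alpha],Y[\beta]$ in $\CFm$, i.e.\ bases in which the $\eD_l^X(z),\eD_l^Y(z)$ are upper triangular with meromorphic-in-$z$ nonvanishing diagonal entries. Corollary~\ref{cor: Jucys-Murphy} guarantees that $\eD_l(z)$ commutes with enough $L_{ij}(w)$ and that its "error term" only raises $\alpha$ in the $\prec$-order, so in the concatenated basis $\eD_l^{X\wtimes Y}(z)$ is upper triangular, its diagonal being the diagonals of $\eD_l^X(z)\wtimes\eD_l^Y(z)$ on the various blocks — hence meromorphic, nonvanishing in $z$, and with meromorphic-in-$(z,\lambda)$ off-diagonal entries. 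That is exactly (M3), so $(X\wtimes Y)[\gamma]\in\CFm$ and $X\wtimes Y\in\BGG$.

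For the multiplicativity $\qc(X\wtimes Y)=\qc(X)\qc(Y)$, the point is to compute, for each $\gamma$, the class $[(X\wtimes Y)[\gamma]]$ in $K_0(\CFm)$ via the $K_l(z)$ rather than the $\eD_l(z)$ (recall $K_l(z)$ is recovered from the $\eD_l(z)$ by Eq.~\eqref{def: elliptic diagonal}, and (M3) holds for the $K_l(z)$ too). Using the block-upper-triangular structure above, $[(X\wtimes Y)[\gamma]] = \sum_{\alpha+\beta=\gamma}[X[\alpha]\wtimes Y[\beta]]$ where each summand carries the diagonal action of $K_l^X(z)\wtimes K_l^Y(z)$. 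Now I invoke Lemma~\ref{lem: category F}: writing $[X[\alpha]]=\sum_{b}[\BM(f_{b,1},\dots,f_{b,N})]$ and $[Y[\beta]]=\sum_{c}[\BM(g_{c,1},\dots,g_{c,N})]$, one computes directly from the definition of $\wtimes$ and of the difference operators on $\BM_{\Bf}$ that $\BM_{\Bf}\wtimes\BM_{\Bg}\cong\BM(f_1g_1,\dots,f_Ng_N)$ in $\CFm$ — here one checks that the diagonal entry of $K_l^X(z)\wtimes K_l^Y(z)$ on the tensor of two one-dimensional pieces multiplies the $z$-functions and that the $\lambda$-dependent gauge factors combine consistently with the relation defining $\wtimes$. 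Translating through the group $\CMw=\CM\times\Hlie$, the $e$-weight of $\BM(\Bf)\wtimes\BM(\Bg)$ at weight $\alpha+\beta$ is the product $(\Bf;\alpha)(\Bg;\beta)$. Summing over all $\alpha,\beta,b,c$ gives precisely $\qc(X)\qc(Y)$, with the $\CMt$ finiteness conditions inherited from those on $\wt(X),\wt(Y)$.

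The main obstacle is the triangularity bookkeeping in the first part: making rigorous that $\eD_l^{X\wtimes Y}(z)$ really is upper triangular in the concatenated basis requires carefully tracking, via the co-associativity $(\Delta\dt 1)\Delta=(1\dt\Delta)\Delta$ and Corollary~\ref{cor: Jucys-Murphy}, that every correction term coming from $\Delta(\eD_l(z))-\eD_l(z)\dt\eD_l(z)$ strictly increases the first tensor weight in the $\prec$-order and therefore lands in a strictly later block — and simultaneously that the diagonal entries computed from $\eD_l^X(z)\wtimes\eD_l^Y(z)$ are genuinely nonvanishing meromorphic functions of $z$ alone (independence of $\lambda$ on the diagonal is what Definition~\ref{def: O} is designed to preserve). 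Once this ordering/triangularity lemma is in hand, the $q$-character computation is essentially the identity $\BM_{\Bf}\wtimes\BM_{\Bg}\cong\BM(\Bf\Bg)$ plus additivity of $[\,\cdot\,]$ over short exact sequences, which is routine.
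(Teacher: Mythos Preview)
Your proposal is correct and takes essentially the same approach as the paper: build a tensor basis ordered by $\prec$ on the first weight and by the given (M3)-bases within each block, then use Corollary~\ref{cor: Jucys-Murphy} to obtain upper-triangularity of $\eD_l^{X\wtimes Y}(z)$ with multiplicative diagonal entries, from which both membership in $\BGG$ and the multiplicative q-character formula follow. The paper's proof is just a terser version of this (it writes down the partial order $\unlhd$ on the full tensor basis directly and reads off the diagonal without the $\BM_{\Bf}\wtimes\BM_{\Bg}$ detour); note also that co-associativity of $\Delta$ is not actually needed here since only two tensor factors are in play.
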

\begin{proof}
Clearly $X \wtimes Y$ is in category $\tBGG$. Let us verify Property (M3) of Definition \ref{defi: merom eigen}.
The idea is almost the same as that of \cite[Prop.3.9]{FZ}, which in turn followed \cite[\S 2.4]{FR}. For $\alpha, \beta \in \Hlie$, let us choose ordered bases $(v_i^{\alpha})_{1\leq i \leq p_{\alpha}}$ and $(w_{j}^{\beta})_{1\leq j \leq q_{\beta}}$ for $X[\alpha]$ and $Y[\beta]$ respectively satisfying (M3). Note that $(v_i^{\alpha} \wtimes w_j^{\beta})_{\alpha,\beta,i,j}$ forms a basis $\mathcal{B}$ of $X \wtimes Y$. Choose a partial order $\unlhd$ on $\mathcal{B}$ with the property:
\begin{itemize}
\item[(a)] $v_{i}^{\alpha} \wtimes w_{j}^{\beta} \unlhd v_r^{\alpha} \wtimes w_s^{\beta}$ if $i \leq r$ and $j \leq s$;
\item[(b)] $v_i^{\alpha} \wtimes w_j^{\beta} \lhd v_r^{\gamma} \wtimes w_s^{\delta}$ if $\gamma \prec \alpha$ and $\beta \prec \delta$.
\end{itemize}
For $1 \leq k \leq N$, by Corollary \ref{cor: Jucys-Murphy}, $\eD_k^{X\wtimes Y}(z) (v_r^{\gamma} \wtimes w_s^{\delta})= \eD_k^X(z) v_r^{\gamma}\ \wtimes\ \eD_k^Y(z) w_s^{\delta} + Z$ where $Z$ is a finite sum of vectors in $X[\gamma +\varpi_{N-k} + \eta] \wtimes Y[\delta - \varpi_{N-k}- \eta]$ for $\eta \in \Hlie$ such that $-\varpi_{N-k} \prec \eta$. So the ordered basis $\mathcal{B}$ induces an upper triangular matrix for $\eD_k^{X \wtimes Y}(z)$ whose diagonal entry associated to $v_r^{\gamma} \wtimes w_s^{\delta}$ is the product of those associated to $v_r^{\gamma}$ and $w_s^{\delta}$. This implies (M3) for the weight spaces $(X\wtimes Y)[\alpha]$ with bases $\mathcal{B} \cap (X\wtimes Y)[\alpha]$ and the multiplicative formula of $q$-characters as well.
\end{proof}
For $f(z) \in \BM_{\BC}^{\times}$ and $\alpha \in \Hlie$ we make the simplifications 
\begin{equation*} 
f(z):=(f(z),\cdots,f(z);0),\quad e^{\alpha} := (1,\cdots,1;\alpha) \in \CMw.
\end{equation*}
\begin{defi} \label{defi: fund weights}
Let $1\leq i,k \leq N$ such that $i \neq N$. Set $\ell_k := \frac{N-k-1}{2}$.
For $a \in \BC$, define the following elements of $\CMw$:
\begin{gather*}
A_{i,a} := (\underbrace{1,\cdots,1}_{i-1}, \frac{\theta(z+(a-\ell_i)\hbar)}{\theta(z+(a-\ell_i-1)\hbar)}, \frac{\theta(z+(a-\ell_i)\hbar)}{\theta(z+(a-\ell_i+1)\hbar)},\ \underbrace{1,\cdots,1}_{N-i-1};\ \alpha_i); \\
\Psi_{k,a} := (\underbrace{\theta(z+(a-\ell_k)\hbar),\cdots,\theta(z+(a-\ell_k)\hbar)}_{k},\ \underbrace{1,\cdots,1}_{N-k};\ a\varpi_k); \\
Y_{k,a} := (\underbrace{\frac{\theta(z + (a-\ell_k+\frac{1}{2})\hbar)}{\theta(z + (a-\ell_k-\frac{1}{2})\hbar)}, \cdots, \frac{\theta(z + (a-\ell_k+\frac{1}{2})\hbar)}{\theta(z + (a-\ell_k-\frac{1}{2})\hbar)}}_k,\ \underbrace{1,\cdots, 1}_{N-k};\ \varpi_k); \\
\boxed{k}_a := (\underbrace{\frac{\theta(u+\hbar)\theta(u-\hbar)}{\theta(u)^2}, \cdots, \frac{\theta(u+\hbar)\theta(u-\hbar)}{\theta(u)^2}}_{k-1},\frac{\theta(u+\hbar)}{\theta(u)},\ \underbrace{1, \cdots, 1}_{N-k};\ \epsilon_k)|_{u = z + a\hbar}.   
\end{gather*}
\end{defi}
$A_{i,a}, Y_{k,a}$ and $\Psi_{k,a}$ are elliptic analogs of generalized simple roots, fundamental $\ell$-weight \cite{FR} and prefundamental weight \cite{HJ}. 
Set $c_{ij} := 2 \delta_{ij} - \delta_{i,j\pm 1}$ and $Y_{0,a} = \Psi_{0,a} := 1$. Then (in the products $1\leq j \leq N$)
\begin{gather}   \label{equ: A Psi}
Y_{k,a} = \frac{\Psi_{k,a+\frac{1}{2}}}{\Psi_{k,a-\frac{1}{2}}}, \quad A_{i,a} =  \prod_{j} \frac{\Psi_{j,a+\frac{c_{ij}}{2}}}{\Psi_{j,a-\frac{c_{ij}}{2}}} = Y_{i,a-\frac{1}{2}}Y_{i,a+\frac{1}{2}} \prod_{j=i\pm 1} Y_{j,a}^{-1}.
\end{gather}
The interplay of $A, \Psi$ is the source of the three-term Baxter's Relation \eqref{equ: asymptotic Baxter} in category $\BGG$. 
Note that $A, Y$ can also be written in terms of $\boxed{k}$:
\begin{gather} 
     A_{i,a} = \boxed{i}_{a-\ell_i}\boxed{i+1}_{a-\ell_i}^{-1},\quad Y_{k,a} = \prod_{j=1}^k \boxed{j}_{a-\ell_k-\frac{1}{2}+j-k}, \label{equ: tableau A Y} \\
     \Psi_{N,a} = \theta(z+(a+\frac{1}{2})\hbar), \quad Y_{N,a} = \frac{\theta(z+(a+1)\hbar)}{\theta(z+a\hbar)}.
\end{gather}
\subsection{Vector representations.} \label{ss: vector representation}
Let $\BV := \oplus_{i=1}^N \BM v_i$ with $\Hlie$-grading $\BV[\epsilon_i] = \BM v_i$. Rewriting Eq.\eqref{equ: dYBE elliptic} in the form of Eq.\eqref{rel: RLL explicit}, we obtain an $\CE$-module structure on $\BV$:
$$ L_{ij}(z) v_k =  \sum_{l=1}^N \frac{\theta(z+\hbar)}{\theta(z)} R^{jk}_{il}(z;\lambda) v_l. $$
The  factor $\frac{\theta(z+\hbar)}{\theta(z)}$ is used to simplify the q-character; see Eq.\eqref{equ: vect rep JM action}. 

If $i \leq N-k+1$, since $L_{pq}(z) v_i = 0$ for all $N\geq p > q > N-k$, only the term $\sigma = \mathrm{Id}$ in Eq.\eqref{def: determinant} survives and 
\begin{gather*}
 \eD_k(z) v_i = \frac{\mor(\Theta_k(\lambda))}{\mol(\Theta_k(\lambda))}   \prod^{N-k+1}_{j=N} L_{jj}(z+(N-j)\hbar) v_i  = g_k^i(z;\lambda) v_i, \\
 g_k^i(z;\lambda) = \prod_{j>N-k} \frac{\theta(\lambda_{ij}+\hbar)}{\theta(\lambda_{ij})} \quad \mathrm{for}\ i \leq N-k, \quad  g_k^{N-k+1}(z;\lambda) = \frac{\theta(z+k\hbar)}{\theta(z+(k-1)\hbar)}.
\end{gather*}
If $i > N-k+1$, then $L_{N-k+1,i}(z) v_{N-k+1} = \frac{\theta(\hbar)\theta(z+\lambda_{N-k+1,i})}{\theta(z)\theta(\lambda_{N-k+1,i})} v_i$. By Corollary \ref{cor: Jucys-Murphy}, $\eD_k(z) v_i = g_k^{N-k+1}(z;\lambda) v_i$. Let us perform a change of basis (see \cite[Eq.(E.2)]{K1}):
\begin{align*}
\tilde{v}_i := v_i \prod_{l>i} \theta(\lambda_{il}+\hbar) \in \BV [\epsilon_i]. 
\end{align*}
After a direct computation, we obtain:
\begin{gather}  \label{equ: vect rep JM action}
\eD_k(z) \tilde{v}_i = \tilde{v}_i \times 
\begin{cases}
1 & \mathrm{for}\ i \leq N-k,\\
\frac{\theta(z+k\hbar)}{\theta(z+(k-1)\hbar)} & \mathrm{for}\ i > N-k.
\end{cases}
\end{gather}
 The basis $\{\tilde{v}_1 < \tilde{v}_2 < \cdots < \tilde{v}_N\}$ of $\BV$ satisfies Property (M3) of Definition \ref{defi: merom eigen}, so $\BV$ is in category $\BGG$. For $a \in \BC$, let $\BV(a)$ be the pullback of $\BV$ by the spectral parameter shift $\Phi_a$ in Eq.\eqref{def: spectral shift}. Naturally $\BV(a)$ is in category $\BGG$; it is called a {\it vector representation}. Combining with Eq.\eqref{def: elliptic diagonal} we have:
\begin{align*} 
\qc(\BV(a)) &= \boxed{1}_a + \boxed{2}_a + \cdots + \boxed{N}_a. 
\end{align*}

\subsection{Highest weight modules.}  \label{ss- hw}
Let $X$ be  in category $\tBGG$. A non-zero weight vector $v \in X[\alpha]$ is called a {\it highest weight vector} if it is singular and $\hL_{k}(z) v = f_k(z) v$ for $1\leq k \leq N$; here the $f_k(z) \in \BM_{\BC}^{\times}$. Call $(f_1(z),f_2(z),\cdots, f_N(z); \alpha) \in \CMw$ the highest weight of $v$; by Lemma \ref{lem: K vs L} it belongs to $\ewt(X)$ if $X$ is in category $\BGG$. 

If there is a highest weight vector $v \in X[\alpha]$ of $X$ which also generates the whole module, then $X$ is called a {\it highest weight module}; see \cite[Definition 2.1]{C}. In this case, by \cite[Lemma 2.3]{C}, $X[\alpha] = \BM v$ and $\wt(X) \subseteq \alpha + \BQ_-$, so the highest weight vector is unique up to scalar product. This implies that $X$ admits a unique irreducible quotient. The highest weight of $v$ is also called the highest weight of $X$; it is of multiplicity one in $\qc(X)$ if $X$ is in category $\BGG$.

All irreducible modules in category $\BGG$ are of highest weight. 

By \cite[Theorem 2.8]{C}: two irreducible highest weight modules in category $\tBGG$ are isomorphic if and only if their highest weights are identical in $\CMw$;  all singular vectors of an irreducible highest weight  module in category $\tBGG$ are proportional. It follows that the q-characters distinguish irreducible modules in category $\BGG$. 

Let $\CR$ be the set of $\Bd \in \CMw$ which appears as the highest weight of an irreducible module in category $\BGG$. For $\Bd \in \CR$, let us fix an irreducible module $S(\Bd)$ in category $\BGG$ of highest weight $\Bd$.  Let $\CR_0$ (resp. $\CRf$) be the set of $\Bd \in \CR$  such that $S(\Bd)$ is one-dimensional (resp. finite-dimensional). 

We shall need the {\it completed} Grothendieck group $K_0(\BGG)$. Its definition is the same as that in \cite[\S 3.2]{HL}: elements are formal sums  $\sum_{\Bd \in \CR} c_{\Bd} [S(\Bd)]$ with integer coefficients $c_{\Bd} \in \BZ$ such that $\oplus_{\Bd} S(\Bd)^{\oplus |c_{\Bd}|}$ is in category $\BGG$; addition is the usual one of formal sums. As in the case of Kac--Moody algebras \cite[\S 9.6]{Ka}, for $\Bd \in \CR$ the multiplicity $m_{\Bd,X}$ of $S(\Bd)$ in any object $X$ of category $\BGG$ is well-defined due to Definition \ref{def: O}, and $[X] := \sum_{\Bd} m_{\Bd,X}[S(\Bd)]$ belongs to $K_0(\BGG)$. In the case $X = S(\Bd)$ the right-hand side is simply $[S(\Bd)]$ as $m_{\Be,S(\Bd)} = \delta_{\Bd,\Be}$ for $\Be \in \CR$.

By Proposition \ref{prop: monoidal O}, $K_0(\BGG)$ is endowed with a ring structure with multiplication $[X][Y] = [X \wtimes Y]$ for $X, Y$ in category $\BGG$. Together with Definition \ref{defi: q-char}, we obtain
\begin{cor} \label{cor: injectivity}
 The assignment $[X] \mapsto \qc(X)$ defines an injective morphism of rings $\qc: K_0(\BGG) \longrightarrow \CMt$. In particular, $K_0(\BGG)$ is commutative. 
\end{cor}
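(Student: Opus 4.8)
The plan is to deduce Corollary~\ref{cor: injectivity} from Proposition~\ref{prop: monoidal O} together with the structural facts about highest weight modules recorded just above it. First I would set up the ring homomorphism part: by Proposition~\ref{prop: monoidal O}, the assignment $X \mapsto \qc(X)$ is additive on short exact sequences (since the $\eD_l(z)$-action on each weight space of a subquotient is read off from Jordan--H\"older data in $\CFm$, and $[X[\mu]] = [X'[\mu]] + [(X/X')[\mu]]$ in $K_0(\CFm)$), so it descends to a group homomorphism $\qc\colon K_0(\BGG) \to \CMt$; that the target lies in $\CMt$ (finiteness in each fixed weight, support in a finite union of cones $\nu + \BQ_-$) is immediate from the definition of category $\tBGG$ and Definition~\ref{defi: q-char}. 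The multiplicativity $\qc(X \wtimes Y) = \qc(X)\qc(Y)$ is exactly the second assertion of Proposition~\ref{prop: monoidal O}, and since the product in $K_0(\BGG)$ is defined by $[X][Y] = [X\wtimes Y]$, this says $\qc$ is a ring homomorphism.

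Next I would prove injectivity. The key input is the paragraph on highest weight modules: every irreducible module $S(\Bd)$ in category $\BGG$ has highest weight $\Bd$ of multiplicity one in $\qc(S(\Bd))$, and by \cite[Theorem~2.8]{C} non-isomorphic irreducibles have distinct highest weights in $\CMw$. I would order the set $\CR$ of highest weights compatibly with the partial order on $\Hlie$ coming from the cone condition: because $\wt(S(\Bd)) \subseteq \mathrm{pr}(\Bd) + \BQ_-$ where $\mathrm{pr} = \varpi$, the highest weight $\Bd$ is $\prec$-maximal (in fact $\leq$-maximal under the standard ordering $\BQ_+$) among the e-weights of $S(\Bd)$, hence among the e-weights of any module whose composition factors include $S(\Bd)$. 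Now take $\xi = \sum_{\Bd \in \CR} c_\Bd [S(\Bd)] \in K_0(\BGG)$ with $\qc(\xi) = 0$; suppose for contradiction some $c_\Bd \neq 0$. Among the $\Bd$ with $c_\Bd \neq 0$ lying over a fixed $\mu \in \Hlie$ there are only finitely many, and the cone condition lets me choose one, say $\Bd_0$, whose image $\varpi(\Bd_0)$ is maximal in the appropriate sense so that $\Bd_0$ cannot occur as an e-weight of any $S(\Be)$ with $c_\Be \neq 0$, $\Be \neq \Bd_0$. Then the coefficient of $\Bd_0$ in $\qc(\xi)$ equals $c_{\Bd_0} \cdot (\text{multiplicity of }\Bd_0\text{ in }\qc(S(\Bd_0))) = c_{\Bd_0} \neq 0$, contradicting $\qc(\xi)=0$. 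Hence $\xi = 0$ and $\qc$ is injective. Commutativity of $K_0(\BGG)$ then follows since $\CMt$, being a ring of $\BZ$-linear formal sums in the abelian group $\CMw$, is commutative, and an injective ring map into a commutative ring forces the source to be commutative.

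The main obstacle I anticipate is the bookkeeping in the maximality argument: one must be careful that $\CR$ is not finite, that e-weights of $S(\Bd)$ genuinely lie in $\varpi(\Bd) + \BQ_-$ (not merely $\prec \varpi(\Bd)$), and that the defining finiteness conditions of $\CMt$ and of the completed Grothendieck group $K_0(\BGG)$ are compatible enough that ``take a maximal nonzero term'' is legitimate. This is handled by noting the e-weight support statement for highest weight modules already recorded above, and by the standard argument (as in \cite[\S 9.6]{Ka}) that in a finite union of cones $\nu + \BQ_-$ any nonempty subset of the support has a $\leq$-maximal element. Everything else is essentially a formal consequence of Proposition~\ref{prop: monoidal O} and the cited results of Chari and the $\CFm$-theory of \S\ref{ss-repr}, so no genuinely new computation is required.
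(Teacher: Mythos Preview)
Your proposal is correct and takes essentially the same approach as the paper, which merely cites Proposition~\ref{prop: monoidal O} and the preceding highest-weight discussion without spelling out the details. You have correctly filled in the standard triangularity argument (maximal $\varpi(\Bd_0)$, multiplicity one of the highest e-weight in $\qc(S(\Bd_0))$, and the cone condition ensuring such a maximal element exists), and your remarks on the potential bookkeeping pitfalls are accurate and adequately addressed.
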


Let $\BGGf$ be the full subcategory of $\BGG$ consisting of finite-dimensional modules. It is abelian and monoidal. Its Grothendieck ring $K_0(\BGGf)$ admits a $\BZ$-basis $[S(\Bd)]$ for $\Bd \in \CRf$, and is commutative as a subring of $K_0(\BGG)$.

By Proposition \ref{prop: monoidal O}, $S(\Bd)\ \wtimes\ S(\Be)$ admits an irreducible sub-quotient $S(\Bd \Be)$, so the three sets $\CR \supset \CRf \supset \CR_0$ are sub-monoids of $\CMw$.
\begin{lem}  \label{lem: highest weight condition}
Let $\Bd = ((f_k(z))_{1\leq k \leq N}; \mu) \in \CMw$. 
\begin{itemize}
\item[(i)] Suppose $\Bd \in \CR$. Then for $1\leq k < N$ we have 
$$ \frac{f_k(z)}{f_{k+1}(z)} = c \prod_{l=1}^n \frac{\theta(z+a_l\hbar)}{\theta(z+b_l\hbar)},\quad \mu_{k,k+1} = \sum_{l=1}^n (a_l-b_l) $$
for certain  $a_1, a_2, \cdots, a_n, b_1, b_2, \cdots, b_n \in \BC$ and $c \in \BC^{\times}$.
\item[(ii)] If $\Bd \in \CRf$, then (i) holds and after a rearrangement of the $a_l,b_l$ we have $a_l - b_l \in \BZ_{\geq 0}+\hbar^{-1}\Gamma$ for all $l$.
\item[(iii)] $\Bd \in \CR_0$ if and only if (ii) holds with $a_l-b_l \in \hbar^{-1}\Gamma$ for all $l$.
\end{itemize}
\end{lem}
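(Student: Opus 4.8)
The plan is to analyze the three parts by reducing the higher-rank statement to a rank-one (i.e.\ $\mathfrak{sl}_2$-type) computation, using the vector representations $\BV(a)$ and the structure of $K_0(\BGG)$. First I would fix $\Bd = ((f_k(z))_k;\mu) \in \CR$ and let $v$ be a highest weight vector of $S(\Bd)$. The quantity $\frac{f_k(z)}{f_{k+1}(z)}$ is, up to the $\ell_k$-shifts built into the $K_i(z)$ of Eq.\eqref{def: elliptic diagonal}, exactly the diagonal action of the elliptic Cartan current $\phi_k(z)$ described in the introduction's footnote. For part (i), I would observe that since $S(\Bd) \wtimes \BV(a)$ lies in category $\BGG$ and has a highest weight vector $v \wtimes \tilde v_1$, the set $\CR$ is a submonoid of $\CMw$; combined with Corollary \ref{cor: injectivity} and the q-character computation of $\BV(a)$ (giving the $\boxed{k}_a$), one sees that the highest $\ell$-weights achievable are generated multiplicatively, forcing $\frac{f_k(z)}{f_{k+1}(z)}$ to be an elliptic function, i.e.\ a ratio of products of $\theta(z+a_l\hbar)$ over $\theta(z+b_l\hbar)$ times a constant. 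The weight-grading constraint $L_{k,k+1}(z)$ raising weight by $\alpha_k$ and the bi-degree bookkeeping of the $K_l(z)$ (bi-degree $(\epsilon_l,\epsilon_l)$) then pin down $\mu_{k,k+1}$ to be $\sum_l(a_l-b_l)$: this is because applying $K_k(z)$ to $v$ multiplies by $f_k(z)$ up to a $\lambda$-shift $\varphi_v(\lambda)/\varphi_v(\lambda+\hbar\epsilon_k)$, and $S(\Bd) \in \BGG$ forces (via Lemma \ref{lem: category F}) the $z$-part to be genuinely elliptic with the stated quasi-periodicity matching $\mu_{k,k+1}$.

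For part (ii), assuming $\Bd \in \CRf$ so $S(\Bd)$ is finite-dimensional, I would use the $\CE_{\tau,\hbar}(\mathfrak{sl}_2)$-reduction: restrict attention to the subalgebra generated by $L_{ij}(z)$ with $i,j \in \{k,k+1\}$ (a copy of the rank-one elliptic quantum group, up to the auxiliary moment-map twists in Eq.\eqref{defi: auxiliary L}). The submodule of $S(\Bd)$ generated by $v$ under this subalgebra is a finite-dimensional highest weight module over $\CE_{\tau,\hbar}(\mathfrak{sl}_2)$, and for $\mathfrak{sl}_2$ the classification of finite-dimensional highest weight modules (the elliptic analogue of Chari--Pressley, available via \cite{C,TV}) says the highest weight takes the form $\prod_l \frac{\theta(z+a_l\hbar)}{\theta(z+b_l\hbar)}$ with each $a_l - b_l$ a non-negative integer modulo $\hbar^{-1}\Gamma$ — precisely the "Drinfeld polynomial" (here Drinfeld theta-product) condition, the $\Gamma$-ambiguity arising because $\theta$ has quasi-period $\Gamma$. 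That this rank-one highest weight module is nonzero and finite-dimensional because $S(\Bd)$ is finite-dimensional is the key link. Part (iii) is then the degenerate case: $S(\Bd)$ is one-dimensional iff $S(\Bd) \cong \BM_{\Bf}$-type, i.e.\ all $L_{ij}(z)$ with $i \neq j$ act as zero, which forces each rank-one factor to be the trivial (one-dimensional) module, i.e.\ the theta-product $\frac{f_k(z)}{f_{k+1}(z)}$ has all zeros and poles cancelling up to $\Gamma$-translation, equivalently $a_l - b_l \in \hbar^{-1}\Gamma$; conversely if this holds, the candidate one-dimensional module $\BM_{\Bf}$ is realized in category $\BGG$ by explicit construction as in \S\ref{ss: vector representation}.

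The main obstacle I anticipate is part (ii): making the $\mathfrak{sl}_2$-reduction rigorous. Unlike the affine quantum group case, there is no honest Hopf subalgebra $\CE_{\tau,\hbar}(\mathfrak{sl}_2) \hookrightarrow \CE_{\tau,\hbar}(\mathfrak{sl}_N)$ attached to a node $k$ — the RLL relations \eqref{rel: RLL explicit} mix all indices, and the moment-map corrections in $\hat L_k(z)$ involve the $\theta(\lambda_{kj})$ for all $j$. So I would instead argue indirectly: one does not need a subalgebra homomorphism, only that the $K_k(z)$-eigenvalue $f_k(z)/f_{k+1}(z)$ on the highest weight vector, together with the existence of a finite-dimensional module with that highest $\ell$-weight, is constrained. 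The cleanest route is to invoke the q-character machinery: a finite-dimensional $S(\Bd)$ has finite $\qc(S(\Bd)) \in \CMt$, and by Theorem \ref{thm: generalized TQ} (or rather the evaluation-module q-character formula promised after Definition \ref{defi: fund weights}, and the theory of dominant monomials in Section \ref{sec: KR}), the highest $\ell$-weight must be a product of the $Y_{k,a}$ with $a$ subject to integrality — and $Y_{k,a}$ contributes exactly a factor $\theta(z+(a-\ell_k+\frac12)\hbar)/\theta(z+(a-\ell_k-\frac12)\hbar)$ to the $k$-th slot, whose exponent difference is $1 \in \BZ_{\geq 0}$, with the $\Gamma$-ambiguity coming from $\theta(z+c) = \pm e^{\cdots}\theta(z+c+\gamma)$ for $\gamma \in \Gamma$, i.e.\ $c$ is only well-defined modulo $\Gamma$, hence $a$ modulo $\hbar^{-1}\Gamma$. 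Assembling these factors over the Dynkin data and isolating the ratio of the $k$-th and $(k+1)$-th coordinates yields the claimed form. I would present (i) fully, sketch (ii) via this q-character/dominant-monomial argument deferring the detailed combinatorics to Section \ref{sec: KR}, and dispatch (iii) by the one-dimensionality criterion together with the explicit construction.
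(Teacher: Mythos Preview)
Your proposal has a genuine gap in part (ii) and a weaker argument in part (i).

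For part (ii), your fallback route via Theorem \ref{thm: generalized TQ} and the dominant-monomial theory of Section \ref{sec: KR} is circular. The statement ``every element of $\CRf$ is dominant'' (i.e.\ a product of $Y_{k,a}$'s times a one-dimensional highest weight) is exactly what Lemma \ref{lem: highest weight condition}(ii) establishes; Theorem \ref{thm: fund module} quotes this lemma for that direction, and all of Section \ref{sec: KR} and Theorem \ref{thm: generalized TQ} sit downstream. So you cannot invoke those results here. Your first instinct --- the $\mathfrak{sl}_2$-reduction --- is in fact the right one, and the obstacle you raise is not fatal: you do not need an honest subalgebra embedding $\CE_{\tau,\hbar}(\mathfrak{sl}_2) \hookrightarrow \CE$. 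The RLL relations \eqref{rel: RLL explicit} restricted to indices $i,j,m,n \in \{k,k+1\}$ already give the full $\mathfrak{sl}_2$-type relations among $L_{kk}, L_{k,k+1}, L_{k+1,k}, L_{k+1,k+1}$ (the remaining $\lambda$-variables enter only through the moment maps and can be carried along). This is precisely what the paper does: it cites \cite[Theorem 4.1]{FZ} and \cite[Corollary 4.6]{FZ}, where the argument is written out for $L_{+-}, L_{-+}$, and says to replace those by $L_{k,k+1}, L_{k+1,k}$.

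For part (i), your argument does not actually produce the theta-quotient form. Observing that $\CR$ is a submonoid of $\CMw$ tells you nothing about the shape of an individual $f_k/f_{k+1}$ unless you already know generators for $\CR$ --- which is the converse direction (Corollary \ref{cor: highest weight in O}(i)), proved much later. The actual mechanism is a direct computation: apply the RLL relation with $(i,j,m,n) = (k+1,k,k+1,k)$ to the highest weight vector $v$. Since $L_{k+1,k}(z)v = 0$, this yields a three-term identity relating $f_k, f_{k+1}$ and the matrix coefficient of $L_{k,k+1}(z)L_{k+1,k}(w)$ on $v$, from which the quasi-periodicity of $f_k/f_{k+1}$ in $z$ (hence its theta-quotient form and the relation $\mu_{k,k+1} = \sum(a_l - b_l)$) follows. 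This is the content of \cite[Theorems 6 \& 9]{FV1}; the paper simply cites this rather than reproducing it.
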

\begin{proof}
(i) and (iii) are essentially \cite[Theorems 6 \& 9]{FV1}, which can be proved as in \cite[Theorem 4.1]{FZ} by replacing $L_{+-},L_{-+}$ therein with $L_{k,k+1},L_{k+1,k}$. (ii) comes from either \cite[Theorem 5.1]{C} or \cite[Corollary 4.6]{FZ}.
\end{proof}

As examples $Y_{N,a}, \Psi_{N,a} \in \CR_0$.
 Call an e-weight $\Be \in \CMw$ {\it dominant} (resp. {\it rational}) if $\Be = \Bd\Bm$ where $\Bd \in \CR_0$ and $\Bm$ is a product of the $Y_{i,a}$ (resp. the $\Psi_{i,a}\Psi_{i,b}^{-1}$) with $a,b \in \BC$ and $1\leq i \leq N$. Lemma \ref{lem: highest weight condition} implies that all elements of $\CRf$ (resp. $\CR$) are dominant (resp. rational). 
 
 \begin{theorem}\cite{C} \label{thm: fund module}
$\CRf$ is the set of dominant e-weights. 
\end{theorem}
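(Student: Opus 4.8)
The plan is to prove the two inclusions separately. The inclusion $\CRf \subseteq \{\text{dominant e-weights}\}$ is the easy direction: by Lemma \ref{lem: highest weight condition}(ii), any $\Bd \in \CRf$ has $f_k(z)/f_{k+1}(z) = c\prod_l \theta(z+a_l\hbar)/\theta(z+b_l\hbar)$ with $a_l - b_l \in \BZ_{\geq 0} + \hbar^{-1}\Gamma$; using $\theta(z+1)=-\theta(z)$ and $\theta(z+\tau)=-e^{-\BI\pi\tau - 2\BI\pi z}\theta(z)$ one can shift each $b_l$ by a lattice element of $\hbar^{-1}\Gamma$ so that $a_l - b_l \in \BZ_{\geq 0}$ at the cost of an extra constant/exponential factor, which is absorbed into a one-dimensional module. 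Then writing each ratio $\theta(z+a_l\hbar)/\theta(z+b_l\hbar)$ with $a_l - b_l = m_l \in \BZ_{\geq 0}$ as a telescoping product $\prod_{j=0}^{m_l - 1}\frac{\theta(z+(b_l+j+1)\hbar)}{\theta(z+(b_l+j)\hbar)}$ exhibits it as a product of factors of the shape appearing in $Y_{k,c}$ (compare the last displayed formula $Y_{N,a} = \theta(z+(a+1)\hbar)/\theta(z+a\hbar)$ and the definition of $Y_{k,a}$). Carrying this out for all $k$ and matching the weight component via $\mu_{k,k+1} = \sum_l(a_l - b_l)$ expresses $\Bd$ as $\Bd_0 \cdot \prod Y_{i,a}$ with $\Bd_0 \in \CR_0$, i.e. $\Bd$ is dominant.

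The reverse inclusion $\{\text{dominant e-weights}\} \subseteq \CRf$ is the substantive direction and is where I would invoke \cite{C} most heavily. Fix a dominant $\Be = \Bd_0 \prod_{i,a} Y_{i,a}^{n_{i,a}}$ with $n_{i,a} \in \BZ_{\geq 0}$ and $\Bd_0 \in \CR_0$. First I would reduce to the case $\Bd_0 = 1$: since $\CR_0$ consists of one-dimensional modules and $S(\Bd_0)\,\wtimes\,S(\Be')$ has irreducible subquotient $S(\Bd_0\Be')$ (the monoid remark before the lemma), twisting by a one-dimensional module preserves finite-dimensionality, so it suffices to realize $\prod_{i,a} Y_{i,a}^{n_{i,a}}$ in $\CRf$. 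For this, the natural building blocks are the evaluation modules of \cite{TV,C}, in particular the KR modules $W_{k,a}^{(r)}$ and the vector representations $\BV(a)$ introduced in Section \ref{ss: vector representation}; one checks from the q-character formula $\qc(\BV(a)) = \boxed{1}_a + \cdots + \boxed{N}_a$ and the identity $Y_{k,a} = \prod_{j=1}^k \boxed{j}_{a - \ell_k - \frac12 + j - k}$ (the displayed formula \eqref{equ: tableau A Y}) that a suitable tensor product of vector representations has a highest weight that is the required monomial in the $Y_{i,a}$. More directly, \cite[Theorem 5.1]{C} (the finite-dimensional highest-weight classification for $\CE$) should already state that every dominant e-weight is the highest weight of a finite-dimensional irreducible; then the theorem is essentially a translation of that classification into the present $\CMw$-language, using Lemma \ref{lem: highest weight condition}(ii) to see the two notions of "dominant" agree.

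The main obstacle I anticipate is bookkeeping rather than conceptual: reconciling the normalization conventions. The classification in \cite{C,TV} is phrased in terms of the currents $\hat L_k(z)$ and a Drinfeld-type highest weight, whereas dominance here is phrased multiplicatively in the $Y_{i,a}$ and $\Psi_{i,a}$ via the shifts $\ell_k = (N-k-1)/2$ built into Definition \ref{defi: fund weights}; getting the $\hbar$-shifts, the half-integer shifts in $Y_{k,a}$, and the quasi-periodicity constants to line up so that a dominant e-weight in the sense above corresponds exactly to a dominant highest weight in the sense of \cite{C} will require care. A secondary point is the passage from $\theta$-ratios with $a_l - b_l \in \BZ_{\geq 0} + \hbar^{-1}\Gamma$ to genuinely $\BZ_{\geq 0}$-differences: because $\hbar \notin \BQ + \BQ\tau$, shifting the argument of $\theta$ by an element of $\hbar^{-1}\Gamma$ is legitimate and only introduces a nonvanishing entire factor, but one must track that this factor is of the form allowed in $\CR_0$ (Lemma \ref{lem: highest weight condition}(iii)) so it can be harmlessly absorbed. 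Once these normalization issues are settled, the proof is a direct combination of Lemma \ref{lem: highest weight condition}, the monoid structure of $\CR \supset \CRf \supset \CR_0$, and \cite[Theorem 5.1]{C}.
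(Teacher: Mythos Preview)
Your easy direction is fine (and indeed the paper dispatches it in one line just before the theorem, citing Lemma~\ref{lem: highest weight condition}). The gap is in the reverse direction. You write that ``\cite[Theorem~5.1]{C} should already state that every dominant e-weight is the highest weight of a finite-dimensional irreducible''---but that only gives a finite-dimensional irreducible in category~$\tBGG$, not in category~$\BGG$. The set~$\CRf$ is by definition the set of highest weights of irreducibles \emph{in~$\BGG$}, and~$\BGG$ carries the extra triangularity condition on the~$\eD_k(z)$ from Definition~\ref{def: O}. Nothing in~\cite{C} addresses that condition, so a bare citation does not finish the proof.

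The paper's route is therefore not merely convenient but forced: at this point the only modules known to lie in~$\BGG$ are the vector representations~$\BV(a)$ (Section~\ref{ss: vector representation}). The proof forms $\BV(a)\,\wtimes\,\BV(a+1)\,\wtimes\cdots\wtimes\,\BV(a+n-1)$, invokes \cite[Theorem~4.4]{C} to identify an irreducible quotient whose highest weight is (up to a factor in~$\CR_0$) a single~$Y_{n,c}$, and then observes that this quotient lies in~$\BGG$ automatically because~$\BGG$ is abelian and monoidal. Your mention of tensor products of vector representations is in the right direction, but you present it as one of two interchangeable options rather than as the essential mechanism for landing in~$\BGG$. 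Also, invoking the KR modules~$W_{k,a}^{(r)}$ here is circular: they are introduced in Section~\ref{sec: KR}, and the fact that they belong to~$\BGGf$ (rather than merely being finite-dimensional in~$\tBGG$) rests on the present theorem.
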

\begin{proof}
It suffices to prove $Y_{n,a} \in \CRf$ for $1\leq n < N$. Note that $V(w)$ and $ \gamma$ from \cite[Eq.(1.19)]{C} correspond to our $\BV(-\frac{w}{\hbar})\ \wtimes\ S(\frac{\theta(z-w)}{\theta(z-w-\hbar)})$ and $-\hbar$. Let us rephrase \cite[Theorem 4.4]{C} in terms of the $\BV$ by replacing $z,w$ in {\it loc.cit.} with $-a\hbar, z$. 

The $\CE$-module $ \BV(a)\ \wtimes\ \BV(a+1)\ \wtimes \cdots \wtimes\ \BV(a+n-1)$
admits an irreducible quotient $S$ which  contains a singular vector $\omega$ of weight $\varpi_n$ such that $\hL_{k}(z) \omega = \Lambda_k(z) g_k(\lambda) \omega  $ where for $1\leq k \leq N$ (set $\delta_{k\leq n} = 1$ if $1\leq k \leq n$ and $\delta_{k\leq n} = 0$ if $n < k \leq N$):
$$ \Lambda_k(z) =  \frac{\theta(z+(a+1)\hbar)}{\theta(z+a\hbar)} \frac{\theta(z+(a+n)\hbar)}{\theta(z+(a+n-\delta_{k\leq n})\hbar)},\quad g_k(\lambda) \in \BM^{\times}.  $$
As a sub-quotient of tensor products of vector representations, $S$ belongs to category $\BGG$. By Lemma \ref{lem: category F}, the $g_k(\lambda)$ can be gauged away, and the highest weight of $S$ is $\Lambda_N(z) Y_{n,a-1+\frac{N+n}{2}} \in \CRf$. This implies $Y_{n,a-1+\frac{N+n}{2}} \in \CRf$.
\end{proof}
 A sharp difference from the affine case \cite[Theorem 3.11]{HJ} is that category $\BGG$ does not admit prefundamental modules, i.e. $\Psi_{r,a} \notin \CR$ if $r < N$. One might want to introduce a larger category with well-behaved q-character theory, so that modules of highest weight $\Psi_{r,a}$ exist. For this purpose, the finite-dimensionality of weight spaces should be dropped because of \cite[Theorem 9]{FV1}. The recent work \cite{B} on representations of affine quantum groups is in this direction.
 
\subsection{Young tableaux and q-character formula} \label{ss: tableau}
Let $\SP$ be the set partitions with at most $N$ parts, i.e. $N$-tuples of non negative integers  $(\mu_1 \geq \mu_2 \geq \cdots \geq \mu_N)$. To such a partition we associate a Young diagram 
$$ Y_{\mu} := \{ (i,j) \in \BZ^2 \ |\ 1 \leq i \leq N,\ 1 \leq j \leq \mu_i \}, $$
and the set $\SB_{\mu}$ of Young tableaux of shape $Y_{\mu}$. We put the Young diagram at the northwest position so that $(i,j) \in Y_{\mu}$ corresponds to the box at the $i$-th row (from bottom to top) and $j$-th column (from right to left). By a tableau we mean a function $T: Y_{\mu} \longrightarrow \{1< 2 < \cdots < N \}$ weakly increasing at each row (from left to right) and strictly increasing at each column (from top to bottom).

For $\mu = (\mu_1 \geq \mu_2 \geq \cdots \geq \mu_N) \in \SP$ and $a \in \BC$, we have the dominant e-weight
$$ \theta_{\mu,a} := \left(\frac{\theta(z+(a+\mu_1)\hbar)}{\theta(z+a\hbar)}, \frac{\theta(z+(a+\mu_2)\hbar)}{\theta(z+a\hbar)}, \cdots, \frac{\theta(z+(a+\mu_N)\hbar)}{\theta(z+a\hbar)}; \sum_{j=1}^N \mu_j \epsilon_j \right).  $$
The associated irreducible module in category $\BGGf$ is denoted by $S_{\mu,a}$. 
\begin{theorem}  \label{thm: q-char evaluation}
Let $\mu \in \SP$ and $a \in \BC$. For the $\CE_{\tau,\hbar}(\mathfrak{sl}_N)$-module $S_{\mu,a}$ we have
\begin{equation}  \label{equ: q-char evaluation}
\qc(S_{\mu,a}) = \sum_{T \in \SB_{\mu}} \prod_{(i,j)\in Y_{\mu}} \boxed{T(i,j)}_{a+j-i} \in \CMt.
\end{equation}
\end{theorem}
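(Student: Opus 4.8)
The plan is to prove the formula by reduction to two known ingredients: the q-character of the vector representation $\BV(a)$, computed above as $\qc(\BV(a)) = \sum_{k=1}^N \boxed{k}_a$, and the combinatorics of Young tableaux as a subset of words in the factors $\BV(a), \BV(a+1), \ldots$. First I would recall (or reprove, as in \cite{C}) that $S_{\mu,a}$ occurs as the irreducible quotient of an appropriate ordered tensor product of vector representations determined by the columns of $\mu$: writing $\mu$ via its conjugate partition $\mu' = (\mu'_1 \geq \mu'_2 \geq \cdots)$, the module $S_{\mu,a}$ is a subquotient of $\bigwtimes_{c} S_{(1^{\mu'_c}), \, a - c + 1}$ where each $S_{(1^m),b}$ is itself the top (``antisymmetric'') piece of $\BV(b) \wtimes \BV(b+1) \wtimes \cdots \wtimes \BV(b+m-1)$; the shift conventions are exactly those appearing in the proof of Theorem \ref{thm: fund module}, with $Y_{k,a} = \prod_{j=1}^k \boxed{j}_{a-\ell_k-\frac12+j-k}$ recording the antisymmetrization of a single column. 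Because $\qc$ is an injective ring homomorphism $K_0(\BGG) \to \CMt$ (Corollary \ref{cor: injectivity}) and multiplicative for $\wtimes$ (Proposition \ref{prop: monoidal O}), the q-character of any such tensor product is $\prod_c \big(\sum_{k=1}^N \boxed{k}_{a-c+1+\cdots}\big)$-type products, and the monomials in the expansion are naturally indexed by arbitrary fillings of the boxes of $Y_\mu$ by $\{1,\ldots,N\}$, with box $(i,j)$ contributing $\boxed{T(i,j)}_{a+j-i}$.

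The substance of the argument is then to show that, after passing from the full tensor product to the irreducible summand $S_{\mu,a}$, exactly the semistandard tableaux survive. I would do this by a standard downward induction on the dominance/lexicographic order on $\SP$ (equivalently on $\wt$): the tensor product $\bigwtimes_c S_{(1^{\mu'_c}),\ldots}$ has $[S_{\mu,a}]$ as its unique maximal constituent, and all lower constituents are $[S_{\nu,a}]$ with $\nu \prec \mu$ (using that the highest weights of the vector factors force every e-weight of the product to be dominant of the form $\theta_{\nu,a}$ times a product of $A_{i,\bullet}^{-1}$'s, by Lemma \ref{lem: highest weight condition} and Theorem \ref{thm: fund module}). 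Subtracting the inductively known $\qc(S_{\nu,a})$ for $\nu \prec \mu$ from $\prod_c \qc(S_{(1^{\mu'_c}),\ldots})$, I claim what remains is precisely $\sum_{T \in \SB_\mu} \prod \boxed{T(i,j)}_{a+j-i}$; this is the Weyl-character/Jacobi–Trudi identity transplanted to $\CMt$. Concretely, one verifies that the map ``filling $\mapsto$ monomial'' intertwines the column-insertion (RSK) straightening of a non-semistandard filling with the $A_{i,a}^{-1}$-shift relations $\boxed{i+1}_{a-\ell_i} = A_{i,a}^{-1}\boxed{i}_{a-\ell_i}$ from Eq.\eqref{equ: tableau A Y}, so that non-semistandard contributions are exactly accounted for by the lower-order terms $\qc(S_\nu)$. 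Alternatively, and perhaps more cleanly, one observes that the right-hand side of \eqref{equ: q-char evaluation} is manifestly the image under an obvious ring map $\CMt[\text{classical}] \to \CMt$ of the Schur function $s_\mu$ expanded in the monomial basis $x_k \leftrightarrow \boxed{k}$; since both $\qc(S_{\mu,a})$ and that image satisfy the same recursion coming from $\BV(a)$-tensoring and the same leading term, injectivity of $\qc$ finishes it.

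The main obstacle I anticipate is the bookkeeping of spectral-parameter shifts: one must check that the box-labelling $\boxed{T(i,j)}_{a+j-i}$ is compatible simultaneously with (a) the column-by-column shift $a \mapsto a - c + 1$ used in realizing $S_{\mu,a}$ inside a tensor product of antisymmetric fundamentals, (b) the internal shift $\boxed{j}_{a-\ell_k-\frac12+j-k}$ inside each $Y_{k,\bullet}$, and (c) the half-integer shifts $\ell_k = (N-k-1)/2$ built into the $A_{i,a}$ and $\boxed{k}_a$ normalizations. These are all consistent — that is the content of the diagonal $a+j-i$ in \eqref{equ: q-char evaluation} — but the verification that the ``lower'' constituents really are $S_{\nu,a}$ with the \emph{same} base point $a$ (not $a$ shifted) requires care, and is where I would expect to spend most of the effort; granting Theorem \ref{thm: fund module} and the cyclicity results of \cite{C,TV} this reduces to a finite check using $A_{i,a} = \boxed{i}_{a-\ell_i}\boxed{i+1}_{a-\ell_i}^{-1}$. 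A secondary point is to confirm that no e-weight of $S_{\mu,a}$ lies outside the tableau sum, which follows because every monomial in $\qc(S_{\mu,a})$ is $\leq \theta_{\mu,a}$ in the $A_{i,a}^{-1}$-ordering and the semistandard tableaux already exhaust that truncated cone — again a transplant of the classical statement that $s_\mu$ has exactly the semistandard monomials.
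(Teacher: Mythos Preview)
Your approach is genuinely different from the paper's, and it has a real gap.

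The paper proceeds by induction on $N$, not on $\mu$. It restricts $S_{\mu,a}$ to the subalgebra $\CE_{N-1}^{\Hlie}$ and uses the small elliptic quantum group of Tarasov--Varchenko: Theorem~\ref{thm: small} says the standard $\SE$-module $V_{\overline{\mu}}$ has the same character as the classical $\mathfrak{sl}_N$-module $L_{\overline\mu}$, so the branching $[S_{\mu,a}] = \sum_{\nu\in\SW_\mu}[S'_{\nu,c_\nu,a}]$ in $K_0(\BGGf')$ is literally the classical $\mathfrak{sl}_N\supset\mathfrak{sl}_{N-1}'$ branching rule (Lemma~\ref{lem: branching}). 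Combinatorially this is exactly ``delete the boxes labelled $1$ from a tableau'', which matches the decomposition $\SB_\mu = \sqcup_{\nu\in\SW_\mu}\SB_\nu'$. The induction hypothesis gives $\qc'(S_{\mu,a})$, and since $\eD_N(z)$ acts by a scalar on the irreducible $S_{\mu,a}$ one recovers the missing first component of each e-weight (Remark~\ref{rem: from N to N-1}). No tensor-product decomposition or Jacobi--Trudi manipulation is needed.

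Your approach, by contrast, requires you to control the full Jordan--H\"older decomposition of $\bigwtimes_c S_{(1^{\mu'_c}),\,\cdot}$: you need the constituents to be exactly $S_{\nu,a}$ (same base point $a$) with multiplicities matching the classical tensor-product decomposition of column fundamentals. You acknowledge this is where the work lies, but the two arguments you sketch do not close the gap. The RSK/straightening suggestion does not give a monomial-preserving bijection at the level of e-weights --- the relation $\boxed{i+1}_x = A_{i,\,x+\ell_i}^{-1}\boxed{i}_x$ moves the spectral parameter, so column-insertion does not obviously respect the diagonal labels $a+j-i$. The ``ring map to classical Schur functions'' is ill-defined as stated: there is no single assignment $\boxed{k}\mapsto x_k$ because the $\boxed{k}_b$ for varying $b$ are algebraically related in $\CMt$, and any map that forgets $b$ collapses the information you need. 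Finally, your last sentence is simply false: the cone $\theta_{\mu,a}\CQ_a^-$ is vastly larger than the set of tableau monomials, so ``exhausting the cone'' is not what singles out $\SB_\mu$. What actually pins down the answer --- in both approaches --- is the dimension/character input from \cite{TV}; the paper uses it via branching, whereas your route would need it to fix the multiplicities $m_\nu$, and you never invoke it.
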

For $\nu = (1\geq 0 \geq 0 \geq \cdots \geq 0)$, we have $S_{\nu, a} \cong \BV(a)$, and Eq.\eqref{equ: q-char evaluation} specializes to the q-character formula in Section \ref{ss: vector representation}. As an illustration of the theorem, let $N = 3$ and $\mu = (2\geq 1 \geq 0)$. Pictorially $\SB_{\mu}$ consists of:
$$\young(:1,12),\ \young(:1,13),\ \young(:1,22),\ \young(:1,23),\ \young(:1,33), \  \young(:2,13),\ \young(:2,23),\ \young(:2,33). $$
The fourth tableau gives rise to the term $\boxed{2}_{a+1}\boxed{3}_a \boxed{1}_{a-1}$ in $\qc(S_{\mu,a})$.

\begin{rem}
Theorem \ref{thm: q-char evaluation} is an elliptic analog of the q-character formula for affine quantum groups \cite[Lemma 4.7]{FM}. In principle it can be deduced from the functor of Gautam--Toledano Laredo \cite[\S 6]{GTL2}. This is a functor from finite-dimensional representations of affine quantum groups to those of elliptic quantum groups (including our $S_{\mu,a}$), and it respects affine and elliptic q-characters.
\end{rem}

The proof of Theorem \ref{thm: q-char evaluation} will be given in Section \ref{ss: proof q-char}. It is in the spirit of \cite{FM}, based on small elliptic quantum groups of Tarasov--Varchenko \cite{TV}.
\section{Small elliptic quantum group and evaluation modules}  \label{sec: evaluation}
The aim of this section is to prove Corollary \ref{cor: Jucys-Murphy} and Theorem \ref{thm: q-char evaluation}. 

Recall that $\Hlie$ is the $\BC$-vector space generated by the $\epsilon_i$ for $1\leq i \leq N$ subject to the relation $\epsilon_1+\epsilon_2+\cdots + \epsilon_N = 0$. For $1\leq k \leq N$, define the $\BC$-vector space $\Hlie_k$ to be the quotient of $\Hlie$ by $\epsilon_1 = \epsilon_2 = \cdots = \epsilon_{N-k} = 0$. (By convention $\Hlie_N = \Hlie$.) The quotient $\Hlie \twoheadrightarrow \Hlie_k$ induces an embedding $\BM_{\Hlie_k} \hookrightarrow \BM$. 

Let $\CE^{\Hlie}_k$ (resp. $\CE_k$) be the $\Hlie$-algebra (resp. $\Hlie_k$-algebra)  generated by the $L_{ij}(z)$ for $N-k < i,j \leq N$ subject to Relation \eqref{rel: RLL explicit} with summations $N-k< p,q \leq N$. (This makes sense because the $R_{ij}^{pq}(z;\lambda)$ for $N-k < i,j,p,q \leq N$ belong to $\BM_{\Hlie_k}$.) The following defines an $\Hlie_k$-algebra morphism
$$ \Delta_k: \CE_k \longrightarrow \CE_k\ \dt\ \CE_k,\quad   L_{ij}(z) \mapsto \sum_{p=N-k+1}^N L_{ip}(z)\ \dt\ L_{pj}(z).   $$
 One has natural algebra morphisms $\CE_k \longrightarrow \CE_k^{\Hlie} \longrightarrow \CE$ sending $L_{ij}(z)$ to itself; the second is an $\Hlie$-algebra morphism.
 $\eD_1(z), \eD_2(z),\cdots, \eD_{k}(z)$ from Eq.\eqref{def: determinant} are well-defined in $\CE_{k}^{\Hlie}$ and $\CE_{k}$. Their images in $\CE$ are the first $k$  elliptic quantum minors.
\subsection{Proof of Corollary \ref{cor: Jucys-Murphy}.} \label{ss: proof of Cor}
The $\Hlie_k$-algebra with coproduct $(\CE_k, \Delta_k)$ is isomorphic to the usual elliptic quantum group $\CE_{\tau,\hbar}(\mathfrak{sl}_k)$; here we view $\Hlie_k$ as a Cartan subalgebra of $\mathfrak{sl}_k$ so that $\CE_{\tau,\hbar}(\mathfrak{sl}_k)$ is an $\Hlie_k$-algebra. Under this isomorphism, by Eq.\eqref{def: determinant}, $\eD_k(z) \in \CE_k$ corresponds to the $k$-th elliptic quantum minor of $\CE_{\tau,\hbar}(\mathfrak{sl}_k)$. So Theorem \ref{thm: determinant} can be applied to $(\CE_k, \eD_k(z),  \Delta_k)$ and then to the algebra morphism $\CE_k \longrightarrow \CE$. The first statement of the corollary is obvious, and the second is based on the fact that for $i,j > N-k$ the difference $\Delta - \Delta_k$ at $L_{ij}(z)$ is a finite sum over $\alpha \in \Hlie$ of elements in $\CE_{\epsilon_{i},\alpha}\ \dt\  \CE_{\alpha,\epsilon_{j}}$ with $\epsilon_{N-k+1} \prec \alpha$ and so $\epsilon_i,\epsilon_j \prec \alpha$.   \hfill $\Box$

We believe $0\neq \alpha+\varpi_{N-k} \in \BQ_+$ in Corollary \ref{cor: Jucys-Murphy}, as in \cite[\S 7]{Damiani} and \cite[\S 3]{Z1}.

\subsection{Small elliptic quantum group of Tarasov--Varchenko \cite{TV} } \label{ss: small}
 Let us define the linear form $\lambda_i \in \Hlie^*$ of taking $i$-th component for $1\leq i \leq N$:
$$ x_1 \epsilon_1 + x_2\epsilon_2 + \cdots + x_N \epsilon_N \mapsto x_i - \frac{1}{N}(x_1+x_2+\cdots + x_N). $$
The linear form $\lambda_{ij}$ of Section \ref{sec: elliptic} is  $\lambda_i - \lambda_j$.
For $\gamma \in \Hlie$ and $1\leq i, j \leq N$, set $\gamma_i := \lambda_i(\gamma)$ and $\gamma_{ij} := \gamma_i - \gamma_j$ as {\it complex numbers}. We hope this is not to be confused with the previously defined {\it vectors} $\lambda_i \in \Hlie^*$ and $\epsilon_i, \alpha_i, \varpi_i \in \Hlie$.

 Following \cite[\S 3]{TV}, let $\BM_{2}$ be the ring of meromorphic functions $f(\lambda^{\{1\}},\lambda^{\{2\}})$ of $(\lambda^{\{1\}},\lambda^{\{2\}}) \in \Hlie \oplus \Hlie$ whose location of singularities in $\lambda^{\{1\}}$ does not depend on $\lambda^{\{2\}}$ and vice versa. For brevity, we write $f(\lambda^{\{1\}})$ or $f(\lambda^{\{2\}})$ instead of $f(\lambda^{\{1\}},\lambda^{\{2\}})$ if the function does not depend on the other variable.

\begin{defi}\cite{TV} \label{defi: small elliptic}
The {\it small elliptic quantum group} $\mathfrak{e} := \SE_{\tau,\hbar}(\mathfrak{sl}_N)$ is the algebra with generators $\BM_2$ and $t_{ij}$ for $1\leq i,j \leq N$ and subject to relations: $\BM_2$ is a subalgebra; for $f(\lambda^{\{1\}},\lambda^{\{2\}}) \in \BM_2$ and $1\leq i,j,k,l \leq N$,
\begin{gather*}
t_{ij}f(\lambda^{\{1\}},\lambda^{\{2\}}) = f(\lambda^{\{1\}}+\hbar \epsilon_i,\lambda^{\{2\}}+ \hbar \epsilon_j) t_{ij},\quad  
t_{ij}t_{ik} = t_{ik}t_{ij},  \\
t_{ik}t_{jk} = \frac{\theta(\lambda_{ij}^{\{1\}}-\hbar)}{\theta(\lambda_{ij}^{\{1\}}+\hbar)} t_{jk}t_{ik} \quad \mathrm{for}\ i \neq j, \\
\frac{\theta(\lambda_{jl}^{\{2\}}-\hbar)}{\theta(\lambda_{jl}^{\{2\}})} t_{ij}t_{kl} - \frac{\theta(\lambda_{ik}^{\{1\}}-\hbar)}{\theta(\lambda_{ik}^{\{1\}})} t_{kl}t_{ij} = \frac{\theta(\lambda_{ik}^{\{1\}}+ \lambda_{jl}^{\{2\}})\theta(-\hbar)}{\theta(\lambda_{ik}^{\{1\}}) \theta(\lambda_{jl}^{\{2\}})} t_{il}t_{kj}
\end{gather*}
for $i \neq k$ and $j \neq l$. Here $\lambda^{\{1\}}_{ij} = \lambda_i^{\{1\}} - \lambda_j^{\{1\}}$ and $\lambda^{\{2\}}_{ij} = \lambda_i^{\{2\}} - \lambda_j^{\{2\}}$. 
\end{defi}

$\SE$ is equipped with an $\Hlie$-algebra structure: elements of $\BM_2$ are of bi-degree $(0,0)$; $t_{ij}$ is of bi-degree $(\epsilon_j,\epsilon_i)$; the moment maps are given by
$$ \mol(g(\lambda)) = g(\lambda^{\{2\}}),\quad \mor(g(\lambda)) = g(\lambda^{\{1\}}). $$
Let $X$ be an object of $\CVf$. A representation $\rho$ of $\SE$ on $X$ is a morphism of $\Hlie$-algebras $\rho: \SE \longrightarrow \BD^X$ such that for $f(\lambda^{\{1\}},\lambda^{\{2\}}) \in \BM_2$ and $v \in X[\gamma]$,
$$ \rho(f(\lambda^{\{1\}},\lambda^{\{2\}})): v \mapsto f(\lambda,\lambda+\hbar \gamma) v. $$
A morphism of two representations $(\rho,X)$ and $(\sigma,Y)$ is a morphism $\Phi: X \longrightarrow Y$ in $\CVf$ such that $\Phi \rho(t_{ij}) = \sigma(t_{ij}) \Phi$ for $1\leq i,j \leq N$. Let $\rep$ be the category of $\SE$-modules. 
The following result is \cite[Corollary 3.4]{TV}.

\begin{cor} \label{cor: evaluation module}
Let $(\rho,X)$ be a representation of $\SE$ on $X$. Then for $a \in \BC$,
$$L_{ij}(z) \mapsto \frac{\theta(z+a\hbar + \lambda_i^{\{2\}}-\lambda_j^{\{1\}})}{\theta(z+a\hbar)}\rho(t_{ji})$$
 defines a representation of $\CE$ on $X$, called the evaluation module $X(a)$.
\end{cor}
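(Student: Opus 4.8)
The plan is to verify directly that the assignment $L_{ij}(z) \mapsto \widetilde{L}_{ij}(z) := \tfrac{\theta(z+a\hbar+\lambda_i^{\{2\}}-\lambda_j^{\{1\}})}{\theta(z+a\hbar)}\rho(t_{ji})$ is compatible with the $\Hlie$-algebra structure and satisfies the dynamical RLL relation \eqref{rel: RLL explicit}, so that it extends to an $\Hlie$-algebra morphism $\CE \longrightarrow \BD^X$, i.e. a representation in the sense of (M1)--(M2). First I would check that $\widetilde{L}_{ij}(z)$ is a difference operator on $X$ of the correct bi-degree $(\epsilon_i,\epsilon_j)$: since $t_{ji}$ has bi-degree $(\epsilon_i,\epsilon_j)$ in $\SE$ and $\rho$ preserves bi-degrees, $\rho(t_{ji})$ shifts weights by $\epsilon_i-\epsilon_j$ and satisfies the difference-operator identity with the prescribed moment maps in $\BD^X$; the scalar prefactor $\tfrac{\theta(z+a\hbar+\lambda_i^{\{2\}}-\lambda_j^{\{1\}})}{\theta(z+a\hbar)}$ must be interpreted correctly as a pair of meromorphic functions of $\lambda$ acting on the left/right via $\mol/\mor$ of $\BD^X$, using $\rho(f(\lambda^{\{1\}},\lambda^{\{2\}}))v = f(\lambda,\lambda+\hbar\gamma)v$ for $v\in X[\gamma]$. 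This also makes the meromorphic-dependence condition (M1) immediate, since $\theta$ is entire in $z$ and the $\rho(t_{ij})$ have meromorphic matrix entries in $\lambda$.

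The heart of the proof is the RLL relation. I would substitute $\widetilde{L}_{ij}(z)$ into both sides of \eqref{rel: RLL explicit}, clear the common factor $\tfrac{1}{\theta(z+a\hbar)\theta(w+a\hbar)}$, and reduce the identity — after moving all the $\theta$-function coefficients to one side using the difference-operator commutation rules to bring them past the $\rho(t)$'s — to a relation purely among the $\rho(t_{ij})\rho(t_{kl})$ with $\theta$-valued coefficients in $\lambda$ and the spectral parameters $z,w$. The expectation is that this collapses, after the theta-function identities (essentially the three-term Riemann relation for $\theta$, already implicit in the quantum dynamical Yang--Baxter equation \eqref{equ: dYBE elliptic}), exactly to the defining relations of $\SE$ in Definition \ref{defi: small elliptic}: the commuting relation $t_{ij}t_{ik}=t_{ik}t_{ij}$, the relation $t_{ik}t_{jk}=\tfrac{\theta(\lambda_{ij}^{\{1\}}-\hbar)}{\theta(\lambda_{ij}^{\{1\}}+\hbar)}t_{jk}t_{ik}$, and the four-index relation. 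Concretely I would organize the matrix entries $R_{mn}^{pq}(z-w;\lambda)$ of $\BR$ into the diagonal part (contributing the $t_{ij}t_{ik}$-type terms) and the off-diagonal part (contributing the $t_{il}t_{kj}$ exchange term), and match coefficients of each monomial $t_{p'i}t_{q'j}$ after reindexing; the spectral parameter enters only through the combination $z-w$ appearing in $R(z-w;\lambda)$, and the shift $z\mapsto z+a\hbar$ cancels since it enters symmetrically.

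The main obstacle I anticipate is bookkeeping rather than conceptual: correctly tracking the three distinct ``dynamical'' variable shifts that appear — the shift $\lambda \mapsto \lambda\pm\hbar\epsilon_\bullet$ built into $\rho(t_{ij})$ as a difference operator, the shift hidden in the moment maps $\mol,\mor$ of $\BR$'s coefficients in \eqref{rel: RLL explicit}, and the identification $\mol^{\BD^X}(g)=g(\lambda+\hbar\alpha)$, $\mor^{\BD^X}(g)=g(\lambda)$ on a weight-$\alpha$ vector — and verifying they conspire so that the $\lambda^{\{1\}}$ and $\lambda^{\{2\}}$ arguments in the $\SE$-relations line up with what the RLL relation demands. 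A clean way to handle this is to invoke \cite[Corollary 3.4]{TV} essentially verbatim: the statement is precisely that the evaluation construction $\CE \to \SE$, $L_{ij}(z) \mapsto \tfrac{\theta(z+\lambda_i^{\{2\}}-\lambda_j^{\{1\}})}{\theta(z)}t_{ji}$, is an algebra homomorphism, and composing with $\rho$ and the spectral shift $\Phi_a$ of \eqref{def: spectral shift} gives the claim. Thus the proof reduces to recalling that evaluation homomorphism and noting that $\Phi_a$ accounts for the $z+a\hbar$ in place of $z$; the remaining work is to confirm that (M1) and (M2) as formulated in Section \ref{ss-repr} are satisfied, which follows from the difference-operator bi-degree count and the meromorphy of $\rho(t_{ij})$ and $\theta$ already noted. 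Finally I would remark that $X(a)$ is indeed an object of $\CVf$ since the underlying $\Hlie$-graded space and its finite-dimensional weight spaces are unchanged from $X$.
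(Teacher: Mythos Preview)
Your proposal is correct and matches the paper's approach: the paper gives no proof at all, simply citing \cite[Corollary 3.4]{TV} for the evaluation homomorphism, which is exactly the reduction you identify at the end. Your additional sketch of the direct RLL verification and the bi-degree/meromorphy checks for (M1)--(M2) is sound bookkeeping that the paper omits.
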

There is a flip of the subscripts $i, j$ because the bi-degrees of $L_{ij}$ and $t_{ij}$ are flips of each other. See also \cite[Eq.(3.6)]{TV} where $\mathcal{T}_{ij}(u)$ comes from $t_{ji}$.

$X \mapsto X(a)$ defines a functor $\ev_a: \rep \longrightarrow \Rep$. Let $\CF$ be the full subcategory of $\rep$ whose objects are finite-dimensional $\SE$-modules $X$ with $X(x)$ being in category $\BGG$. Then $\ev_a$ restricts to a functor of abelian categories $\CF \longrightarrow \BGGf$, and induces an injective morphism of Grothendieck groups $K_0(\CF) \hookrightarrow K_0(\BGGf)$.

For $1\leq k \leq N$, define $\hat{t}_{k} \in \SE$ in the same way as  Eq.\eqref{defi: auxiliary L}: \footnote{ The $\hat{t}_a$ are  slightly different from the $\hat{t}_{aa}$ in  \cite[Eq.(4.1)]{TV}. Yet they play the same role. }
\begin{equation*} 
\hat{t}_N(z) := t_{NN},\quad \hat{t}_k(z) = t_{kk} \prod_{j=k+1}^{N}  \frac{\mor(\theta(\lambda_{kj}))}{\mol(\theta(\lambda_{kj}))}.
\end{equation*} 
Let $\mu \in \Hlie$. There exists a unique (up to isomorphism) irreducible $\SE$-module $V_{\mu}$ with the property: $V_{\mu}$ admits a non-zero vector $v$ of weight $\mu$ such that $\hat{t}_{k} v = v,\ t_{ij} v = 0$ for $1\leq i,j,k \leq N$ and $j < k$; it is called {\it standard} in \cite[\S 4]{TV}. Let $L_{\mu}$ denote the complex irreducible module over the simple Lie algebra $\mathfrak{sl}_N$ of highest weight $\mu$. For $\nu \in \Hlie$, let $d_{\mu}[\nu] = \dim_{\BC} L_{\mu}[\nu]$ where $L_{\mu}[\nu]$ is the weight space of weight $\nu$.

\begin{theorem}\cite[Theorem 5.9]{TV}    \label{thm: small}
The $\SE$-module $V_{\mu}$ is finite-dimensional if and only if $\mu_{ij} \in \BZ_{\geq 0} + \hbar^{-1} \Gamma$ for $1\leq i < j \leq N$.  If $\tilde{\mu} \in \Hlie$ is such that $\mu_{ij} - \tilde{\mu}_{ij} \in \hbar^{-1} \Gamma$ and $\tilde{\mu}_{ij} \in \BZ_{\geq 0}$ for $i < j$, then $\dim V_{\mu}[\mu+\gamma] = d_{\tilde{\mu}}[\tilde{\mu}+\gamma]$ for $\gamma \in \BQ_-$.
\end{theorem}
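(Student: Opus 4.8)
The statement concerns only the abstract algebra $\SE := \SE_{\tau,\hbar}(\mathfrak{sl}_N)$, so the plan is to develop just enough highest-weight theory for $\SE$ to reduce both assertions --- the integrality criterion and the weight-multiplicity formula --- to the rank-one algebra $\SE_{\tau,\hbar}(\mathfrak{sl}_2)$ and to the classical representation theory of $\mathfrak{sl}_N$. The first step is a triangular decomposition: the defining relations (notably the quadratic relation, which expresses $t_{il}t_{kj}$ through $t_{ij}t_{kl}$ and $t_{kl}t_{ij}$) show that $V_\mu$ is spanned by ordered monomials in the $t_{ij}$ with $i>j$ applied to the highest-weight vector $v$, and that these monomials satisfy only ``PBW-type'' relations. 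Consequently $V_\mu$ is a highest-weight module --- the unique irreducible quotient of a Verma-type module $M_\mu$ --- with $\wt(V_\mu)\subseteq\mu+\BQ_-$, finite-dimensional weight spaces, and $\mathrm{ch}\,M_\mu$ equal (after the evident weight shift) to the character of a classical $\mathfrak{sl}_N$-Verma module.

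\textbf{Reduction to rank one and the integrality criterion.} For $N=2$ the module $V_\mu$ is spanned by the $(t_{21})^n v$: the relations normalise the $t_{11},t_{22}$-action on each of these, and the quadratic relation gives $t_{12}(t_{21})^n v = c_n(\mu)\,(t_{21})^{n-1}v$ with $c_n(\mu)$ an explicit product of theta functions in $\mu_{12}$ and $n$. By irreducibility $(t_{21})^{n+1}v = 0$ iff $c_{n+1}(\mu)=0$, and reading off the zeros of $\theta$ shows this occurs for some $n\ge 0$ iff $\mu_{12}\in\BZ_{\ge0}+\hbar^{-1}\Gamma$, in which case $\dim V_\mu = \tilde\mu_{12}+1 = \dim L_{\tilde\mu}$. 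For general $N$, the subalgebra generated by $\BM_2$, $\hat{t}_i,\hat{t}_{i+1}$ and $t_{i,i+1},t_{i+1,i}$ is a copy of $\SE_{\tau,\hbar}(\mathfrak{sl}_2)$ attached to the $i$-th simple root, over which $v$ generates a highest-weight module of highest weight $\mu_{i,i+1}$; finite-dimensionality of $V_\mu$ therefore forces $\mu_{i,i+1}\in\BZ_{\ge0}+\hbar^{-1}\Gamma$ for every $i$, hence $\mu_{ij}\in\BZ_{\ge0}+\hbar^{-1}\Gamma$ for all $i<j$. (Alternatively, one pushes $V_\mu$ through the evaluation functor $\ev_a$ of Corollary~\ref{cor: evaluation module}, computes the highest weight $\Bd(\mu,a)\in\CMw$ of the $\CE$-module $V_\mu(a)$, and invokes Lemma~\ref{lem: highest weight condition} to reach the same criterion with essentially no computation.) The rank-one analysis moreover exhibits, inside $M_\mu$, the singular vectors $(t_{i+1,i})^{\tilde\mu_{i,i+1}+1}v$, the exact analogues of the singular vectors of the classical Verma module $M(\tilde\mu)$.

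\textbf{The weight-multiplicity formula.} It remains to show that when $\mu_{ij}\in\BZ_{\ge0}+\hbar^{-1}\Gamma$ for all $i<j$ one has $\dim V_\mu[\mu+\gamma]=d_{\tilde\mu}[\tilde\mu+\gamma]$ (finite-dimensionality then follows). Since $\theta$ is $\Gamma$-quasiperiodic, twisting by a one-dimensional character of $\SE$ identifies $V_\mu$ with $V_{\tilde\mu}$ up to a scalar twist, so we may assume $\mu=\tilde\mu$ is dominant integral. The conceptual route is a flat-deformation argument: $\SE_{\tau,\hbar}(\mathfrak{sl}_N)$ is an $\hbar$-deformation of a (dynamical) enveloping algebra of $\mathfrak{gl}_N$ --- in the spirit of the remark in the introduction for $N=2$ --- so $M_{\tilde\mu}$ is a flat deformation of the classical Verma $M(\tilde\mu)$, its distinguished singular vectors deform those of $M(\tilde\mu)$, and the quotient $\overline{M}_{\tilde\mu}$ by the submodule they generate is a flat deformation of $L(\tilde\mu)$; being the generic member of a family whose special member is irreducible, $\overline{M}_{\tilde\mu}$ is irreducible, whence $V_{\tilde\mu}=\overline{M}_{\tilde\mu}$ and $\mathrm{ch}\,V_{\tilde\mu}=\mathrm{ch}\,L_{\tilde\mu}$. (If one is willing to assume Theorem~\ref{thm: q-char evaluation}, this is immediate: $V_{\tilde\mu}(a)\cong S_{\tilde\mu,a}$, whose $q$-character is the sum over semistandard tableaux of shape $\tilde\mu$, and tableaux of a fixed content number exactly $d_{\tilde\mu}[\,\cdot\,]$.)

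\textbf{Main obstacle.} The genuinely hard part is the flat-deformation step: one must make precise the degeneration of $\SE_{\tau,\hbar}(\mathfrak{sl}_N)$ --- with its theta-function structure constants and, crucially, its \emph{two} dynamical variables $\lambda^{\{1\}},\lambda^{\{2\}}$ --- to a classical algebra, and verify that no weight space collapses in the limit (equivalently, that the family $M_\mu$ is flat and the contravariant Shapovalov-type form on it has generically the classical rank). Everything else --- the triangular decomposition, the rank-one computation, and the reduction of the integrality criterion --- is routine manipulation of the defining relations.
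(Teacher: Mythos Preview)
The paper does not prove this theorem at all: it is quoted verbatim as \cite[Theorem 5.9]{TV} and used as a black box. So there is no ``paper's own proof'' to compare against; your sketch is essentially an outline of how the Tarasov--Varchenko argument goes (triangular decomposition of $\SE$, rank-one computation, Shapovalov-type form and a deformation comparison with $U(\mathfrak{sl}_N)$), and as such it is reasonable.

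Two caveats about your proposed shortcuts within the logic of \emph{this} paper. First, the parenthetical ``if one is willing to assume Theorem~\ref{thm: q-char evaluation}'' is circular here: the proof of Theorem~\ref{thm: q-char evaluation} in Section~\ref{ss: proof q-char} goes through Lemma~\ref{lem: branching}, which in turn invokes Theorem~\ref{thm: small} (both for the character of $V_{\overline{\mu}}$ and for property (d) of the $V_{\nu,c}'$). Second, the alternative route via the evaluation functor and Lemma~\ref{lem: highest weight condition} only gives the \emph{necessity} of the integrality condition cleanly; for sufficiency you would need to know in advance that $V_\mu(a)$ lands in category $\BGG$, and the paper establishes that (in the paragraph following the theorem) only \emph{after} knowing $V_\mu$ is finite-dimensional, via Theorem~\ref{thm: fund module}. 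So neither shortcut is available, and one really must appeal to \cite{TV} or reprove it along the lines you sketch, with the deformation/Shapovalov step being, as you note, the substantive part.
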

In the theorem $\tilde{\mu}$ is uniquely determined by $\mu$ since $\BZ \cap h^{-1}\Gamma = \{0\}$. Such an $\SE$-module $V_{\mu}$ is in category $\CF$. Indeed, the evaluation module module $V_{\mu}(a)$ is irreducible in category $\tBGG$ of highest weight 
$$\left(\frac{\theta(z+(\mu_1+a)\hbar)}{\theta(z+a\hbar)}, \frac{\theta(z+(\mu_2+a)\hbar)}{\theta(z+a\hbar)}, \cdots, \frac{\theta(z+(\mu_N+a)\hbar)}{\theta(z+a\hbar)}; \mu \right). $$
One checks that such an e-weight is dominant. So $V_{\mu}(a)$ is in category $\BGGf$ by Theorem \ref{thm: fund module}.
 The {\it character} $\chi(V_{\mu})$ of $V_{\mu}$ is  $\sum_{\gamma} d_{\tilde{\mu}}[\tilde{\mu}+\gamma]e^{\mu+\gamma} \in \CMt$.

The isomorphism classes $[V_{\mu}]$ where $\mu \in \Hlie$ and $\mu_{ij} \in \BZ_{\geq 0}+\hbar^{-1}\Gamma$ for $i < j$ form a $\BZ$-basis of $K_0(\CF)$, and $[V_{\mu}] \mapsto \chi(V_{\mu})$ extends uniquely to a morphism of abelian groups $\chi: K_0(\CF) \longrightarrow \CMt$, which
 is injective thanks to the linear independence of characters of irreducible representations of the simple Lie algebra $\mathfrak{sl}_N$.

\subsection{Category $\BGGf'$.} \label{ss: branching}
We are going to prove Theorem \ref{thm: q-char evaluation} by induction on $N$. The idea is to view the irreducible $\CE$-module $S_{\mu,a}$ as an $\CE_{N-1}^{\Hlie}$-module and to apply the induction hypothesis. For this purpose, we need to adapt carefully the definitions of finite-dimensional module category $\BGGf$ and its q-characters in  Section \ref{ss-repr} to $\CE_{N-1}^{\Hlie}$. To distinguish with $\CE$ and to simplify notations, we shall add a prime (instead of the index $N-1$) to objects related to $\CE_{N-1}^{\Hlie}$. Notably $\Hlie' := \Hlie_{N-1}$.

We define category $\BGGf'$. An object is a {\it finite-dimensional} $\Hlie$-graded vector space $X$ (viewed as an object of category $\CVf$) endowed with difference operators $L_{ij}^X(z): X \longrightarrow X$ of bi-degree $(\epsilon_i,\epsilon_j)$ for $2\leq i,j \leq N$ depending on $z \in \BC$ such that:
\begin{itemize}
    \item[(M1')]  there exists a basis of $X$ with respect to which the matrix entries of the difference operators $L_{ij}^X(z)$ are meromorphic functions of $(z,\lambda) \in \BC \times \Hlie$;
    \item[(M2')] $L_{ij}(z) \mapsto L_{ij}^X(z)$ defines an $\Hlie$-algebra morphism $\CE_{N-1}^{\Hlie} \longrightarrow \BD^X$;
    \item[(M3')]  $X$ admits an ordered weight basis with respect to which the matrices of the difference operators $\eD_l^X(z)$ for $1\leq l < N$ are upper triangular and their diagonal entries are non-zero meromorphic functions of $z \in \BC$.
\end{itemize}
 A morphism in category $\BGGf'$ a linear map $\Phi: X \longrightarrow Y$ such that $\Phi L_{ij}^X(z) = L_{ij}^Y(z) \Phi$ for $2\leq i,j \leq N$. Category $\BGGf'$ is an abelian subcategory of $\CVf$.

 The $\Hlie$-algebra morphism $\CE_{N-1}^{\Hlie} \longrightarrow \CE$ induces restriction functor $\BGGf \longrightarrow \BGGf'$. 

Let $X$ be in category $\BGGf'$. Eq.\eqref{def: elliptic diagonal} defines difference operators $K_l^X(z): X \longrightarrow X$ of bi-degree $(\epsilon_l,\epsilon_l)$ for $2\leq l \leq N$. Condition (M3') implies that for each weight $\alpha$, the weight space $X[\alpha]$ admits an ordered basis $B_{\alpha}$ with respect to which the matrix of $K_l^X(z)$ is upper triangular and has as diagonal entries $f_{b,l}(z) \in \BM_{\BC}^{\times}$ for $b \in B_{\alpha}$. Following Definition \ref{defi: q-char}, we define the q-character of $X$ to be 
$$ \qc'(X) = \sum_{\alpha \in \wt(X)} \sum_{b \in B_{\alpha}} (1,f_{b,2}(z),f_{b,3}(z), \cdots, f_{b,N}(z); \alpha) \in \CMt. $$
It is independent of the choice of the bases $B_{\alpha}$, as  one can use category $\CFm$ to characterize the $f_{b,l}(z)$; see the comments after Lemma \ref{lem: category F}.

\begin{rem} \label{rem: from N to N-1}
 Let $X$ be in category $\BGGf$, viewed as an object of category $\BGGf'$. Then $\qc'(X)$ is obtained from $\qc(X)$ by replacing each e-weight $\Bg$ of the $\CE$-module $X$ with $\Bg'$; here for $\Bg = (g_1(z),g_2(z),\cdots,g_N(z);\alpha) \in \CMt$ we define
$$ \Bg' := (1,g_2(z),g_3(z),\cdots,g_N(z);\alpha) \in \CMt.  $$
Reciprocally, 
if $X$ is an irreducible $\CE$-module in category $\BGGf$ of highest weight $(e_1(z),e_2(z),\cdots,e_N(z);\alpha) \in \CRf$, then $\qc(X)$ can be recovered from $\qc'(X)$. Indeed,  since the $N$-th elliptic quantum minor is central, by Schur Lemma, it acts on $X$ as a scalar. Each e-weight $(f_1(z),f_2(z),\cdots,f_N(z);\beta)$ of the $\CE$-module $X$ is  determined by the its last $N$ components in $\qc'(X)$ as follows:
$$ e_1(z+(N-1)\hbar) e_2(z+(N-2)\hbar) \cdots e_N(z) = f_1(z+(N-1)\hbar) f_2(z+(N-2)\hbar) \cdots f_N(z). $$
\end{rem}
The highest weight theory in Section \ref{ss- hw} carries over to category $\BGGf'$ since $\hL_k(z) \in \CE_{N-1}^{\Hlie}$ for $2\leq k \leq N$. Irreducible objects in $\BGGf'$ are classified by their highest weight, and the q-character map is an injective morphism from the Grothendieck group $K_0(\BGGf')$ to the additive group $\CMt$.
Let $\SP'$ be the set of partitions with at most $N-1$ parts $(\nu_2 \geq \nu_3 \geq \cdots \geq \nu_N)$. For such a partition and for $c, a \in \BC$, 
$$ \left(1,\frac{\theta(z+(a+\nu_2)\hbar)}{\theta(z+a\hbar)}, \frac{\theta(z+(a+\nu_3)\hbar)}{\theta(z+a\hbar)}, \cdots, \frac{\theta(z+(a+\nu_N)\hbar)}{\theta(z+a\hbar)}; c\epsilon_1 + \sum_{j=2}^N \nu_j \epsilon_j \right)  $$
is the highest weight of an irreducible $\CE_{N-1}^{\Hlie}$-module in category $\BGGf'$, which is denoted by $S'_{\nu,c,a}$.
As in Section \ref{ss: tableau}, $\nu$ is identified with its Young diagram $Y_{\nu}$. Let $\SB_{\nu}'$ be the set of Young tableaux $Y_{\nu} \longrightarrow \{2<3<\cdots < N\}$ of shape $\nu$.

\begin{lem}  \label{lem: (k) vs k}
Assume that Theorem \ref{thm: q-char evaluation} is true for $\CE_{\tau,\hbar}(\mathfrak{sl}_{N-1})$-modules. Then for $\nu \in \SP'$ and $c, a \in \BC$, the q-character of the $\CE_{N-1}^{\Hlie}$-module $S_{\nu,c,a}'$ is 
$$ \qc'(S_{\nu,c,a}') = e^{c \epsilon_1}\sum_{T \in \SB_{\nu}'} \prod_{(i,j)\in Y_{\nu}} \boxed{T(i,j)}_{a+j-i}'  \in \CMt. $$
\end{lem}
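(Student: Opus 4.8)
The plan is to deduce the formula from the rank-$(N-1)$ case of Theorem \ref{thm: q-char evaluation} through the identification recalled in Section \ref{ss: proof of Cor}: the $\Hlie_{N-1}$-algebra $(\CE_{N-1},\Delta_{N-1})$ is isomorphic to $\CE_{\tau,\hbar}(\mathfrak{sl}_{N-1})$ after relabeling the Dynkin nodes $k\mapsto k+1$, equivalently the Cartan generators $\epsilon^{(N-1)}_{k}\mapsto$ (image of $\epsilon_{k+1}$ in $\Hlie_{N-1}=\Hlie/\BC\epsilon_1$), while $\CE_{N-1}^{\Hlie}$ differs from $\CE_{N-1}$ only by extension of the moment maps along $\BM_{\Hlie_{N-1}}\hookrightarrow\BM$, which merely records the extra grading direction $\epsilon_1$. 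First I would check that under this identification the module $S'_{\nu,c,a}$ corresponds, up to the grading twist $e^{c\epsilon_1}$, to the irreducible $\CE_{\tau,\hbar}(\mathfrak{sl}_{N-1})$-module $S_{\tilde\nu,a}$ attached to $\tilde\nu=(\nu_2\geq\nu_3\geq\cdots\geq\nu_N)$ read as a partition with at most $N-1$ parts: components $2,\dots,N$ of the two highest weights agree after relabeling, the direction $\epsilon_1$ lies in the kernel of $\Hlie\twoheadrightarrow\Hlie_{N-1}$ and so is invisible to the $\CE_{N-1}$-action, and the highest weight classification (which, as noted in Section \ref{ss- hw}, carries over to $\BGGf'$) pins down both modules from their highest weights. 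Since all the $\eD_l(z)$ with $1\leq l<N$ lie in $\CE_{N-1}^{\Hlie}$, this gives $\qc'(S'_{\nu,c,a})=e^{c\epsilon_1}$ times the image of $\qc(S_{\tilde\nu,a})$ under the evident translation of notation, reducing us to the case $c=0$ together with a bookkeeping of box symbols.

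That bookkeeping is the technical heart. Using Definition \ref{defi: fund weights} and the priming $\Bg\mapsto\Bg'$ of Remark \ref{rem: from N to N-1}, I would verify that, for $1\leq k\leq N-1$, the rank-$(N-1)$ box symbol at node $k$ and parameter $b$ is sent, under the recipe ``prepend the dummy first coordinate $1$, relabel the weight $\epsilon^{(N-1)}_{k}\mapsto\epsilon_{k+1}$'', exactly to $\boxed{k+1}_{b}'$; the comparison of tuples comes down to counting the entries of each type ($k-1$ copies of $\frac{\theta(u+\hbar)\theta(u-\hbar)}{\theta(u)^{2}}$, one copy of $\frac{\theta(u+\hbar)}{\theta(u)}$, then ones), and the content shifts $a+j-i$ are rank-independent. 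Thus the bijection $T\mapsto T+1$ matches the rank-$(N-1)$ tableaux of shape $\nu$ with entries in $\{1<\cdots<N-1\}$ with $\SB_{\nu}'$, and feeding Theorem \ref{thm: q-char evaluation} for $\CE_{\tau,\hbar}(\mathfrak{sl}_{N-1})$ — the induction hypothesis we may assume — through the identifications turns its tableau sum into $\sum_{T\in\SB_{\nu}'}\prod_{(i,j)\in Y_{\nu}}\boxed{T(i,j)}_{a+j-i}'$; multiplying by $e^{c\epsilon_1}$ gives the stated formula.

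It remains to license the grading twist. Shifting all $\Hlie$-weights of an object of $\BGGf'$ by $c\epsilon_1$ is compatible with the $\CE_{N-1}^{\Hlie}$-action since $\epsilon_1$ goes to $0$ in $\Hlie_{N-1}$, so $\lambda\mapsto\lambda+\hbar c\epsilon_1$ fixes $\BM_{\Hlie_{N-1}}$, out of which all structure constants of $\CE_{N-1}^{\Hlie}$ are built; this regrading sends $S'_{\nu,0,a}$ to $S'_{\nu,c,a}$, multiplies q-characters by $e^{c\epsilon_1}$, and is in fact what guarantees the existence of $S'_{\nu,c,a}$ in $\BGGf'$ for every $c$.

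The obstacle I anticipate is not conceptual but a matter of aligning conventions exactly: the node/Cartan relabeling $k\leftrightarrow k+1$, the precise sense in which a $\CE_{\tau,\hbar}(\mathfrak{sl}_{N-1})$-module is transported into $\BGGf'$, the interplay of the priming $\Bg\mapsto\Bg'$ with the truncation of tuples already built into the definition of $\boxed{k}_{a}$, and the row/content indexing of $Y_{\nu}$ for $\nu\in\SP'$. An off-by-one in any of these — in particular in the shifts $\ell_k$ entering the relation between $\boxed{k}_{a}$ and the $A_{i,a},\,Y_{k,a}$ — would break the identity, so the map from rank-$(N-1)$ box symbols to the $\boxed{k+1}_{b}'$ has to be checked on the nose against Definition \ref{defi: fund weights}, not by analogy.
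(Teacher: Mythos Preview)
Your proposal is correct and follows essentially the same approach as the paper: both transport the rank-$(N-1)$ irreducible module through the identification $\CE_{N-1}\cong\CE_{\tau,\hbar}(\mathfrak{sl}_{N-1})$, extend scalars and add the extra $\epsilon_1$-grading with parameter $c$, then read off the q-character from the induction hypothesis via the relabeling of box symbols. The paper is slightly more explicit about the scalar-extension step, constructing $X=\BM\otimes_{\BM_{\Hlie'}}S'_{\nu,a}$ and verifying (M1')--(M3') directly before identifying $X\cong S'_{\nu,c,a}$ by comparison of highest weights, whereas you phrase the same construction as a grading twist; but the content is the same.
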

\begin{proof}
We shall need $\CE_{N-1}$-modules which are $\Hlie'$-graded $\BM_{\Hlie'}$-vector spaces; similar category of finite-dimensional modules and q-characters are defined, based on the $\Hlie'$-algebra isomorphism $\CE_{\tau,\hbar}(\mathfrak{sl}_{N-1}) \cong \CE_{N-1}$ in Section \ref{ss: proof of Cor}. 

For $\nu := (\nu_2 \geq \nu_3 \geq \cdots \geq \nu_N) \in \SP'$ and $a \in \BC$, there exists a unique (up to isomorphism) irreducible $\CE_{N-1}$-module, denoted by $S_{\nu,a}'$, which contains a non-zero vector $\omega$ of $\Hlie'$-weight $\nu_2 \epsilon_2 + \nu_3 \epsilon_3 + \cdots + \nu_N \epsilon_N$ such that 
$$ L_{ij}(z) \omega = 0,\quad \hL_k(z) \omega = \frac{\theta(z+(a+\nu_k)\hbar)}{\theta(z+a\hbar)} \omega $$
for $2\leq i,j,k \leq N$ with $j < i$. We endow the $\BM$-vector space $X := \BM \otimes_{\BM_{\Hlie'}} S_{\nu,a}'$ with an $\CE_{N-1}^{\Hlie}$-module structure in category $\BGGf'$. 

Let $w$ be a non zero weight vector in $S_{\nu,a}'$. Its $\Hlie'$-weight is written uniquely in the form $(\nu_2 \epsilon_2 + \nu_3 \epsilon_3 + \cdots + \nu_N \epsilon_N) + (x_2 \alpha_2 + x_3\alpha_3 + \cdots + x_{N-1}\alpha_{N-1}) \in \Hlie'$, where $x_j \in \BZ_{\leq 0}$. Define the $\Hlie$-weight of $g(\lambda) \otimes_{\BM_{\Hlie'}} w$, for $g(\lambda) \in  \BM^{\times}$, to be 
$$(c\epsilon_1 + \nu_2 \epsilon_2 + \nu_3 \epsilon_3 + \cdots + \nu_N \epsilon_N) + (x_2 \alpha_2 + x_3\alpha_3 + \cdots + x_{N-1}\alpha_{N-1}) \in \Hlie,  $$
and define the action of $L_{ij}(z)$ for $2\leq i,j \leq N$ by the formula
$$ L_{ij}(z) (g(\lambda)\otimes_{\BM_{\Hlie'}} w) = g(\lambda+\hbar \epsilon_j) \otimes_{\BM_{\Hlie'}} L_{ij}(z) w. $$
(M1')--(M2') are clear from the $\CE_{N-1}$-module structure on $S_{\nu,a}'$. Choose an ordered weight basis $B$ of $S_{\nu,a}'$ over $\BM_{\Hlie'}$ such that the matrices of $\eD_k(z)$ for $1\leq k < N$ are upper triangular and their diagonal entries belong to $\BM_{\BC}^{\times}$. Then the ordered basis $\{1 \otimes_{\BM_{\Hlie'}} b\ |\ b \in B\} =: B'$  of $X$ satisfies (M3'). So $X$ is in category $\BGGf'$. 

The matrices of $\eD_k(z)$ with respect to the basis $B'$ of $X$ and the basis $B$ of $S_{\nu,a}'$ are the same. So $\qc'(X)$ up to a normalization factor $e^{c\epsilon_1}$, is equal to the q-character of the $\CE_{\tau,\hbar}(\mathfrak{sl}_{N-1})$-module $S_{\nu,a}$. The latter  is given by Eq.\eqref{equ: q-char evaluation}.

$X$ has a unique (up to scalar) singular vector and is of highest weight, so it is irreducible. A comparison of highest weights shows that $X \cong S_{\nu,c,a}'$.
\end{proof}
Fix $\mu \in \SP$ a partition with at most $N$ parts. Given a tableau 
 $T \in \SB_{\mu}$, by deleting the boxes $\boxed{1}$ in $T$, we obtain a Young diagram $T^{-1}(\{2,3,\cdots,N\})$ with at most $N-1$ rows, which corresponds to a partition in $\SP'$, denoted by $\nu_T$. 
Let $\SW_{\mu}$ be the set of all such $\nu_T$ with $T \in \SB_{\mu}$. For $\nu \in \SW_{\mu}$, define $c_{\nu}$ to be the cardinal of the finite subset $Y_{\mu} \setminus Y_{\nu}$ of $\BZ^2$.

Again take the example $N = 3$ and $\mu = (2 \geq 1)$ after Theorem \ref{thm: q-char evaluation}. The eight tableaux in $\SB_{\mu}$ with $\boxed{1}$ deleted give four Young diagrams and partitions
$$\young(~) = (1),\quad \young(~~) = (2), \quad  \young(~,~) = (1\geq 1),\quad \young(:~,~~) = (2\geq 1). $$
The corresponding integers $c_{\nu}$ are $2, 1, 1, 0$.

\begin{lem} \label{lem: branching}
Let $\mu \in \SP$ and $a \in \BC$. In the Grothendieck group $K_0(\BGGf')$:
$$ [S_{\mu,a}] = \sum_{\nu \in \SW_{\mu}} [S_{\nu,c_{\nu},a}']. $$
\end{lem}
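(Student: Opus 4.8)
The plan is to prove the branching identity by restricting the irreducible $\CE$-module $S_{\mu,a}$ along the $\Hlie$-algebra morphism $\CE_{N-1}^{\Hlie} \longrightarrow \CE$ and decomposing it in the Grothendieck group $K_0(\BGGf')$. First I would use the q-character formula of Theorem~\ref{thm: q-char evaluation} (available for $\CE = \CE_{\tau,\hbar}(\mathfrak{sl}_N)$) together with Remark~\ref{rem: from N to N-1}: applying the priming operation $\Bg \mapsto \Bg'$ to every e-weight of $S_{\mu,a}$ yields $\qc'(S_{\mu,a})$ as a $\CE_{N-1}^{\Hlie}$-module. Concretely, a tableau $T \in \SB_{\mu}$ contributes $\prod_{(i,j)} \boxed{T(i,j)}_{a+j-i}$ to $\qc(S_{\mu,a})$; after priming, each box $\boxed{1}_{b}$ becomes $\boxed{1}_{b}' = e^{\epsilon_1}$ (since the first component of $\boxed{1}_b$ carries no $z$-dependence, only the shift $\epsilon_1$), while each box $\boxed{k}_b$ with $k \geq 2$ becomes $\boxed{k}_b'$. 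Thus the primed contribution of $T$ factors as $e^{c_{\nu_T}\epsilon_1} \prod_{(i,j) \in Y_{\nu_T}} \boxed{T(i,j)}_{a+j-i}'$, where $\nu_T$ is the partition obtained by deleting the $\boxed{1}$-boxes and $c_{\nu_T} = |Y_\mu \setminus Y_{\nu_T}|$ counts the deleted boxes.

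Next I would group the tableaux $T \in \SB_{\mu}$ by the value of $\nu_T \in \SW_\mu$. For a fixed $\nu \in \SW_\mu$, a tableau $T$ with $\nu_T = \nu$ is precisely the data of: the boxes of $Y_\mu \setminus Y_\nu$ filled with $1$'s (this is forced once $\nu$ is fixed, using that a tableau is strictly increasing down columns and weakly increasing along rows — so the $1$'s must occupy exactly a Young-diagram-shaped region in the NW corner), together with an arbitrary filling of $Y_\nu$ by $\{2 < 3 < \cdots < N\}$ that is a valid tableau, i.e. an element of $\SB_\nu'$. Hence the sum over $\{T \in \SB_\mu : \nu_T = \nu\}$ of the primed contributions equals $e^{c_\nu \epsilon_1} \sum_{T' \in \SB_\nu'} \prod_{(i,j) \in Y_\nu} \boxed{T'(i,j)}_{a+j-i}'$, which by Lemma~\ref{lem: (k) vs k} is exactly $\qc'(S'_{\nu,c_\nu,a})$. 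Summing over $\nu \in \SW_\mu$ gives $\qc'(S_{\mu,a}) = \sum_{\nu \in \SW_\mu} \qc'(S'_{\nu,c_\nu,a})$.

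Finally, since the q-character map $\qc'$ is an injective morphism from $K_0(\BGGf')$ to the additive group $\CMt$ (established in Section~\ref{ss: branching}, as the highest weight theory carries over to $\CE_{N-1}^{\Hlie}$), the equality of q-characters lifts to the claimed identity $[S_{\mu,a}] = \sum_{\nu \in \SW_\mu} [S'_{\nu,c_\nu,a}]$ in $K_0(\BGGf')$. One should also note that $S_{\mu,a}$, being in category $\BGGf$, restricts to an object of $\BGGf'$ via the functor $\BGGf \longrightarrow \BGGf'$, so the left-hand side makes sense.

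The main obstacle I anticipate is the combinatorial bookkeeping in the second step: one must verify carefully that, for fixed $\nu \in \SW_\mu$, the $1$-boxes in any $T$ with $\nu_T = \nu$ are forced to occupy exactly $Y_\mu \setminus Y_\nu$ and that this complementary region is itself a Young diagram in the NW corner (so that the column-strictness and row-weak-increase conditions decouple into "$1$'s on $Y_\mu \setminus Y_\nu$" and "an independent $\SB_\nu'$-tableau on $Y_\nu$"), and conversely that every such pair assembles into a legitimate element of $\SB_\mu$. This is where the hypothesis $\mu \in \SP$ (genuine partition shape) and the semistandardness conditions are used essentially; once it is in place, the matching of content shifts $a+j-i$ box-by-box and the identification of the normalization exponent with $c_\nu = |Y_\mu \setminus Y_\nu|$ are routine.
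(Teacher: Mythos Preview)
Your argument is circular. You invoke Theorem~\ref{thm: q-char evaluation} for $\CE_{\tau,\hbar}(\mathfrak{sl}_N)$ to compute $\qc(S_{\mu,a})$, but in this paper Lemma~\ref{lem: branching} is precisely one of the ingredients used in Section~\ref{ss: proof q-char} to carry out the inductive proof of Theorem~\ref{thm: q-char evaluation}. The induction hypothesis gives Theorem~\ref{thm: q-char evaluation} only for $\CE_{\tau,\hbar}(\mathfrak{sl}_{N-1})$ (this is what feeds into Lemma~\ref{lem: (k) vs k}); it does not give you $\qc(S_{\mu,a})$ for the rank-$N$ module. So the very first step of your proposal is unavailable at the point where the lemma is needed.

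The paper avoids this circularity by passing through the small elliptic quantum group of Tarasov--Varchenko (Section~\ref{ss: small}). By Theorem~\ref{thm: small}, the $\SE$-module $V_{\overline{\mu}}$ has the same $\Hlie$-character as the classical irreducible $\mathfrak{sl}_N$-module $L_{\overline{\mu}}$, and similarly for the $\SE'$-modules $V'_{\nu,c_\nu}$ versus irreducibles of the parabolic $\mathfrak{sl}_{N-1}'$. Via the evaluation functors $\ev_a, \ev_a'$ and the injective character maps $\chi$ on $K_0(\CF), K_0(\CF')$, the desired identity in $K_0(\BGGf')$ is transported to the character identity $\chi(V_{\overline{\mu}}) = \sum_{\nu} \chi(V'_{\nu,c_\nu})$, which is exactly the classical branching rule for $\mathfrak{sl}_N \supset \mathfrak{sl}_{N-1}'$. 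This route uses only Theorem~\ref{thm: small} and classical Lie theory, never Theorem~\ref{thm: q-char evaluation}, so it is compatible with the induction. Your combinatorial regrouping of tableaux is correct and is in fact the tableau avatar of the classical branching rule, but it proves the lemma only once Theorem~\ref{thm: q-char evaluation} is already in hand by some independent argument (e.g.\ via the functor of \cite{GTL2} mentioned in the remark after the theorem); within the paper's own logical structure it cannot serve as the proof.
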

\begin{proof}
Let $\SE'$ be the subalgebra of $\SE$ generated by $\BM_2$ and the $t_{ij}$ for $2\leq i,j \leq N$. One can define similar abelian category $\CF'$ of $\SE'$-modules (which are $\Hlie$-graded $\BM$-vector spaces) equipped with:
\begin{itemize}
    \item[(a)] the evaluation functor $\ev_a': \CF' \longrightarrow \BGGf'$ from $\SE'$-modules to $\CE_{N-1}^{\Hlie}$-modules;
    \item[(b)] the injective character map $\chi: K_0(\CF') \longrightarrow \CMt$ from the $\Hlie$-grading.
\end{itemize}
Theorem \ref{thm: small} applied to the $\Hlie'$-algebra $\SE_{\tau,\hbar}(\mathfrak{sl}_{N-1})$, from the scalar extension in the proof of Lemma \ref{lem: (k) vs k}, one obtains an irreducible object $V_{\nu,c}'$ in category $\CF'$ for $\nu = (\nu_2 \geq \nu_3 \geq \cdots \geq \nu_N) \in \SP'$ and $c \in \BC$ with the following properties: 
\begin{itemize}
    \item[(c)] $V_{\nu,c}'$ admits a non-zero vector $v$ of weight $c \epsilon_1 + \nu_2 \epsilon_2 + \nu_3 \epsilon_3 + \cdots + \nu_N \epsilon_N$ and $\hat{t}_{k} v = v,\ t_{ij} v = 0$ for $2\leq i,j,k \leq N$ and $i < j$;
    \item[(d)] $\chi(V_{\nu,c}')$ is equal to the character of the irreducible $\mathfrak{sl}_{N-1}'$-module of highest weight $c \epsilon_1 + \nu_2 \epsilon_2 + \nu_3 \epsilon_3 + \cdots + \nu_N \epsilon_N$; here $\mathfrak{sl}_{N-1}'$ is the parabolic Lie subalgebra of $\mathfrak{sl}_N$ (with the same Cartan algebra $\Hlie$) associated to the simple roots $\alpha_2, \alpha_3, \cdots, \alpha_{N-1}$.
\end{itemize}
By comparing highest weight we observe that 
$ \ev_a'(V_{\nu,c}')  \cong S_{\nu,c,a}'$ in category $\BGGf'$. 

Let $\mu = (\mu_1 \geq \mu_2 \geq \cdots \geq \mu_N) \in \SP$. Set $\overline{\mu} := \mu_1 \epsilon_1 + \mu_2\epsilon_2 + \cdots + \mu_N \epsilon_N$. Then $S_{\mu,a} \cong \ev_a(V_{\overline{\mu}})$ in category $\BGGf$.
By diagram chasing 
\[ \xymatrixcolsep{5pc}
\xymatrix{
K_0(\BGGf) \ar[d]_{\mathrm{Res}} &   K_0(\CF) \ar[l]_{\ev_a} \ar[d]_{\mathrm{Res}}  \ar[r]^{\chi} & \CMt  \\
K_0(\BGGf') &  K_0(\CF') \ar[l]_{\ev_a'} \ar[ur]_{\chi} &
}
\]
Lemma \ref{lem: branching} is equivalent to the character identity $\chi(V_{\overline{\mu}}) = \sum_{\nu \in \CW_{\mu}} \chi(V_{\nu,c_{\nu}}')$. Since the left-hand side (resp. the right-hand side) is the character of a representation of $\mathfrak{sl}_N$ by Theorem \ref{thm: small} (resp. of $\mathfrak{sl}_{N-1}'$ by (d)), this identity is a consequence of the branching rule for representations of the reductive Lie algebras $\mathfrak{sl}_N \supset \mathfrak{sl}_{N-1}'$. 
\end{proof}

\subsection{Proof of Theorem \ref{thm: q-char evaluation}.} \label{ss: proof q-char}
We proceed by induction on $N$. For $N = 1$ and $\mu = (n)$, since $S_{\mu,a}$ is one-dimensional, its q-character is equal to its highest weight 
$$  (\frac{\theta(z+(a+n)\hbar)}{\theta(z+a\hbar)};n \epsilon_1) = \prod_{j=1}^n \boxed{1}_{a+j-1}. $$
Suppose $N > 1$. By Lemma \ref{lem: (k) vs k}, the induction hypothesis in the case of $N-1$ gives the q-character formula for all the $\CE^{\Hlie}_{N-1}$-modules $S_{\nu,c,a}'$ where $\nu \in \SP'$ and $c \in \BC$. So the q-character $\qc'(S_{\mu,a})$ of the $\CE^{\Hlie}_{N-1}$-module $S_{\mu,a}$ is known by Lemma \ref{lem: branching}.

Since $S_{\mu,a}$ is an irreducible $\CE$-module in category $\BGGf$, by Remark \ref{rem: from N to N-1}, $\qc(S_{\mu,a})$ can be recovered from $\qc'(S_{\mu,a})$. Since $\SB_{\mu}$ is the disjoint union of the the $\SB_{\nu}'$ for $\nu \in \CW_{\mu}$, it suffices to check that for each e-weight $ (m_1^T(z),m_2^T(z),\cdots,m_N^T(z);\alpha)$ at the right-hand side of Eq.\eqref{equ: q-char evaluation}, where $T \in \SB_{\mu}$, the following product 
$$ m^T(z) := m_1^T(z+(N-1)\hbar)m_2^T(z+(N-2)\hbar)\cdots m_N^T(z)$$
is the eigenvalue of scalar action of $\eD_N(z)$ on $S_{\mu,a}$. Notice first that
$$\prod_{p=1}^N \frac{\theta(z+(a+N-p+\mu_p)\hbar)}{\theta(z+(a+N-p)\hbar)}= \prod_{(i,j) \in Y_{\mu}} \frac{\theta(z+(a+j-i+N)\hbar)}{\theta(z+(a+j-i+N-1)\hbar)}.$$
By Eq.\eqref{equ: vect rep JM action}, each box $\boxed{i}_x$ contributes to $\frac{\theta(z+(x+N)\hbar)}{\theta(z+(x+N-1)\hbar)}$, so the right-hand side of the identity is exactly $m^T(z)$. By Remark \ref{rem: from N to N-1}, the left-hand side is the scalar of $\eD_N(z)$ acting on $S_{\mu,a}$.
 This completes the proof of Theorem \ref{thm: q-char evaluation}. \hfill $\Box$ 

\section{Kirillov--Reshetikhin modules} \label{sec: KR}
We study certain irreducible $\CE$-modules via q-characters.

Fix $a \in \BC$. For $k \in \BC$ and  $1\leq r \leq N$, define the {\it asymptotic e-weight}
\begin{equation}  \label{equ: asymp e-weight}
\Bw_{k,a}^{(r)} := \Psi_{r,a+k}\Psi_{r,a}^{-1} \in \CMw.
\end{equation}
Assume $k \in \BZ_{\geq 0}$. We identify $k\varpi_r$ with the partition $(k\geq k \geq \cdots \geq k)$ where $k$ appears $r$ times. Then $\Bw_{k,a}^{(r)} = Y_{r,a+\frac{1}{2}} Y_{r,a+\frac{3}{2}} \cdots Y_{r,a+k-\frac{1}{2}} = \theta_{k\varpi_r,a}$  by Eqs.\eqref{equ: A Psi}--\eqref{equ: tableau A Y}, and the finite-dimensional irreducible $\CE$-module $S(\Bw_{k,a}^{(r)})$ in category $\BGGf$ is denoted by $W_{k,a}^{(r)}$ and called {\it Kirillov--Reshetikhin module} (KR module). 

The $Y_{i,a+m}$ (resp. the $A_{i,a+m}$) for $1 \leq i \leq N$ (resp. $1\leq i < N$) and $m \in \frac{1}{2}\BZ$ are linearly independent  in the abelian group $\CMw$, and generate the subgroup $\CP_a$ (resp. $\CQ_a$) and the submonoid $\CP_a^+$ (resp. $\CQ_a^+$). The inverses of these submonoids are denoted by $\CP_a^-$ and $\CQ_a^-$ respectively. By Eq.\eqref{equ: q-char evaluation} and Eq.\eqref{equ: tableau A Y},
$$\ewt(S_{\mu,a}) \subset \theta_{\mu,a} \CQ_a^- \subset \CP_a \quad \mathrm{for}\ \mu \in \SP. $$
Indeed, let $T_{\mu} \in \SB_{\mu}$ be such that the associated monomial in Eq.\eqref{equ: q-char evaluation} is $\theta_{\mu,a}$. Then for $S \in \SB_{\mu}$, we must have $S(i,j) \geq T_{\mu}(i,j)$ for all $(i,j) \in Y_{\mu}$.  

Following \cite[\S 6]{FM1}, we call $\Bf \in \CP_a$ {\it right negative} if the factors $Y_{i,a+m}$ with $1\leq i < N$ appearing in $\Bf$, for which $m \in \frac{1}{2}\BZ$ is minimal, have negative powers. 
\begin{lem} \cite{FM1} \label{lem: right negative}
Let $\Be, \Bf \in \CP_a$. If $\Be,\Bf$ are right negative, then so is $\Be\Bf$.
\end{lem}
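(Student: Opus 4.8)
The plan is to reduce everything to bookkeeping of the exponents of the generators $Y_{i,a+m}$. Since the $Y_{i,a+m}$ with $1 \le i \le N$ and $m \in \frac{1}{2}\BZ$ freely generate the abelian group $\CP_a \subset \CMw$, I would write $\Be = \prod_{i,m} Y_{i,a+m}^{e_{i,m}}$ and $\Bf = \prod_{i,m} Y_{i,a+m}^{f_{i,m}}$ with integer exponents, almost all zero and uniquely determined. For a right-negative $\Bg \in \CP_a$ let $m_0(\Bg)$ denote the smallest $m \in \frac12\BZ$ for which $g_{i,m} \ne 0$ for some $i$ with $1 \le i < N$; such an $m$ exists precisely because $\Bg$ is right negative, and the definition of right negativity then says exactly that $g_{i,m_0(\Bg)} \le 0$ for every $i < N$, with at least one of these strictly negative by minimality of $m_0(\Bg)$.

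Next I would compare $m_1 := m_0(\Be)$ and $m_2 := m_0(\Bf)$; since $\CMw$ is abelian we may assume $m_1 \le m_2$. The exponent of $Y_{i,a+m}$ in $\Be\Bf$ is $e_{i,m} + f_{i,m}$. For $i < N$ and $m < m_1$ both summands vanish --- $e_{i,m}$ by minimality of $m_1$, and $f_{i,m}$ because $m < m_1 \le m_2$ --- so no such factor occurs in $\Be\Bf$. At level $m = m_1$: if $m_1 < m_2$ then $f_{i,m_1} = 0$ for all $i < N$, so the exponent there is $e_{i,m_1}$, which is $\le 0$ for all $i < N$ and $< 0$ for at least one such $i$; if $m_1 = m_2$ the exponent is $e_{i,m_1} + f_{i,m_1}$, a sum of two nonpositive integers, hence $\le 0$ for all $i < N$ and strictly negative for any $i < N$ with $e_{i,m_1} < 0$. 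In both cases some $Y_{i,a+m_1}$ with $i < N$ has nonzero exponent in $\Be\Bf$ while all exponents at level $m_1$ are $\le 0$, so $m_0(\Be\Bf) = m_1$ and $\Be\Bf$ is right negative.

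There is no genuine obstacle here: the argument is a direct unwinding of the definition, mirroring the corresponding fact for affine quantum groups in Frenkel--Mukhin \cite{FM1}. The only point requiring a moment's care is that the exponent at the minimal level $m_1$ cannot accidentally cancel to $0$ --- but a strictly negative integer plus a nonpositive integer is still strictly negative, which is what prevents $m_0(\Be\Bf)$ from jumping upward and what guarantees that right negativity is inherited.
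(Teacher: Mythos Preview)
Your argument is correct and is exactly the standard one: the paper itself gives no proof, simply citing \cite{FM1}, and your unwinding of the definition via the minimal level $m_1$ is precisely how the result is established there. One cosmetic remark: when you say ``$g_{i,m_0(\Bg)} \le 0$ for every $i<N$'' you are implicitly using that the definition requires the \emph{appearing} factors at the minimal level to have negative powers, so the nonappearing ones contribute exponent $0$; this is what you mean, and it is fine, but it might be worth stating explicitly since a reader could momentarily parse ``have negative powers'' as ``all $Y_{i,a+m_0}$ with $i<N$ have strictly negative exponent''.
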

All elements in $\CQ_a^-$ different from 1 are right-negative by Eq.\eqref{equ: A Psi}.

\begin{lem}  \label{lem: KR q-char}
Let $k \in \BZ_{>0}$ and $1\leq r < N$. 
\begin{itemize}
\item[(1)] For $1\leq l \leq k$, $\Bw_{k,a}^{(r)}A_{r,a}^{-1}A_{r,a+1}^{-1} \cdots A_{r,a+l-1}^{-1}$ is an e-weight of $W_{k,a}^{(r)}$ of multiplicity one in $\qc(W_{k,a}^{(r)})$.
\item[(2)] An e-weight of $W_{k,a}^{(r)}$ different from those in (1) and from $\Bw_{k,a}^{(r)}$ must belong to $\Bw_{k,a}^{(r)} A_{r,a}^{-1} A_{s,a-\frac{1}{2}}^{-1}\CQ_a^-$ for certain $1\leq s < N$ with $s = r \pm 1$.
\item[(3)] Any e-weight of $W_{k,a}^{(r)}$ is either $\Bw_{k,a}^{(r)}$ or right negative.
\end{itemize}
\end{lem}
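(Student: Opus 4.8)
The plan is to prove the three parts in order, deducing each from the q-character formula of Theorem~\ref{thm: q-char evaluation} applied to the KR module $W_{k,a}^{(r)} = S_{k\varpi_r,a}$, together with the combinatorics of semistandard Young tableaux of rectangular shape $k\varpi_r$ (a $k\times r$ rectangle). Throughout, the highest e-weight corresponds to the tableau $T_{k\varpi_r}$ in which row $i$ (counted from the bottom, $1\le i\le r$) is constant equal to $i$; every other tableau $S$ satisfies $S(i,j)\ge T_{k\varpi_r}(i,j)$ for all boxes, so every non-highest e-weight is obtained from $\Bw_{k,a}^{(r)}$ by multiplying by a product of $A_{s,b}^{-1}$'s, i.e. lies in $\Bw_{k,a}^{(r)}\CQ_a^-$.

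For part~(1), I would exhibit explicitly, for each $1\le l\le k$, the tableau $S_l$ obtained from $T_{k\varpi_r}$ by changing the entry in the top row (row $r$) of the $l$ rightmost columns from $r$ to $r+1$, leaving everything else unchanged; this is still column-strict and row-weakly-increasing since $r+1\le N$ and row $r$ is the top row. Using Eq.~\eqref{equ: tableau A Y}, $A_{r,b} = \boxed{r}_{b-\ell_r}\boxed{r+1}_{b-\ell_r}^{-1}$, one computes that the monomial attached to $S_l$ is exactly $\Bw_{k,a}^{(r)}A_{r,a}^{-1}A_{r,a+1}^{-1}\cdots A_{r,a+l-1}^{-1}$ — replacing a $\boxed{r}$ by a $\boxed{r+1}$ at box position $(r,j)$ multiplies the monomial by $\boxed{r}_{x}^{-1}\boxed{r+1}_{x}$ where $x$ tracks the content $j-r$, and matching contents to the indices $a,a+1,\dots,a+l-1$ gives the claim. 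Multiplicity one: I would argue that $S_l$ is the \emph{only} tableau producing this monomial, because any tableau whose monomial lies in $\Bw_{k,a}^{(r)}\{A_{r,a}^{-1},\dots,A_{r,a+l-1}^{-1}\}$ with total $\BQ$-weight drop $l\alpha_r$ must, by a weight/content bookkeeping argument, differ from $T_{k\varpi_r}$ only by raising $r\mapsto r+1$ in exactly these $l$ boxes.

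For parts~(2) and~(3) I would analyze an arbitrary tableau $S\ne T_{k\varpi_r}$ and let $\boxed{T}_{b}$ be a box where $S$ first deviates from $T_{k\varpi_r}$, reading columns from the rightmost. If $S$ already deviates in the first (rightmost) column, column-strictness of $S$ forces $S(i,\mu_1)\ge i$ with strict inequality somewhere, and chasing the definition of the $\boxed{k}$ shows the monomial acquires a factor $A_{s,a-\frac12}^{-1}$ for some $s$; combined with the deviation propagating as in part~(1) this produces the factor $A_{r,a}^{-1}A_{s,a-\frac12}^{-1}$ with $s=r\pm1$ as claimed. If instead $S$ agrees with $T_{k\varpi_r}$ on an initial block of rightmost columns and deviates further left, then $S$ restricted to the deviating region is a non-highest tableau of a smaller rectangle and I would induct on $k$ (or on the number of deviating columns). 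Part~(3) then follows from part~(2) via Lemma~\ref{lem: right negative}: all elements of $\CQ_a^-\setminus\{1\}$ are right negative, $A_{r,a}^{-1}$ and $A_{s,a-\frac12}^{-1}$ are right negative, and the monomials in~(1) are products of right-negative $A^{-1}$'s, hence right negative; so every e-weight other than $\Bw_{k,a}^{(r)}$ is a product of right-negative elements and thus right negative.

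The main obstacle I anticipate is the bookkeeping in part~(2): controlling exactly which $A_{s,b}^{-1}$ factors can occur at the \emph{first} deviation and pinning down $s=r\pm1$ with the precise shift $a-\frac12$. The cleanest route is probably to avoid a case-by-case tableau argument and instead reduce to $\CE_{\tau,\hbar}(\mathfrak{sl}_2)$ (equivalently $\mathfrak{sl}_3$) by the branching/restriction machinery of Section~\ref{ss: branching}, since the assertion only concerns e-weights just below the highest one and involves the single node $r$ and its neighbors; but even granting that reduction, one still needs Lemma~\ref{lem: KR q-char}(1)--(2) for $\mathfrak{sl}_2$ and $\mathfrak{sl}_3$, which is a direct tableau computation. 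Alternatively, part~(2) can be extracted from the general structure of q-characters of KR modules over affine quantum groups (Frenkel--Mukhin), transported via the functor mentioned in the remark after Theorem~\ref{thm: q-char evaluation}; I would present the self-contained tableau proof as primary and remark on this shortcut.
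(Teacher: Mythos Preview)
Your approach is essentially the paper's own: apply Theorem~\ref{thm: q-char evaluation} to $W_{k,a}^{(r)}\cong S_{k\varpi_r,a-\ell_r}$ (note the shift by $\ell_r$, not $S_{k\varpi_r,a}$), read off~(1)--(2) from the combinatorics of semistandard tableaux of rectangular shape, and deduce~(3) from right-negativity. The paper outsources the tableau bookkeeping for~(1)--(2) to \cite[Lemma~3.4]{Z2}, which is the argument you are sketching.

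There is, however, a genuine gap in your proof of~(3). You write that every non-highest e-weight is ``a product of right-negative elements and thus right negative,'' invoking Lemma~\ref{lem: right negative}. But each such e-weight is $\Bw_{k,a}^{(r)}$ times a product of $A^{-1}$'s, and $\Bw_{k,a}^{(r)} = Y_{r,a+\frac12}Y_{r,a+\frac32}\cdots Y_{r,a+k-\frac12}$ is \emph{not} right negative (all its $Y$'s have positive exponent). Lemma~\ref{lem: right negative} only closes right-negative elements under multiplication; it says nothing about multiplying a right-negative element by a non-right-negative one. The correct step, which the paper gives, is to check directly that the specific product
\[
\Bw_{k,a}^{(r)}A_{r,a}^{-1} \;=\; Y_{r,a-\frac12}^{-1}\,Y_{r,a+\frac32}\cdots Y_{r,a+k-\frac12}\prod_{s=r\pm1}Y_{s,a}
\]
is right negative: the minimal index $m$ occurring is $m=-\tfrac12$, and the sole factor there is $Y_{r,a-\frac12}^{-1}$. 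Since by~(1) and~(2) every non-highest e-weight lies in $\Bw_{k,a}^{(r)}A_{r,a}^{-1}\CQ_a^-$, Lemma~\ref{lem: right negative} (applied to $\Bw_{k,a}^{(r)}A_{r,a}^{-1}$ and an element of $\CQ_a^-$, both right negative) now finishes~(3). Your sketch for~(2) is on the right track and you correctly flag the delicate point (pinning down $s=r\pm1$ and the shift $a-\tfrac12$); just be aware that this bookkeeping is where the real work lies.
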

\begin{proof}
The Young diagram $Y_{k\varpi_r}$ is a rectangle of $r$ rows and $k$ columns. For (1)--(2) the proof of \cite[Lemma 3.4]{Z2} works by applying Theorem \ref{thm: q-char evaluation} to $W_{k,a}^{(r)} \cong S_{k\varpi_r,a-\ell_r}$. For (3), $\Bw_{k,a}^{(r)} A_{r,a}^{-1}$ is right negative, and so is any element of $\Bw_{k,a}^{(r)} A_{r,a}^{-1} \CQ_a^-$. 
%For (2), let $T \in \SB_{k^r}$ be different from the $ T_{k^r}, S_l$ and let $m_T = \Bw_{k,a}^{(r)} \Bd$ be the associated monomial in Eq.\eqref{equ: q-char evaluation}. If $T(1,1) > r+1$, comparing the $(1,1)$-box of  $T$ and $T_{k\varpi_r}$ we obtain 
%$$ \boxed{r+2}_{a-\ell_r} \boxed{r}_{a-\ell_r}^{-1} = A_{r,a}^{-1} A_{r+1,a-\frac{1}{2}}^{-1} \in \Bd \CQ_a^+. $$
%Otherwise, we must have $T(1,1) = r+1$ and $T(2,1) = r$, leading to
%$$ \boxed{r+1}_{a-\ell_r} \boxed{r}_{a-\ell_r-1} \boxed{r}_{a-\ell_r}^{-1}\boxed{r-1}_{a-\ell_r-1}^{-1}  = A_{r,a}^{-1} A_{r-1,a-\frac{1}{2}}^{-1} \in \Bd \CQ_a^+. $$
\end{proof}

For $1\leq r < N$ and $k,t, a \in \BC$, define as in \cite[\S 4.3]{FH2} and \cite[Remark 3.2]{Z2}:
\begin{equation}  \label{equ: Demazure e-weight}
\Bd_{k,a}^{(r,t)} := \frac{\Psi_{r,a+t}}{\Psi_{r,a}} \prod_{s = r\pm 1} \frac{\Psi_{s,a-\frac{1}{2}}}{\Psi_{s,a-\frac{1}{2}-k}} = \Bw_{t,a}^{(r)} \prod_{s= r\pm 1} \Bw_{k,a-\frac{1}{2}-k}^{(s)} \in \CMw.
\end{equation}
If $k,t \in \BZ_{\geq 0}$, then $\Bd_{k,a}^{(r,t)}  \in \CRf$ and set $D_{k,a}^{(r,t)} := S(\Bd_{k,a}^{(r,t)})$. 

\begin{lem}   \label{lem: auxiliary KR}
Let $1\leq r < N$ and $ m,\  k \in \BZ_{> 0}$. 
\begin{itemize}
    \item[(1)] The  dominant e-weights of  $W_{k+m-1,1}^{(r)}\ \wtimes\ W_{k,0}^{(r)}$ and $W_{k-1,1}^{(r)}\ \wtimes\ W_{k+m,0}^{(r)}$ are
    \begin{gather*}
    \Bw_{k+m-1,1}^{(r)}\Bw_{k,0}^{(r)} \quad \mathrm{and} \quad \Bw_{k+m-1,1}^{(r)}\Bw_{k,0}^{(r)} A_{r,1}^{-1} A_{r,2}^{-1}\cdots A_{r,l}^{-1} \quad \mathrm{for}\ 1 \leq l \leq k, \\
    \Bw_{k-1,1}^{(r)}\Bw_{k+m,0}^{(r)} \quad \mathrm{and} \quad \Bw_{k-1,1}^{(r)}\Bw_{k+m,0}^{(r)}  A_{r,1}^{-1} A_{r,2}^{-1} \cdots A_{r,l}^{-1} \quad \mathrm{for}\ 1 \leq l < k,
    \end{gather*}
    respectively. All such e-weights are of multiplicity one.
    \item[(2)] The module $W_{k-1,1}^{(r)}\ \wtimes\ W_{k+m,0}^{(r)}$ is irreducible.
\end{itemize}
\end{lem}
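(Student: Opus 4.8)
The plan is to derive both statements from the q-character formula (Theorem~\ref{thm: q-char evaluation}) for the two KR factors together with the structural description of their e-weights in Lemma~\ref{lem: KR q-char}, using that $\qc$ is a ring homomorphism (Proposition~\ref{prop: monoidal O}) and that dominant e-weights are never right negative.

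For part~(1), consider either of the two tensor products $W^{(r)}_{p,1}\wtimes W^{(r)}_{q,0}$ with $(p,q)\in\{(k+m-1,k),(k-1,k+m)\}$. By Proposition~\ref{prop: monoidal O} its e-weights, with multiplicities, are the products $\Be\Bf$ with $\Be\in\ewt(W^{(r)}_{p,1})$ and $\Bf\in\ewt(W^{(r)}_{q,0})$. By Lemma~\ref{lem: KR q-char}(3) each factor has a unique non-right-negative e-weight, namely its highest weight $\Bw^{(r)}_{\cdot,\cdot}$, and by Lemma~\ref{lem: right negative} a product of two right-negative e-weights is right negative; hence a dominant $\Be\Bf$ forces $\Be=\Bw^{(r)}_{p,1}$ or $\Bf=\Bw^{(r)}_{q,0}$. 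Since $\Bw^{(r)}_{p,1}$ and $\Bw^{(r)}_{q,0}$ involve, among the $Y_{i,\cdot}^{\pm1}$ with $i<N$, only node $r$ with positive exponents, running the remaining factor over the list in Lemma~\ref{lem: KR q-char}(1)--(2) one checks, as in the proof of \cite[Lemma~3.4]{Z2}, that the ``non-string'' e-weights (those described by Lemma~\ref{lem: KR q-char}(2), carrying an off-node factor $Y_{s,\cdot}^{-1}$ with $s=r\pm1$) never yield a dominant product, so only the $A_{r,\cdot}^{-1}$-strings of Lemma~\ref{lem: KR q-char}(1) can survive.

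It then remains to see, for the surviving products, when all node-$r$ exponents stay $\geq0$. Multiplying the string e-weight $\Bw^{(r)}_{p,1}A_{r,1}^{-1}\cdots A_{r,l}^{-1}$ of $W^{(r)}_{p,1}$ (valid for $1\leq l\leq p$, multiplicity one by Lemma~\ref{lem: KR q-char}(1)) by the highest weight $\Bw^{(r)}_{q,0}$, the $A_{r,j}^{-1}$'s telescope and one finds the product dominant precisely when $l\leq q$; the symmetric branch ($\Be=\Bw^{(r)}_{p,1}$, $\Bf$ a string e-weight of $W^{(r)}_{q,0}$) always leaves an uncancelled $Y_{r,-1/2}^{-1}$ coming from the $A_{r,0}^{-1}$ present in every non-highest e-weight of $W^{(r)}_{q,0}$, hence contributes nothing new. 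For $(p,q)=(k+m-1,k)$ this yields the listed e-weights with $0\leq l\leq k$ (the reach of $\Bw^{(r)}_{k,0}$ stops at shift $k-\tfrac{1}{2}$, forcing $l\leq k$); for $(p,q)=(k-1,k+m)$ the reach of $\Bw^{(r)}_{k+m,0}$ imposes no constraint, so the cutoff $l<k$ is merely the range $1\leq l\leq k-1$ of string e-weights of $W^{(r)}_{k-1,1}$. All multiplicities are one because each dominant product has a unique preimage $(\Be,\Bf)$ with both entries of multiplicity one. The delicate point is the elementary but fiddly exponent bookkeeping, including the $Y_{r\pm1,\cdot}$ factors and the boundary cases $k=1$ and $r=N-1$ (where $A_{N,\cdot}$ is undefined).

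For part~(2), set $T:=W^{(r)}_{k-1,1}\wtimes W^{(r)}_{k+m,0}\in\BGGf$. In $K_0(\BGGf)$ every composition factor $S(\Bd)$ of $T$ has $\Bd$ a dominant e-weight of $T$ (it is the highest weight of a sub-quotient, and elements of $\CRf$ are dominant by Lemma~\ref{lem: highest weight condition}), so by part~(1), $\Bd=\Bd_l:=\Bw^{(r)}_{k-1,1}\Bw^{(r)}_{k+m,0}A_{r,1}^{-1}\cdots A_{r,l}^{-1}$ for some $0\leq l<k$, each occurring with multiplicity one in $\qc(T)$. Since $\Bd_0$ is the highest weight of $T$, the sub-quotient $S(\Bd_0)$ --- generated by the image of the highest-weight vector $v_1\wtimes v_2$ --- is one of the composition factors; it therefore suffices to show that each $\Bd_l$ with $l\geq1$ is already an e-weight of $S(\Bd_0)$, for then its multiplicity-one contribution to $\qc(T)$ is entirely used up and no factor $S(\Bd_l)$ can occur. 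As the $\Bd_l$ differ from $\Bd_0$ only in the node-$r$ direction, I would prove this by restricting to the copy of $\CE_{\tau,\hbar}(\mathfrak{sl}_2)$ at node $r$ --- via the dominant-weight counting to be developed in Theorem~\ref{thm: Demazure KR} --- and using irreducibility of the corresponding rank-one tensor product of KR modules, which holds because the theta-data of $W^{(r)}_{k-1,1}$ and $W^{(r)}_{k+m,0}$ are in general position (one segment sits inside the interior of the other), in contrast with $W^{(r)}_{k+m-1,1}\wtimes W^{(r)}_{k,0}$, whose associated segments are linked and which is reducible in general. This rank-one reduction --- the substitute, over elliptic quantum groups, for the cyclicity arguments available in the affine case, in the spirit of Nakajima's and Hernandez's proofs of the T-system --- is the step I expect to be the genuine obstacle; the $\mathfrak{sl}_2$ statement itself is elementary and also follows from \cite{FZ}.
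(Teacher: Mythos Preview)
Your treatment of part~(1) is essentially the paper's: both arguments combine the right-negativity property of KR e-weights (Lemma~\ref{lem: KR q-char}(3)) with multiplicativity of $\qc$ to reduce the search for dominant products to the string e-weights of one factor times the highest weight of the other, and then read off the cutoffs. The paper phrases this as ``copy the last two paragraphs of the proof of \cite[Theorem~4.1]{FoH}''.

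For part~(2) there is a genuine gap. First, invoking Theorem~\ref{thm: Demazure KR} is circular: that theorem is proved \emph{using} this lemma (setting $m=t+1$). Second, you never make precise what ``restricting to the copy of $\CE_{\tau,\hbar}(\mathfrak{sl}_2)$ at node $r$'' means here. In the elliptic setting the only restriction functor available in the paper goes to $\CE_{N-1}^{\Hlie}$ (dropping the first row and column of the $L$-matrix), not to an $\mathfrak{sl}_2$ at an arbitrary node; so your rank-one reduction, while morally the right intuition, lacks a mechanism.

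The paper takes a different and more direct route. It argues by contradiction: assume some $S(\Bd_l)$ with $1\le l\le k-1$ is an irreducible sub-quotient of $T$, set $\mu=\varpi(\Bd_l)$, and study the weight space $S(\Bd_l)[\mu-\alpha_r]$. Embedding $S(\Bd_l)$ into a tensor product of KR modules and using Lemma~\ref{lem: KR q-char} shows that $\Bd_lA_{r,l}^{-1}$ cannot occur (it would force $A_{r,l}^{-2}$ in $\ewt(T)$), so this weight space is one-dimensional. One then applies the RLL relation~\eqref{rel: RLL explicit} with $(i,j)=(m,n)=(r+1,r)$ to the highest weight vector of $S(\Bd_l)$, exactly as in the proof of \cite[Theorem~4.1]{FZ}. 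Comparing the resulting identities at the two zeros $w_1=(\ell_r-k)\hbar$ and $w_2=(\ell_r-k-m)\hbar$ of the numerator of $g_r/g_{r+1}$ yields an equality of theta-function ratios that forces $w_1-w_2=m\hbar\in\Gamma$, contradicting $\hbar\notin\mathbb{Q}+\mathbb{Q}\tau$. This is an in-situ $\mathfrak{sl}_2$-type computation carried out with the $(r,r+1)$ block of the $L$-matrix, and it replaces the restriction functor you were looking for.
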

\begin{proof}
For (1), one can copy the last two paragraphs of the proof of \cite[Theorem 4.1]{FoH}, since the right-negativity property of KR modules in the elliptic case (Lemma \ref{lem: KR q-char}) is the same as in the affine case.  Let $T$ be the tensor product module of (2). Suppose $T$ is not irreducible. Then there exists $1\leq l \leq k-1$ such that $T$ admits an irreducible sub-quotient $S \cong S(\Bd_l)$ where  by Eq.\eqref{equ: A Psi}:
$$ \Bd_l := \Bw_{k-1,1}^{(r)}\Bw_{k+m,0}^{(r)} \prod_{j=1}^l A_{r,j}^{-1} = \frac{\Psi_{r,k}\Psi_{r,k+m}}{\Psi_{r,l+1}\Psi_{r,l}} \prod_{s= r\pm 1} \frac{\Psi_{s,l+\frac{1}{2}}}{\Psi_{s,\frac{1}{2}}}. $$
Set $\mu := \varpi(\Bd_l)$. The weight space $S[\mu-\alpha_r]$ is non-zero since the $\Psi_r$ do not cancel in $\Bd_l$, and its possible e-weights are $\Bd_l A_{r,l}^{-1}, \Bd_l A_{r,l+1}^{-1}$ since $S$ is a sub-quotient of $W_{k-l-1,l+1}^{(r)}\ \wtimes\ W_{k-l+m,l}^{(r)}\ \wtimes (\wtimes_{s= r\pm 1} W_{l,\frac{1}{2}}^{(s)})$. 
If $\Bd_l A_{r,l}^{-1}$ is an e-weight of $S$, then
$$ \Bw_{k-1,1}^{(r)}\Bw_{k+m,0}^{(r)} A_{r,1}^{-1}A_{r,2}^{-1}\cdots A_{r,l-1}^{-1} A_{r,l}^{-2} \in \ewt(T) = \ewt(W_{k-1,1}^{(r)}) \ewt(W_{k+m,0}^{(r)})  $$
which contradicts with the q-characters of KR modules in Lemma \ref{lem: KR q-char}. So $k > l+1$ and  $S[\mu-\alpha_r] = \BM v \neq 0$. Let $\omega$ be a highest weight vector of $S$. Then 
$$p := \mu_{r,r+1} = 2k-2l+m-1,\quad L_{r,r+1}(z) \omega = A(z;\lambda) v,\quad L_{r+1,r}(z) v = B(z;\lambda) \omega $$
 for some meromorphic functions $A,B$ of $(z,\lambda) \in \BC \times \Hlie$. For $1\leq i \leq N$, let $g_i(z) \in \BM_{\BC}^{\times}$ be the $i$-th component of $\Bd_l \in \CMw$. Then $ L_{ii}(z) \omega = g_i(z) \varphi_i(\lambda) \omega$ 
 for certain $\varphi_i(\lambda) \in \BM^{\times}$ by Eq.\eqref{defi: auxiliary L}. Set $h(z) :=  \frac{g_r(z)}{g_{r+1}(z)}$. We have
 $$h(z) = \frac{\theta(z+(k-\ell_r)\hbar)\theta(z+(k+m-\ell_r)\hbar)}{\theta(z+(l+1-\ell_r)\hbar)\theta(z+(l-\ell_r)\hbar)} = \frac{\theta(z-w_1)\theta(z-w_2)}{\theta(z-w_3)\theta(z-w_4)}  $$
 where $w_1 := (\ell_r-k)\hbar,\ w_2 := (\ell_r-k-m)\hbar$ and so on.  
  Applying Eq.\eqref{rel: RLL explicit} with $(i,j) = (r+1,r) = (n,m)$ to $\omega$, as in the proof of \cite[Theorem 4.1]{FZ}, we obtain 
 \begin{gather*}
  \left(\frac{\theta(z-w+\lambda_{r,r+1}+p\hbar)\theta(\hbar)}{\theta(z-w+\hbar)\theta(\lambda_{r,r+1}+p\hbar)} g_{r+1}(z)g_r(w) - \frac{\theta(z-w+\lambda_{r,r+1})\theta(\hbar)}{\theta(z-w+\hbar)\theta(\lambda_{r,r+1})} g_{r+1}(w)g_r(z) \right)  \\
 \times \varphi_r(\lambda + \hbar \epsilon_{r+1})\varphi_{r+1}(\lambda) = \frac{\theta(z-w)\theta(\lambda_{r,r+1}+\hbar)}{\theta(z-w+\hbar)\theta(\lambda_{r,r+1})} B(w;\lambda)A(z;\lambda+\hbar \epsilon_r).
 \end{gather*}
 Multiplying both sides by $\frac{\theta(z-w+\hbar)}{g_{r+1}(z)g_{r+1}(w)}$ and noticing $g_{r+1}(z) = \frac{\theta(z-w_3)}{\theta(z-w_3-l\hbar)}$, one can evaluate $w$ at $w_1$ and $w_2$ to obtain identities of meromorphic functions of $(z,\lambda)$:
 $$ \tilde{A}(z;\lambda) x_i(\lambda)  = \frac{\theta(z-w_i+\lambda_{r,r+1})}{\theta(z-w_i)} f(\lambda) h(z) \quad \mathrm{for}\ i = 1,2, $$
Here we set $\varphi(\lambda) := \varphi_r(\lambda + \hbar \epsilon_{r+1})\varphi_{r+1}(\lambda)$ and
$$ \tilde{A}(z;\lambda) := \frac{A(z;\lambda+\hbar \epsilon_r)}{g_{r+1}(z)},\quad x_i(\lambda) := \frac{B(w_i;\lambda)}{g_{r+1}(w_i)}, \quad f(\lambda) := -\frac{\theta(\hbar)\varphi(\lambda)}{\theta(\lambda_{r,r+1}+\hbar)}. $$
Since $f(\lambda) h(z) \neq 0$, we have $x_i(\lambda) \neq 0$ and so 
$$ \frac{\theta(z-w_1+\lambda_{r,r+1})\theta(z-w_2)}{x_1(\lambda)\theta(z-w_3)\theta(z-w_4)} = \frac{\theta(z-w_2+\lambda_{r,r+1})\theta(z-w_1)}{x_2(\lambda)\theta(z-w_3)\theta(z-w_4)}$$
as non-zero meromorphic functions of $(z,\lambda)$. This forces $w_1-w_2 = m\hbar \in \BZ +\BZ\tau$, which certainly does not hold. This proves (3).
\end{proof}
\begin{theorem}  \label{thm: Demazure KR}
For $1\leq r < N,\ t \in \BZ_{\geq 0}$ and $k> 0$, we have the following identities in the Grothendieck ring of category $\BGGf$:
\begin{gather}
    [D_{k,k+1}^{(r,t)}] + [W_{k-1,1}^{(r)}] [W_{k+t+1,0}^{(r)}] = [W_{k+t,1}^{(r)}] [W_{k,0}^{(r)}], \label{equ: Demazure KR} \\
    [D_{k,k}^{(r,t+1)}][W_{k+t,0}^{(r)}] = [D_{k+t+1,k+t+1}^{(r,0)}][W_{k-1,0}^{(r)}] +  [D_{k,k}^{(r,t)}] [W_{k+t+1,0}^{(r)}]. \label{equ: Demazure T-system}
\end{gather}
\end{theorem}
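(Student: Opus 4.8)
The plan is to establish \eqref{equ: Demazure KR} by a count of dominant $e$-weights and then deduce \eqref{equ: Demazure T-system} from it by a purely formal computation in the commutative ring $K_0(\BGGf)$ (Corollary \ref{cor: injectivity}). For \eqref{equ: Demazure KR} I would first record two elementary identities in $\CMw$. From $\Bw_{k,a}^{(r)}=\Psi_{r,a+k}\Psi_{r,a}^{-1}$ (Eq.~\eqref{equ: asymp e-weight}) one gets $\Bw_{k+t,1}^{(r)}\Bw_{k,0}^{(r)}=\Psi_{r,k+t+1}\Psi_{r,k}\Psi_{r,1}^{-1}\Psi_{r,0}^{-1}=\Bw_{k-1,1}^{(r)}\Bw_{k+t+1,0}^{(r)}$; denote this $e$-weight by $\Bd^{(0)}$ and put $\Bd^{(l)}:=\Bd^{(0)}A_{r,1}^{-1}A_{r,2}^{-1}\cdots A_{r,l}^{-1}$. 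Using Eq.~\eqref{equ: A Psi} exactly as in the computation of $\Bd_l$ inside the proof of Lemma \ref{lem: auxiliary KR} (taken at $l=k$ with $m=t+1$), one checks $\Bd^{(k)}=\Bd_{k,k+1}^{(r,t)}$ from Eq.~\eqref{equ: Demazure e-weight}, i.e.\ the highest weight of $D_{k,k+1}^{(r,t)}$. Then Lemma \ref{lem: auxiliary KR}(1) with $m=t+1$ says: the dominant $e$-weights of $W_{k+t,1}^{(r)}\wtimes W_{k,0}^{(r)}$ are precisely $\Bd^{(0)},\Bd^{(1)},\dots,\Bd^{(k)}$, those of $W_{k-1,1}^{(r)}\wtimes W_{k+t+1,0}^{(r)}$ are precisely $\Bd^{(0)},\dots,\Bd^{(k-1)}$, all of multiplicity one.

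Next I would use Lemma \ref{lem: auxiliary KR}(2): $W_{k-1,1}^{(r)}\wtimes W_{k+t+1,0}^{(r)}$ is irreducible, and its highest weight is $\Bw_{k-1,1}^{(r)}\Bw_{k+t+1,0}^{(r)}=\Bd^{(0)}$, so it is isomorphic to $S(\Bd^{(0)})$. On the other hand, in $W_{k+t,1}^{(r)}\wtimes W_{k,0}^{(r)}$ the product of the highest weight vectors of the two factors is a highest weight vector of $e$-weight $\Bd^{(0)}$ (the terms $L_{im}(z)\wtimes L_{mj}(z)$ of $\Delta(L_{ij}(z))$ with $j<i$ all annihilate it), so $S(\Bd^{(0)})$ occurs as a subquotient; hence $C:=[W_{k+t,1}^{(r)}\wtimes W_{k,0}^{(r)}]-[S(\Bd^{(0)})]$ is an effective class of $K_0(\BGGf)$. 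By multiplicativity of $\qc$ (Proposition \ref{prop: monoidal O}) and the two lists above, $\qc(C)$ has the $\Bd^{(l)}$, $0\le l\le k-1$, with coefficient $0$ and $\Bd^{(k)}$ with coefficient $1$, and these exhaust its dominant monomials. It then remains to invoke the standard principle that an effective class whose $q$-character has a single dominant monomial of coefficient one is the class of the irreducible module with that highest weight: peeling off the highest composition factor forces its highest weight to be that monomial (triangularity of $\qc(S(\Bd))$ with respect to the order in which $\Bd'$ is below $\Bd$ iff $\Bd(\Bd')^{-1}$ is a product of the $A_{i,c}$), and the remainder is then effective with no dominant monomial, hence zero. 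Thus $C=[D_{k,k+1}^{(r,t)}]$, which is \eqref{equ: Demazure KR}.

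For \eqref{equ: Demazure T-system} I would not repeat the analysis but deduce it formally. The spectral shift $\Phi_u$ of Eq.~\eqref{def: spectral shift} is an algebra automorphism of $\CE$ under which $W_{p,b}^{(r)}$ and $D_{p,b}^{(r,s)}$ pull back to $W_{p,b+u}^{(r)}$ and $D_{p,b+u}^{(r,s)}$, so the induced ring automorphism of $K_0(\BGGf)$ turns \eqref{equ: Demazure KR} into the one-parameter family
\[ [D_{k,k+1+u}^{(r,t)}]=[W_{k+t,1+u}^{(r)}]\,[W_{k,u}^{(r)}]-[W_{k-1,1+u}^{(r)}]\,[W_{k+t+1,u}^{(r)}],\qquad u\in\BC, \]
with all integer labels staying in range because $k\ge1$ and $t\ge0$. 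Specializing $u=-1$ (and, for the first term on the right of \eqref{equ: Demazure T-system}, taking the node label $k+t+1$ in place of $k$ and $0$ in place of $t$) expresses each of $[D_{k,k}^{(r,t+1)}]$, $[D_{k+t+1,k+t+1}^{(r,0)}]$, $[D_{k,k}^{(r,t)}]$ as a difference of products of Kirillov--Reshetikhin classes. Substituting into the two sides of \eqref{equ: Demazure T-system} and expanding, the summand $[W_{k+t+1,0}^{(r)}][W_{k+t+1,-1}^{(r)}][W_{k-1,0}^{(r)}]$ coming from $[D_{k+t+1,k+t+1}^{(r,0)}][W_{k-1,0}^{(r)}]$ cancels the summand $[W_{k-1,0}^{(r)}][W_{k+t+1,-1}^{(r)}][W_{k+t+1,0}^{(r)}]$ coming from $[D_{k,k}^{(r,t)}][W_{k+t+1,0}^{(r)}]$, and what remains of each side equals $[W_{k+t,0}^{(r)}]\big([W_{k,-1}^{(r)}][W_{k+t+1,0}^{(r)}]-[W_{k-1,0}^{(r)}][W_{k+t+2,-1}^{(r)}]\big)$; every step uses only commutativity of $K_0(\BGGf)$. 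This gives \eqref{equ: Demazure T-system}.

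The entire difficulty is concentrated in \eqref{equ: Demazure KR}, and within it in the combinatorial input of Lemma \ref{lem: auxiliary KR}; granting that lemma, the rest is bookkeeping. The two points I would handle with some care are (i) the effectiveness of $C$ — covered by exhibiting the singular vector $v\wtimes v'$ together with the highest weight theory already developed — and (ii) the ``single dominant monomial'' principle, which rests on the triangularity of $q$-characters of irreducibles and is proved by a short descending induction on composition factors. Beyond having Lemma \ref{lem: auxiliary KR} available I do not expect an essential obstacle; the formal deduction of \eqref{equ: Demazure T-system} from \eqref{equ: Demazure KR} carries no risk, being a one-line manipulation once the spectral-shift covariance is noted.
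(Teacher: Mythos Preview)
Your proposal is correct and follows essentially the same approach as the paper's proof: use Lemma \ref{lem: auxiliary KR} (with $m=t+1$) to list dominant $e$-weights, conclude that $[W_{k+t,1}^{(r)}\wtimes W_{k,0}^{(r)}]-[W_{k-1,1}^{(r)}\wtimes W_{k+t+1,0}^{(r)}]$ has the single dominant monomial $\Bd_{k,k+1}^{(r,t)}$, hence equals $[D_{k,k+1}^{(r,t)}]$; then deduce \eqref{equ: Demazure T-system} from \eqref{equ: Demazure KR} by spectral shifts and a formal cancellation in $K_0(\BGGf)$. The paper is terser on the ``single dominant monomial'' step and on effectiveness of $C$, but the argument is the same.
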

\begin{proof}
 Set $T := W_{k+t,1}^{(r)}\ \wtimes\  W_{k,0}^{(r)}$ and $\Bd := \Bw_{k+t,1}^{(r)}\Bw_{k,0}^{(r)}$. Then $S := S(\Bd)$ is an irreducible sub-quotient of $T$ and by Eqs.\eqref{equ: asymp e-weight}--\eqref{equ: Demazure e-weight}:
$$ \Bd = \Bw_{k-1,1}^{(r)}\Bw_{k+t+1,0}^{(r)},\quad \Bd_{k,k+1}^{(r,t)} =  A_{r,1}^{-1}A_{r,2}^{-1}\cdots A_{r,k}^{-1}\Bd. $$
Set $m = t+1$ in Lemma \ref{lem: auxiliary KR}. Then $S  \cong W_{k-1,1}^{(r)}\ \wtimes\  W_{k+t+1,0}^{(r)}$, and there is exactly one dominant e-weight (counted with multiplicity) in $\ewt(T)\setminus \ewt(S)$, namely $\Bd_{k,k+1}^{(r,t)}$. This proves Eq.\eqref{equ: Demazure KR}, which implies after taking spectral parameter shifts
\begin{align*}
     [D_{k,k}^{(r,t+1)}] &= [W_{k+t+1,0}^{(r)}] [W_{k,-1}^{(r)}] - [W_{k-1,0}^{(r)}] [W_{k+t+2,-1}^{(r)}],  \\
     [D_{k+t+1,k+t+1}^{(r,0)}] &= [W_{k+t+1,0}^{(r)}] [W_{k+t+1,-1}^{(r)}] - [W_{k+t,0}^{(r)}] [W_{k+t+2,-1}^{(r)}], \\
    [D_{k,k}^{(r,t)}] &= [W_{k+t,0}^{(r)}] [W_{k,-1}^{(r)}] - [W_{k-1,0}^{(r)}] [W_{k+t+1,-1}^{(r)}].
\end{align*}
Eq.\eqref{equ: Demazure T-system} becomes a trivial identity involving only KR modules.
\end{proof}

$D_{k,k+1}^{(r,t)}$ is {\it special} in the sense of \cite{Nakajima} as it contains only one dominant e-weight. For $t = 0$, we have $D_{k,k+1}^{(r,0)}\cong W_{k,\frac{1}{2}}^{(r-1)}\ \wtimes\ W_{k,\frac{1}{2}}^{(r+1)}$ by showing that the tensor product is special as in \cite{Nakajima}, and  Eq.\eqref{equ: Demazure KR} is the T-system of KR modules.

\begin{cor}  \label{cor: Demaure q-char}
Let $1\leq r < N,\ a \in \BC$ and $k,t \in \BZ_{>0}$. 
\begin{itemize}
      \item[(1)] $\Bd_{k,a}^{(r,t)} A_{r,a}^{-1}A_{r,a+1}^{-1}\cdots A_{r,a+l-1}^{-1}\in \ewt(D_{k,a}^{(r,t)})$ for $1 \leq l \leq t$.
    \item[(2)] Any e-weight of $D_{k,a}^{(r,t)}$ different from those in (1) and from $\Bd_{k,a}^{(r,d)}$ belongs to
    $ \Bd_{k,a}^{(r,t)}  \{A_{r,a-k-1}^{-1},\ A_{s,a-k-\frac{1}{2}}^{-1}\}  \CQ_a^-$ for certain $1\leq s < N$ with $s= r\pm 1$.
\end{itemize}
\end{cor}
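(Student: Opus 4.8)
The plan is to deduce both parts from two descriptions of $D_{k,a}^{(r,t)}$ as a sub-quotient of a product of Kirillov--Reshetikhin modules, combined with the $q$-character data of Lemma \ref{lem: KR q-char}. Since the spectral parameter shift $\Phi_u$ acts on e-weights by $\Psi_{i,b}\mapsto\Psi_{i,b+u}$ and $A_{i,b}\mapsto A_{i,b+u}$, \emph{all} subscripts in the statement move uniformly, so it suffices to treat one value of $a$. By \eqref{equ: Demazure e-weight} the highest weight $\Bd_{k,a}^{(r,t)}$ is the product of the dominant e-weights $\Bw_{t,a}^{(r)}$, $\Bw_{k,a-\frac12-k}^{(r-1)}$, $\Bw_{k,a-\frac12-k}^{(r+1)}$, so by Proposition \ref{prop: monoidal O} the module $D_{k,a}^{(r,t)}$ is an irreducible sub-quotient of $M:=W_{t,a}^{(r)}\ \wtimes\ W_{k,a-\frac12-k}^{(r-1)}\ \wtimes\ W_{k,a-\frac12-k}^{(r+1)}$ (the outer factors being trivial, resp. one-dimensional, when $r=1$, resp. $r=N-1$), whence $\ewt(D_{k,a}^{(r,t)})\subseteq\ewt(M)$. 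As $\CQ_a=\CQ_{a-\frac12-k}$, Lemma \ref{lem: KR q-char} also gives $\ewt(D_{k,a}^{(r,t)})\subseteq\Bd_{k,a}^{(r,t)}\,\CQ_a^-$.

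For part (1), put $\mu:=\varpi(\Bd_{k,a}^{(r,t)})=t\varpi_r+k\varpi_{r-1}+k\varpi_{r+1}$. First I would show that $M[\mu-l\alpha_r]$ is one-dimensional with e-weight exactly $\Bd_{k,a}^{(r,t)}A_{r,a}^{-1}A_{r,a+1}^{-1}\cdots A_{r,a+l-1}^{-1}$ for $0\le l\le t$, and is zero for $l>t$: by Lemma \ref{lem: KR q-char}(1)--(2) every weight of $W_{k,a-\frac12-k}^{(s)}$ ($s=r\pm1$) strictly below the top carries a factor $A_{s,\cdot}^{-1}$, so any vector of weight $\mu-l\alpha_r$ inside $M$ must use the highest weights of the two outer factors, and one then applies Lemma \ref{lem: KR q-char}(1) to $W_{t,a}^{(r)}$, whose $\alpha_r$-string has length exactly $t$. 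It remains to check $D_{k,a}^{(r,t)}[\mu-l\alpha_r]\ne0$ for $0\le l\le t$: since $D_{k,a}^{(r,t)}$ is a highest weight module with $\mu_{r,r+1}=t$, restricting to the rank-one subalgebra of $\CE$ generated by the $L_{ij}(z)$ with $i,j\in\{r,r+1\}$ (an $\CE_{\tau,\hbar}(\mathfrak{sl}_2)$ as in \S\ref{ss: proof of Cor}), the submodule generated by a highest weight vector of $D_{k,a}^{(r,t)}$ surjects onto the $(t+1)$-dimensional Kirillov--Reshetikhin module, which is non-zero all along its $\alpha_r$-string of length $t$. The two inclusions together then show $D_{k,a}^{(r,t)}[\mu-l\alpha_r]$ is one-dimensional with the asserted e-weight, which proves (1) and that these e-weights have multiplicity one.

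For part (2), write an e-weight as $\Bd_{k,a}^{(r,t)}\Bm$ with $\Bm\in\CQ_a^-$, $\Bm\ne1$, and decompose it in the $M$-realization as a product of e-weights of $W_{t,a}^{(r)}$ and of the two outer Kirillov--Reshetikhin factors. If one of the outer factors is not its highest weight, then by Lemma \ref{lem: KR q-char}(1)--(2) it contributes a factor $A_{s,a-k-\frac12}^{-1}$ with $s=r\mp1$, and since $\CQ_a^-$ is a monoid, $\Bm\in A_{s,a-k-\frac12}^{-1}\CQ_a^-$, as claimed. Otherwise both outer factors are highest, $\Bm$ equals the e-weight of $W_{t,a}^{(r)}$ divided by $\Bw_{t,a}^{(r)}$, and by Lemma \ref{lem: KR q-char} either $\Bm=1$ (excluded), or $\Bm=A_{r,a}^{-1}\cdots A_{r,a+l-1}^{-1}$ with $1\le l\le t$ (the list in (1), excluded), or $\Bm\in A_{r,a}^{-1}A_{s,a-\frac12}^{-1}\CQ_a^-$ with $s=r\pm1$. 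The hard part will be this last, ``shallow'' case: one must show such e-weights of $M$ do not survive in the sub-quotient $D_{k,a}^{(r,t)}$ unless $\Bm$ in fact already belongs to $A_{r,a-k-1}^{-1}\CQ_a^-\cup A_{s,a-k-\frac12}^{-1}\CQ_a^-$. For this I would pass to the second realization, from the Demazure relation \eqref{equ: Demazure KR} shifted spectrally by $a-k-1$,
\[ [D_{k,a}^{(r,t)}]=[W_{k+t,a-k}^{(r)}\ \wtimes\ W_{k,a-k-1}^{(r)}]-[W_{k-1,a-k}^{(r)}\ \wtimes\ W_{k+t+1,a-k-1}^{(r)}], \]
and use Lemma \ref{lem: auxiliary KR} (which pins down the dominant e-weights of these tensor products and the irreducibility of the subtracted one) together with the right-negativity of Lemma \ref{lem: right negative} to compare multiplicities: an e-weight $\Bd_{k,a}^{(r,t)}\Bm$ of $W_{k+t,a-k}^{(r)}\wtimes W_{k,a-k-1}^{(r)}$ with $\Bm\notin A_{r,a-k-1}^{-1}\CQ_a^-\cup A_{s,a-k-\frac12}^{-1}\CQ_a^-$ ($s=r\pm1$) should occur with the same multiplicity in $W_{k-1,a-k}^{(r)}\wtimes W_{k+t+1,a-k-1}^{(r)}$, hence contribute nothing to $D_{k,a}^{(r,t)}$. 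This multiplicity matching is a purely combinatorial statement about $q$-characters of Kirillov--Reshetikhin modules attached to a single Dynkin node, and I expect it to go through along the lines of the affine arguments of \cite[Theorem 4.1]{FoH}, \cite[\S 4.3]{FH2}, or via the counting of dominant monomials of \cite[Lemma 3.4]{Z2}. Assembling these cases then proves (2).
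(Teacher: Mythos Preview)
Your two-realization strategy is right and essentially what the paper intends, but the ``shallow'' case of (2) is left as a gap: you defer to a multiplicity-matching argument citing external references without carrying it out. In fact no such matching is needed --- the e-weights in question are simply absent from $T_1:=W_{k+t,a-k}^{(r)}\wtimes W_{k,a-k-1}^{(r)}$. Suppose $\Bm$ is a normalized e-weight of $W_{t,a}^{(r)}$ of the third type in Lemma~\ref{lem: KR q-char}(2), so $\Bm$ contains some $A_{s',a-\frac12}^{-1}$ with $s'=r\pm1$, and suppose $\Bm$ contains neither $A_{r,a-k-1}^{-1}$ nor any $A_{s,a-k-\frac12}^{-1}$. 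Write $P=A_{r,a-k}^{-1}\cdots A_{r,a-1}^{-1}$, so $\Bd_{k,a}^{(r,t)}=\Bw_{k+t,a-k}^{(r)}\Bw_{k,a-k-1}^{(r)}\cdot P$. In any factorization $\Bd_{k,a}^{(r,t)}\Bm=\Be_1\Be_2$ inside $T_1$, the second factor must be highest (else it contributes $A_{r,a-k-1}^{-1}$, which $P\Bm$ lacks), so $\Be_1/\Bw_{k+t,a-k}^{(r)}=P\Bm$. But $P\Bm$ contains $A_{s',a-\frac12}^{-1}$, hence is neither $1$ nor a pure $A_{r,\cdot}^{-1}$-string, and contains no $A_{s,a-k-\frac12}^{-1}$, hence is not of the third type for $W_{k+t,a-k}^{(r)}$ either. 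Thus $\Bd_{k,a}^{(r,t)}\Bm\notin\ewt(T_1)$; since $\qc(D_{k,a}^{(r,t)})\le\qc(T_1)$ coefficientwise by Theorem~\ref{thm: Demazure KR}, it is not in $\ewt(D_{k,a}^{(r,t)})$.

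The paper's one-line proof organizes the case analysis from the $T_1$-side rather than the $M$-side. That swap replaces your shallow case by its dual leftover, the short strings $\Bw_{k+t,a-k}^{(r)}\Bw_{k,a-k-1}^{(r)}A_{r,a-k}^{-1}\cdots A_{r,a-k+l_1-1}^{-1}$ with $l_1<k$; these lie outside $\Bd_{k,a}^{(r,t)}\CQ_a^-$ and so are excluded by your $M$-realization. For (1) the $T_1$-approach is also more direct than your $\mathfrak{sl}_2$-restriction: the target e-weight equals $\Bw_{k+t,a-k}^{(r)}\Bw_{k,a-k-1}^{(r)}A_{r,a-k}^{-1}\cdots A_{r,a+l-1}^{-1}$, which has multiplicity one in $T_1$ (string of length $k+l\le k+t$ times highest, by Lemma~\ref{lem: KR q-char}(1)) and zero in $T_2$ (the only candidate factorization would need a string of length $k+l>k-1$ in $W_{k-1,a-k}^{(r)}$). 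Either organization works once both realizations are in play; the point is that the missing step is a two-line application of Lemma~\ref{lem: KR q-char}, not a separate combinatorial lemma.
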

\begin{proof}
This comes from Lemma \ref{lem: KR q-char} and Theorem \ref{thm: Demazure KR}.
\end{proof}

\begin{lem}  \label{lem: fusion KR}
Let $1\leq r < N$ and $t \in \BZ_{\geq 0}$. There is a short exact sequence 
$$0 \longrightarrow  D_{1,a}^{(r,t)} \longrightarrow W_{t+1,a-1}^{(r)}\ \wtimes\ W_{1,a-2}^{(r)}   \longrightarrow W_{t+2,a-2}^{(r)} \longrightarrow 0   $$
of $\CE$-modules in category $\BGGf$.
\end{lem}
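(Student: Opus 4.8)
Set $T:=W_{t+1,a-1}^{(r)}\ \wtimes\ W_{1,a-2}^{(r)}$ and $\Bw:=\Bw_{t+2,a-2}^{(r)}$.

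\emph{Step 1: the identity in $K_0$.} First I would specialize Eq.~\eqref{equ: Demazure KR} to $k=1$; since $W_{0,1}^{(r)}$ is the trivial one-dimensional module (its highest weight $\Bw_{0,1}^{(r)}=\Psi_{r,1}\Psi_{r,1}^{-1}$ equals $1$ by Eq.~\eqref{equ: asymp e-weight}), this gives $[D_{1,2}^{(r,t)}]+[W_{t+2,0}^{(r)}]=[W_{t+1,1}^{(r)}][W_{1,0}^{(r)}]$ in $K_0(\BGGf)$, and applying the spectral shift automorphism $\Phi_{a-2}$ of Eq.~\eqref{def: spectral shift} (a tensor functor, sending $W_{k,b}^{(s)}\mapsto W_{k,b+a-2}^{(s)}$ and $D_{k,b}^{(s,t)}\mapsto D_{k,b+a-2}^{(s,t)}$) yields
$$[T]=[W_{t+2,a-2}^{(r)}]+[D_{1,a}^{(r,t)}]\in K_0(\BGGf).$$
As $W_{t+2,a-2}^{(r)}$ and $D_{1,a}^{(r,t)}$ are irreducible, $T$ has length two with these composition factors, each of multiplicity one; in particular $\dim M[\mu]\le 1$ for any proper submodule $M\subsetneq T$ and any weight $\mu$.

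\emph{Step 2: the highest weight vector.} Let $\omega_{t+1},\omega_1$ be highest weight vectors of the two factors and put $\omega:=\omega_{t+1}\ \wtimes\ \omega_1$. As in the proof of Proposition~\ref{prop: monoidal O}: the coproduct formula together with singularity of $\omega_{t+1},\omega_1$ shows $\omega$ is singular; Corollary~\ref{cor: Jucys-Murphy} kills the correction terms of $\eD_k^T(z)$ on these top weight spaces, so $\eD_k^T(z)\omega$ is scalar; and by Eq.~\eqref{def: elliptic diagonal} together with Lemma~\ref{lem: K vs L}, $\omega$ is a highest weight vector of $T$ of highest weight $\Bw_{t+1,a-1}^{(r)}\Bw_{1,a-2}^{(r)}=\Psi_{r,a+t}\Psi_{r,a-2}^{-1}=\Bw$. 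Hence $\CE\omega$ is a highest weight submodule whose unique irreducible quotient (Section~\ref{ss- hw}) is $W_{t+2,a-2}^{(r)}$. By Step~1 the lemma therefore follows once $\CE\omega=T$: the quotient map $T\twoheadrightarrow W_{t+2,a-2}^{(r)}$ then has kernel of class $[D_{1,a}^{(r,t)}]$, hence isomorphic to $D_{1,a}^{(r,t)}$.

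\emph{Step 3: cyclicity, and the main obstacle.} It remains to prove $\CE\omega=T$, and by Step~1 it suffices to produce two linearly independent vectors of $\CE\omega$ in the weight space $T[\mu]$, $\mu:=\varpi(\Bw)-\alpha_r$; by Lemmas~\ref{lem: KR q-char} and \ref{lem: auxiliary KR}(1) and Proposition~\ref{prop: monoidal O}, $T[\mu]$ is two-dimensional with the two distinct e-weights $\Bw A_{r,a-2}^{-1}$ and $\Bd_{1,a}^{(r,t)}=\Bw A_{r,a-1}^{-1}$, and a basis $\{\omega_{t+1}\ \wtimes\ \omega_1',\ \omega_{t+1}'\ \wtimes\ \omega_1\}$, where $\omega_1'$ and $\omega_{t+1}'$ span the one-dimensional weight spaces $W_{1,a-2}^{(r)}[\varpi_r-\alpha_r]$ and $W_{t+1,a-1}^{(r)}[(t+1)\varpi_r-\alpha_r]$. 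The current $L_{r,r+1}(z)$, of bi-degree $(\epsilon_r,\epsilon_{r+1})$, lowers weight by $\alpha_r$, hence maps $\omega$ into $T[\mu]$; expanding $\Delta(L_{r,r+1}(z))=\sum_k L_{rk}(z)\ \wtimes\ L_{k,r+1}(z)$ and discarding the summands killed by singularity of $\omega_{t+1}$ (for $k<r$) and of $\omega_1$ (for $k>r+1$) leaves only $k=r$ and $k=r+1$:
$$L_{r,r+1}(z)\omega=\beta'(z;\lambda)\,\omega_{t+1}\ \wtimes\ \omega_1'\ +\ \beta(z;\lambda)\gamma(z;\lambda)\,\omega_{t+1}'\ \wtimes\ \omega_1,$$
where $\beta(z;\lambda)$ is the matrix coefficient of $\omega_{t+1}'$ in $L_{r,r+1}(z)\omega_{t+1}$ inside $W_{t+1,a-1}^{(r)}$, $\gamma(z;\lambda)$ that of $\omega_1$ in $L_{r+1,r+1}(z)\omega_1$ inside $W_{1,a-2}^{(r)}$, and $\beta'(z;\lambda)$ collects the $k=r$ contribution. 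I would check that $\beta\not\equiv0$ — otherwise, by \cite[Lemma 2.3]{C}, the irreducible module $W_{t+1,a-1}^{(r)}=\CE\omega_{t+1}$ would have a vanishing weight space $(t+1)\varpi_r-\alpha_r$, contradicting Lemma~\ref{lem: KR q-char}(1) — and that $\gamma\not\equiv0$, since $\hL_{r+1}(z)\omega_1=f_{r+1}(z)\omega_1$ with $f_{r+1}\in\BM_{\BC}^{\times}$ and the theta-ratios relating $\hL_{r+1}$ to $L_{r+1,r+1}$ are non-zero. Finally one must show the ratio $\beta'(z;\lambda)/\bigl(\beta(z;\lambda)\gamma(z;\lambda)\bigr)$ is non-constant in $z$; granting this, two generic values $z_0\neq z_1$ make $L_{r,r+1}(z_0)\omega$ and $L_{r,r+1}(z_1)\omega$ linearly independent, so $(\CE\omega)[\mu]=T[\mu]$, whence $\CE\omega=T$. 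Computing $\beta,\gamma,\beta'$ explicitly from the RLL comultiplication and the known action of $\CE$ on $W_{t+1,a-1}^{(r)}$ and on the fundamental module $W_{1,a-2}^{(r)}$ — and in particular establishing the non-constancy of that ratio, for which the cleanest route is to restrict to the rank-one subalgebra generated by $L_{rr}(z),L_{r,r+1}(z),L_{r+1,r}(z),L_{r+1,r+1}(z)$, an $\CE_{\tau,\hbar}(\mathfrak{sl}_2)$ inside $\CE$, and use the structure of tensor products of its evaluation modules as in the T-system arguments announced in the introduction — is the only genuinely computational point, and is where I expect the main difficulty to lie.
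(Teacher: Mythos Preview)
Your strategy coincides with the paper's: establish the $K_0$-identity via Eq.~\eqref{equ: Demazure KR} at $k=1$ and a spectral shift, reduce the lemma to cyclicity $\CE\omega=T$, and verify this by showing $(\CE\omega)[\mu]$ is two-dimensional through the coproduct expansion of $L_{r,r+1}(z)\omega$.

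Where you leave the non-constancy of $\beta'/(\beta\gamma)$ as the open computational point and propose an $\mathfrak{sl}_2$-restriction, the paper instead uses the evaluation-module presentation of the KR modules (Corollary~\ref{cor: evaluation module}, via $W_{t+1,a-1}^{(r)}\cong S_{(t+1)\varpi_r,\,a-\ell_r-1}$ and $W_{1,a-2}^{(r)}\cong S_{\varpi_r,\,a-\ell_r-2}$) to write all three coefficients explicitly as theta-function ratios. With $b:=a-\ell_r-1$ one finds $L_{rr}(z)\omega_{t+1}=\frac{\theta(z+(b+t+1)\hbar)}{\theta(z+b\hbar)}g(\lambda)\,\omega_{t+1}$ and $L_{r+1,r+1}(z)\omega_1=\omega_1$. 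Rather than analysing the ratio, the paper then specialises $z=-(b+t+1)\hbar$: this kills the factor $\theta(z+(b+t+1)\hbar)$ and hence your $\beta'$, while $\beta(z;\lambda)=\frac{\theta(z+(b+t)\hbar+\lambda_{r,r+1})}{\theta(z+b\hbar)}$ remains nonzero. That immediately gives $\omega_{t+1}'\wtimes\omega_1\in\CE\omega$, and the second basis vector $\omega_{t+1}\wtimes\omega_1'$ then follows since it is the only remaining component of $L_{r,r+1}(z)\omega$ at generic $z$. The evaluation-module route is shorter and sidesteps the question of whether the $L_{ij}(z)$ with $i,j\in\{r,r+1\}$ genuinely form a copy of $\CE_{\tau,\hbar}(\mathfrak{sl}_2)$ inside $\CE$ (the paper's subalgebras $\CE_k$ in Section~\ref{ss: proof of Cor} sit on the \emph{last} $k$ indices, not on an arbitrary consecutive pair).
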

\begin{proof}
Let $T$ and $S$ be the second and third terms above (zero excluded). Let $\omega_1,\omega_2$ be highest weight vectors of $W_{t+1,a-1}^{(r)}$ and $W_{1,a-2}^{(r)}$ respectively. Then $\omega_1 \wtimes \omega_2$ is a highest weight vector of $T$ and generates a sub-module $T'$. Suppose $T' = T$. Then $T$ is a highest weight module whose highest weight is equal to that of the irreducible module $S$. There is a surjective morphism of modules $T \longrightarrow S$, the kernel of which is $D_{1,a}^{(r,t)}$ by Eq.\eqref{equ: Demazure KR} (one  applies a spectral parameter shift $\Phi_{a-2}$ to the equation with $k = 1$). This is the desired short exact sequence.

Suppose $T \neq T'$. Then $[T'] = [S]$ or $[T'] = [D_{k,a}^{(r,t)}]$. By comparing highest weights, we have $[T'] = [S]$. So the weight space $T'[(t+2)\varpi_r-\alpha_r]$ is one-dimensional. Corollary \ref{cor: evaluation module} applied to $W_{t+1,a-1}^{(r)} \cong S_{(t+1)\varpi_r, a-\ell_r-1}$, one finds $g(\lambda) \in \BM^{\times}$ such that $L_{r+1,r+1}(z) \omega_1 =\omega_1$ and (set $b := a - \ell_r - 1$)
$$ L_{r,r+1}(z) \omega_1 = \frac{\theta(z+(b+t)\hbar+\lambda_{r,r+1})}{\theta(z+b\hbar)} \omega_1',\ L_{rr}(z) \omega_1 = \frac{\theta(z+(b+t+1)\hbar)}{\theta(z+b\hbar)} g(\lambda)  \omega_1, $$
where $0 \neq \omega_1'$ is of weight $(t+1)\varpi_r-\alpha_r$. Similarly $L_{r+1,r+1}(z) \omega_2 = \omega_2$ and
$$ L_{r,r+1}(z) \omega_2 = \frac{\theta(z+(b-1)\hbar+\lambda_{r,r+1})}{\theta(z+(b-1)\hbar)} \omega_2' $$
with $\omega_2' \neq 0$ of weight $\varpi_r-\alpha_r$. 
Since $\omega_1,\omega_2$ are highest weight vectors, we have
\begin{align*}
L_{r,r+1}(z) (\omega_1\wtimes \omega_2) &= L_{r,r+1}(z) \omega_1\ \wtimes\ L_{r+1,r+1}(z) \omega_2 + L_{rr}(z) \omega_1 \ \wtimes\ L_{r,r+1}(z) \omega_2 \\
&= \left( \frac{\theta(z+(b+t)\hbar+\lambda_{r,r+1})}{\theta(z+b\hbar)} \omega_1'\right) \wtimes\ \omega_2   \\
& + \frac{\theta(z+(b+t+1)\hbar)}{\theta(z+b\hbar)} g_1(\lambda)  \omega_1 \ \wtimes \left(\frac{\theta(z+(b-1)\hbar+\lambda_{r,r+1})}{\theta(z+(b-1)\hbar)} \omega_2' \right).
\end{align*} 
Setting $z = -(b+t+1)\hbar$ we obtain $\omega_1' \wtimes \omega_2 \in T'$, and so $\omega_1 \otimes \omega_2' \in T'$. The weight space $T'[(t+2)\varpi_r-\alpha_r]$ is at least two-dimensional, a contradiction.
\end{proof}
Lemma \ref{lem: fusion KR} is inspired by \cite[\S 5.3]{Moura}: to transform identities in the Grothendieck group into exact sequences by restriction to $\mathfrak{sl}_2$ \cite{Chari}. More generally, we have the short exact sequences in category $\BGGf$ by \cite[Proposition 4.3, Corollary 4.5]{FZ}: \footnote{The elliptic quantum group of \cite{FZ} is slightly different as it is defined by another R-matrix, which is gauge equivalent to the present $\BR$ by \cite{EF}. \label{ft: different R} }
\begin{gather*}
 0 \rightarrow D_{k,k+1}^{(r,t)} \longrightarrow W_{k+t,1}^{(r)} \wtimes W_{k,0}^{(r)}  \longrightarrow W_{k-1,1}^{(r)} \wtimes W_{k+t+1,0}^{(r)} \rightarrow 0, \\
 0 \rightarrow D_{k+t+1,k+t+1}^{(r,0)}\ \wtimes\ W_{k-1,0}^{(r)} \longrightarrow  D_{k,k}^{(r,t+1)}\ \wtimes\ W_{k+t,0}^{(r)} \longrightarrow D_{k,k}^{(r,t)}\ \wtimes\ W_{k+t+1,0}^{(r)}  \rightarrow 0.
\end{gather*}
These exact sequences hold for affine quantum (super)groups \cite{FoH,Z2}. In the super case the proof is more delicate since Lemma \ref{lem: KR q-char} (3) fails.

\section{Asymptotic representations}  \label{sec: asym}

 We construct infinite-dimensional modules in category $\BGG$ as inductive limits ($k \rightarrow \infty$) of the KR modules $W_{k,a}^{(r)}$ for fixed $1\leq r < N$ and $a := \ell_r$. 

The general strategy follows that of Hernandez--Jimbo \cite{HJ}:
\begin{itemize}
    \item[(i)] produce an inductive system of vector spaces $W_{0,a}^{(r)} \subseteq W_{1,a}^{(r)} \subseteq W_{2,a}^{(r)} \subseteq \cdots$; 
    \item[(ii)] prove that the matrix entries of the $L_{ij}(z)$ are {\it good} functions of $k \in \BZ_{\geq 0}$;
    \item[(iii)] define the module structure on the inductive limit of (i).
\end{itemize}
Step (i) is done in Lemma \ref{lem: injectivity}, Step (ii) in Lemma \ref{lem: asymptotic property}, and Step (iii) in Proposition \ref{prop: asymptotic representations}. We shall see that the proofs in each step are different from \cite{HJ}.

In what follows, by $k > l$ we implicitly assume that $k, l \in \BZ_{\geq 0}$ are positive integers. 
For $k > l$, set $Z_{kl} := W_{k-l,a+l}^{(r)} \cong S_{(k-l)\varpi_r,l}$ and fix a highest weight vector $\omega_{kl} \in Z_{kl}$.  
By Eq.\eqref{defi: auxiliary L}, we have for $1\leq i \leq r < j \leq N$:
$$  L_{ii}(z) \omega_{kl} = \omega_{kl}\frac{\theta(z+k \hbar)}{\theta(z+l\hbar)} \prod_{q=r+1}^N \frac{\theta(\lambda_{iq}+(k-l+1)\hbar)}{\theta(\lambda_{iq}+\hbar)} ,\quad L_{jj}(z) \omega_{kl}  = \omega_{kl}. $$
Note that $Z_{k0} = W_{k,a}^{(r)}$, and we simply write $\omega_{k0} =: \omega_k$.

\begin{lem} \label{lem: fusion k l vs k+l}
Let $t > k > l>m$. There exists a unique morphism of $\CE$-modules 
$$ \SG_{k,m}^l: Z_{kl}\  \wtimes\ Z_{lm} \longrightarrow Z_{km}$$
such that $\SG_{k,m}^l(\omega_{kl} \wtimes \omega_{lm}) = \omega_{km}$. Moreover the following diagram commutes:
\begin{equation}  \label{diagram}
\xymatrixcolsep{5pc}
\xymatrix{
\makebox[2em][r]{ $Z_{tk}\ \wtimes\ Z_{kl}\ \wtimes\ Z_{lm}$} \ar[r]^{\SG_{t,l}^k \wtimes \mathrm{Id}} \ar[d]_{\mathrm{Id}\ \wtimes\ \SG_{k,m}^l}  & \makebox[2em][l]{$ Z_{tl}\ \wtimes\ Z_{lm} \ar[d]^{\SG_{t,m}^l}$}  \\
\makebox[2em][r]{$Z_{tk}\ \wtimes\ Z_{km}$} \ar[r]^{\SG_{t,m}^k}  & Z_{tm}.
 }
\end{equation}
\end{lem}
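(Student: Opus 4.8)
The plan is to build the morphism $\SG_{k,m}^l$ as a composite of two fusion morphisms of KR modules and to deduce both existence and the associativity diagram from general properties of highest weight modules in category $\BGGf$. First I would recall that, as a sub-quotient of the tensor product $Z_{kl}\ \wtimes\ Z_{lm} = W_{k-l,a+l}^{(r)}\ \wtimes\ W_{l-m,a+m}^{(r)}$, the cyclic submodule generated by $\omega_{kl}\wtimes\omega_{lm}$ is a highest weight module; by the computation of its highest weight (using Eq.\eqref{defi: auxiliary L} and the explicit action displayed just before the statement, the $\hL_i(z)$-eigenvalue on $\omega_{kl}\wtimes\omega_{lm}$ multiplies the thetas $\tfrac{\theta(z+k\hbar)}{\theta(z+l\hbar)}\cdot\tfrac{\theta(z+l\hbar)}{\theta(z+m\hbar)} = \tfrac{\theta(z+k\hbar)}{\theta(z+m\hbar)}$, matching exactly the highest weight $\Bw_{k-m,a+m}^{(r)}$ of $Z_{km}$), its unique irreducible quotient is $Z_{km}$. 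That yields the surjection $\SG_{k,m}^l$ with $\SG_{k,m}^l(\omega_{kl}\wtimes\omega_{lm}) = \omega_{km}$; uniqueness is immediate since $\omega_{kl}\wtimes\omega_{lm}$ generates its cyclic submodule and $Z_{km}$ is generated by $\omega_{km}$ (any two such morphisms agree on a generating set, hence everywhere). Strictly, I should check that the cyclic submodule surjects onto $Z_{km}$: the submodule has a highest weight vector of the correct highest weight, hence its irreducible quotient is $S(\Bw_{k-m,a+m}^{(r)}) = Z_{km}$, and composing the inclusion with this quotient gives the map — but I must argue it is actually defined on all of $Z_{kl}\ \wtimes\ Z_{lm}$, which follows because the irreducible quotient $Z_{km}$ is a quotient of the whole tensor product (any sub-quotient isomorphism can be promoted: $Z_{km}$ is the image of the cyclic submodule, and the inclusion-then-projection $Z_{kl}\ \wtimes\ Z_{lm} \supseteq \text{(cyclic submodule)} \twoheadrightarrow Z_{km}$ is not yet a map out of the whole space). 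The clean way: $Z_{km}$ being irreducible of highest weight $\Bw_{k-m,a+m}^{(r)}$ appears as a quotient of any highest weight module with that highest weight generated inside $Z_{kl}\ \wtimes\ Z_{lm}$; since the Grothendieck-group identities of Section~\ref{sec: KR} (Lemma~\ref{lem: auxiliary KR}(1) and the right-negativity of all non-dominant e-weights) show $\Bw_{k-m,a+m}^{(r)}$ has multiplicity one in $\qc(Z_{kl}\ \wtimes\ Z_{lm})$, there is a unique (up to scalar) singular vector of that weight, and composing appropriately defines $\SG_{k,m}^l$ on the full tensor product.

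Next, for the commutativity of \eqref{diagram}, I would invoke co-associativity of the coproduct $\Delta$ (Eq.\eqref{def: coproduct}, $(1\dt\Delta)\Delta = (\Delta\dt 1)\Delta$), which makes the two bracketings $(Z_{tk}\ \wtimes\ Z_{kl})\ \wtimes\ Z_{lm}$ and $Z_{tk}\ \wtimes\ (Z_{kl}\ \wtimes\ Z_{lm})$ canonically identified as $\CE$-modules, so that $\SG_{t,l}^k\wtimes\mathrm{Id}$, $\mathrm{Id}\wtimes\SG_{k,m}^l$, $\SG_{t,m}^k$, $\SG_{t,m}^l$ all make sense on the triple tensor product $Z_{tk}\ \wtimes\ Z_{kl}\ \wtimes\ Z_{lm}$. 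Both composites $\SG_{t,m}^k\circ(\mathrm{Id}\wtimes\SG_{k,m}^l)$ and $\SG_{t,m}^l\circ(\SG_{t,l}^k\wtimes\mathrm{Id})$ are morphisms of $\CE$-modules from $Z_{tk}\ \wtimes\ Z_{kl}\ \wtimes\ Z_{lm}$ to $Z_{tm}$; tracing the highest weight vector $\omega_{tk}\wtimes\omega_{kl}\wtimes\omega_{lm}$ through each path gives $\omega_{tm}$ in both cases (by the defining property of each $\SG$). Since $Z_{tm}$ is irreducible and generated by $\omega_{tm}$, and the source module is generated by its highest weight vector $\omega_{tk}\wtimes\omega_{kl}\wtimes\omega_{lm}$ (being a cyclic highest weight module, or at worst: the two morphisms agree on the cyclic submodule it generates, whose image in $Z_{tm}$ is all of $Z_{tm}$), the two composites agree. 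To make the last step airtight I would note that the restriction of either composite to the submodule generated by the highest weight vector is a morphism onto $Z_{tm}$ sending the generator to $\omega_{tm}$; by uniqueness of such a morphism (same argument as for $\SG_{k,m}^l$ above, using multiplicity-one of the relevant dominant e-weight in the triple product, which again follows from Section~\ref{sec: KR}), they coincide, and hence the two composites coincide on all of $Z_{tk}\ \wtimes\ Z_{kl}\ \wtimes\ Z_{lm}$.

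The main obstacle I anticipate is the bookkeeping of highest weights and the multiplicity-one statements: I need to verify that $\Bw_{k-m,a+m}^{(r)}$ (and the analogous dominant e-weight in the triple tensor product) occurs with multiplicity exactly one in the q-character, so that the singular vector realizing $\SG_{k,m}^l$ is unique up to scalar and the morphisms out of the tensor products are therefore rigid enough to be compared on generators alone. This should follow from Lemma~\ref{lem: auxiliary KR}(1), the right-negativity of KR modules (Lemma~\ref{lem: KR q-char}(3)) together with Lemma~\ref{lem: right negative}, and the multiplicativity of q-characters (Proposition~\ref{prop: monoidal O}): the product of two KR q-characters has its only dominant e-weight equal to the product of the two highest weights, all others being right-negative. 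A secondary (purely cosmetic) point is being careful with the identifications coming from co-associativity — the dynamical tensor product $\wtimes$ is only associative up to the canonical isomorphism induced by $(1\dt\Delta)\Delta = (\Delta\dt 1)\Delta$ — but this is standard and causes no real difficulty; I would simply fix the identification once and suppress it in the notation, exactly as the diagram \eqref{diagram} already implicitly does.
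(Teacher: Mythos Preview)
Your proposal has a genuine gap that you yourself flag but do not close. Knowing that $\Bw_{k-m,a+m}^{(r)}$ occurs with multiplicity one in $\qc(Z_{kl}\wtimes Z_{lm})$ tells you only that $Z_{km}$ appears once as a \emph{sub-quotient}; it does not tell you that $Z_{km}$ is a \emph{quotient} of the full tensor product, nor that $\omega_{kl}\wtimes\omega_{lm}$ generates all of $Z_{kl}\wtimes Z_{lm}$. Without cyclicity of the source, your quotient map from the cyclic submodule does not extend to $Z_{kl}\wtimes Z_{lm}$, your uniqueness argument (``two morphisms agree on a generating set'') fails for the same reason, and the commutativity argument inherits the same defect.

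The paper avoids this entirely. For uniqueness (and hence also for the diagram), it argues on the \emph{target} side: if $F,G$ are two such morphisms, the image of $F-G$ is a submodule of the irreducible $Z_{km}$ not containing $\omega_{km}$ (since the top weight space of the source is $\BM(\omega_{kl}\wtimes\omega_{lm})$, on which $F-G$ vanishes), hence is zero. No cyclicity of the source is needed. For existence, the paper does not try to show $Z_{kl}\wtimes Z_{lm}$ is highest weight; instead it uses Lemma~\ref{lem: fusion KR} (which \emph{does} prove the single-step tensor $W_{n-1,b+1}^{(r)}\wtimes W_{1,b}^{(r)}$ surjects onto $W_{n,b}^{(r)}$) to build surjections $f:T\twoheadrightarrow Z_{kl}\wtimes Z_{lm}$ and $g:T\twoheadrightarrow Z_{km}$ from the common refinement $T=Z_{k,k-1}\wtimes\cdots\wtimes Z_{m+1,m}$, and then observes $\ker f\subset\ker g$ because $\ker g$ is a maximal submodule and $\omega\notin \ker f+\ker g$. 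This factorizes $g$ through $f$, producing $\SG_{k,m}^l$ on the whole tensor product.

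If you want to salvage your route, the honest missing ingredient is precisely that $Z_{kl}\wtimes Z_{lm}$ is a highest weight module; but that is not a consequence of right-negativity or multiplicity-one, and in the paper's logic it is never asserted in this generality (see the footnote after Eq.~\eqref{diagram}). The single-step case of Lemma~\ref{lem: fusion KR} is the substantive input you are missing.
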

\begin{proof}
(Uniqueness) Let $F, G$ be two such morphisms and let $X$ be the image of $F-G$. Then $\omega_{km} \notin X$. If $X \neq 0$, then $X$ has a highest weight vector $v \neq 0$, which is proportional to $\omega_{km}$ by the irreducibility of $Z_{km}$, a contradiction. So $X = 0$ and $F = G$. The  commutativity of \eqref{diagram} is proved in the same way.

(Existence) Let $b \in \BC$ and $n \in \BZ_{>0}$. By Lemma \ref{lem: fusion KR}, there exists a surjective $\CE$-linear map $W_{n-1,b+1}^{(r)}\ \wtimes\ W_{1,b}^{(r)} \longrightarrow W_{n,b}^{(r)}$. An induction on $n$ shows that the $\CE$-module $W_{1,b+n-1}^{(r)}\ \wtimes\ W_{1,b+n-2}^{(r)}\ \wtimes \cdots \wtimes\ W_{1,b+1}^{(r)}\ \wtimes\ W_{1,b}^{(r)}$ can be projected onto $W_{n,b}^{(r)}$. Setting $(n,b)= (k-m, a+m)$  we obtain a surjective $\CE$-linear map  
$$ g:  Z_{k,k-1}\ \wtimes\ Z_{k-1,k-2}\ \wtimes \cdots \wtimes\ Z_{m+2,m+1}\ \wtimes\ Z_{m+1,m} =: T \twoheadrightarrow Z_{km}. $$
Taking $(n,b)$ to be $(k-l,a+l)$ and $(l-m,a+m)$, we project the first $k-l$ and the last $l-m$ tensor factors of $T$ onto $Z_{kl}$ and $Z_{lm} $ respectively. The tensor product of these projections gives $f: T \twoheadrightarrow Z_{kl}\ \wtimes\ Z_{lm}$. Since $\omega_{kl} \wtimes \omega_{lm}, \omega_{km}$ and $\omega := \omega_{k,k-1} \wtimes \omega_{k-1,k-2} \wtimes \cdots \wtimes \omega_{m+2,m+1} \wtimes \omega_{m+1,m} \in T $ are highest weight vectors of the same e-weight, by surjectivity one can assume $f(\omega) = \omega_{kl} \wtimes \omega_{lm}$ and $g(\omega) = \omega_{km}$.
It suffices to prove that $g$ factorizes through $f$, and so $g = \SG_{k,m}^l f$. Set $Y := \ker(f)$ and $Z := \ker(g)$. The image of $g$ being irreducible, $Z$ is a maximal submodule of $T$. Since $\omega \notin Y+Z$, we have $Y + Z = Z$ and $Y \subseteq Z$. 
\end{proof}

We need two special cases of the $\SG$: for $k > l$ and $t - 1 > l$, 
$$ \SF_{k,l} = \SG_{k,0}^l: Z_{kl}\ \wtimes\ W_{l,a}^{(r)} \longrightarrow W_{k,a}^{(r)},\quad \SG_{t,l} = \SG_{t,l}^{l+1}: Z_{t,l+1}\ \wtimes\ Z_{l+1,l} \longrightarrow Z_{tl}.  $$ 
As in \cite[\S 4.2]{HJ}, for $k > l$ define the restriction map 
$$F_{k,l}: W_{l,a}^{(r)} \longrightarrow W_{k,a}^{(r)}, \quad v \mapsto \SF_{k,l}(\omega_{kl}\wtimes v). $$
It is a difference map of bi-degree $((l-k)\varpi_r,0)$.

Applying \eqref{diagram} with $t>k>l>0$ to $\omega_{tk} \wtimes \omega_{kl} \wtimes  W_{l,a}^{(r)}$ gives $F_{t,k}F_{k,l} = F_{t,l}$. So $(W_{l,a}^{(r)}, F_{k,l})$ is an inductive system of vector spaces. \footnote{In the affine case \cite[Eq.(4.26)]{HJ} the structure map comes from the stronger fact that $Z_{kl} \wtimes Z_{lm}$ is of highest weight with $Z_{km}$ being the irreducible quotient. }

Applying \eqref{diagram} with $k>l+1>l>0$ to $\omega_{k,l+1} \wtimes Z_{l+1,l}\ \wtimes\ W_{l,a}^{(r)}$, we obtain 
\begin{equation}  \label{equ: F G}
\SF_{k,l}(\SG_{k,l}(\omega_{k,l+1} \wtimes v) \wtimes w) = F_{k,l}\SF_{l+1,l}(v \wtimes w) \quad \mathrm{for}\ v \wtimes w \in Z_{l+1,l}\ \wtimes\ W_{l,a}^{(r)}.
\end{equation}

\begin{lem}  \label{lem: injectivity}
The linear maps $F_{k,l}$ are injective. 
\end{lem}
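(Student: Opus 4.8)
The plan is to reduce to consecutive indices and then combine a counting argument with an intertwining property. Since $F_{t,k}F_{k,l}=F_{t,l}$, we have $F_{k,l}=F_{k,k-1}\circ F_{k-1,k-2}\circ\cdots\circ F_{l+1,l}$, so it suffices to prove each consecutive map $F_{l+1,l}\colon W_{l,a}^{(r)}\to W_{l+1,a}^{(r)}$ is injective; here $Z_{l+1,l}=W_{1,a+l}^{(r)}$ and $F_{l+1,l}(v)=\SF_{l+1,l}(\omega_{l+1,l}\wtimes v)$ with $\SF_{l+1,l}$ surjective. First I would record the e-weight structure. Because $\omega_{l+1,l}$ carries the $\prec$-maximal weight of $W_{1,a+l}^{(r)}$, the correction term in $\eD_k^{W_{1,a+l}^{(r)}\wtimes W_{l,a}^{(r)}}(z)(\omega_{l+1,l}\wtimes v)$ from Corollary \ref{cor: Jucys-Murphy} vanishes, so (as in the proof of Proposition \ref{prop: monoidal O}) $v\mapsto\omega_{l+1,l}\wtimes v$ sends the e-weight space $(W_{l,a}^{(r)})[\Bg]$ bijectively onto the e-weight space of the tensor product of e-weight $\Bw_{1,a+l}^{(r)}\Bg$. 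As $\SF_{l+1,l}$ is $\CE$-linear it preserves e-weight spaces, so $F_{l+1,l}$ maps $(W_{l,a}^{(r)})[\Bg]$ into $(W_{l+1,a}^{(r)})[\Bw_{1,a+l}^{(r)}\Bg]$, and $\ker F_{l+1,l}$ is a sum of e-weight spaces of $W_{l,a}^{(r)}$.

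Next I would use the q-character formula. By Theorem \ref{thm: q-char evaluation}, distinct semistandard tableaux of the rectangular shape $k\varpi_r$ yield distinct monomials (the multiset of pairs (entry, content) determines the tableau for a rectangle), so the KR modules are \emph{thin}: all their e-weight spaces are one-dimensional. Moreover, prepending to $T\in\SB_{l\varpi_r}$ a new leftmost column with entries $1,2,\dots,r$ from top to bottom defines an injection $\SB_{l\varpi_r}\hookrightarrow\SB_{(l+1)\varpi_r}$, and by \eqref{equ: A Psi}--\eqref{equ: tableau A Y} its effect on the associated monomial is multiplication by $\Bw_{1,a+l}^{(r)}=Y_{r,a+l+\frac12}$. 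Hence $\Bw_{1,a+l}^{(r)}\,\qc(W_{l,a}^{(r)})\preceq\qc(W_{l+1,a}^{(r)})$ coefficientwise; in particular $\Bw_{1,a+l}^{(r)}\Bg\in\ewt(W_{l+1,a}^{(r)})$ for every $\Bg\in\ewt(W_{l,a}^{(r)})$, and $F_{l+1,l}$ is a map between one-dimensional spaces. It therefore only remains to show $F_{l+1,l}$ does not vanish on any $(W_{l,a}^{(r)})[\Bg]$.

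For this I would establish the intertwining $F_{k,l}\circ L_{ij}(z)=L_{ij}(z)\circ F_{k,l}$ for all $1\le i,j\le N$ with $i>r$. It follows from $\Delta(L_{ij}(z))=\sum_pL_{ip}(z)\dt L_{pj}(z)$ and the $\CE$-linearity of $\SF_{k,l}$, once one knows $L_{ip}(z)\omega_{kl}=\delta_{ip}\,\omega_{kl}$ whenever $i>r$: for $p<i$ this is the singularity of $\omega_{kl}$; for $p=i$ it is the eigenvalue formula for $L_{ii}(z)$ on $\omega_{kl}$ with $i>r$; and for $p>i$ the vector $L_{ip}(z)\omega_{kl}$ would have weight $(k-l)\varpi_r-(\alpha_i+\cdots+\alpha_{p-1})$ with all indices $>r$, which by Lemma \ref{lem: KR q-char}(1)--(2) is not a weight of $Z_{kl}=W_{k-l,a+l}^{(r)}$ — every non-highest e-weight of a node-$r$ KR module carries a factor $A_{r,\cdot}^{-1}$, hence has positive $\alpha_r$-coefficient. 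Consequently $\ker F_{l+1,l}$ is stable under the $L_{ij}(z)$ with $i>r$; being also a sum of one-dimensional e-weight spaces, if it were nonzero its $\prec$-maximal e-weight vector would be annihilated by all $L_{ij}(z)$ with $i>r$ and $j<i$, while still being a joint eigenvector of the elliptic Cartan currents.

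The main obstacle is this last point: one must show that the only such vector of $W_{l,a}^{(r)}$ is a scalar multiple of $\omega_l$, whose image is $\omega_{l+1}\neq0$, yielding the contradiction $\ker F_{l+1,l}=0$. I would deduce it from the evaluation-module description of $W_{l,a}^{(r)}$ over the small elliptic quantum group (Corollary \ref{cor: evaluation module}, Theorem \ref{thm: small}): the vanishing conditions say such a vector is invariant under the nilradical of the parabolic of $\mathfrak{sl}_N$ whose Levi is $\mathfrak{sl}_r$ on the nodes $\alpha_1,\dots,\alpha_{r-1}$, and thinness of $W_{l,a}^{(r)}$ together with the branching rule for $L_{l\varpi_r}$ pins it to the top weight space. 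A more hands-on alternative is to propagate the fusion construction of $\SF_{l+1,l}$ from Lemmas \ref{lem: fusion KR} and \ref{lem: fusion k l vs k+l} and check that, in a suitable e-weight basis, $F_{l+1,l}$ is triangular with diagonal entries that are manifestly nonzero products of theta functions.
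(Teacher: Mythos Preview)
Your argument is mostly sound through the intertwining step, but the final ``main obstacle'' is a genuine gap that your sketches do not close. After showing that a $\prec$-maximal vector $w$ in the kernel is annihilated by all $L_{ij}(z)$ with $i>r$ and $j<i$, you need that such a ``parabolic-singular'' e-weight vector must be proportional to $\omega_l$. Sketch~(a) amounts to the assertion that, in the elliptic module $V_{l\varpi_r}$, the space of vectors killed by the $t_{ji}$ with $i>r,\ j<i$ is one-dimensional. Classically this is a consequence of Kostant's theorem (indeed $H^0(\mathfrak n_1,L_{l\varpi_r})$ is the irreducible $\mathfrak{sl}_r$-module of highest weight $l\varpi_r|_{\mathfrak{sl}_r}=0$, hence one-dimensional), but you have not justified the elliptic analogue, and Theorem~\ref{thm: small} only matches weight-space dimensions, not nilradical invariants. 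Sketch~(b) is too vague to assess. Also, the thinness/tableau detour, while correct, plays no role in closing the gap.

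The paper's proof avoids this entirely by bootstrapping from your partial singularity to full singularity. Once $L_{pi}(z)w=0$ for all $p>r$ (exactly your intertwining plus weight-maximality), one computes for $j\le r$ and $i<j$
\[
L_{ji}(z)\bigl(\omega_{kl}\,\wtimes\, w\bigr)=\sum_{p=1}^N L_{jp}(z)\omega_{kl}\,\wtimes\, L_{pi}(z)w.
\]
For $p>r$ the second factor vanishes; for $p\le r$ with $p\ne j$ the first factor vanishes because $(k-l)\varpi_r+\epsilon_p-\epsilon_j\notin\wt(Z_{kl})$ (same weight argument you used for $i>r$, applied inside the rectangle). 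Only $p=j$ survives, giving $L_{ji}(z)F_{k,l}(w)=g(z;\lambda)\,F_{k,l}(L_{ji}(z)w)$ with $g\ne 0$; hence $L_{ji}(z)w\in\ker F_{k,l}$ and, by maximality of $\mu$, $L_{ji}(z)w=0$. Thus $w$ is \emph{fully} singular, hence $w\in\BM\omega_l$, contradicting $F_{k,l}(\omega_l)=\omega_k\ne 0$. This two-line computation is precisely the missing idea; it makes the whole e-weight/thinness machinery unnecessary.
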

\begin{proof}
Assume $K := \ker(F_{k,l}) \neq 0$; it is a graded subspace of $W_{l,a}^{(r)}$. Choose $\mu \in \wt(K)$ such that $\mu + \alpha_i \notin \wt(K)$ for all $1\leq i < N$ and fix $0 \neq w \in K[\mu]$. We show that $w$ is a singular vector, so $w \in \BM \omega_{l}$ and $\omega_{l} \in K$, a contradiction. It suffices to prove that $L_{ji}(z) w \in K$ for all $1\leq i < j \leq N$; this implies $L_{ji}(z) w = 0$ because by assumption on $\mu$ the weight space $K[\mu +\epsilon_i-\epsilon_j]$ vanishes.

Suppose $j > r$. If $1\leq p \leq N$ and $p \neq j$, then $(k-l)\varpi_r + \epsilon_p - \epsilon_j \notin \wt(Z_{kl})$ by Theorem \ref{thm: q-char evaluation}. It follows that for $v \in W_{l,a}^{(r)}$  we have in $Z_{kl}\ \wtimes\ W_{l,a}^{(r)}$, 
$$ L_{ji}(z)(\omega_{kl} \wtimes v) = L_{jj}(z) \omega_{kl}\ \wtimes\ L_{ji}(z) v = \omega_{kl}\ \wtimes\ L_{ji}(z) v. $$
It follows that $L_{ji}(z) K \subseteq K$ because of the commutativity:
\begin{equation}  \label{inductive: L > r}
L_{ji}(z) F_{k,l} =  F_{k,l} L_{ji}(z) \quad \mathrm{for}\ 1\leq i,  j \leq N \ \mathrm{with}\ j > r. 
\end{equation}

Suppose $j \leq r$. For $p > r$ since $r \geq j > r$ we have $L_{pi}(z)w \in K$ and so $L_{pi}(z) w = 0$. For $p \leq r$, by Theorem \ref{thm: q-char evaluation}, $L_{jp}(z) \omega_{kl} = 0$ if $p \neq j$. This implies
$$ L_{ji}(z)(\omega_{kl} \wtimes w) = L_{jj}(z)\omega_{kl}\ \wtimes\ L_{ji}(z) w =  \frac{\theta(z+k \hbar)}{\theta(z+l\hbar)} g(\lambda)( \omega_{kl} \wtimes L_{ji}(z)w) $$
for certain $g(\lambda) \in \BM^{\times}$. Applying $\SF_{k,l}$ we obtain $F_{k,l}L_{ji}(z)w = 0$, as desired.
\end{proof}
In what follows $k,l$ denote positive integers, while $i,j,m,n,p,q,s,t,u,v$ the integers between $1$ and $N$ related to the Lie algebra $\mathfrak{sl}_N$.

\begin{lem} \label{lem: inductive vs K}
For $k > l$ and $1\leq i \leq N$ we have
\begin{equation} \label{inductive: K}
K_i(z) F_{k,l} = \left( \frac{\theta(z+k \hbar)}{\theta(z+l\hbar)} \right)^{\delta_{i\leq r}} F_{k,l} K_i(z).
\end{equation}
\end{lem}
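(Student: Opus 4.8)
The plan is to reduce everything to a computation on the distinguished vector $\omega_{kl}$ inside $Z_{kl}\wtimes W_{l,a}^{(r)}$. Recall that $F_{k,l}(v)=\SF_{k,l}(\omega_{kl}\wtimes v)$ with $\SF_{k,l}\colon Z_{kl}\wtimes W_{l,a}^{(r)}\to W_{k,a}^{(r)}$ a morphism of $\CE$-modules. First I would note that $\SF_{k,l}$ commutes with every $\eD_m(z)$ (since $\eD_m(z)\in\CE$), and hence with every $K_i(z)$: the factorization \eqref{def: elliptic diagonal} gives $\eD_{N-i+1}(z)=\eD_{N-i}(z)\,K_i(z+(N-i)\hbar)$ (with the convention $\eD_0:=\Id$), and the $\eD_m$ act invertibly on modules in $\BGG$, so $K_i(z)=\eD_{N-i}(z-(N-i)\hbar)^{-1}\eD_{N-i+1}(z-(N-i)\hbar)$ is an intertwiner as well. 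Thus $K_i(z)F_{k,l}(v)=\SF_{k,l}\big(K_i^{Z_{kl}\wtimes W_{l,a}^{(r)}}(z)(\omega_{kl}\wtimes v)\big)$, and the lemma follows once I establish
\[ K_i^{Z_{kl}\wtimes W_{l,a}^{(r)}}(z)(\omega_{kl}\wtimes v)=\Big(\tfrac{\theta(z+k\hbar)}{\theta(z+l\hbar)}\Big)^{\delta_{i\le r}}\big(\omega_{kl}\wtimes K_i^{W_{l,a}^{(r)}}(z)v\big)\quad\text{in }Z_{kl}\wtimes W_{l,a}^{(r)}. \]

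To reach this I would first prove the analogue for the $\eD_m$. Applying Corollary \ref{cor: Jucys-Murphy} to the coproduct that defines the module structure on $Z_{kl}\wtimes W_{l,a}^{(r)}$, one sees that $\eD_m^{Z_{kl}\wtimes W_{l,a}^{(r)}}(z)(\omega_{kl}\wtimes v)$ equals $\eD_m^{Z_{kl}}(z)\omega_{kl}\wtimes\eD_m^{W_{l,a}^{(r)}}(z)v$ plus a sum of terms whose first tensor factor lies in $Z_{kl}[(k-l)\varpi_r+\alpha+\varpi_{N-m}]$ with $-\varpi_{N-m}\prec\alpha$; since $(k-l)\varpi_r$ is the highest weight of $Z_{kl}$, this factor can be nonzero only if $\alpha+\varpi_{N-m}\in\BQ_-$, which is incompatible with $\alpha+\varpi_{N-m}\succ 0$, so all corrections vanish. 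Moreover $\omega_{kl}$ is singular, hence by Lemma \ref{lem: K vs L} a common eigenvector of the $K_p(z)$, its eigenvalue being the $p$-th entry of the highest weight $\Bw_{k-l,\ell_r+l}^{(r)}=\Psi_{r,\ell_r+k}\Psi_{r,\ell_r+l}^{-1}$ of $Z_{kl}\cong W_{k-l,\ell_r+l}^{(r)}$, namely $(\theta(z+k\hbar)/\theta(z+l\hbar))^{\delta_{p\le r}}$. Therefore $\eD_m^{Z_{kl}}(z)\omega_{kl}=g_m(z)\omega_{kl}$ with $g_m(z):=\prod_{j=0}^{m-1}(\theta(z+(j+k)\hbar)/\theta(z+(j+l)\hbar))^{\delta_{N-j\le r}}$, and consequently $\eD_m^{Z_{kl}\wtimes W_{l,a}^{(r)}}(z)(\omega_{kl}\wtimes v)=g_m(z)\,(\omega_{kl}\wtimes\eD_m^{W_{l,a}^{(r)}}(z)v)$ for every $v$.

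Finally I would read off $K_i$ from the $\eD_m$. Set $z':=z-(N-i)\hbar$. By the factorization relation, both $K_i^{Z_{kl}\wtimes W_{l,a}^{(r)}}(z)(\omega_{kl}\wtimes v)$ and the vector $\tfrac{g_{N-i+1}(z')}{g_{N-i}(z')}\big(\omega_{kl}\wtimes K_i^{W_{l,a}^{(r)}}(z)v\big)$ are sent by the invertible operator $\eD_{N-i}^{Z_{kl}\wtimes W_{l,a}^{(r)}}(z')$ to the common value $\eD_{N-i+1}^{Z_{kl}\wtimes W_{l,a}^{(r)}}(z')(\omega_{kl}\wtimes v)=g_{N-i+1}(z')\big(\omega_{kl}\wtimes\eD_{N-i+1}^{W_{l,a}^{(r)}}(z')v\big)$ — for the second vector one uses in addition the identity $\eD_{N-i}^{W_{l,a}^{(r)}}(z')K_i^{W_{l,a}^{(r)}}(z)=\eD_{N-i+1}^{W_{l,a}^{(r)}}(z')$ on $W_{l,a}^{(r)}$ — hence the two inputs coincide; and the ratio $g_{N-i+1}(z')/g_{N-i}(z')$ telescopes to $(\theta(z+k\hbar)/\theta(z+l\hbar))^{\delta_{i\le r}}$. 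Applying $\SF_{k,l}$ and using $\SF_{k,l}(\omega_{kl}\wtimes-)=F_{k,l}$ gives the lemma. The only genuinely delicate point is the vanishing of the correction terms on $\omega_{kl}$, which rests on the incompatibility of $\succ 0$ with membership in $\BQ_-$; the rest is bookkeeping with \eqref{def: elliptic diagonal}.
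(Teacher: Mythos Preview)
Your proof is correct and follows essentially the same route as the paper's: both establish that the correction terms in Corollary~\ref{cor: Jucys-Murphy} vanish on $\omega_{kl}\wtimes v$ because $\alpha+\varpi_{N-m}\succ 0$ is incompatible with $\alpha+\varpi_{N-m}\in\BQ_-$, then pass from the resulting identity for $\eD_m$ to one for $K_i$ via the factorization \eqref{def: elliptic diagonal}, and finally apply $\SF_{k,l}$. Your write-up is more explicit than the paper's in two places---you spell out why $\SF_{k,l}$ intertwines the $K_i$ (via invertibility of the $\eD_m$), and you carry out the peeling-off of $K_i$ from $\eD_{N-i+1}$ and $\eD_{N-i}$ in detail rather than invoking ``similar identity holds''---but the underlying argument is the same.
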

\begin{proof}
We compute $\eD_i(z)(\omega_{kl} \wtimes v)$ for $v \in W_{l,a}^{(r)}$ based on the coproduct of Corollary \ref{cor: Jucys-Murphy}.
 If $ -\varpi_{N-k} \prec \alpha$ then $\alpha + \varpi_{N-k} \notin \BQ_-$ and $(k-l)\omega_{kl} + \alpha + \varpi_{N-k} \notin \wt(Z_{kl})$. The extra terms $x_{\alpha} \dt y_{\alpha}$ in the coproduct do not contribute, and so $\eD_i(z) (\omega_{kl} \wtimes v) = \eD_i(z) \omega_{kl}\  \wtimes\  \eD_i(z) v$. By Eq.\eqref{def: elliptic diagonal} similar identity holds when $\eD_i(z)$ is replaced by $K_i(z)$, because $K_i(z) \omega_{kl} = \left( \frac{\theta(z+k \hbar)}{\theta(z+l\hbar)} \right)^{\delta_{i\leq r}} \omega_{kl}$ is independent of $\lambda$. Applying $\SF_{k,l}$ to the new identity involving $K_i(z)$, we obtain Eq.\eqref{inductive: K}.
\end{proof}

{\it From now on up to Corollary \ref{cor: inductive: L < r}, we shall always fix integers $j,p$ with condition $1 \leq j \leq r < p \leq N$.} For $k > l$, introduce $\omega_{kl}^{jp} \in Z_{kl}$ by Corollary \ref{cor: evaluation module}:
$$ L_{jp}(z) \omega_{kl} = \frac{\theta(z+(k-1)\hbar+\lambda_{jp})}{\theta(z+l\hbar)} \omega_{kl}^{jp}. $$
Indeed $\omega_{kl}^{jp} = t_{pj} \omega_{kl}$ in the evaluation module  $Z_{kl} \cong V_{(k-l)\varpi_r}(l)$. Since $Y_{(k-l)\varpi_r}$ is a rectangle,   $\BM \omega_{kl}^{jp}$ is the weight space of weight $(k-l)\varpi_r + \epsilon_p - \epsilon_j$.
\begin{lem}  \label{comp: sl2}
 In the $\CE$-module $Z_{kl}$ we have $\omega_{kl}^{jp} \neq 0$ and 
\begin{equation*}  
L_{pj}(z) \omega_{kl}^{jp} = - \omega_{kl} \frac{\theta(z+l\hbar-\lambda_{jp})\theta((k-l)\hbar)\theta(\hbar)}{\theta(z+l\hbar)\theta(\lambda_{jp})\theta(\lambda_{jp}+\hbar)}\prod_{r<q\neq p} \frac{\theta(\lambda_{jq}+(k-l+1)\hbar)}{\theta(\lambda_{jq}+\hbar)}.
\end{equation*}
The product is taken over integers $q$ such that $r+1\leq q \leq N$ and $q \neq p$.
\end{lem}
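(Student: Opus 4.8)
The statement concerns the action of $L_{pj}(z)$ on the vector $\omega_{kl}^{jp}$ in the evaluation module $Z_{kl} \cong V_{(k-l)\varpi_r}(l)$, and since $\varpi(\Bw^{(r)}_{k-l,a+l}) = (k-l)\varpi_r$ has rectangular Young diagram, the relevant $\mathfrak{sl}_2$-string attached to the node $r$ through the pair $(j,p)$ is two-dimensional: the weight spaces of weights $(k-l)\varpi_r$ and $(k-l)\varpi_r + \epsilon_p - \epsilon_j$ are each one-dimensional (spanned by $\omega_{kl}$ and $\omega_{kl}^{jp}$ respectively), and no other weights in this line occur. The nonvanishing $\omega_{kl}^{jp}\neq 0$ is already forced by the q-character formula (Theorem~\ref{thm: q-char evaluation}): the tableau obtained from the ``ground'' tableau $T_{k\varpi_r}$ by changing a single entry in row $j$, column $1$ from $j$ to $p$ is a valid semistandard tableau of the rectangular shape, so its monomial, namely $\Bw^{(r)}_{k-l,a+l}A_{r,a+l}^{-1}A_{s,\cdots}^{-1}\cdots$ contributing to the $(k-l)\varpi_r+\epsilon_p-\epsilon_j$ weight space, appears with nonzero multiplicity; hence $\omega_{kl}^{jp}$ is a nonzero multiple of that basis vector. (Equivalently one uses $\omega_{kl}^{jp} = t_{pj}\omega_{kl}$ in $V_{(k-l)\varpi_r}(l)$ together with Theorem~\ref{thm: small}.)

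The main computation is then to apply the RLL relation \eqref{rel: RLL explicit} to $\omega_{kl}$. The plan is: first record the known diagonal actions $L_{ii}(z)\omega_{kl}$ and $L_{jj}(z)\omega_{kl}$ from Eq.\eqref{defi: auxiliary L} together with the already-displayed formulas before Lemma~\ref{lem: fusion k l vs k+l}, the fact that $L_{pq}(z)\omega_{kl}=0$ for $p>q$ (singularity), and the formula $L_{jp}(z)\omega_{kl} = \frac{\theta(z+(k-1)\hbar+\lambda_{jp})}{\theta(z+l\hbar)}\omega_{kl}^{jp}$ defining $\omega_{kl}^{jp}$. Then I would specialize Eq.\eqref{rel: RLL explicit} to the index choice $(m,n)=(p,j)$, $(i,j)=(j,p)$ (so that the right-hand side produces terms $L_{jq}(w)L_{pq'}(z)$ which, evaluated on $\omega_{kl}$, collapse to the single surviving pair involving $L_{jj}$ and $L_{pj}$, everything else being annihilated or diagonal), evaluate both sides on $\omega_{kl}$, and read off the coefficient of $\omega_{kl}$. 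This is exactly the two-dimensional $\mathfrak{sl}_2$-reduction used in the proof of \cite[Theorem 4.1]{FZ}, applied now to the simple root $\alpha_r$ and the pair $(j,p)$ instead of the rank-one case. Concretely, the LLL relation will express $L_{pj}(z)\omega_{kl}^{jp}$ as a ratio of $R$-matrix coefficients times the product of the diagonal eigenvalues $L_{jj}(\cdot)\omega_{kl}$ and $L_{pp}(\cdot)\omega_{kl}=\omega_{kl}$, evaluated so as to cancel the $w$-dependence; the stated formula — with the factor $-\frac{\theta(z+l\hbar-\lambda_{jp})\theta((k-l)\hbar)\theta(\hbar)}{\theta(z+l\hbar)\theta(\lambda_{jp})\theta(\lambda_{jp}+\hbar)}$ and the product $\prod_{r<q\neq p}\frac{\theta(\lambda_{jq}+(k-l+1)\hbar)}{\theta(\lambda_{jq}+\hbar)}$ over $q\in\{r+1,\dots,N\}\setminus\{p\}$ — is what emerges after simplification, the product being precisely the ``excess'' part of the diagonal eigenvalue $L_{jj}$ not cancelled by $L_{jp}$ on the weight $(k-l)\varpi_r+\epsilon_p-\epsilon_j$.

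I expect the main obstacle to be \emph{bookkeeping the normalization}: matching the $\hbar$- and $\lambda$-shifts correctly through the moment maps $\mol,\mor$ in \eqref{rel: RLL explicit}, keeping track of which $\theta$-quasi-periodicity factors cancel between the two sides, and verifying that the spurious poles in $z$ (coming from $\theta(z-w+\hbar)$ in denominators after clearing) indeed cancel so that both sides are honest meromorphic functions. A secondary subtlety is the passage from the abstract algebra $\CE_k^\Hlie$ to the evaluation module: one must confirm that the finite sub-sum of the RLL relation involving only indices in $\{N-k+1,\dots,N\}$ — or more precisely the indices $j,p,q$ actually appearing — suffices, which follows because all other generators $L_{uv}(z)$ with $u<v$, $u$ or $v$ outside the relevant range act by zero on $\omega_{kl}$ or diagonally, exactly as in Corollary~\ref{cor: Jucys-Murphy}. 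Once these normalizations are pinned down the identity is a direct, if lengthy, evaluation; no new idea beyond the $\mathfrak{sl}_2$-reduction of \cite{FZ} is needed.
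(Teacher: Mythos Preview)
Your approach is correct and will work, but it takes a different route from the paper's proof. The paper does not use the RLL relation \eqref{rel: RLL explicit} in $\CE$ directly; instead it works in the small elliptic quantum group $\SE$ of Definition~\ref{defi: small elliptic}, applying the last defining relation there (the quadratic relation among the $t_{ij}$) with indices $(i,j,k,l)=(j,p,p,j)$ to the highest weight vector $\omega_{kl}$. Since $t_{jp}\omega_{kl}=0$, one term drops out and one solves immediately for $g(\lambda)$ in $t_{jp}\omega_{kl}^{jp}=g(\lambda)\omega_{kl}$; the $z$-dependence is then restored at the very end via the evaluation map of Corollary~\ref{cor: evaluation module}, which supplies the factor $\frac{\theta(z+l\hbar-\lambda_{jp})}{\theta(z+l\hbar)}$. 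Your route keeps $z$ and $w$ throughout and uses the full $R$-matrix coefficients, which is equivalent but requires tracking more shifts; the paper's use of $\SE$ is exactly the device that factors the spectral parameter out of this computation. What your approach buys is that it stays entirely inside $\CE$ and avoids invoking the auxiliary algebra $\SE$; what the paper's buys is a shorter and more transparent calculation with fewer $\theta$-identities to simplify.

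One small gap: your argument for $\omega_{kl}^{jp}\neq 0$ is incomplete. Knowing that the weight space of weight $(k-l)\varpi_r+\epsilon_p-\epsilon_j$ is one-dimensional (via Theorem~\ref{thm: q-char evaluation} or Theorem~\ref{thm: small}) does \emph{not} by itself imply that the particular vector $\omega_{kl}^{jp}=t_{pj}\omega_{kl}$ is nonzero; a priori $t_{pj}$ could annihilate $\omega_{kl}$ while that weight space is reached only through longer words in the $t_{ij}$. The paper instead reads off nonvanishing directly from the explicit formula for $g(\lambda)$, which is visibly nonzero. In your RLL computation the same conclusion will fall out once you have solved for $L_{pj}(z)\omega_{kl}^{jp}$ and observed that the right-hand side is nonzero, so you should reorganize the argument to deduce $\omega_{kl}^{jp}\neq 0$ \emph{after} the computation rather than before.
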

\begin{proof}
The weight grading on $Z_{kl} = S_{(k-l)\varpi_r,l}$ indicates $t_{jp} \omega_{kl}^{jp} = g(\lambda) \omega_{kl}$ for certain $g(\lambda) \in \BM$. The last relation of Definition \ref{defi: small elliptic} with $a = d = j$ and $c = b = p$ applied to the highest weight vector $\omega_{kl}$, the second term vanishes and
$$\frac{\theta(\lambda_{jp}+(k-l+1)\hbar)}{\theta(\lambda_{jp}+(k-l)\hbar)} g(\lambda) = \frac{\theta((k-l)\hbar)\theta(-\hbar)}{\theta(\lambda_{jp}) \theta(\lambda_{jp}+(k-l)\hbar)} \prod_{q>r} \frac{\theta(\lambda_{jq}+(k-l+1)\hbar)}{\theta(\lambda_{jq}+\hbar)}.  $$
This implies $\omega_{kl}^{jp} \neq 0$. Conclude from $L_{pj}(z) \omega_{kl}^{jp} = \frac{\theta(z+l\hbar-\lambda_{jp})}{\theta(z+l\hbar)} g(\lambda) \omega_{kl}$.
\end{proof}

\begin{lem} \label{comp: sl2 tensor product}
Let $k - 1 > l$. In the $\CE$-module $Z_{k,l+1} \wtimes Z_{l+1,l}$ we have 
\begin{gather*}
L_{pj}(z) \left(\Ba_{jp}^{(l)}(k;\lambda)(\omega_{k,l+1} \wtimes \omega_{l+1,l}^{jp}) - \omega_{k,l+1}^{jp} \wtimes \omega_{l+1,l}  \right) = 0 \quad \mathrm{where} \\
\Ba_{jp}^{(l)}(k;\lambda) :=\frac{\theta((k-l-1)\hbar)\theta(\lambda_{jp}-\hbar)}{\theta(\hbar)\theta(\lambda_{jp})} \prod_{r<q\neq p}\frac{\theta(\lambda_{jq}+(k-l)\hbar)}{\theta(\lambda_{jq}+\hbar)}. 
\end{gather*}
Furthermore $\SG_{k,l}(\Ba_{jp}^{(l)}(k;\lambda)(\omega_{k,l+1} \wtimes \omega_{l+1,l}^{jp}) - \omega_{k,l+1}^{jp} \wtimes \omega_{l+1,l}) = 0$.
\end{lem}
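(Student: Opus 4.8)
The plan is to prove the two displayed assertions separately: the first by a direct computation through the coproduct, the second as a formal corollary of the first. For the first assertion I would expand the action of $L_{pj}(z)$ on $Z_{k,l+1}\wtimes Z_{l+1,l}$ via Eq.\eqref{def: coproduct}, i.e.\ $L_{pj}(z)=\sum_{m=1}^{N}L_{pm}(z)\wtimes L_{mj}(z)$, and apply it to each of $\omega_{k,l+1}\wtimes\omega_{l+1,l}^{jp}$ and $\omega_{k,l+1}^{jp}\wtimes\omega_{l+1,l}$. The point is that almost all summands vanish for weight reasons. For $\omega_{k,l+1}\wtimes\omega_{l+1,l}^{jp}$: since $\omega_{k,l+1}$ is highest weight in the rectangular module $Z_{k,l+1}\cong S_{(k-l-1)\varpi_r,l+1}$, a nonzero $L_{pm}(z)\omega_{k,l+1}$ forces $\epsilon_p-\epsilon_m\in\BQ_+$, hence $m\geq p$; and $\omega_{l+1,l}^{jp}$ spans the one-dimensional weight space $Z_{l+1,l}[\varpi_r+\epsilon_p-\epsilon_j]$ of the $r$-th fundamental module, all of whose weights have $\epsilon$-coefficients in $\{0,1\}$, so $L_{mj}(z)\omega_{l+1,l}^{jp}\neq 0$ forces $m\in\{1,\dots,r\}\cup\{p\}$; as $p>r$ only $m=p$ survives. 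For $\omega_{k,l+1}^{jp}\wtimes\omega_{l+1,l}$, singularity of $\omega_{l+1,l}$ kills all $m>j$ and the same $\{0,1\}$-constraint kills $m<j$, leaving only $m=j$. Thus
\begin{align*}
L_{pj}(z)(\omega_{k,l+1}\wtimes\omega_{l+1,l}^{jp}) &= L_{pp}(z)\omega_{k,l+1}\ \wtimes\ L_{pj}(z)\omega_{l+1,l}^{jp},\\
L_{pj}(z)(\omega_{k,l+1}^{jp}\wtimes\omega_{l+1,l}) &= L_{pj}(z)\omega_{k,l+1}^{jp}\ \wtimes\ L_{jj}(z)\omega_{l+1,l},
\end{align*}
where $L_{pp}(z)\omega_{k,l+1}=\omega_{k,l+1}$ (as $p>r$), $L_{jj}(z)\omega_{l+1,l}$ is the explicit highest-weight value recalled just before Lemma \ref{comp: sl2} (with $k$ there replaced by $l+1$), and $L_{pj}(z)\omega_{l+1,l}^{jp}$, $L_{pj}(z)\omega_{k,l+1}^{jp}$ are supplied by Lemma \ref{comp: sl2} (with $(k,l)$ replaced by $(l+1,l)$, resp.\ $(k,l+1)$).

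Next I would bring both right-hand sides into the common normal form $\omega_{k,l+1}\wtimes\bigl(\ast(\lambda)\,\omega_{l+1,l}\bigr)$ by moving every scalar in $\BM$ to the right: through the defining relation $g(\lambda)v\wtimes w=v\wtimes g(\lambda+\hbar\,\wt(w))\,w$ of the dynamical tensor product, and through the difference-map rule $L_{pj}(g(\lambda)v)=g(\lambda+\hbar\epsilon_j)L_{pj}(v)$. Because $\lambda_{jp}$ and all $\lambda_{jq}$ with $q>r$ are each shifted by exactly $+\hbar$ under both $\lambda\mapsto\lambda+\hbar\epsilon_j$ and $\lambda\mapsto\lambda+\hbar\varpi_r$, the two resulting coefficients of $\omega_{l+1,l}$ simplify — after cancelling the shared $z$- and $\lambda$-dependent theta factors — to the same expression
$$ -\,\frac{\theta((k-l-1)\hbar)\,\theta(\hbar)\,\theta(z+l\hbar-\lambda_{jp})}{\theta(z+l\hbar)\,\theta(\lambda_{jp}+\hbar)^2}\prod_{r<q\neq p}\frac{\theta(\lambda_{jq}+(k-l+1)\hbar)}{\theta(\lambda_{jq}+\hbar)}, $$
provided the first vector is prefactored by exactly $\Ba_{jp}^{(l)}(k;\lambda)$; comparing the coefficients then proves the first assertion. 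I expect this theta-function bookkeeping to be the main obstacle: nothing is conceptually deep, but the cancellations only close up with the precise normalization of $\Ba_{jp}^{(l)}(k;\lambda)$ given in the lemma, so the shifts by $\hbar\epsilon_j$ and $\hbar\varpi_r$ must be tracked carefully.

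For the \emph{furthermore} clause, set $v:=\Ba_{jp}^{(l)}(k;\lambda)(\omega_{k,l+1}\wtimes\omega_{l+1,l}^{jp})-\omega_{k,l+1}^{jp}\wtimes\omega_{l+1,l}$, a vector of weight $(k-l)\varpi_r+\epsilon_p-\epsilon_j$ in $Z_{k,l+1}\wtimes Z_{l+1,l}$. Since $\SG_{k,l}$ is a morphism of $\CE$-modules it preserves weights, so $\SG_{k,l}(v)$ lies in $Z_{kl}[(k-l)\varpi_r+\epsilon_p-\epsilon_j]=\BM\,\omega_{kl}^{jp}$, the latter being one-dimensional because $Y_{(k-l)\varpi_r}$ is a rectangle (as recalled before Lemma \ref{comp: sl2}); write $\SG_{k,l}(v)=c(\lambda)\,\omega_{kl}^{jp}$. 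Applying $L_{pj}(z)$ and using the first assertion together with $\CE$-linearity of $\SG_{k,l}$ gives
$$ 0=\SG_{k,l}\bigl(L_{pj}(z)v\bigr)=L_{pj}(z)\SG_{k,l}(v)=c(\lambda+\hbar\epsilon_j)\,L_{pj}(z)\omega_{kl}^{jp}. $$
By Lemma \ref{comp: sl2} the vector $L_{pj}(z)\omega_{kl}^{jp}$ is a nonzero meromorphic multiple of $\omega_{kl}$ — nonzero because $k>l$ and $\hbar\notin\mathbb Q+\mathbb Q\tau$ force $\theta((k-l)\hbar)\neq 0$ — hence $c(\lambda+\hbar\epsilon_j)\equiv 0$, so $c\equiv 0$ and $\SG_{k,l}(v)=0$. \hfill$\Box$
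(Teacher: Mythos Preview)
Your proposal is correct and follows essentially the same route as the paper: expand $L_{pj}(z)$ via the coproduct, use weight considerations on the rectangular modules $Z_{k,l+1}$ and the fundamental module $Z_{l+1,l}$ to reduce each expansion to a single surviving term, plug in Lemma~\ref{comp: sl2} and the known action of $L_{jj},L_{pp}$ on highest weight vectors, and compare the two resulting scalar multiples of $\omega_{k,l+1}\wtimes\omega_{l+1,l}$ after tracking the $\lambda$-shifts (the paper records the ratio as $\Ba_{jp}^{(l)}(k;\lambda+\hbar\epsilon_j)$, which is exactly what your shift bookkeeping produces). Your argument for the second assertion---writing $\SG_{k,l}(v)=c(\lambda)\,\omega_{kl}^{jp}$ in the one-dimensional weight space and killing $c$ via $L_{pj}(z)\omega_{kl}^{jp}\neq 0$---is the paper's argument spelled out in slightly more detail.
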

\begin{proof}
We compute $L_{pj}(z) ( \omega_{k,l+1}^{jp} \wtimes \omega_{l+1,l}) = \sum_{q=1}^N L_{pq}(z)  \omega_{k,l+1}^{jp}\ \wtimes\ L_{qj}(z) \omega_{l+1,l}$. Since $\omega_{l+1,l}$ is a highest weight vector, the terms with $q > j$ vanish. The weight of $L_{qj}(z)  \omega_{k,l+1}^{jp} $ is  $(k-l-1)\varpi_r + \epsilon_q-\epsilon_j$, which does not belong to $\wt(Z_{k,l+1})$ for $q < j$. So only the term $q = j$ survives. By Lemma \ref{comp: sl2},
\begin{align*}
& L_{pj}(z) ( \omega_{k,l+1}^{jp} \wtimes \omega_{l+1,l})  = L_{pj}(z) \omega_{k,l+1}^{jp}\ \wtimes\ L_{jj}(z) \omega_{l+1,l} \\
= &\ -\  \frac{\theta(z+(l+1)\hbar-\lambda_{jp})\theta((k-l-1)\hbar)\theta(\hbar)}{\theta(z+(l+1)\hbar)\theta(\lambda_{jp})\theta(\lambda_{jp}+\hbar)} \prod_{r<q\neq p} \frac{\theta(\lambda_{jq}+(k-l)\hbar)}{\theta(\lambda_{jq}+\hbar)}  \omega_{k,l+1} \\ 
&\quad\quad  \wtimes\ \frac{\theta(z+(l+1)\hbar)}{\theta(z+l\hbar)} \prod_{q>r} \frac{\theta(\lambda_{jq}+2\hbar)}{\theta(\lambda_{jq}+\hbar)}   \omega_{l+1,l} \\
=& -\frac{\theta(z+l\hbar-\lambda_{jp})\theta((k-l-1)\hbar)\theta(\hbar)}{\theta(z+l\hbar)\theta(\lambda_{jp}+\hbar)^2} \prod_{r<q\neq p}\frac{\theta(\lambda_{jq}+(k-l+1)\hbar)}{\theta(\lambda_{jq}+\hbar)}  (\omega_{k,l+1} \wtimes \omega_{l+1,l}).
\end{align*}
Similar arguments lead to:
\begin{align*}
& L_{pj}(z) ( \omega_{k,l+1} \wtimes \omega_{l+1,l}^{jp})  = L_{pp}(z) \omega_{k,l+1}\ \wtimes\ L_{pj}(z) \omega_{l+1,l}^{jp} \\
= &\ -\frac{\theta(z+l\hbar-\lambda_{jp})\theta(\hbar)^2}{\theta(z+l\hbar)\theta(\lambda_{jp})\theta(\lambda_{jp}+\hbar)} \prod_{r<q\neq p} \frac{\theta(\lambda_{jq}+2\hbar)}{\theta(\lambda_{jq}+\hbar)}(\omega_{k,l+1} \wtimes \omega_{l+1,l}).
\end{align*}
$\Ba_{jp}^{(l)}(k;\lambda+\hbar\epsilon_j)$ is the ratio of the two coefficients of $\omega_{kl} \wtimes \omega_{l+1,l}$ above, which is easily seen to be independent of $z$. For the last identity, let $x$ be the vector in the argument of $\SG_{k,l}$. Then both $\SG_{k,l}(x)$ and $\omega_{kl}^{jp}$ belong to the one-dimensional weight space of weight $(k-l)\varpi_{r} + \epsilon_j -\epsilon_p$. These two vectors are proportional, the first is annihilated by $L_{pj}(z)$, while the second is not. So $\SG_{k,l}(x) = 0$.
\end{proof}

\begin{cor} \label{cor: sl2 tensor product}
Let $k - 1 > l$. In the $\CE$-module $Z_{kl}$ we have  
\begin{gather*}
L_{jp}(z) \omega_{kl} = \SG_{k,l}(\omega_{k,l+1} \wtimes \omega_{l+1,l}^{jp}) \times \Bb_{jp}^{(l)}(k,z;\lambda) \quad \mathrm{where}\\  
\Bb_{jp}^{(l)}(k,z;\lambda) :=\frac{\theta(z+(k-1) \hbar+\lambda_{jp})\theta((k-l)\hbar) }{\theta(z+l\hbar)\theta(\hbar)}  \prod_{r<q\neq p} \frac{\theta(\lambda_{jq}+(k-l)\hbar)}{\theta(\lambda_{jq}+\hbar)} . 
\end{gather*}
\end{cor}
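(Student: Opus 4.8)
\smallskip
\noindent\emph{Proof proposal.} The plan is to identify the vector $\SG_{k,l}(\omega_{k,l+1}\wtimes\omega_{l+1,l}^{jp})\in Z_{kl}$ with an explicit scalar multiple of $\omega_{kl}^{jp}$, and then to feed this into the defining relation $L_{jp}(z)\omega_{kl}=\frac{\theta(z+(k-1)\hbar+\lambda_{jp})}{\theta(z+l\hbar)}\omega_{kl}^{jp}$ of $\omega_{kl}^{jp}$.

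First I would observe that, since the Young diagram $Y_{(k-l)\varpi_r}$ is a rectangle, the weight space $Z_{kl}[(k-l)\varpi_r+\epsilon_p-\epsilon_j]$ is one-dimensional, spanned by $\omega_{kl}^{jp}$. As $\omega_{k,l+1}$ has weight $(k-l-1)\varpi_r$, $\omega_{l+1,l}^{jp}$ has weight $\varpi_r+\epsilon_p-\epsilon_j$, and $\SG_{k,l}$ preserves weights, the vector $v:=\SG_{k,l}(\omega_{k,l+1}\wtimes\omega_{l+1,l}^{jp})$ lies in this line, so $v=c(\lambda)\,\omega_{kl}^{jp}$ for a unique $c(\lambda)\in\BM$.

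To determine $c(\lambda)$ I would apply the difference operator $L_{pj}(z)$, of bi-degree $(\epsilon_p,\epsilon_j)$, to both expressions for $v$. Using that $\SG_{k,l}$ is a morphism of $\CE$-modules with $\SG_{k,l}(\omega_{k,l+1}\wtimes\omega_{l+1,l})=\omega_{kl}$, and the computation of $L_{pj}(z)(\omega_{k,l+1}\wtimes\omega_{l+1,l}^{jp})$ carried out in the proof of Lemma \ref{comp: sl2 tensor product} (where only one summand of the tensor coproduct survives, by weight reasons), one obtains $L_{pj}(z)\,v$ as an explicit nonzero meromorphic multiple of $\omega_{kl}$. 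On the other hand $L_{pj}(z)(c(\lambda)\omega_{kl}^{jp})=c(\lambda+\hbar\epsilon_j)\,L_{pj}(z)\omega_{kl}^{jp}$, and $L_{pj}(z)\omega_{kl}^{jp}$ is given by Lemma \ref{comp: sl2}, again a nonzero multiple of $\omega_{kl}$. Comparing the two coefficients of $\omega_{kl}$ one solves for $c(\lambda+\hbar\epsilon_j)$, hence for $c(\lambda)$ via the substitution $\lambda\mapsto\lambda-\hbar\epsilon_j$ (under which $\lambda_{jq}\mapsto\lambda_{jq}-\hbar$); the outcome is
$$ c(\lambda)=\frac{\theta(\hbar)}{\theta((k-l)\hbar)}\prod_{r<q\neq p}\frac{\theta(\lambda_{jq}+\hbar)}{\theta(\lambda_{jq}+(k-l)\hbar)}\in\BM^{\times}, $$
the invertibility being clear since $\theta(\hbar),\theta((k-l)\hbar)\neq 0$ (recall $k-l\in\BZ_{>0}$ and $\hbar\notin\mathbb{Q}+\mathbb{Q}\tau$). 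Then $\omega_{kl}^{jp}=c(\lambda)^{-1}v$, and plugging this into the defining relation of $\omega_{kl}^{jp}$ yields the claimed identity with $\Bb_{jp}^{(l)}(k,z;\lambda)=\frac{\theta(z+(k-1)\hbar+\lambda_{jp})}{\theta(z+l\hbar)}\,c(\lambda)^{-1}$, which is exactly the displayed expression.

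I do not expect a genuine obstacle here; the only delicate point is the bookkeeping of the dynamical shifts — moving meromorphic scalars past $L_{pj}(z)$ and through the dynamical tensor product $\wtimes$, and matching the arguments of the $\theta$-functions accordingly — and all the inputs needed are precisely Lemmas \ref{comp: sl2} and \ref{comp: sl2 tensor product}.
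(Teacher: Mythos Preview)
Your proof is correct and takes a genuinely different route from the paper's. The paper computes $L_{jp}(z)(\omega_{k,l+1}\wtimes\omega_{l+1,l})$ via the coproduct, obtaining a combination of $\omega_{k,l+1}\wtimes\omega_{l+1,l}^{jp}$ and $\omega_{k,l+1}^{jp}\wtimes\omega_{l+1,l}$; it then adds a multiple of the kernel vector $x$ from Lemma~\ref{comp: sl2 tensor product} to cancel the second term, and is left with a coefficient $\Bb$ which is a \emph{sum} of two products of theta functions. Simplifying this sum into the displayed product requires a double-periodicity/zero-counting argument. Your approach instead determines the proportionality constant $c(\lambda)$ between $\SG_{k,l}(\omega_{k,l+1}\wtimes\omega_{l+1,l}^{jp})$ and $\omega_{kl}^{jp}$ by applying the \emph{lowering} operator $L_{pj}(z)$ and taking the ratio of the two formulas already sitting in the proofs of Lemmas~\ref{comp: sl2} and~\ref{comp: sl2 tensor product}; the $z$-dependent factors cancel and $c(\lambda)$ drops out directly in factored form, so no theta identity is needed. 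Both arguments use exactly the same lemmas as input; yours is more economical for obtaining the explicit product form of $\Bb_{jp}^{(l)}$, while the paper's stays on the $L_{jp}$ side throughout and exhibits the kernel-element trick that recurs elsewhere in the inductive-system analysis.
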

\begin{proof}
The idea is similar to \cite[Lemma 7.6]{Z2}. We compute $L_{jp}(z) (\omega_{k,l+1} \wtimes \omega_{l+1,l})$.
 As in the proof of Lemma \ref{comp: sl2 tensor product}, only two terms survive:
\begin{align*}
& L_{jp}(z)(\omega_{k,l+1} \wtimes \omega_{l+1,l}) = L_{jj}(z) \omega_{k,l+1} \wtimes L_{jp}(z) \omega_{l+1,l} + L_{jp}(z) \omega_{k,l+1} \wtimes L_{pp}(z) \omega_{l+1,l} \\
&=  \frac{\theta(z+k \hbar)}{\theta(z+(l+1)\hbar)} \prod_{q>r} \frac{\theta(\lambda_{jq}+(k-l)\hbar)}{\theta(\lambda_{jq}+\hbar)}  \omega_{k,l+1}\ \wtimes\ \frac{\theta(z+l\hbar+\lambda_{jp})}{\theta(z+l\hbar)} \omega_{l+1,l}^{jp} \\
& \quad\ +\  \frac{\theta(z+(k-1)\hbar+\lambda_{jp})}{\theta(z+(l+1)\hbar)} \omega_{k,l+1}^{jp}\ \wtimes\  \omega_{l+1,l} \\
&=\ \Be_{jp}^{(l)}(k,z;\lambda) (\omega_{k,l+1} \wtimes \omega_{l+1,l}^{jp} ) \ +\ \frac{\theta(z+k\hbar+\lambda_{jp})}{\theta(z+(l+1)\hbar)} ( \omega_{k,l+1}^{jp}\wtimes \omega_{l+1,l}).
\end{align*}
Here $\Be_{jp}^{(l)}(k,z;\lambda)$ is the following meromorphic function of $(k,z,\lambda) \in \BC \times \BC \times \Hlie$:
$$\frac{\theta(z+k \hbar)\theta(z+l\hbar+\lambda_{jp})}{\theta(z+(l+1)\hbar)\theta(z+l\hbar)} \frac{\theta(\lambda_{jp}+(k-l-1)\hbar)}{\theta(\lambda_{jp})}  \prod_{r<q\neq p} \frac{\theta(\lambda_{jq}+(k-l)\hbar)}{\theta(\lambda_{jq}+\hbar)}.$$
Set $x:= \Ba_{jp}^{(l)}(k;\lambda)(\omega_{k,l+1} \wtimes \omega_{l+1,l}^{jp}) - \omega_{k,l+1}^{jp} \wtimes \omega_{l+1,l} $, which is in the kernel of $\SG_{k,l}$ by Lemma \ref{comp: sl2 tensor product}. It follows that for any $g(\lambda) \in \BM$ we have
$$ L_{jp}(z)\omega_{kl} = L_{jp}(z) \SG_{k,l}(\omega_{k,l+1}\wtimes \omega_{l+1,l}) =  \SG_{k,l}(L_{jp}(z)(\omega_{k,l+1} \wtimes \omega_{l+1,l}) + g(\lambda) x).  $$
Let us fix $g(z;\lambda) := \frac{\theta(z+k\hbar+\lambda_{jp})}{\theta(z+(l+1)\hbar)}$. Then $L_{jp}(z)(\omega_{k,l+1} \wtimes \omega_{l+1,l}) + g(z;\lambda) x$ is proportional to $\omega_{k,l+1} \wtimes \omega_{l+1,l}^{jp}$ and $L_{jp}(z) \omega_{kl} = \SG_{k,l}(\omega_{k,l+1} \wtimes \omega_{l+1,l}^{jp}) \times \Bb_{jp}^{(l)}(k,z;\lambda)$ where
\begin{align*}
\Bb_{jp}^{(l)}(k,z;\lambda) &= \Be_{jp}^{(l)}(k,z;\lambda) + g(z;\lambda) \Ba_{jp}^{(l)}(k;\lambda) \\
&=\Bb(k,z;\lambda) \times   \prod_{r<q\neq p} \frac{\theta(\lambda_{jq}+(k-l)\hbar)}{\theta(\lambda_{jq}+\hbar)}, \\
\Bb(k,z;\lambda) & := \frac{\theta(z+k \hbar)\theta(z+l\hbar+\lambda_{jp})}{\theta(z+(l+1)\hbar)\theta(z+l\hbar)} \frac{\theta(\lambda_{jp}+(k-l-1)\hbar)}{\theta(\lambda_{jp})} \\
&\quad \ + \ \frac{\theta(z+k\hbar+\lambda_{jp})\theta((k-l-1)\hbar)\theta(\lambda_{jp}-\hbar)}{\theta(z+(l+1)\hbar)\theta(\lambda_{jp})\theta(\hbar)}.
\end{align*}
$\Bb(k,z;\lambda)$ viewed as an entire function of $k$, satisfies the same double periodicity as $\theta(k\hbar)\theta(k\hbar + z +\lambda_{jp}-(l+1)\hbar)$.  One checks that $\Bb(l,z;\lambda) = 0$. This implies 
$$ \Bb(k,z;\lambda) = \theta(k \hbar + z- \hbar+\lambda_{jp})\theta(k\hbar-l\hbar) f(z;\lambda) $$
where $f(z;\lambda)$ is a meromorphic function of $(z;\lambda) \in \BC \times \Hlie$ independent of $k$. Now setting $k\hbar = - z$, we obtain $f(z;\lambda) = \frac{1}{\theta(z+l\hbar)\theta(\hbar)}$. 
\end{proof}

\begin{cor}  \label{cor: inductive: L < r}
Let $1\leq i,j \leq N$ with $j \leq r$. For $k - 1 > l$ and $x \in W_{l,a}^{(r)}$:
\begin{multline}   \label{inductive: L < r}
\ L_{ji}(z) F_{k,l}(x) = F_{k,l} \frac{\theta(z+k\hbar)}{\theta(z+l\hbar)} \mol\left(\prod_{q=r+1}^N \frac{\theta(\lambda_{jq}+(k-l+1)\hbar)}{\theta(\lambda_{jq}+\hbar)}\right)    L_{ji}(z)x \\
\ +\  F_{k,l+1}\SF_{l+1,l}\left(\sum_{p=r+1}^N   \omega_{l+1,l}^{jp}\ \wtimes\ \mol\left(\Bb_{jp}^{(l)}(k,z;\lambda)\right) L_{pi}(z) x\right).
\end{multline}
\end{cor}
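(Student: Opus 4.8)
The plan is to compute $L_{ji}(z) F_{k,l}(x)$ directly from the definitions, reducing everything to the action of $L_{ji}(z)$ on $\omega_{kl}$ inside $Z_{kl} \wtimes W_{l,a}^{(r)}$ and then pushing forward through $\SF_{k,l}$. Recall $F_{k,l}(x) = \SF_{k,l}(\omega_{kl} \wtimes x)$, so by the $\CE$-linearity of $\SF_{k,l}$ and the coproduct formula \eqref{def: coproduct} we have
$$
L_{ji}(z) F_{k,l}(x) = \SF_{k,l}\Big( \sum_{q=1}^N L_{jq}(z) \omega_{kl}\ \wtimes\ L_{qi}(z) x \Big).
$$
Because $j \leq r$, Theorem \ref{thm: q-char evaluation} (as used already in Lemma \ref{lem: injectivity}) tells us which weights $(k-l)\varpi_r + \epsilon_q - \epsilon_j$ occur in $\wt(Z_{kl})$: only $q = j$ (giving back $\omega_{kl}$ up to the explicit diagonal factor from the $\hL_j(z)$-action, i.e.\ the $\frac{\theta(z+k\hbar)}{\theta(z+l\hbar)}$ times the product over $q > r$ of $\frac{\theta(\lambda_{jq}+(k-l+1)\hbar)}{\theta(\lambda_{jq}+\hbar)}$) and $q = p$ with $r < p \leq N$ (giving the vectors $\omega_{kl}^{jp}$ of weight $(k-l)\varpi_r + \epsilon_p - \epsilon_j$). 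All other $q$ contribute zero. This immediately splits the sum into the ``diagonal'' term and the sum over $p \in \{r+1,\dots,N\}$.

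For the diagonal term, $L_{jj}(z) \omega_{kl} = \frac{\theta(z+k\hbar)}{\theta(z+l\hbar)} \prod_{q>r} \frac{\theta(\lambda_{jq}+(k-l+1)\hbar)}{\theta(\lambda_{jq}+\hbar)}\, \omega_{kl}$ by the formula for $\hL_j(z)\omega_{kl}$ recorded just before Lemma \ref{lem: fusion k l vs k+l} together with Lemma \ref{lem: K vs L}; note that $\hL_j$ and $L_{jj}$ agree up to moment-map factors which on $\omega_{kl}$ produce exactly the displayed product. Here one must be careful to track the $\mol$ versus $\mor$ (the difference-operator shift in $\lambda$): the coefficient multiplying $\omega_{kl}$ is a function of $\lambda$, and when it gets absorbed past $\SF_{k,l}$ it becomes a $\mol$-type operator acting before $L_{ji}(z) x$. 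This produces precisely the first line of \eqref{inductive: L < r}, with the $\frac{\theta(z+k\hbar)}{\theta(z+l\hbar)}$ outside and the product over $q = r+1,\dots,N$ inside a $\mol$.

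For the off-diagonal terms, $L_{jp}(z)\omega_{kl}$ is handled by Corollary \ref{cor: sl2 tensor product}: it equals $\SG_{k,l}(\omega_{k,l+1} \wtimes \omega_{l+1,l}^{jp}) \times \Bb_{jp}^{(l)}(k,z;\lambda)$. Substituting this, the $p$-th summand becomes $\SF_{k,l}\big(\SG_{k,l}(\omega_{k,l+1} \wtimes \omega_{l+1,l}^{jp})\, \Bb_{jp}^{(l)}(k,z;\lambda)\ \wtimes\ L_{pi}(z) x\big)$, and now Eq.\eqref{equ: F G} rewrites $\SF_{k,l}(\SG_{k,l}(\omega_{k,l+1}\wtimes v) \wtimes w) = F_{k,l} \SF_{l+1,l}(v \wtimes w)$. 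Applying this with $v = \omega_{l+1,l}^{jp}$ and $w = L_{pi}(z)x$, and again moving the scalar $\Bb_{jp}^{(l)}(k,z;\lambda)$ (a function of $\lambda$) past $\SF_{l+1,l}$ and then recognizing it as the $\mol$-operator $\mol(\Bb_{jp}^{(l)}(k,z;\lambda))$ applied to $L_{pi}(z)x$, yields the second line $F_{k,l+1}\SF_{l+1,l}\big(\sum_{p} \omega_{l+1,l}^{jp} \wtimes \mol(\Bb_{jp}^{(l)}(k,z;\lambda)) L_{pi}(z)x\big)$. Here $F_{k,l+1} = \SF_{k,l+1}(\omega_{k,l+1} \wtimes -)$ is the inductive structure map, and $F_{k,l} = F_{k,l+1}F_{l+1,l}$ should be used implicitly; one checks the two occurrences of $\SF$ compose correctly via the commuting diagram \eqref{diagram}.

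The main obstacle I expect is bookkeeping of the dynamical shifts: each time a $\lambda$-dependent scalar crosses a $\wtimes$ in $Z_{kl} \wtimes W_{l,a}^{(r)}$ or crosses the difference maps $\SF$, it gets shifted by $\hbar$ times the weight of the vector to its right, so the precise arguments of the theta functions (and the distinction $\lambda$ vs.\ $\lambda + \hbar\epsilon_j$, which is why $\Ba_{jp}^{(l)}(k;\lambda+\hbar\epsilon_j)$ appeared in Lemma \ref{comp: sl2 tensor product}) must be reconciled with the statement's use of bare $\mol$. The condition $k - 1 > l$ is needed exactly so that Corollary \ref{cor: sl2 tensor product} applies. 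Everything else is a direct, if tedious, substitution; no new idea is required beyond assembling Corollaries \ref{cor: evaluation module}, \ref{cor: sl2 tensor product} and Eq.\eqref{equ: F G}.
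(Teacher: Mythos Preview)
Your proposal is correct and follows exactly the same route as the paper: expand $L_{ji}(z)F_{k,l}(x)=\SF_{k,l}(\sum_p L_{jp}(z)\omega_{kl}\wtimes L_{pi}(z)x)$, observe that only $p=j$ and $p>r$ survive by weight reasons, use the explicit formula for $L_{jj}(z)\omega_{kl}$ for the first line, and apply Corollary~\ref{cor: sl2 tensor product} together with Eq.~\eqref{equ: F G} for the second. One small remark: the right-hand side of Eq.~\eqref{equ: F G} as printed reads $F_{k,l}\SF_{l+1,l}$, which does not typecheck (the target of $\SF_{l+1,l}$ is $W_{l+1,a}^{(r)}$); it should be $F_{k,l+1}\SF_{l+1,l}$, and you correctly land on $F_{k,l+1}$ in your final expression.
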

\begin{proof}
Consider
$ L_{ji}(z) F_{k,l}(x) = \SF_{k,l}(\sum_{p=1}^N L_{jp}(z) \omega_{kl} \wtimes L_{pi}(z)x)$. As in the proof of Lemma \ref{comp: sl2 tensor product},
$L_{jp}(z) \omega_{kl} = 0$ if $p \notin \{j,r+1,r+2,\cdots,N\}$. For $p = j$, we obtain the first row of Eq.\eqref{inductive: L < r}, while for $r < p \leq N$, Corollary \ref{cor: sl2 tensor product} and Eq.\eqref{equ: F G} with $v = \omega_{l+1,l}^{jp}$ give the second row.
\end{proof}

Fix weight bases $\mathcal{B}_l$ of $W_{l,a}^{(r)}$ for $l > 0$ uniformly so that $F_{k,l}(\mathcal{B}_l) \subseteq \mathcal{B}_k$.

We view $\Bb_{jp}^{(l)}(c,z;\lambda)$  in Corollary \ref{cor: sl2 tensor product} as  a meromorphic function of $(c,z,\lambda) \in \BC^2 \times \Hlie$. For $1\leq i,j \leq N,\ l > 0$ and $c, z \in \BC$, define $\SL_{ji}^{(l)}(c,z): W_{l,a}^{(r)} \longrightarrow W_{l+1,a}^{(r)}$:
\begin{align*}
&\SL_{ji}^{(l)}(c,z)x = F_{l+1,l} L_{ji}(z)x = \frac{\theta(z+(\gamma_j+\delta_{ij}-1)\hbar + \lambda_{ji})}{\theta(z)} F_{l+1,l} t_{ij} x \quad \mathrm{for} \ j > r, \\
&\SL_{ji}^{(l)}(c,z)x =  \frac{\theta(z+c\hbar)}{\theta(z+l\hbar)} \prod_{q=r+1}^N \frac{\theta(\lambda_{jq}+(c+\gamma_{jq}+\delta_{ij}-\delta_{iq})\hbar)}{\theta(\lambda_{jq}+(l+\gamma_{jq}+\delta_{ij}-\delta_{iq})\hbar)}  F_{l+1,l}  L_{ji}(z)x   \\
&\quad \quad + \sum_{p=r+1}^N \Bb_{jp}^{(l)}(c,z;\lambda+(\gamma+l\varpi_r+\epsilon_i-\epsilon_p)\hbar) \SF_{l+1,l}(\omega_{l+1,l}^{jp} \wtimes\  L_{pi}(z)x) \quad \mathrm{for}\ j \leq r.
\end{align*}
Here $x \in W_{l,a}^{(r)}[\gamma+l\varpi_r]$ and $\delta_{ij}$ is the usual Kronecker symbol.   Corollary \ref{cor: evaluation module} applied to the evaluation module $W_{l,a}^{(r)} \cong V_{l\varpi_r}(0)$ indicates that for $b' \in \mathcal{B}_{l+1}$ and $b \in \mathcal{B}_l$:
\begin{itemize}
\item[] $\SL_{ji}^{(l)}(c,z)$ is a difference map of bi-degree $( \epsilon_j-\varpi_r,\epsilon_i)$. Its matrix entry $[\SL_{ji}^{(l)}]_{b'b}(c,z;\lambda)$ is a meromorphic function of $(c,z,\lambda) \in \BC^2 \times \Hlie$. Moreover,  $\theta(z)\theta(z+l\hbar)[\SL_{ji}^{(l)}]_{b'b}(c,z;\lambda)$ is entire on $(c,z)$ for generic $\lambda$.
\end{itemize}
As a unification of Eqs.\eqref{inductive: L > r} and \eqref{inductive: L < r}, we have 
\begin{equation}  \label{inductive}
L_{ji}(z) F_{k,l} = F_{k,l+1} \SL_{ji}^{(l)}(k,z) \quad \mathrm{for}\ k > l+1.
\end{equation}

For $k \in \BZ_{>0}$ and $z \in \BC$ let $\Xi(c;k,z)$ be the set of entire functions $F(c)$ of $c \in \BC$ 
with the following double periodicity:
$$ F(c + \hbar^{-1}) = (-1)^k F(c),\quad F(c + \tau\hbar^{-1}) = (-1)^k e^{-k\BI \pi \tau - 2k \BI \pi c\hbar - 2\BI \pi z} F(c). $$
A typical example is $\theta(c\hbar)^{k-1}\theta(c\hbar+z)$. Such a function is called {\it homogeneous}.  If $f(c),g(c) \in \Xi(c;k,z)$, then we write $f(c) \approx g(c)$.

Note that $\Xi(c;k,z) \Xi(c;k',z') \subseteq \Xi(c;k+k',z+z')$.

\begin{lem}  \label{lem: asymptotic property}
Let $b \in \mathcal{B}_l$ be of weight $\gamma+l\varpi_r$ and $b' \in \mathcal{B}_{l+1}$. For $j > r$ the matrix entry  $[\SL_{ji}^{(l)}]_{b'b}(c,z;\lambda)$ is independent of $c$. For $j \leq r$ as  entire functions of $c$
$$[\SL_{ji}^{(l)}]_{b'b}(c,z;\lambda)  \approx  \theta(z+c\hbar) \prod_{q=r+1}^N \theta(\lambda_{jq} + (c+\gamma_{jq}+\delta_{ij}-\delta_{iq})\hbar). $$
Moreover, $\theta(z)[\SL_{ji}^{(l)}]_{b'b}(c,z;\lambda)$ is an entire function of $(c,z)$ for generic $\lambda$.
\end{lem}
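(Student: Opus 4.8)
The plan is to establish the three assertions by unwinding the explicit formulas defining $\SL_{ji}^{(l)}(c,z)$ and tracking where the parameter $c$ enters, using the entirety/periodicity bookkeeping encoded in the $\approx$ relation and the space $\Xi(c;k,z)$.

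\textbf{Case $j > r$.} Here the defining formula is $\SL_{ji}^{(l)}(c,z)x = F_{l+1,l}L_{ji}(z)x = \frac{\theta(z+(\gamma_j+\delta_{ij}-1)\hbar+\lambda_{ji})}{\theta(z)}F_{l+1,l}t_{ij}x$, which visibly contains no $c$ at all. So I would simply observe that $[\SL_{ji}^{(l)}]_{b'b}(c,z;\lambda)$ is obtained by expanding $F_{l+1,l}t_{ij}b$ in the basis $\mathcal{B}_{l+1}$ and multiplying by the stated scalar; the restriction map $F_{l+1,l}$ and the action of $t_{ij}$ on the fixed finite-dimensional space $W_{l,a}^{(r)}$ are independent of $c$, hence so is the matrix entry. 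The statement that $\theta(z)[\SL_{ji}^{(l)}]_{b'b}(c,z;\lambda)$ is entire in $(c,z)$ for generic $\lambda$ is then immediate from the formula: after clearing the $\theta(z)$ denominator one is left with $\theta(z+(\gamma_j+\delta_{ij}-1)\hbar+\lambda_{ji})$ times a $z$-independent (and $c$-independent) coefficient, which is entire in $z$ for all but a discrete set of $\lambda$ (actually for all $\lambda$, but ``generic'' suffices).

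\textbf{Case $j \le r$.} This is the main work. I would split the defining sum into its ``diagonal'' piece (the $p=j$-type term, first line) and the $\sum_{p=r+1}^N$ piece. For the first piece, the $c$-dependence is localized in the factor $\frac{\theta(z+c\hbar)}{\theta(z+l\hbar)}\prod_{q=r+1}^N\frac{\theta(\lambda_{jq}+(c+\gamma_{jq}+\delta_{ij}-\delta_{iq})\hbar)}{\theta(\lambda_{jq}+(l+\gamma_{jq}+\delta_{ij}-\delta_{iq})\hbar)}$; the remaining factor $F_{l+1,l}L_{ji}(z)x$ expanded in $\mathcal{B}_{l+1}$ is a $c$-independent meromorphic function of $(z,\lambda)$. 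Treating the numerator $\theta(z+c\hbar)\prod_{q=r+1}^N\theta(\lambda_{jq}+(c+\cdots)\hbar)$ as an entire function of $c$, I would check directly from the quasi-periodicity $\theta(u+1)=-\theta(u)$, $\theta(u+\tau)=-e^{-\BI\pi\tau-2\BI\pi u}\theta(u)$ that it lies in $\Xi(c;N-r,\,z+\sum_q\lambda_{jq}+\cdots)$ — actually I should be more careful: the target homogeneity class in the Lemma is $\theta(z+c\hbar)\prod_{q=r+1}^N\theta(\lambda_{jq}+(c+\gamma_{jq}+\delta_{ij}-\delta_{iq})\hbar)$ itself, so what I really need is that the product of all the \emph{other} factors in this piece is $c$-independent and that the stated product is exactly the $c$-dependent numerator; this matches term by term.

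For the second piece, the coefficient is $\sum_{p=r+1}^N \Bb_{jp}^{(l)}(c,z;\lambda+(\gamma+l\varpi_r+\epsilon_i-\epsilon_p)\hbar)$ times matrix entries of $\SF_{l+1,l}(\omega_{l+1,l}^{jp}\wtimes L_{pi}(z)x)$, the latter being $c$-independent. From Corollary~\ref{cor: sl2 tensor product}, $\Bb_{jp}^{(l)}(c,z;\lambda) = \frac{\theta(z+(c-1)\hbar+\lambda_{jp})\theta((c-l)\hbar)}{\theta(z+l\hbar)\theta(\hbar)}\prod_{r<q\ne p}\frac{\theta(\lambda_{jq}+(c-l)\hbar)}{\theta(\lambda_{jq}+\hbar)}$; the key point established in the proof of that corollary is $\Bb(c,z;\lambda)=\theta(c\hbar+z-\hbar+\lambda_{jp})\theta(c\hbar-l\hbar)f(z;\lambda)$, i.e. the $c$-dependent part of $\Bb_{jp}^{(l)}$ is $\theta(c\hbar+z-\hbar+\lambda_{jp})\theta(c\hbar-l\hbar)\prod_{r<q\ne p}\theta(\lambda_{jq}+(c-l)\hbar)$, which again counts $N-r$ theta factors in $c$. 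I would then invoke $\Xi(c;k,z)\Xi(c;k',z')\subseteq\Xi(c;k+k',z+z')$ together with the observation that shifting the argument of a $\theta$-factor by an integer combination of $l\hbar$, $\hbar$ and $\lambda$-terms changes its homogeneity class only by the $z$-shift recorded in the statement (and the $\delta$-corrections are exactly the ones written), so that every summand — hence the whole sum — lies in the common class $\Xi(c;N-r,\,z+\sum_{q=r+1}^N\lambda_{jq}+\text{shifts})$ matching $\theta(z+c\hbar)\prod_{q=r+1}^N\theta(\lambda_{jq}+(c+\gamma_{jq}+\delta_{ij}-\delta_{iq})\hbar)$. The fact that $\Xi$ is a $\BC$-vector space (same quasi-periods) makes the sum over $p$ land in a single class, which is what permits the $\approx$.

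Finally, for the entirety of $\theta(z)[\SL_{ji}^{(l)}]_{b'b}(c,z;\lambda)$ in $(c,z)$: each $\Bb_{jp}^{(l)}$ carries a single $\theta(z+l\hbar)$ in the denominator and the diagonal piece carries $\frac{1}{\theta(z+l\hbar)}$ as well; but $\SL_{ji}^{(l)}$ already has the stated property ``$\theta(z)\theta(z+l\hbar)[\SL_{ji}^{(l)}]_{b'b}$ entire in $(c,z)$ for generic $\lambda$'' from the displayed bullet preceding the Lemma. The improvement to $\theta(z)[\SL_{ji}^{(l)}]_{b'b}$ being entire then follows from the $\approx$-computation: the homogeneity class we just identified has no pole at $z=-l\hbar$ (the numerator theta-factors are entire), so the apparent pole at $z+l\hbar=0$ must cancel, leaving only the $\theta(z)$ in the denominator of the $j>r$-type normalization and of the first-line coefficient — which is removed by the single factor $\theta(z)$. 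I expect the main obstacle to be the careful bookkeeping in the $j\le r$ case: verifying that after all the argument-shifts by $\gamma$, $l\varpi_r$, $\epsilon_i-\epsilon_p$ the various summands genuinely share one homogeneity class (so the sum does not degenerate the $\approx$), and pinning down the $\delta_{ij}-\delta_{iq}$ corrections precisely; everything else is a routine application of the theta quasi-periodicity and the multiplicativity of $\Xi$.
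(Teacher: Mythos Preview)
Your treatment of the $j>r$ case and the $\approx$-class computation for $j\le r$ is essentially the paper's argument: split into the diagonal term and the $\sum_{p>r}\Bb_{jp}^{(l)}$ terms and check that each lands in the common class $\Xi(c;N-r+1,\,z+\sum_{q>r}(\lambda_{jq}+(\gamma_{jq}+\delta_{ij}-\delta_{iq})\hbar))$. (Minor slip: the count of $c$-theta factors is $N-r+1$, not $N-r$; the $\Bb_{jp}^{(l)}$ piece has $\theta(z+(c-1)\hbar+\lambda_{jp})$, $\theta((c-l)\hbar)$, and $N-r-1$ factors from the product.)

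There is, however, a genuine gap in your argument for the ``Moreover'' clause. Knowing that the matrix entry lies in a prescribed $\Xi(c;\cdot,\cdot)$ class is a statement about quasi-periodicity in $c$ only; it says nothing about regularity in $z$. A function such as $\theta(z+c\hbar)\cdot h(\lambda)/\theta(z+l\hbar)$ sits in exactly the same $c$-class as $\theta(z+c\hbar)$ yet has a pole at $z=-l\hbar$. So the inference ``the homogeneity class has no pole at $z=-l\hbar$, hence the pole must cancel'' is invalid.

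The paper supplies the missing input via Eq.~\eqref{inductive}: for integers $k>l+1$ one has $F_{k,l+1}\SL_{ji}^{(l)}(k,z)b = L_{ji}(z)F_{k,l}b$, and the right-hand side, computed in the evaluation module $W_{k,a}^{(r)}$, is $\theta(z)^{-1}\theta(z+\lambda_{ji}+(k+\gamma_j+\delta_{ij}-1)\hbar)\,t_{ij}F_{k,l}b$, which is regular at $z=-l\hbar$. Writing the entry as $E(c,z;\lambda)g_2(\lambda)/(\theta(z)\theta(z+l\hbar))$ with $E$ entire, this forces $E(k,-l\hbar;\lambda)=0$ for all such $k$. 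The paper then also tracks the $z$-periodicity: $E(c,\cdot;\lambda)\in\Xi(z;2,\cdot)$, so the vanishing at $z=-l\hbar$ (for all $c$, using the $c$-homogeneity and Lemma~\ref{lem: elliptic homogeneous functions}) yields the factorization $E(c,z;\lambda)=\theta(z+l\hbar)\theta(z+(c+\gamma_j+\delta_{ij}-1)\hbar+\lambda_{ji})D(c;\lambda)g_3(\lambda)$, from which both the entirety of $\theta(z)[\SL_{ji}^{(l)}]_{b'b}$ and the $\approx$-statement follow at once. Your proposal never invokes Eq.~\eqref{inductive} or the $z$-degree-two structure, and without them the pole at $z=-l\hbar$ cannot be ruled out.
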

\begin{proof}
In the case $j > r$, Corollary \ref{cor: evaluation module} applied to $W_{l,a}^{(r)} \cong S_{l\varpi_r,0}$, the matrix entry is of the form $\frac{\theta(z+(\gamma_j+\delta_{ij}-1)\hbar + \lambda_{ji})}{\theta(z)} g_1(\lambda)$ for $g_1(\lambda) \in \BM$.
Assume $j \leq r$. By Corollary \ref{cor: sl2 tensor product} the matrix entry is of the form  $\frac{E(c,z;\lambda)g_2(\lambda)}{\theta(z)\theta(z+l\hbar)}$, where $g_2(\lambda) \in \BM$ and $E(c,z;\lambda)$ is an entire function of $(c,z,\lambda) \in \BC \times \BC \times \Hlie$. As functions of $z,c$ resp., we have
\begin{align*}
    E(c,z\hbar;\lambda) &\in \Xi(z;2,(c+l+\gamma_j+\delta_{ij}-1)\hbar + \lambda_{ji} ), \\
    E(c,z;\lambda) &\in \Xi(c;N-r+1,\sum_{q=r+1}^N(\lambda_{jq} + (\gamma_{jq}+\delta_{ij}-\delta_{iq})\hbar) + z).
\end{align*}
On the other hand, for $k > l+1$ we have by Corollary \ref{cor: evaluation module} and Eq.\eqref{inductive},
$$ F_{k,l+1} \SL_{ji}^{(l)}(k,z) b = L_{ji}(z) F_{k,l}v = \frac{\theta(z+\lambda_{ji}+(k+\gamma_j+\delta_{ij}-1)\hbar)}{\theta(z)} t_{ij} F_{k,l}b. $$
The right-hand side as a function of $z$ is regular at $z = -l\hbar$, so is any of the coefficients of the left-hand side $\frac{E(k,z;\lambda)}{\theta(z)\theta(z+l\hbar)} g_2(\lambda)$. This forces $E(k,-l\hbar;\lambda) = 0$ and 
$$ E(c,z;\lambda) = \theta(z+l\hbar)\theta(z+(c+\gamma_j+\delta_{ij}-1)\hbar+\lambda_{ji}) D(c;\lambda)g_3(\lambda)  $$
where $g_3(\lambda) \in \BM$ and $D(c;\lambda)$ is an entire function of $(c,\lambda)$. Applying the double periodicity with respect to $c$ once more, we obtain the desired result.
\end{proof}
\begin{lem}  \label{lem: elliptic homogeneous functions}
Let $f(c)$ be a homogeneous entire function. If $f(k) = 0$ for infinitely many integers $k$, then  $f(c)$ is identically zero.
\end{lem}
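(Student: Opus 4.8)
The plan is to reduce the statement to the elementary fact that a nonzero entire function which is quasi-periodic with respect to a lattice has only finitely many zeros modulo that lattice, combined with the non-resonance hypothesis on $\hbar$.

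First I would fix $n \in \BZ_{>0}$ and $z \in \BC$ with $f(c) \in \Xi(c;n,z)$, and argue by contradiction, assuming $f$ is not identically zero. The two functional equations defining $\Xi(c;n,z)$ have multipliers $(-1)^n$ and $(-1)^n e^{-n\BI\pi\tau - 2n\BI\pi c\hbar - 2\BI\pi z}$, both of which are nowhere-vanishing entire functions of $c$. Consequently the zero set $Z(f) = \{c \in \BC : f(c) = 0\}$ is invariant under translation by the lattice $\hbar^{-1}\Gamma = \hbar^{-1}(\BZ + \BZ\tau)$.

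Next, since $f \not\equiv 0$ the zeros of $f$ are isolated; as the closure of a fundamental parallelogram for $\hbar^{-1}\Gamma$ is compact, $Z(f)$ meets it in a finite set, and by the lattice-invariance just noted the image of $Z(f)$ in the torus $\BC/\hbar^{-1}\Gamma$ is a finite set, of cardinality $M$ say. (In fact $M = n$ by the classical computation of $\frac{1}{2\pi\BI}\oint f'/f$ over the boundary of a fundamental parallelogram, the value being forced by the multipliers, as the example $\theta(c\hbar)^{n-1}\theta(c\hbar+z)$ shows; but the crude bound $M < \infty$ is all that is needed here.)

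Finally I would invoke the hypothesis $\hbar \in \BC\setminus(\mathbb{Q} + \mathbb{Q}\tau)$: if two integers $m_1, m_2$ were congruent modulo $\hbar^{-1}\Gamma$ then $(m_1 - m_2)\hbar \in \Gamma$, which forces $\hbar \in \mathbb{Q} + \mathbb{Q}\tau$ unless $m_1 = m_2$. Hence the integers inject into $\BC/\hbar^{-1}\Gamma$, so at most $M$ of them can lie in $Z(f)$. This contradicts the assumption that $f$ vanishes at infinitely many integers, and therefore $f \equiv 0$. The only point requiring genuine care — and the closest thing to an obstacle — is this last step, where the arithmetic condition on $\hbar$ is used to guarantee that distinct integers remain distinct in the elliptic curve $\BC/\hbar^{-1}\Gamma$; everything else is standard theta-function theory.
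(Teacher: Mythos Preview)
Your proof is correct and follows essentially the same approach as the paper: both reduce to the fact that a nonzero $f\in\Xi(c;n,z)$ has only finitely many zeros modulo the lattice $\hbar^{-1}\Gamma$, and that the hypothesis $\hbar\notin\mathbb{Q}+\mathbb{Q}\tau$ forces distinct integers to be distinct in $\BC/\hbar^{-1}\Gamma$. The only cosmetic difference is that the paper invokes the classical factorization of $f$ as a finite product of shifted theta functions $\theta(c\hbar+z_j)$ to see the finiteness of zero-orbits, whereas you obtain it directly from compactness of a fundamental parallelogram; your version is marginally more self-contained but the substance is identical.
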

\begin{proof}
By definition the homogeneous entire function $f(c)$, if non-zero, can be written as a product of theta functions $\theta(c\hbar + z)$. Since $\hbar \notin \mathbb{Q} + \mathbb{Q} \tau$, each of these theta functions of $c$ can not have zeroes at infinitely many integers. 
\end{proof}

Let $W_{\infty}$ be the inductive limit of the inductive system $(W_{l,a}^{(r)}, F_{k,l})$ of vector spaces (over $\BM$), with the $F_l: W_{l,a}^{(r)} \longrightarrow W_{\infty}$ for $l > 0$ being the structural maps.

From now on fix $d \in \BC$. A vector $0 \neq w \in W_{\infty}$ is of weight $d \varpi_r + \gamma$ if there exist $l > 0$ and $w' \in W_{l,a}^{(r)}[l\varpi_r + \gamma]$ such that $w = F_l(w')$.  The weight grading is independent of the choice of $l$ because $F_{k,l}$ sends $W_{l,a}^{(r)}[l\varpi_r + \gamma]$ to $W_{k,a}^{(r)}[k\varpi_r + \gamma]$. Let $W_{\infty}^d$ denote the resulting object of $\CV$. By construction $\wt(W_{\infty}^d) \subseteq d\varpi_r + \BQ_-$, and $F_l: W_{l,a}^{(r)} \longrightarrow W_{\infty}^d$ is a difference map of bi-degree $((l-d)\varpi_r ,0)$.

Let $\gamma \in \BQ_-$. The injective maps $F_{k,l}$ together with Theorems \ref{thm: q-char evaluation} and \ref{thm: small} imply that $\dim (W_{k,a}^{(r)}[k\varpi_r+\gamma]) = d_{k\varpi_r}[k\varpi_r+\gamma]$, as $k \rightarrow \infty$, converges to an integer which is exactly $\dim (W_{\infty}^d[d\varpi_r+ \gamma])$. So $W_{\infty}^d$ is an object of $\CVf$. Our goal is to make $W_{\infty}^d$ into an $\CE$-module in category $\BGG$ with favorable q-character.  \footnote{In the affine case, the matrix entries of analogs of $\SL_{ji}^{(l)}(k,z)$ are Laurent  polynomials of $e^{k\hbar}$. Hernandez--Jimbo \cite{HJ} proved this by using elimination theorems of q-characters and then took the limit $e^{k\hbar} \rightarrow 0$ as $k \rightarrow \infty$ to obtain modules over Borel subalgebras of affine quantum groups. Later in \cite{Z2,Z3} an elementary proof of polynomiality was given based on $\mathfrak{sl}_2$-representation theory, which by taking limit $e^{k\hbar} \rightarrow e^{d \hbar}$ as $k \rightarrow \infty$ (with $d \in \BC$ a new parameter) resulted in modules over affine quantum groups. Here we adapt the second approach to the elliptic case. \label{ft: affine} }

For $1\leq i,j \leq N$ and $z \in \BC$ with $\theta(z) \neq 0$, the $\SL_{ji}^{(l)}(d,z)$ constitute a morphism of inductive system of $\BC$-vector spaces:
\[ \xymatrixcolsep{5pc}
\xymatrix{
\makebox[2em][r]{$W_{l,a}^{(r)} \ar[r]^{\SL_{ji}^{(l)}(d,z)} \ar[d]_{F_{l',l}} $} & \makebox[2em][l]{$ W_{l+1,a}^{(r)} \ar[d]^{F_{l'+1,l+1}} $} \\
\makebox[2em][r]{$ W_{l',a}^{(r)} \ar[r]^{\SL_{ji}^{(l')}(d,z)} $}  & \makebox[2em][l]{$ W_{l'+1,a}^{(r)}$}
} \quad \mathrm{for}\ l' > l.
\]
Indeed, the matrix entries of $F_{l'+1,l+1} \SL_{ji}^{(l)}(c,z)$ and $\SL_{ji}^{(l')}(c,z) F_{l',l}$, as difference maps $W_{l,a}^{(r)}\longrightarrow W_{l'+1,a}^{(r)}$, are homogeneous entire functions of $c$ with the same double periodicity by Lemma \ref{lem: asymptotic property} and are equal at all integers $c$ larger than $l' + 1$ by Eq.\eqref{inductive}.  By Lemma \ref{lem: elliptic homogeneous functions} these two maps coincide for all $c \in \BC$. Define
$$ \SL_{ji}^d(z) := \lim\limits_{\rightarrow} \SL_{ji}^{(l)}(d,z) \in \mathrm{Hom}_{\BC}(W_{\infty}^d,W_{\infty}^d). $$
For $x \in W_{\infty}^d[d\varpi_r+\gamma]$ with $x = F_l(x')$ and $x' \in W_{l,a}^{(r)}[l\varpi_r+\gamma]$, we have
\begin{equation}  \label{asymptotic def}
    \SL_{ji}^d(z)x = F_{l+1} \SL_{ji}^{(l)}(d,z)  x'.
\end{equation}
The difference maps $\SL_{ji}^{(l)}(d,z)$ and $F_{l+1}$ are of bi-degree  $(\epsilon_j-\varpi_r,\epsilon_i)$ and $((l+1-d)\varpi_r ,0)$ respectively. So $\SL_{ji}^d(z)$ is a difference operator of bi-degree $(\epsilon_j,\epsilon_i)$. 
\begin{prop}   \label{prop: asymptotic representations}
$(W_{\infty}^d, \SL_{ji}^{d}(z))$ is an $\CE$-module in category $\BGG$. Moreover, 
\begin{equation}  \label{equ: q-char asym}
\qc(W_{\infty}^d, \SL_{ji}^{d}(z)) = \Bw_{d,a}^{(r)} \times \lim_{k \rightarrow \infty} (\Bw_{k,a}^{(r)})^{-1} \qc(W_{k,a}^{(r)}).
\end{equation}
\end{prop}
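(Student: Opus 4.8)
The plan is to verify, in order: (i) the operators $\SL_{ji}^d(z)$ satisfy the RLL relations \eqref{rel: RLL explicit}, so that $(W_\infty^d,\SL_{ji}^d(z))$ is an $\CE$-module; (ii) it lies in category $\BGG$; (iii) the q-character identity \eqref{equ: q-char asym}. For (i), fix a relation \eqref{rel: RLL explicit}, the indices, the spectral parameters $z,w$, and a vector $F_l(x')$ with $x'\in W_{l,a}^{(r)}[l\varpi_r+\gamma]$. Iterating \eqref{inductive} twice and using injectivity of $F_{k,l+2}$, for every integer $k>l+2$ the relation on the genuine module $W_{k,a}^{(r)}$ reduces to the vanishing, in $W_{l+2,a}^{(r)}$, of a certain expression $E(c)(x')$ built from the composites $\SL_{pi}^{(l+1)}(c,z)\,\SL_{qj}^{(l)}(c,w)$ and the $R$-matrix coefficients shifted by the moment maps, specialized at $c=k$; and by \eqref{asymptotic def} the same expression at $c=d$ is precisely the RLL relation for the $\SL_{ji}^d$ applied to $F_l(x')$. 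By Lemma \ref{lem: asymptotic property}, the inclusion $\Xi(c;k,z)\Xi(c;k',z')\subseteq\Xi(c;k+k',z+z')$, and the fact that the $R$-matrix coefficients do not depend on $c$, the coefficient of each basis vector in $E(c)(x')$ is, after clearing a $c$-independent theta factor, a homogeneous entire function of $c$ whose quasi-periodicity type is the same for all summands on the two sides (it depends only on $\gamma$, the weight of that basis vector, the spectral parameters and $\lambda$). It vanishes at all integers $c=k>l+2$, hence identically by Lemma \ref{lem: elliptic homogeneous functions}, in particular at $c=d$. So \eqref{rel: RLL explicit} holds for the $\SL_{ji}^d(z)$; with the meromorphy from Lemma \ref{lem: asymptotic property} this gives (M1)--(M2), and $\wt(W_\infty^d)\subseteq d\varpi_r+\BQ_-$ puts $W_\infty^d$ in $\tBGG$.

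For (ii), use the bases $\mathcal B_l$ of $W_{l,a}^{(r)}$ fixed before Lemma \ref{lem: asymptotic property}; then $\mathcal B_\infty:=\bigcup_lF_l(\mathcal B_l)$ is a basis of $W_\infty^d$, and on each weight space $W_\infty^d[\mu]$, $\mu=d\varpi_r+\gamma$, it coincides with $F_{l_0}\big(\mathcal B_{l_0}\cap W_{l_0,a}^{(r)}[l_0\varpi_r+\gamma]\big)$ for $l_0$ large (weight multiplicities stabilize by Theorems \ref{thm: q-char evaluation} and \ref{thm: small}). Iterating \eqref{inductive} and \eqref{asymptotic def} gives, with one and the same formula, $\eD_l(z)F_{k,l_0}=F_{k,l_0+l}\,\widetilde{\eD}_l^{(l_0)}(k,z)$ for integer $k>l_0+l$ and $\eD_l^d(z)F_{l_0}=F_{l_0+l}\,\widetilde{\eD}_l^{(l_0)}(d,z)$, where $\widetilde{\eD}_l^{(l_0)}(c,z)$ is \eqref{def: determinant} read through the $\SL^{(\bullet)}(c,z)$. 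As the (stabilized) basis of $W_{k,a}^{(r)}[k\varpi_r+\gamma]$ admits only finitely many orderings, a pigeonhole argument produces one ordering $\tau$ making all the $\eD_l$ ($1\le l\le N$) simultaneously upper triangular on $W_{k,a}^{(r)}[k\varpi_r+\gamma]$ for infinitely many $k$; the corresponding below-diagonal entries of $\widetilde{\eD}_l^{(l_0)}(c,z)$ then vanish at infinitely many integers $c$, hence identically, in particular at $c=d$. Its $\tau$-diagonal entries are the $c=d$ specializations of the diagonal entries of $\eD_l$ on $W_{k,a}^{(r)}$, which by Lemma \ref{lem: KR q-char} and Theorem \ref{thm: q-char evaluation} are $c$-independent products of ratios of theta functions times a single factor of the shape $\theta(z+(c+\ast)\hbar)/\theta(z+\ast\hbar)$; their specialization is a non-zero meromorphic function of $z$ alone. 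After gauging away a residual $\lambda$-dependence by Lemma \ref{lem: category F}, this shows $W_\infty^d[\mu]\in\CFm$, hence $W_\infty^d\in\BGG$.

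Step (iii) falls out of the same picture: the e-weights of $W_\infty^d$ on $W_\infty^d[d\varpi_r+\gamma]$ are exactly the $c=d$ specializations of the e-weights of $W_{k,a}^{(r)}$ on $W_{k,a}^{(r)}[k\varpi_r+\gamma]$ for $k\gg 0$. By Lemma \ref{lem: KR q-char} each such e-weight is $\Bw_{k,a}^{(r)}\,m$ with $m\in\CQ_a^-$ a $k$-independent monomial, and specializing $c=d$ turns $\Bw_{k,a}^{(r)}=\Psi_{r,a+k}\Psi_{r,a}^{-1}$ into $\Bw_{d,a}^{(r)}$ while keeping $m$. Since the multiplicity of each $m$ stabilizes as $k\to\infty$, the limit $\lim_{k\to\infty}(\Bw_{k,a}^{(r)})^{-1}\qc(W_{k,a}^{(r)})$ is a well-defined element of $\CMt$ — its weights fill the single cone $\BQ_-$ with finite multiplicities — and $\qc(W_\infty^d)$ is $\Bw_{d,a}^{(r)}$ times it, which is \eqref{equ: q-char asym}.

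The hard part is step (i): establishing that the two sides of each relation \eqref{rel: RLL explicit} are quasi-periodic of the same type in the interpolation parameter, so that agreement at infinitely many integers forces agreement at $c=d$ via Lemma \ref{lem: elliptic homogeneous functions}. This requires the explicit control of matrix entries from Lemma \ref{lem: asymptotic property} together with a careful bookkeeping of weights and of the $\lambda$-shifts coming from the moment maps; it runs parallel to the verification — made just before the Proposition — that the $\SL_{ji}^{(l)}(d,z)$ form a morphism of inductive systems.
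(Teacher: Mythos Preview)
Your overall strategy---verify (M1)--(M3) by analytic continuation in a parameter $c$ from integers $k$, using Lemma \ref{lem: elliptic homogeneous functions}---is exactly the paper's, and the organization into three steps mirrors the paper's Steps I--III. The gap is in the execution of step (i), and the same issue recurs in step (ii).

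The assertion ``the $R$-matrix coefficients do not depend on $c$'' is false. The left moment map $\mol$ shifts $\lambda$ by the weight of the target vector, and that weight contains $c\varpi_r$; so the factor $R^{pq}_{mn}\bigl(z-w;\lambda+(c\varpi_r+\gamma+\epsilon_i+\epsilon_j)\hbar\bigr)$ on the left of \eqref{rel: RLL explicit} depends on $c$ whenever exactly one of $m,n$ is $\le r$ (then $(\varpi_r)_{mn}\neq 0$). In the case $m\le r<n$, the two summands $(p,q)=(m,n)$ and $(p,q)=(n,m)$ carry factors $\theta(f-\hbar)/\theta(f)$ and $\theta(f+z-w)/\theta(f)$ with $f=c\hbar+\lambda_{mn}+\cdots$, which are genuinely $c$-dependent. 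The paper handles this by a case-by-case analysis (Claim~1, Cases 1.1--1.4) showing that, despite the $c$-dependence of the $R$-matrix in the mixed cases, every summand on either side lies in the \emph{same} space $\Xi(c;k',z')$, so that the difference is homogeneous and Lemma \ref{lem: elliptic homogeneous functions} applies. The same verification is needed in your step (ii): to pass from ``vanishes at infinitely many integers'' to ``vanishes identically'' for the below-diagonal entries of your $\widetilde{\eD}_l^{(l_0)}(c,z)$, you need homogeneity in $c$, not merely that each entry is a sum of homogeneous pieces; the paper secures this via Claims~3 and~4 (tracking also the $\Theta_i\bigl(\lambda+(c\varpi_r+\gamma)\hbar\bigr)$ prefactor from \eqref{def: determinant}) and in fact proves the sharper identity $\eD_i(z)\,F_l(x')=g_i(d,z)\,F_l\bigl(\eD_i(z)\,x'\bigr)$, which yields (M3) and \eqref{equ: q-char asym} without any pigeonhole step. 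That this homogeneity check cannot be waved away is precisely the content of Question \ref{question} and the remark following it: were the answer affirmative, Claims~1,~3,~4 would indeed be unnecessary, but the paper leaves this open.
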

\begin{proof}
We need to prove Conditions (M1)--(M3) of Section \ref{ss-repr}. First (M1) follows from Eq.\eqref{asymptotic def} and from the comments before Eq.\eqref{inductive}. To prove (M2),
let $x \in W_{\infty}^d[d\varpi_r+\gamma]$ and $x' \in W_{l,a}^{(r)}[l\varpi_r+\gamma]$ such that $x = F_l(x')$. We assume $l$ so large that $W_{\infty}^d[d\varpi_r+\gamma]$ and $W_{l,a}^{(r)}[l\varpi_r+\gamma]$ have the same dimension.

\noindent {\bf Step I: Proof of (M2).}  We need to show that for $1\leq i,j,m,n \leq N$ 
\begin{multline*}   
\quad \quad \sum_{p,q} R^{pq}_{mn}(z-w;\lambda+(\epsilon_i+\epsilon_j + d\varpi_r + \gamma)\hbar) \SL_{pi}^d(z) \SL_{qj}^d(w)x \\
= \sum_{s,t} R_{st}^{ij}(z-w;\lambda) \SL_{nt}^d(w)\SL_{ms}^d(z)x  \in W_{\infty}^d.
\end{multline*}
Here at the right-hand side we have used $R_{mn}^{pq}(z;\lambda) = R_{mn}^{pq}(z;\lambda+\hbar\epsilon_p+\hbar\epsilon_q)$ to move $R$ to the left. By Eq.\eqref{asymptotic def} it is enough to prove the equation:
\begin{multline}\label{star}
 \quad \quad \sum_{p,q} R^{pq}_{mn}(z-w;\lambda+(\epsilon_i+\epsilon_j + c\varpi_r + \gamma)\hbar) \SL_{pi}^{(l+1)}(c,z) \SL_{qj}^{(l)}(c,w)x' \\
= \sum_{s,t} R_{st}^{ij}(z-w;\lambda) \SL_{nt}^{(l+1)}(c,w)\SL_{ms}^{(l)}(c,z)x' \in W_{l+2,a}^{(r)}.
\end{multline}
Let $A_1(c,z,w)$ and $A_2(c,z,w)$ denote the left-hand side and the right-hand side of this equation without $x'$.  These are difference maps $W_{l,a}^{(r)} \longrightarrow W_{l+2,a}^{(r)}$ of bi-degree $(\epsilon_m+\epsilon_n - 2\varpi_r, \epsilon_i + \epsilon_j)$, as $R_{mn}^{pq} \neq 0$ implies $\epsilon_m + \epsilon_n =  \epsilon_p + \epsilon_q$.

\medskip

\noindent {\it Claim 1.} 
For $b \in \mathcal{B}_l$ of weight $l\varpi_r+\gamma$ and $b' \in \mathcal{B}_{l+2}$, as entire functions of $c$, 
$$[A_1]_{b'b}(c,z,w;\lambda) \approx  [A_2]_{b'b}(c,z,w;\lambda).$$

This is divided into four cases. For simplicity let us drop $b',b,z,w,\lambda$ from $A_1, A_2$. 

\noindent {Case 1.1: $m,n > r$.} $A_1(c)$ and $A_2(c)$ are independent of $c$ by Lemma \ref{lem: asymptotic property}.

\noindent {Case 1.2: $m,n \leq r$.} At the left-hand side of Eq.\eqref{star} we have $\{p,q\} = \{m,n\}$ and so $R_{mn}^{pq}$ is independent of $c$. At the right-hand side $\{s,t\} = \{i,j\}$. Therefore
\begin{align*}
A_1(c) & \approx  \theta(c\hbar + z) \prod_{u>r} \theta(\lambda_{pu} + (c+\gamma_{pu}+\delta_{ip}-\delta_{iu}+\delta_{jp}-\delta_{ju}-\delta_{qp}+\delta_{qu})\hbar) \\
& \quad \times  \theta(c\hbar + w) \prod_{v>r} \theta(\lambda_{qv}+(c+\gamma_{qv}+\delta_{jq}-\delta_{jv}+\delta_{iq}-\delta_{iv})\hbar), \\
A_2(c) & \approx  \theta(c\hbar + w) \prod_{u>r} \theta(\lambda_{nu}+(c+\gamma_{nu}+\delta_{tn}-\delta_{tu}+\delta_{sn}-\delta_{su}-\delta_{mn}+\delta_{mu})\hbar) \\
& \quad \times  \theta(c\hbar + z) \prod_{v>r} \theta( \lambda_{mv}+(c+\gamma_{mv}+\delta_{sm}-\delta_{sv} + \delta_{tm}-\delta_{tv})\hbar).
\end{align*}
These formulas are deduced from Lemma \ref{lem: asymptotic property}. One needs to take into account the shifts of $\gamma, \lambda$. For example at the left-hand side of Eq.\eqref{star}, the term $\SL_{qj}$ (resp. $\SL_{pi}$) shifts $\gamma$ (resp. $\lambda$) by $\epsilon_j-\epsilon_q$ (resp. $\hbar\epsilon_i$). The right-hand sides of these two formulas lie in $\Xi(c;2+2N-2r,e)$ with $e\in \BC$ independent of the choices of $p,q,s,t$.

\noindent {Case 1.3: $m \leq r < n$.} At the right-hand side $\{s,t\} = \{i,j\}$ and
\begin{align*}
A_2(c) & \approx   \theta(c\hbar + z) \prod_{v>r} \theta( \lambda_{mv}+(c+\gamma_{mv}+\delta_{sm}-\delta_{sv} + \delta_{tm}-\delta_{tv})\hbar) \\
& \approx \theta(c\hbar + z) \prod_{v>r} \theta( \lambda_{mv}+(c+\gamma_{mv}+\delta_{im}-\delta_{iv} + \delta_{jm}-\delta_{jv})\hbar). 
\end{align*}
The last term is independent of $s,t$. On the other hand $A_1(c) = E(c) + F(c)$ where $E, F$ correspond to $(p,q) = (m,n)$ and $(p,q) = (n,m)$ respectively and so:
\begin{align*}
E(c) & \approx \frac{\theta(f-\hbar)}{\theta(f)}   \theta(c\hbar + z) \prod_{u>r} \theta( \lambda_{mu}+(c+\gamma_{mu}+\delta_{im}-\delta_{iu}+\delta_{jm}-\delta_{ju}+\delta_{nu})\hbar), \\
F(c) & \approx \frac{\theta(f+z-w)}{\theta(f)}\theta(c\hbar+w) \prod_{v>r} \theta( \lambda_{mv}+(c+\gamma_{mv}+\delta_{jm}-\delta_{jv}+\delta_{im}-\delta_{iv})\hbar). 
\end{align*}
Here $f := c\hbar + \lambda_{mn} + (\gamma_{mn} + \delta_{im}-\delta_{in}+\delta_{jm}-\delta_{jn})\hbar$. We observe easily that $A_2(c) \approx E(c) \approx F(c)$ and so $A_1(c) \approx A_2(c)$ are homogeneous. 

\noindent {Case 1.4: $n \leq r < m$.} This is parallel to the third case. 

\medskip

\noindent {\it Claim 2.} In Claim 1 equality holds for $c = k \in \BZ_{>l+2}$.

 Let us apply $F_{k,l+2}$ to Eq.\eqref{star} with $c = k$ and $x' = b$. By Eq.\eqref{inductive}:
\begin{align*}
    F_{k,l+2} \SL_{pi}^{(l+1)}(k,z) \SL_{qj}^{(l)}(k,w) = L_{pi}(z)F_{k,l+1} \SL_{qj}^{(l)}(k,w)x' = L_{pi}(z)L_{qj}(w) F_{k,l}, 
\end{align*}
and similarly $F_{k,l+2}\SL_{nt}^{(l+1)}(d,w)\SL_{ms}^{(l)}(d,z)  b = L_{nt}(w)L_{ms}(z) F_{k,l}b$. We obtain the defining relation
$RLL = LLR$  of the $\CE$-module $W_{k,a}^{(r)}$ applied to the vector $F_{k,l}(b)$. Since $F_{k,l+2}$ is injective, Eq.\eqref{star} holds for $c = k$ and $x' = b$. This proves Claim 2. 

Together with Lemma \ref{lem: elliptic homogeneous functions}, we obtain equality in Claim 1 for all $c \in \BC$. This proves Eq.\eqref{star}.

\medskip

\noindent {\bf Step II.} Let $1\leq i \leq N$. We have by Eqs.\eqref{def: determinant} and \eqref{star}:
\begin{equation}  \label{double star}
    \eD_i(z) x = \frac{\Theta_i(\lambda)}{\Theta_i(\lambda + (d\varpi_r + \gamma)\hbar)}  F_{l+i} \SD_i^{(l)}(d,z) x'.
\end{equation}
Here $\SD_i^{(l)}(c,z) = \sum_{\sigma \in \mathfrak{S}^i} T_{\sigma}(c,z)$ and  $T_{\sigma}(c,z): W_{l,a}^{(i)} \longrightarrow W_{l+i,a}^{(i)}$ for $\sigma \in \mathfrak{S}^i$ is 
$$ \mathrm{sign}(\sigma) \SL_{\sigma(N),N}^{(l+i-1)}(c,z)\SL_{\sigma(N-1),N-1}^{(l+i-2)}(c,z+\hbar) \cdots \SL_{\sigma(N-i+1),N-i+1}^{(l)}(c,z+(i-1)\hbar). $$
 Each $T_{\sigma}(c,z)$ is a difference map of bi-degree $(-\varpi_{N-i}-i\varpi_r,-\varpi_{N-i})$.
Define the meromorphic function of $(c,z) \in \BC^2$ (note that $l$ is fixed): 
$$ g(c,z) = 1\ \mathrm{if}\ i < N+1-r,\quad g(c,z) =  \prod_{p=N-i+1}^r \frac{\theta(z+(N-p+c)\hbar)}{\theta(z+(N-p+l)\hbar)} \ \mathrm{otherwise}. $$

\medskip

\noindent {\it Claim 3.} For $b \in \mathcal{B}_l$ of weight $l\varpi_r+\gamma$ and $b' \in \mathcal{B}_{l+i}$, as entire functions of $c \in \BC$,  $$ [T_{\sigma}]_{b'b}(c,z;\lambda) \approx g_i(c,z) \Theta_i(\lambda + (c\varpi_r + \gamma)\hbar).$$

The idea is the same as Claim 1, based on Lemma \ref{lem: asymptotic property}. 
If $N-i+1 > r$, then  $T_{\sigma}^b(c,z;\lambda), \Theta_i(\lambda + (c\varpi_r + \gamma)\hbar)$ are independent of $c$, and we are done. 

Assume $N-i+1 \leq r$. By Eq.\eqref{def: Vandermond} and Lemma \ref{lem: asymptotic property},
\begin{align*}
    &\Theta_i(\lambda + (c\varpi_r + \gamma)\hbar) \approx \prod_{p=N-i+1}^r \prod_{u=r+1}^N \theta(\lambda_{pu} + (c+\gamma_{pu})\hbar), \\
    & [T_{\mathrm{Id}}]_{b,b'}(c,z;\lambda) \approx \prod_{p=N-i+1}^r \theta(z+(c+N-p)\hbar) \prod_{u=r+1}^N \theta(\lambda_{pu}^{(p)} + (c+\gamma_{pu}+1)\hbar).
\end{align*}
Here $\lambda^{(p)} = \lambda + \hbar \sum_{v=p+1}^N \epsilon_v$ and so $\lambda_{pu}^{(p)} = \lambda_{pu} - \hbar$ for $p \leq r < u$. The case $\sigma = \Id$ in Claim 3 is now obvious. It remains to show $[T_{\sigma}]_{b'b}(c,z;\lambda) \approx [T_{\sigma'}]_{b'b}(c,z;\lambda)$ for all $\sigma, \sigma' \in \mathfrak{S}^i$. One can assume $\sigma' = \sigma s_j$ where $s_j = (j,j+1)$ is a simple transposition with $N-i+1 \leq j < N-1$.  Let us define
$$p := \sigma(j+1),\ q := \sigma(j),\quad l' := l+i+j-1-N,\quad w := z + (N-j)\hbar$$ 
Then we have the decomposition of difference maps
\begin{align*}
    T_{\sigma}(c,z) &= \mathrm{sign}(\sigma)A(c,z)  U_{pq}(c,w)  B(c,z),\\  
    T_{\sigma'}(c,z)&= \mathrm{sign}(\sigma)A(c,z)  U_{qp}(c,w)  B(c,z).
\end{align*}
The difference maps $A, B, U$ are defined by (descending order in the products)
\begin{align*}
    A(c,z) &=  \prod_{u=N}^{j+2} \SL_{\sigma(u),u}^{(l+i-1-N+u)}(c,z+(N-u)\hbar): W_{l'+2,a}^{(r)} \longrightarrow W_{l+i-1,a}^{(r)},\\  
    B(c,z)&= \prod_{u=j-1}^{N-i+1} \SL_{\sigma(u),u}^{(l+i-1-N+u)}(c,z+(N-u)\hbar): W_{l,a}^{(r)} \longrightarrow W_{j-1,a}^{(r)}, \\
 U_{pq}(c,w) &=  \SL_{p,j+1}^{(l'+1)}(c,w-\hbar) \SL_{qj}^{(l')}(c,w): W_{l',a}^{(r)} \longrightarrow W_{l'+2,a}^{(r)}. 
 \end{align*}
 Flipping $p,q$ one gets $U_{qp}$. Now $[T_{\sigma}]_{b'b}(c,z;\lambda) \approx [T_{\sigma'}]_{b'b}(c,z;\lambda)$ is a consequence of the following claim.

\medskip

\noindent {\it Claim 4.} For $y \in \mathcal{B}_{l'}$ of weight $l'\varpi_r + \eta$ and $y' \in \mathcal{B}_{l'+2}$, as entire functions of $c$,  $$[U_{pq}]_{y'y}(c,w;\lambda) \approx [U_{qp}]_{y'y}(c,w;\lambda).$$

If $p,q \leq r$, then by Lemma \ref{lem: asymptotic property} (setting $\eta' = \eta + \epsilon_j - \epsilon_q$ and $\lambda' = \lambda + \hbar \epsilon_{j+1}$)
\begin{align*}
    [U_{pq}]_{y'y}(c,w;\lambda) \approx &\ \theta(w+(c-1)\hbar)  \prod_{u=r+1}^N \theta(\lambda_{pu} + (c+ \eta_{pu}'+\delta_{p,j+1}-\delta_{u,j+1})\hbar)  \\
    &  \times \theta(w+c\hbar) \prod_{v=r+1}^N \theta(\lambda_{qv}' + (c+\eta_{qv}+\delta_{jq}-\delta_{jv})\hbar).
\end{align*}
We have $U_{pq}^{b'}(c,w;\lambda) \in \Xi(c;2N-2r+2,e)$ with $e = e(p,q)$ symmetric on $p,q$. So $[U_{pq}]_{y'y}(c,w;\lambda) \approx [U_{qp}]_{y'y}(c,w;\lambda)$. 

The other cases of $p,q$ are proved in the same way as in Claim 1.  
\medskip

\noindent {\bf Step III: Proof of (M3).} Let $k > l+i$. Notice that $\eD_i(z) \omega_{kl} = g_i(k,z) \omega_{kl}$. From the proof of Lemma \ref{lem: inductive vs K} and from Eqs.\eqref{inductive} and \eqref{double star} we get 
$$  g_i(k,z)  F_{k,l} \eD_i(z) x' = \eD_i(z) F_{k,l}(x') = F_{k,l+i} \frac{\Theta_i(\lambda)}{\Theta_i(\lambda + (k\varpi_r + \gamma)\hbar)} \SD_i^{(l)}(k,z) x'.  $$
Applying $F_{k}$ to this identity and multiplying $\Theta_i(\lambda + (k\varpi_r + \gamma)\hbar)$ we have 
$$ \Theta_i(\lambda + (k\varpi_r + \gamma)\hbar)g_i(k,z)  F_{l} \eD_i(z) x' = \Theta_i(\lambda) F_{l+i}\SD_i^{(l)}(k,z) x'\quad \mathrm{for}\ k > l+i.$$
Both sides after taking coefficients with respect to a basis of $W_{\infty}^d[d\varpi_r+\gamma]$ can be viewed as entire functions of $k \in \BC$, and they satisfy the same double periodicity by Claim 3. By Lemma \ref{lem: elliptic homogeneous functions}, the above identity holds for all $k \in \BC$. Taking $k = d$, by Eq.\eqref{double star}, we obtain $\eD_i(z) x =  g_i(d,z) F_l \eD_i(z)x' $.  

Let $B$ be a basis of $W_{l,a}^{(r)}[l\varpi_r+\gamma]$ satisfying the upper triangular property of (M3). Then so does the basis $F_l(B)$ of $W_{\infty}^d[d\varpi_r+\gamma]$. The $\CE$-module $W_{\infty}^d$ is in category $\BGG$.
The diagonal entry of $\eD_i(z)$ associated to $F_l(x') \in W_{\infty}^d$ for $x' \in B$ is equal to that of $\eD_i(z)$ associated to $x' \in W_{l,a}^{(r)}$ multiplied by $g_i(d,z)$. The q-character formula in Eq.\eqref{equ: q-char asym} follows from the explicit formula of $g_i(d,z)$.
\end{proof}

\begin{que} \label{question}
Let $F(c)$ be a finite sum of homogeneous entire functions. If $F(k) = 0$ for infinitely many integers $k$, then is  $F(c)$ identically zero?
\end{que}

If the answer to this question is affirmative, then the proof of Proposition \ref{prop: asymptotic representations} can be largely simplified: Claims 1, 3 and 4 are not necessary. \footnote{In the affine case, by Footnote \ref{ft: affine} the situation is much easier: a Laurent polynomial vanishing at infinitely many integers must be zero; see \cite[\S 2]{Z3}.}

\begin{rem}  \label{rem: evaluation asymptotic}
By Lemma \ref{lem: asymptotic property},  $W_{\infty}^d\cong \CW(0)$ with $\CW$ an $\SE$-module of character
$\lim\limits_{k \rightarrow \infty} e^{(d-k)\varpi_r} \chi(W_{k,a}^{(r)})$, so it is in the image of the functor \cite[Proposition 4.1]{EM}. By Lemma \ref{lem: injectivity} and its proof, $\CW$ contains a unique highest weight vector up to scalar. Let $Q$ be the quotient of standard Verma module $M_{d\varpi_r,1}$ in \cite[Proposition 4.7]{TV} by $t_{a+1,a} v_{d\varpi_r,1}$ for $a \neq r$. Then $\CW$ is the contragradient module to $Q$ in \cite[\S 6]{TV}.  It is interesting to have a direct proof of $\CW(0)$ being in category $\BGG$.
\end{rem}

For $x \in \BC$ let $\CW_{d,x}^{(r)}$ be the pullback of $W_{\infty}^d$ by $\Phi_{x-a}$ in Eq.\eqref{def: spectral shift}; it is called {\it asymptotic module}. Set $\CW_{d,x}^{(N)} := S(\Bw_{d,x}^{(N)})$ and $\CW_{s,x} := \CW_{x,0}^{(s)}$ for $1\leq s \leq N$.

\begin{cor}  \label{cor: highest weight in O}
\begin{itemize}
\item[(i)] $\CR$ is the set of rational e-weights. 
\item[(ii)] For any $\CE$-module $M$ in category $\BGG$, we have $\ewt(M) \subset \CR$.
\item[(iii)] For $d,x \in \BC$ and $1\leq r \leq N$ we have in $\CMt$ and $K_0(\BGG)$ respectively
\begin{equation}  \label{equ: sov}
\qc(\CW_{d,x}^{(r)}) = \Bw_{d,x}^{(r)} \times \qc(\CW_{0,x}^{(r)}),\quad [\CW_{d,0}^{(r)}][\CW_{0,x}^{(r)}] = [\CW_{d-x,x}^{(r)}][\CW_{x,0}^{(r)}].
\end{equation}
\end{itemize}
\end{cor}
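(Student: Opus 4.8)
The plan is to dispatch (iii) first, since it is essentially formal, and then to derive (i) and (ii) from the q-character identity \eqref{equ: q-char asym} together with the multiplicativity and injectivity of the q-character. For the first equality in \eqref{equ: sov}, put $d=0$ in \eqref{equ: q-char asym}: as $\Bw_{0,a}^{(r)}=\Psi_{r,a}\Psi_{r,a}^{-1}=1$ by \eqref{equ: asymp e-weight}, one gets $\qc(W_\infty^0)=\lim_{k\rightarrow\infty}(\Bw_{k,a}^{(r)})^{-1}\qc(W_{k,a}^{(r)})$, hence $\qc(W_\infty^d)=\Bw_{d,a}^{(r)}\,\qc(W_\infty^0)$ for all $d\in\BC$. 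The pullback along the spectral shift $\Phi_{x-a}$ of \eqref{def: spectral shift} acts on q-characters by the substitution $b\mapsto b+x-a$ in each factor $\boxed{k}_b$, carrying $W_\infty^d$ to $\CW_{d,x}^{(r)}$, $W_\infty^0$ to $\CW_{0,x}^{(r)}$, and $\Bw_{d,a}^{(r)}$ to $\Bw_{d,x}^{(r)}$; applying it to the last identity gives the first equality in \eqref{equ: sov} for $1\leq r<N$, the case $r=N$ being trivial since $\CW_{0,x}^{(N)}=S(1)$ has q-character $1$. For the second equality one invokes Corollary \ref{cor: injectivity} and Proposition \ref{prop: monoidal O}: both sides collapse, after cancelling the common factor $\qc(\CW_{0,0}^{(r)})\,\qc(\CW_{0,x}^{(r)})$, to the obvious identity $\Psi_{r,d}\Psi_{r,0}^{-1}=(\Psi_{r,d}\Psi_{r,x}^{-1})(\Psi_{r,x}\Psi_{r,0}^{-1})$ in $\CMw$.

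For (i), the inclusion $\CR\subseteq\{\text{rational e-weights}\}$ is exactly Lemma \ref{lem: highest weight condition}. For the reverse inclusion I would write a rational e-weight as $\Be=\Bd\prod_j\Psi_{i_j,a_j}\Psi_{i_j,b_j}^{-1}$ with $\Bd\in\CR_0$, rewrite each factor via $\Psi_{i,a}\Psi_{i,b}^{-1}=\Bw_{a-b,b}^{(i)}$ from \eqref{equ: asymp e-weight}, and absorb the factors with $i_j=N$ into $\Bd$ (they lie in $\CR_0$: their $\Hlie$-weight vanishes since $\varpi_N=0$, and all of their $N$ function components coincide). It then remains to show $\Bw_{c,y}^{(s)}\in\CR$ for $1\leq s<N$, $c,y\in\BC$. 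The module $\CW_{c,y}^{(s)}$ lies in $\BGG$ by Proposition \ref{prop: asymptotic representations}; by \eqref{equ: q-char asym} and the right-negativity of Lemma \ref{lem: KR q-char}, $\Bw_{c,y}^{(s)}$ is its unique maximal e-weight and has multiplicity one, so the top weight space $\CW_{c,y}^{(s)}[c\varpi_s]$ is one-dimensional; every nonzero vector in it is singular because no weight of $\CW_{c,y}^{(s)}$ exceeds $c\varpi_s$, and by Lemma \ref{lem: category F}, applied to that weight space as an object of $\CFm$ (Definition \ref{def: O}), it can be normalized so that the $\hL_k(z)$ — which coincide with the $K_k(z)$ on singular vectors by Lemma \ref{lem: K vs L} — act on it by scalar functions in $\BM_{\BC}^{\times}$. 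Such a vector is a highest weight vector of highest weight $\Bw_{c,y}^{(s)}$, the submodule it generates is a highest weight module in $\BGG$, and so $\Bw_{c,y}^{(s)}\in\CR$. As $\CR$ is a sub-monoid of $\CMw$ and $\Bd\in\CR_0\subseteq\CR$, we get $\Be\in\CR$.

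For (ii), I would use that $[M]=\sum_\Bd m_{\Bd,M}[S(\Bd)]$ for every $M$ in $\BGG$; applying $\qc$ and using that all the coefficients involved are non-negative (so no cancellation occurs in $\CMt$), one obtains $\ewt(M)\subseteq\bigcup_{m_{\Bd,M}\neq0}\ewt(S(\Bd))$, and by (i) it suffices to show that each $\ewt(S(\Bd))$ consists of rational e-weights. Writing $\Bd$ as in the proof of (i), $S(\Bd)$ is an irreducible sub-quotient of $M':=S(\Bd_0)\ \wtimes\ \bigl(\wtimes_j\CW_{a_j-b_j,b_j}^{(i_j)}\bigr)$, which is in $\BGG$ by Proposition \ref{prop: monoidal O}, so by the same proposition $\ewt(S(\Bd))\subseteq\ewt(M')=\{\Bd_0\}\cdot\prod_j\ewt(\CW_{a_j-b_j,b_j}^{(i_j)})$. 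By \eqref{equ: q-char asym}, every e-weight of an asymptotic module $\CW_{c,y}^{(s)}$ is the product of $\Bw_{c,y}^{(s)}$ with a monomial in the $A_{i,b}^{-1}$ — namely the limit of the $A$-monomials occurring in $\qc(W_{k,y}^{(s)})$, which stabilize as $k\rightarrow\infty$ by Lemma \ref{lem: KR q-char} — and is therefore rational by \eqref{equ: A Psi}. Since $\Bd_0\in\CR_0$ and rationality is closed under products, $\ewt(S(\Bd))$ is rational, which by (i) is precisely $\CR$.

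The step I expect to be the real obstacle is the one inside (i): upgrading the formal assertion ``$\Bw_{c,y}^{(s)}$ is the top e-weight of $\CW_{c,y}^{(s)}$'' to the existence of a genuine highest weight vector with that e-weight. This needs the $k\rightarrow\infty$ limit of q-characters to be controlled tightly enough to keep the top weight space one-dimensional, and then a $\lambda$-independent normalization of the Cartan action extracted through the category $\CFm$ formalism (Definition \ref{def: O}, Lemmas \ref{lem: K vs L} and \ref{lem: category F}) — essentially the bookkeeping already outlined in Remark \ref{rem: evaluation asymptotic}. Everything else is assembly of facts already established: the q-character formula of Proposition \ref{prop: asymptotic representations}, the injective ring homomorphism $\qc$ of Corollary \ref{cor: injectivity}, and the sub-monoid property of $\CR$.
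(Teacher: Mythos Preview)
Your proposal is correct and follows essentially the same route as the paper's proof: (iii) from Eq.~\eqref{equ: q-char asym} and injectivity of $\qc$; (i) by combining Lemma~\ref{lem: highest weight condition} with $\Bw_{c,y}^{(s)}\in\CR$ (since it is the highest weight of the asymptotic module); (ii) by reducing to irreducibles, realizing them as sub-quotients of tensor products of asymptotic modules, and using that e-weights of asymptotic modules are rational. The ``obstacle'' you flag is not one: your sketch---one-dimensional top weight space from Eq.~\eqref{equ: q-char asym}, singularity from $\wt(\CW_{c,y}^{(s)})\subseteq c\varpi_s+\BQ_-$, and $\lambda$-independence via Lemmas~\ref{lem: category F} and~\ref{lem: K vs L}---is exactly what is needed and is what the paper implicitly invokes (cf.\ Remark~\ref{rem: evaluation asymptotic}).
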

\begin{proof}
(iii) comes from Eq.\eqref{equ: q-char asym}, as in the proof of \cite[Theorem 3.11]{FZ}. 
 $\Bw_{d,x}^{(r)}$ as a highest weight of $\CW_{d,x}^{(r)}$ belongs to $\CR$. Together with Lemma \ref{lem: highest weight condition} we obtain (i). In (ii)  one may assume $M$ irreducible. Then $M$ is a sub-quotient of a tensor product of asymptotic modules. Since e-weights of an asymptotic module are rational, we conclude from the multiplicative structure of q-characters in Proposition \ref{prop: monoidal O}. 
\end{proof}

In Section \ref{ss: small} the evaluation module $V_{\mu}(x)$ is an irreducible highest weight module in category $\tBGG$. Its highest weight is easily shown to be rational. 

\begin{cor} \label{cor: evaluation O}
 $V_{\mu}(x)$ is in category $\BGG$ for $\mu \in \Hlie$ and $x \in \BC$.
\end{cor}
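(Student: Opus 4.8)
The plan is to reduce the statement to Corollary~\ref{cor: highest weight in O} together with the highest weight classification of irreducible modules. By the remark preceding the statement, $V_{\mu}(x)$ is an irreducible highest weight module in category $\tBGG$ whose highest weight $\Bd \in \CMw$ is rational. So first I would record $\Bd$ explicitly: the standard highest weight vector $v$ of the $\SE$-module $V_{\mu}$ (with $\hat{t}_k v = v$ for all $k$ and $v$ annihilated by the lowering generators $t_{ij}$, as in Section~\ref{ss: small}) is sent by $\ev_x$ to a highest weight vector of $V_{\mu}(x)$, and Corollary~\ref{cor: evaluation module} yields
$$ \Bd = \left( \frac{\theta(z+(\mu_1+x)\hbar)}{\theta(z+x\hbar)}, \frac{\theta(z+(\mu_2+x)\hbar)}{\theta(z+x\hbar)}, \cdots, \frac{\theta(z+(\mu_N+x)\hbar)}{\theta(z+x\hbar)}; \mu \right), \qquad \mu_i := \lambda_i(\mu) \in \BC. $$
Then I would check that $\Bd$ is rational in the precise sense of Section~\ref{ss- hw}, i.e.\ $\Bd = \Bd_0 \Bm$ with $\Bd_0 \in \CR_0$ and $\Bm$ a product of the $\Psi_{i,a}\Psi_{i,b}^{-1}$; this is a routine rewriting of the entries above into the $\Psi_{k,a}$, $Y_{k,a}$ of Section~\ref{ss: tableau}, parallel to the way $\theta_{\mu,a}$ is expressed there.

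Next I would invoke Corollary~\ref{cor: highest weight in O}(i): since $\CR$ is exactly the set of rational e-weights, $\Bd \in \CR$, and there is an irreducible module $S(\Bd)$ in category $\BGG$ with highest weight $\Bd$. Finally, $V_{\mu}(x)$ and $S(\Bd)$ are both irreducible highest weight modules in $\tBGG$ with the same highest weight, so they are isomorphic by the classification \cite[Theorem~2.8]{C} recalled in Section~\ref{ss- hw}; hence $V_{\mu}(x) \cong S(\Bd)$ lies in category $\BGG$. In the special case $\mu_{ij} \in \BZ_{\geq 0} + \hbar^{-1}\Gamma$ for $i<j$ this re-derives the fact, already observed in Section~\ref{ss: small}, that $V_{\mu}(x) \in \BGGf$.

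There is no serious obstacle here: all the substance sits in Corollary~\ref{cor: highest weight in O}(i) (whose proof rests on the asymptotic module construction of Section~\ref{sec: asym}) and in the highest weight classification. The only computation is the rationality check of $\Bd$, and even that is immediate, since rationality only constrains the consecutive ratios $f_k/f_{k+1}$, and here these are visibly products of theta quotients $\theta(z+c\hbar)/\theta(z+c'\hbar)$ of the required shape, with the $\Hlie$-component $\mu$ matching automatically.
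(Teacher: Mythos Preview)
Your proof is correct and follows exactly the approach implicit in the paper: the sentence immediately preceding Corollary~\ref{cor: evaluation O} records that the highest weight of $V_{\mu}(x)$ is rational, and the corollary is then an immediate consequence of Corollary~\ref{cor: highest weight in O}(i) together with the uniqueness statement \cite[Theorem~2.8]{C} for irreducible highest weight modules in $\tBGG$. Your write-up simply makes this explicit, including the formula for $\Bd$ and the observation that the rationality check reduces to expressing the entries (or their consecutive ratios) as theta quotients of the required shape.
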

%We expect the image of the functor \cite{EM} to be category $\BGG$. 

 Finite-dimensional modules in category $\BGG$ are related to the asymptotic modules by {\it generalized Baxter relations} in the sense of Frenkel--Hernandez \cite[Theorem 4.8]{FH}; see \cite[Corollary 4.7]{FZ} and \cite[Theorem 5.11]{Z3} for a closer situation.

\begin{theorem}  \label{thm: generalized TQ}
Let $V$ be a finite-dimensional $\CE$-module in category $\BGG$. Then 
\begin{equation}  \label{equ: generalized TQ}
[V] = \sum_{j=1}^{\dim V} [S(\Bd_j)] \times \Bm_j 
\end{equation}
in a fraction ring of the Grothendieck ring of $\BGG$. Here $\Bd_j \in \CR_0$ and $\Bm_j$ is a product of the $\frac{[\CW_{r,x}]}{[\CW_{r,y}]}$ with $x,y\in \BC$ and $1\leq r < N$.
\end{theorem}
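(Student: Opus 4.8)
The plan is to prove the identity by comparing $q$-characters inside a suitable localization of $K_0(\BGG)$. First I would write $\qc(V) = \sum_{\Bm \in \ewt(V)} m_{\Bm}\,\Bm \in \CMt$, a finite sum with multiplicities $m_{\Bm} \in \BZ_{>0}$ (Definition \ref{defi: q-char}), and recall that $\qc([V]) = \qc(V)$. By Corollary \ref{cor: highest weight in O}(ii) every $\Bm$ is rational, so $\Bm = \Bd \prod_{i,x,y}(\Psi_{i,x}\Psi_{i,y}^{-1})^{\pm 1}$ with $\Bd \in \CR_0$ and $1\le i \le N$. Since $\Psi_{N,x}\Psi_{N,y}^{-1} \in \CR_0$ and, by the manifestly inversion-closed characterization of Lemma \ref{lem: highest weight condition}(iii), $\CR_0$ is a subgroup of $\CMw$, I can absorb all factors with $i = N$ into $\Bd$ and rewrite $\Bm = \Bd_{\Bm}\,\Bm'$ where $\Bd_{\Bm} \in \CR_0$ and $\Bm'$ is a product of the $\Psi_{i,x}\Psi_{i,y}^{-1}$ with $1 \le i < N$.

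Next I would record the key cancellation underlying the whole argument. For $1\le r < N$ and $x\in\BC$ we have $\CW_{r,x} = \CW_{x,0}^{(r)}$, so Eq.\eqref{equ: sov} gives $\qc(\CW_{r,x}) = \Psi_{r,x}\Psi_{r,0}^{-1}\,\qc(\CW_{0,0}^{(r)})$; hence
\begin{equation*}
\qc(\CW_{r,x})\,\qc(\CW_{r,y})^{-1} = \Psi_{r,x}\Psi_{r,y}^{-1}
\end{equation*}
as soon as the right-hand factor makes sense. To set that up I would observe that $\qc(\CW_{0,0}^{(r)})$ has leading e-weight $\Bw_{0,0}^{(r)} = 1$ and all other e-weights have $\varpi$-image in $\BQ_-\setminus\{0\}$ (they lie in $\CQ_0^-\setminus\{1\}$, cf.\ Lemma \ref{lem: KR q-char} and Eq.\eqref{equ: q-char asym}), so $\qc(\CW_{0,0}^{(r)})$ is a unit in $\CMt$; consequently $\qc(\CW_{r,y}) = (\Psi_{r,y}\Psi_{r,0}^{-1})\cdot(\text{unit})$ is itself a unit in $\CMt$. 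Since $\qc: K_0(\BGG)\hookrightarrow \CMt$ is an injective ring homomorphism (Corollary \ref{cor: injectivity}), each $[\CW_{r,y}]$ is a non-zero-divisor in $K_0(\BGG)$. I would then let $T$ be the multiplicative set generated by the $[\CW_{r,y}]$ for $1\le r<N$, $y\in\BC$, form the fraction ring $T^{-1}K_0(\BGG)$ into which $K_0(\BGG)$ embeds, and note that $\qc$ extends to an injection $T^{-1}K_0(\BGG)\hookrightarrow\CMt$.

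With this in place the computation is immediate. For each $\Bm$ let $\tilde{\Bm} \in T^{-1}K_0(\BGG)$ be obtained from $\Bm'$ by replacing each factor $\Psi_{i,x}\Psi_{i,y}^{-1}$ ($1\le i<N$) with $[\CW_{i,x}][\CW_{i,y}]^{-1}$; this is visibly a product of the $[\CW_{r,x}]/[\CW_{r,y}]$ as in the statement. Since $S(\Bd_{\Bm})$ is one-dimensional, $\qc([S(\Bd_{\Bm})]) = \Bd_{\Bm}$, and by the cancellation above $\qc(\tilde{\Bm}) = \Bm'$, whence $\qc([S(\Bd_{\Bm})]\,\tilde{\Bm}) = \Bd_{\Bm}\Bm' = \Bm$. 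Therefore
\begin{equation*}
\qc\Big(\sum_{\Bm\in\ewt(V)} m_{\Bm}\, [S(\Bd_{\Bm})]\,\tilde{\Bm}\Big) \;=\; \sum_{\Bm} m_{\Bm}\,\Bm \;=\; \qc([V]),
\end{equation*}
and injectivity of $\qc$ on $T^{-1}K_0(\BGG)$ yields Eq.\eqref{equ: generalized TQ}, the index set $\{\Bd_j\}_{j=1}^{\dim V}$ being the multiset $\{\Bd_{\Bm}\ (\text{with multiplicity } m_{\Bm})\}$ and $\Bm_j = \tilde{\Bm}$.

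The main obstacle I anticipate is not any single estimate but the bookkeeping around the fraction ring: one must check that the $[\CW_{r,y}]$ may legitimately be inverted and that this preserves injectivity of $\qc$. Both points are handled by the remark that the asymptotic $q$-character $\qc(\CW_{0,0}^{(r)})$ is a unit in $\CMt$ because its leading e-weight is $1$ and the remaining ones are strictly lower in $\BQ_-$. The genuinely substantive input — that every e-weight of a finite-dimensional module in $\BGG$ is rational, i.e.\ a product of a one-dimensional class with $\Psi$-ratios — has already been secured in Corollary \ref{cor: highest weight in O} via the asymptotic module construction; and the point that $[\CW_{r,x}]/[\CW_{r,y}]$ realizes exactly $\Psi_{r,x}\Psi_{r,y}^{-1}$ is precisely the cancellation of the asymptotic part in Eq.\eqref{equ: sov}.
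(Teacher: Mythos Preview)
Your proposal is correct and follows essentially the same route as the paper: write $\qc(V)$ as the sum of its e-weights, invoke Corollary \ref{cor: highest weight in O} to factor each e-weight as $\Bd \prod \Psi_{r,x}\Psi_{r,y}^{-1}$ with $\Bd\in\CR_0$ and $r<N$, use Eq.\eqref{equ: sov} to rewrite each $\Psi_{r,x}\Psi_{r,y}^{-1}$ as $\qc(\CW_{r,x})/\qc(\CW_{r,y})$, and conclude by injectivity of $\qc$. You simply make explicit what the paper leaves tacit---the absorption of the $i=N$ factors into $\Bd$, the invertibility of $\qc(\CW_{r,y})$ in $\CMt$ via the unit $\qc(\CW_{0,0}^{(r)}) \in 1 + (\text{terms with }\varpi\text{-image in }\BQ_-\setminus\{0\})$, and the resulting well-definedness of the fraction ring---none of which changes the argument.
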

\begin{proof}
The idea is the same as \cite{FH}. Since the q-character map is injective, one can replace isomorphism classes with q-characters. $\qc(V)$ is the sum of its e-weights, the number of which is $\dim V$.  By Corollary \ref{cor: highest weight in O}, any e-weight $\Be$ is of the form $\Bd \prod \frac{\Psi_{r,x}}{\Psi_{r,y}} =  \qc(S(\Bd))  \prod \frac{\qc(\CW_{r,x})}{\qc(\CW_{r,y})}$, where $\Bd \in \CR_0$ and the product is over $1\leq r < N$ and $x,y \in \BC$. This proves Eq.\eqref{equ: generalized TQ} in terms of q-characters.
\end{proof}
To compare with \cite[Theorem 4.8]{FH}, one imagines that for $1\leq r < N$ and $x \in \BC$ there existed a {\it positive prefundamental module} $L_{r,x}^+$ in category $\BGG$ with q-character $\qc(L_{r,x}^+) = \Psi_{r,x} \times \chi(L_{r,0}^+)$ as in \cite[Theorem 4.1]{FH}. Then $\frac{[\CW_{r,x}]}{[\CW_{r,y}]} = \frac{[L_{r,x}^+]}{[L_{r,y}^+]}$. Note that the q-character of $\CW_{0,x}^{(r)}$ in Eq.\eqref{equ: sov} is different from its character. 

\begin{example} \label{exam: sl3}
Let $N = 3$. Consider the vector representation $\BV$ of Section \ref{ss: vector representation}:
\begin{gather*}
\boxed{1}_0  = \frac{\Psi_{1,\frac{3}{2}}}{\Psi_{1,\frac{1}{2}}}, \quad \boxed{2}_0 = \frac{\Psi_{1,-\frac{1}{2}}}{\Psi_{1,\frac{1}{2}}} \frac{\Psi_{2,1}}{\Psi_{2,0}}, \quad \boxed{3}_0 = \frac{\theta(z+\hbar)}{\theta(z)} \frac{\Psi_{2,-1}}{\Psi_{2,0}}, \\
[\BV] = \frac{[\CW_{1,\frac{3}{2}}]}{[\CW_{1,\frac{1}{2}}]} +  \frac{[\CW_{1,-\frac{1}{2}}]}{[\CW_{1,\frac{1}{2}}]}  \frac{[\CW_{2,1}]}{[\CW_{2,0}]} +\frac{[\CW_{3,\frac{1}{2}}]}{[\CW_{3,-\frac{1}{2}}]} \frac{[\CW_{2,-1}]}{[\CW_{2,0}]}.
\end{gather*}
\end{example}
\begin{example}
Let us construct the $\CE_{\tau,\hbar}(\mathfrak{sl}_2)$-module $\CW_{1,\Lambda}$ from  \cite[Theorem 3]{FV1}. In {\it loc.cit.}, set $\eta = -\frac{1}{2} \hbar,\ \lambda = \lambda_{12}$ and $(a,b,c,d) = (L_{11},L_{12},L_{21},L_{22})$. For $\Lambda \in \BZ_{>0}$, consider the evaluation module $L_{\Lambda}((\Lambda-1)\eta)$ with basis $(e_k)_{0\leq k \leq \Lambda}$. Note that $k$ indicates the basis vectors, while $\Lambda$  the integer parameter of a KR module. Let us make a change of basis (the second product is empty if $k=0$)
$$ v_k := e_k \prod_{i=1}^{\Lambda}\frac{\theta(\lambda+(i-k)\hbar)}{\theta(\hbar)} \times \prod_{j=1}^k \frac{\theta(\lambda-j\hbar)}{\theta((\Lambda-k+j)\hbar)} \quad \mathrm{for}\ 0 \leq k \leq \Lambda.  $$
Tensoring $L_{\Lambda}((\Lambda-1)\eta)$ with the one-dimensional module of highest weight $\frac{\theta(w+\Lambda\hbar)}{\theta(w)}$, we obtain another irreducible module $V_{\Lambda}$ with basis $(v_k  = v_k \wtimes 1)_{0\leq k \leq \Lambda}$; here  to follow {\it loc.cit.} $w$ denotes $z$. We have $\wt(v_k) = (\Lambda-k)\epsilon_1 + k \epsilon_2$ and
\begin{gather*}
a(w) v_k = \frac{\theta(w+(\Lambda-k)\hbar)}{\theta(w)} \frac{\theta(\lambda+(\Lambda-k+1)\hbar)}{\theta(\lambda+(1-k)\hbar)} v_k, \\
b(w) v_k = \frac{\theta(w+\lambda+(\Lambda-k-1)\hbar)}{\theta(w)} \frac{\theta((\Lambda-k)\hbar)\theta(\hbar)}{\theta(\lambda-\hbar)\theta(\lambda)} v_{k+1}, \\
c(w) v_k = - \frac{\theta(w-\lambda+(k-1)\hbar)}{\theta(w)} \frac{\theta(k\hbar)}{\theta(\hbar)} v_{k-1}, \\
d(w) v_k = \frac{\theta(w+ k\hbar)}{\theta(w)} \frac{\theta(\lambda-(k+1)\hbar)\theta(\lambda-k\hbar)}{\theta(\lambda-\hbar)\theta(\lambda)} v_k.
\end{gather*}
We have $t_{12} v_k = -\frac{\theta(k\hbar)}{\theta(\hbar)}  v_{k-1}$ and $v_0$ is of highest weight $\Bw_{\Lambda,0}^{(1)}$,
so $V_{\Lambda}\cong W_{\Lambda,0}^{(1)}$. The bases $(v_k)$ trivialize the inductive system $(V_{\Lambda})$ because the inductive maps commute with $t_{12}$ by Eq.\eqref{inductive: L > r}.  For $\Lambda \in \BC$, the above formulas define an $\CE_{\tau,\hbar}(\mathfrak{sl}_2)$-module structure on $\oplus_{k=0}^{\infty} \BM v_k$, with $\wt(v_k) = (\Lambda-k)\epsilon_1+k\epsilon_2$. This is the desired $\CW_{1,\Lambda}$. 

General formulas for the $\CE_{\tau,\hbar}(\mathfrak{sl}_N)$-module $\CW_{1,\Lambda}$ can be found in \cite[\S 3.4]{C}.
\end{example}

\section{Baxter TQ relations} \label{sec: TQ}
We derive three-term relations in the Grothendieck ring $K_0(\BGG)$ for the asymptotic modules. For $1\leq r < N$ and $k,x,t \in \BC$, by Corollary \ref{cor: highest weight in O}, $\Bd_{k,x}^{(r,t)} \in \CR$ and is the highest weight of an irreducible module $D_{k,x}^{(r,t)}$ in category $\BGG$.

Call a complex number $c \in \BC$ {\it generic} if $c \notin \frac{1}{2} \BZ + \frac{1}{\hbar} (\BZ + \BZ \tau)$.
This condition is equivalent to $\CQ_a \cap \CQ_{a+c} = \{1\}$ for all $a \in \BC$.

\begin{theorem} \label{thm: Baxter TQ}
Let $1\leq r < N,\ t \in \BZ_{>0}$ and $k,a,b \in \BC$ with $k$ generic. Then 
\begin{equation}  \label{for: asym Demazure}
\qc(D_{k,a}^{(r,t)}) = \Bd_{k,a}^{(r,t)} (1+ \sum_{l=1}^t A_{r,a}^{-1} A_{r,a+1}^{-1} \cdots A_{r,a+l-1}^{-1}) \prod_{s= r\pm 1} \qc(\CW_{0,a-k-\frac{1}{2}}^{(s)})
\end{equation}
and $D_{k,a}^{(r,0)} \cong \wtimes_{s= r\pm 1} \CW_{k,a-k-\frac{1}{2}}^{(s)}$. 
\end{theorem}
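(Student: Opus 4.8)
The strategy is to leverage the analytic continuation machinery of Section~\ref{sec: asym} together with the finite-dimensional identities of Theorem~\ref{thm: Demazure KR}, exactly as in the affine situation. First I would establish Eq.~\eqref{for: asym Demazure} for $k \in \BZ_{>0}$ by taking the q-character of the short exact sequence (more precisely, the Grothendieck-ring identity Eq.~\eqref{equ: Demazure KR}) expressing $[D_{k,k+1}^{(r,t)}]$, applying the spectral parameter shift $\Phi_{a-k-1}$, and combining it with the asymptotic limit Eq.~\eqref{equ: q-char asym}. Concretely, one divides Eq.~\eqref{equ: Demazure KR} by $[W_{k,0}^{(r)}]$ in the fraction ring, rewrites the ratios $\frac{[W_{k+t,1}^{(r)}]}{[W_{k,0}^{(r)}]}$ and $\frac{[W_{k-1,1}^{(r)}][W_{k+t+1,0}^{(r)}]}{[W_{k,0}^{(r)}]}$ in terms of $\Psi$-monomials and the limit q-characters, and uses Corollary~\ref{cor: highest weight in O}(iii) (the identity $[\CW_{d,0}^{(r)}][\CW_{0,x}^{(r)}] = [\CW_{d-x,x}^{(r)}][\CW_{x,0}^{(r)}]$) to split off the $\Psi_{r,\cdot}$ factors cleanly. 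The dominant-weight analysis in Corollary~\ref{cor: Demaure q-char} tells us precisely which $A_{r,\cdot}^{-1}$-monomials appear, which pins down the shape $\Bd_{k,a}^{(r,t)}(1 + \sum_{l=1}^t A_{r,a}^{-1}\cdots A_{r,a+l-1}^{-1})$ up to the prefundamental tails.

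Second, having the identity for $k \in \BZ_{>0}$, I would extend it to generic $k \in \BC$ by the same analytic-continuation principle used throughout Section~\ref{sec: asym}: the matrix coefficients of the relevant difference operators on the asymptotic modules are products of $\theta(k\hbar + c)$, hence homogeneous entire functions of $k$, and Lemma~\ref{lem: elliptic homogeneous functions} forces an identity holding for infinitely many integers $k$ to hold for all $k$. One must be a little careful: the irreducible module $D_{k,a}^{(r,t)}$ for generic $k$ is defined by its highest weight $\Bd_{k,a}^{(r,t)} \in \CR$ via Corollary~\ref{cor: highest weight in O}, so I would instead argue on the q-character side — both sides of Eq.~\eqref{for: asym Demazure} are sums of e-weights whose multiplicities are, coefficient by coefficient in $\CMt$, entire functions of $k$ with matching double-periodicity — and the genericity of $k$ guarantees (via $\CQ_a \cap \CQ_{a+c} = \{1\}$) that no unexpected cancellations or coincidences of e-weights occur when passing to the limit, so the q-character of the irreducible module really is the limit of the q-characters of the $D_{k,a}^{(r,t)}$ for integer $k$.

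For the isomorphism $D_{k,a}^{(r,0)} \cong \widetilde{\bigotimes}_{s = r\pm 1}\, \CW_{k,a-k-\frac{1}{2}}^{(s)}$, I would first note that by Eq.~\eqref{equ: Demazure e-weight} the highest weight of $D_{k,a}^{(r,0)}$ is $\prod_{s=r\pm1}\Bw_{k,a-\frac12-k}^{(s)}$, which is exactly the highest weight of the tensor product on the right (the factors $\Psi_{r,\cdot}$ disappear when $t=0$). So there is a surjection from the tensor product onto $D_{k,a}^{(r,0)}$; to get an isomorphism I need the tensor product to be irreducible. For $k \in \BZ_{>0}$ this follows from the ``special'' argument of Nakajima type mentioned after Theorem~\ref{thm: Demazure KR}: one checks $D_{k,k+1}^{(r,0)} \cong W_{k,\frac12}^{(r-1)} \wtimes W_{k,\frac12}^{(r+1)}$ is special (has a unique dominant e-weight), hence irreducible, and then applies the spectral shift. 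For generic $k$, irreducibility of the tensor product of the two asymptotic modules $\CW_{k,a-k-\frac12}^{(r-1)}$ and $\CW_{k,a-k-\frac12}^{(r+1)}$ should follow either by a direct dominant-e-weight count on their q-characters (using Proposition~\ref{prop: monoidal O}, Eq.~\eqref{equ: q-char asym}, and the structure of KR q-characters in Lemma~\ref{lem: KR q-char}, which shows the two sets of e-weights sit in ``orthogonal'' root-lattice directions $\alpha_{r-1}$ vs.\ $\alpha_{r+1}$ so no nontrivial sub-e-weight is dominant), or by continuing the irreducibility criterion from the integer case. I expect this last point — establishing irreducibility of the tensor product of two infinite-dimensional asymptotic modules for generic $k$ — to be the main obstacle, since the usual finite-dimensional cyclicity arguments are unavailable; the cleanest route is probably to show directly that $\qc(\CW_{k,a-k-\frac12}^{(r-1)})\,\qc(\CW_{k,a-k-\frac12}^{(r+1)})$ contains $\Bd_{k,a}^{(r,0)}$ as its unique dominant e-weight and invoke the fact (Section~\ref{ss- hw}) that q-characters distinguish irreducible modules, combined with the observation that a highest-weight module with a single dominant e-weight in its q-character must be irreducible.
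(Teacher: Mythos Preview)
Your strategy has a genuine gap at its core: the identity Eq.~\eqref{for: asym Demazure} is \emph{false} for $k \in \BZ_{>0}$, so it cannot be proved there first and then analytically continued. For integer $k$ the module $D_{k,a}^{(r,t)}$ is finite-dimensional (it lies in $\BGGf$ by Theorem~\ref{thm: fund module}), hence $\qc(D_{k,a}^{(r,t)})$ is a finite sum, whereas the right-hand side involves $\prod_{s=r\pm 1}\qc(\CW_{0,a-k-\frac{1}{2}}^{(s)})$, which is an infinite series. More conceptually, the analytic-continuation principle of Section~\ref{sec: asym} applies to the asymptotic modules $\CW_{d,x}^{(r)}$ because they are constructed on a \emph{fixed} underlying vector space with a module structure varying analytically in $d$. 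No such construction exists for $D_{k,a}^{(r,t)}$: for generic $k$ it is simply the irreducible module of highest weight $\Bd_{k,a}^{(r,t)}$, and its weight-space dimensions are not in any sense entire functions of $k$---they jump from finite to infinite precisely at the integers. So the claim that ``both sides of Eq.~\eqref{for: asym Demazure} are sums of e-weights whose multiplicities are, coefficient by coefficient, entire functions of $k$'' is not justified and is in fact wrong.

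The paper's proof proceeds quite differently, working \emph{directly} with generic $k$. Setting $x = a-k-\frac{1}{2}$, it exploits the key consequence of genericity, namely $\CQ_a^- \cap \CQ_x^- = \{1\}$, to separate e-weights into an ``$a$-part'' and an ``$x$-part''. The argument is a sandwich: an upper bound on $\qc(D_{k,a}^{(r,t)})$ is obtained by realizing $D_{k,a}^{(r,t)}$ as a sub-quotient of $D_{n,a}^{(r,t)} \wtimes (\wtimes_{s=r\pm 1}\CW_{k-n,x}^{(s)})$ for integer $n$ and invoking Corollary~\ref{cor: Demaure q-char}; a matching lower bound comes from showing that certain tensor products $S^l$ (of asymptotic modules) are irreducible and all occur as sub-quotients of $D_{k,a}^{(r,t)} \wtimes (\wtimes_j \CW_{k+j-\frac{1}{2},x}^{(r)})$. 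The irreducibility of these $S^l$---which includes the isomorphism $D_{k,a}^{(r,0)} \cong \wtimes_{s=r\pm 1}\CW_{k,x}^{(s)}$ you single out---is itself proved by a two-sided bounding argument using genericity, not by the ``orthogonal root directions'' heuristic you suggest (which is incorrect: the e-weights of $\CW_{k,x}^{(s)}$ lie in $\Bw_{k,x}^{(s)}\CQ_x^-$ and involve all simple roots, not just $\alpha_s$).
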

\begin{proof}
Set $x := a-k-\frac{1}{2}$. Define $\Bd := \Bd_{k,a}^{(r,t)}$ and  for $1\leq l \leq t$: 
$$ \Bm_l := \Bm_0 A_{r,a}^{-1} A_{r,a+1}^{-1} \cdots A_{r+a+l-1}^{-1}, \quad \Bm_0 := \Bd \prod_{j=1}^t \Bw_{k+j-\frac{1}{2},x}^{(r)}.  $$
By Eqs.\eqref{equ: A Psi} and \eqref{equ: asymp e-weight}--\eqref{equ: Demazure e-weight}, we have for $0 \leq l \leq t$:
$$ \Bm_l = \prod_{j=l+1}^t \frac{\Psi_{r,a+j}}{\Psi_{r,x}} \times \prod_{j=1}^l \frac{\Psi_{r,a+j-2}}{\Psi_{r,x}} \times \prod_{s= r\pm 1} \frac{\Psi_{s,a+l-\frac{1}{2}}}{\Psi_{s,x}}. $$
Let us introduce the tensor products for $0 \leq l \leq t$,
\begin{align*}
     S^l &:= (\wtimes_{j=l+1}^t \CW_{k+j+\frac{1}{2},x}^{(r)})\ \wtimes\ (\wtimes_{j=1}^l \CW_{k+j-\frac{3}{2},x}^{(r)})\ \wtimes\ (\wtimes_{s= r\pm 1} \CW_{k+l,x}^{(s)}), \\
     T & := D_{k,a}^{(r,t)}\ \wtimes\ (\wtimes_{j=1}^t \CW_{k+j-\frac{1}{2},x}^{(r)}).
\end{align*}
Eq.\eqref{for: asym Demazure} is equivalent to  $\qc(T) = \sum_{l=0}^t\qc(S^l)$ in view of Eq.\eqref{equ: sov}. 

Given two elements $\chi = \sum_{\Bf} c_{\Bf} \Bf$ and $\chi' := \sum_{\Bf} c_{\Bf}' \Bf$ of $\CMt$, we say that $\chi$ is bounded above by $\chi'$ if $c_{\Bf} \leq c_{\Bf}'$ for all $\Bf \in \CMw$. When this is the case, $\chi'$ is bounded below by $\chi$. If $\chi$ is bounded below and above by $\chi'$, then $\chi = \chi'$.

\medskip

\noindent {\it Claim 1.} The $S^l$ are irreducible. In particular, $D_{k,a}^{(r,0)} \cong \wtimes_{s= r\pm 1} \CW_{k,x}^{(s)}$.

Fix $0 \leq l \leq t$. Let $S' := S(\Bm_l)$. For $n \in \BZ_{>0}$, set
$$ S_n' := (W_{n,x}^{(r)})^{\wtimes t}\ \wtimes\ (\wtimes_{s= r\pm 1} W_{n,x}^{(s)}),\quad \Bs_n' := (\Bw_{n,x}^{(r)})^t \prod_{s= r\pm 1} \Bw_{n,x}^{(s)}. $$
By Lemma \ref{lem: KR q-char}, any e-weight $\Bs_n'\Be \in \CP_x$ of $S_n'$ different from $\Bs_n'$ is right negative. So $S_n'$ is irreducible. Viewing $S_n'$ as an irreducible sub-quotient of
$$ S'\ \wtimes\ (\wtimes_{j=l+1}^t \CW_{n-k-j-\frac{1}{2},a+j}^{(r)})\ \wtimes\ (\wtimes_{j=1}^l \CW_{n-k-j+\frac{3}{2},a+j-2}^{(r)})\ \wtimes\ (\wtimes_{s= r\pm 1} \CW_{n-k-l,a+l-\frac{1}{2}}^{(s)}), $$
we have $\Be = \Be' \prod_{j=1}^t  \Be_j \prod_{s = r\pm 1} \Be^{(s)}$ where
$\Bm_l \Be', \ \Bw_{n-k-j-\frac{1}{2},a+ j}^{(r)} \Be_j$  for $l<j\leq t$,  $\Bw_{n-k-j+\frac{3}{2},a+j-2}^{(r)} \Be_j$ for $1\leq j \leq l$, and $\Bw_{n-k-l,a+l-\frac{1}{2}}^{(s)}\Be^{(s)} $ are e-weights of the corresponding tensor factors. By Lemma \ref{lem: KR q-char} and Proposition \ref{prop: asymptotic representations},
$$\Be,  \Be' \in  \CQ_x^-,\quad \Be_j, \Be^{(s)} \in   \CQ_a^-.  $$
Since $a-x= k+\frac{1}{2}$ is generic,  $\CQ_{a}^-\cap \CQ_x^- = \{1\}$ and so $ \Be = \Be'$. The normalized q-character of $S'$ is bounded below by that of $S_n'$ for all $n \in \BZ_{>0}$. On the other hand, viewing $S'$ as an irreducible sub-quotient of $S^l$ and applying Eq.\eqref{equ: q-char asym} to $S^l$, we see that the normalized q-character of $S'$ is bounded above by the limit of that of $S_n'$ as $n \rightarrow \infty$. Therefore $S^l \cong S'$ is irreducible.

\medskip

\noindent {\it Claim 2.} For $1\leq l \leq t$, we have $ \Bd A_{r,a}^{-1} A_{r,a+1}^{-1} \cdots A_{r,a+l-1}^{-1}  \in \ewt(D_{k,a}^{(r,t)})$. It follows that $\Bm_l \in \ewt(T)$.

Let us view the KR module $W_{t,a}^{(r)}$ as an irreducible sub-quotient of
$$ D_{k,a}^{(r,t)}\ \wtimes\ (\wtimes_{s= r\pm 1} \CW_{-k,a-\frac{1}{2}}^{(s)}). $$
By Lemma \ref{lem: KR q-char}, $\Bw_{t,a}^{(r)} A_{r,a}^{-1} A_{r,a+1}^{-1} \cdots A_{r,a+l-1}^{-1} \in \ewt(W_{t,a}^{(r)})$. The $A_{r,a+j}^{-1}$ must arise from $\ewt(D_{k,a}^{(r,t)})$ instead of any of the $\ewt(W^{(s)}_{-k,a-\frac{1}{2}})$ with $s \neq r$.

\medskip

For $0 \leq j,l \leq t$, since $\ewt(S^l) \subset \Bm_l \CQ_x^-$ and $\Bm_j \in \Bm_l \CQ_a$, we have $\Bm_j \in \ewt(S^l)$ if and only if $l = j$. Therefore, all the $S^l$ appear as irreducible sub-quotients of $T$, and they are mutually non-isomorphic. So $\qc(T) $ is bounded below by $\sum_{l=0}^t\qc(S^l)$. 

\medskip 

\noindent {\it Claim 3.} $\qc(D_{k,a}^{(r,t)})$ is bounded above by 
$$\Bd (1+\sum_{l=1}^t A_{r,a}^{-1} A_{r,a+1}^{-1} \cdots A_{r+a+l-1}^{-1}) \prod_{s= r\pm 1} \qc(\CW_{0,x}^{(s)}). $$

 Fix $\Bd \Bf \in \ewt(D_{k,a}^{(r,t)})$. For $n \in \BZ_{>0}$,  viewing $D_{k,a}^{(r,t)}$ as a sub-quotient of
 $$ D_{n,a}^{(r,t)}\ \wtimes\ (\wtimes_{s= r\pm 1} \CW_{k-n,x}^{(s)} ) $$
 gives $\Bf = \Bf_n \prod_{s= r\pm 1} \Bf^{(s)}$ where by Lemma \ref{lem: KR q-char} and Corollary \ref{cor: Demaure q-char}: 
 $$\Bf_n \Bd_{n,a}^{(r,t)} \in \ewt(D_{n,a}^{(r,t)}),\quad \Bf^{(s)} \Bw_{k-n,x}^{(s)} \in \ewt(\CW_{k-n,x}^{(s)}) = \Bw_{k-n,x}^{(s)} \ewt(\CW_{0,x}^{(s)}). $$
 It follows that $\Bf_n \in \CQ_a^-, \ \Bf^s \in \CQ_x^-$ and $\Bf \in \CQ_a^- \CQ_x^-$. 

 \medskip
 
Let $n \in \BZ_{>0}$ be large enough so that $\Bf \in \CQ_{a;n}^- \CQ_x^-$ where $\CQ_{a;n}^-$ is the submonoid of $\CQ_a^-$ generated by the  $A_{i,a+m}^{-1}$ for $1\leq i < N$ and $m \in \frac{1}{2}\BZ$ with $m > -n$. Since $a-x = k+\frac{1}{2}$ is generic, Corollary \ref{cor: Demaure q-char} implies that 
$$\Bf_n \in \{1,A_{r,a}^{-1}, A_{r,a}^{-1}A_{r,a+1}^{-1}, \cdots, A_{r,a}^{-1} A_{r,a+1}^{-1} \cdots A_{r+a+t-1}^{-1}\}$$
is uniquely determined by $\Bf$. The coefficient of $\Bd \Bf$ in $\qc(M_{k,a}^{(r)})$ is bounded above by that of $\prod_{s= r\pm 1} \Bf^{(s)} $ in $\prod_{s= r\pm 1} \qc(\CW_{0,x}^{(s)})$. This proves the claim.

\medskip 

It follows from Claim 3 that $\qc(T)$ is bounded above by 
$$\Bm_0 (1+\sum_{l=1}^t A_{r,a}^{-1} A_{r,a+1}^{-1} \cdots A_{r+a+l-1}^{-1}) \prod_{s= r\pm 1} \qc(\CW_{0,x}^{(s)}) \times \prod_{j=1}^t \qc(\CW_{0,x}^{(r)}) = \sum_{l=0}^t \qc(S^l). $$ 
Since ``bounded below" also holds, we obtain the exact formula for $\qc(T)$, which implies Eq.\eqref{for: asym Demazure}. This completes the proof of the theorem.
\end{proof}

Claim 1 is in the spirit of \cite[Theorem 4.11]{FH}, and Claim 3 \cite[Eq.(6.14)]{HL}, \cite[\S 4.3]{FH2} and \cite[Theorem 3.3]{Z2}, the main difference being the non-existence of prefundamental modules. If both $k,t$ are generic, then $\qc(D_{k,a}^{(r,t)})$ is obtained from the right-hand side of Eq.\eqref{for: asym Demazure} by replacing $\sum_{l=1}^t$ therein with $\sum_{l=1}^{\infty}$.

\begin{cor}  \label{cor: TQ}
Let $k \in \BC$ be generic and $1 \leq r < N$. In $K_0(\BGG)$ holds
\begin{equation} \label{equ: TQ three term}
 [D_{k,k+\frac{1}{2}}^{(r,1)}][\CW_{r, k+\frac{1}{2}}] = [\CW_{r,k-\frac{1}{2}}] \prod_{s= r\pm 1} [\CW_{s,k+1}] + [\CW_{r,k+\frac{3}{2}}] \prod_{s= r\pm 1} [\CW_{s,k}].  
\end{equation}
\end{cor}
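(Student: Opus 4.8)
The plan is to specialize Theorem~\ref{thm: Baxter TQ} to the case $t=1$ and convert the resulting q-character identity into an identity in $K_0(\BGG)$. Recall that by Corollary~\ref{cor: injectivity} the q-character map $\qc\colon K_0(\BGG)\longrightarrow \CMt$ is an injective ring homomorphism, so it suffices to check that both sides of Eq.~\eqref{equ: TQ three term} have the same q-character. First I would set $a = k+\tfrac12$ in Eq.~\eqref{for: asym Demazure} with $t=1$; this gives
$$ \qc(D_{k,k+\frac12}^{(r,1)}) = \Bd_{k,k+\frac12}^{(r,1)}\bigl(1+A_{r,k+\frac12}^{-1}\bigr)\prod_{s=r\pm1}\qc(\CW_{0,k}^{(s)}), $$
since $x = a-k-\tfrac12 = 0$, so the factor $\prod_{s=r\pm1}\qc(\CW_{0,a-k-\frac12}^{(s)})$ is exactly $\prod_{s=r\pm1}\qc(\CW_{0,k}^{(s)})$ up to the spectral-parameter bookkeeping in $\CW_{s,x}=\CW_{x,0}^{(s)}$ versus $\CW_{0,x}^{(s)}$, which I must track carefully.

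The second step is purely combinatorial in $\CMw$: multiply the above by $\qc(\CW_{r,k+\frac12}) = \Bw_{k,k+\frac12}^{(r)}\,\qc(\CW_{0,k+\frac12}^{(r)})$ (using Eq.~\eqref{equ: sov}), and expand the left-hand side $\Bd_{k,k+\frac12}^{(r,1)}\,(1+A_{r,k+\frac12}^{-1})$ using the definitions in Eqs.~\eqref{equ: A Psi}, \eqref{equ: asymp e-weight}, \eqref{equ: Demazure e-weight}. Concretely, $\Bd_{k,k+\frac12}^{(r,1)} = \dfrac{\Psi_{r,k+\frac32}}{\Psi_{r,k+\frac12}}\prod_{s=r\pm1}\dfrac{\Psi_{s,k}}{\Psi_{s,-\frac12}}$, and $A_{r,k+\frac12}^{-1}$ acts by lowering a $\Psi_{r}$ index and raising the neighbours' $\Psi_{s}$ indices, so the two terms $1$ and $A_{r,k+\frac12}^{-1}$ produce exactly the two leading e-weights $\Bw_{r,k-\frac12}\prod_{s}\Bw_{s,k+1}$-type and $\Bw_{r,k+\frac32}\prod_{s}\Bw_{s,k}$-type monomials appearing as highest weights of the two tensor products $\CW_{r,k-\frac12}\wtimes(\wtimes_{s=r\pm1}\CW_{s,k+1})$ and $\CW_{r,k+\frac32}\wtimes(\wtimes_{s=r\pm1}\CW_{s,k})$ on the right-hand side of Eq.~\eqref{equ: TQ three term}. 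This matches the $S^0,S^1$ decomposition of $T:=D_{k,k+\frac12}^{(r,1)}\wtimes\CW_{k+\frac12,0}^{(r)}$ in the proof of Theorem~\ref{thm: Baxter TQ}: indeed $T$ has q-character $\qc(S^0)+\qc(S^1)$, and by Claim~1 there $S^0\cong\CW_{r,k+\frac32}\wtimes(\wtimes_{s=r\pm1}\CW_{s,k})$ and $S^1\cong\CW_{r,k-\frac12}\wtimes(\wtimes_{s=r\pm1}\CW_{s,k+1})$ (after matching spectral parameters). Hence $[T] = [D_{k,k+\frac12}^{(r,1)}][\CW_{r,k+\frac12}] = [S^0]+[S^1]$ in $K_0(\BGG)$, which is precisely Eq.~\eqref{equ: TQ three term} once one reads off $[S^0],[S^1]$ as the two tensor products using multiplicativity $[X\wtimes Y]=[X][Y]$ from Proposition~\ref{prop: monoidal O}.

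In fact the cleanest route avoids reproving anything: the proof of Theorem~\ref{thm: Baxter TQ} already establishes that $T:=D_{k,a}^{(r,t)}\wtimes(\wtimes_{j=1}^{t}\CW_{k+j-\frac12,x}^{(r)})$ has $\qc(T)=\sum_{l=0}^{t}\qc(S^{l})$ with each $S^{l}$ irreducible, so in $K_0(\BGG)$ one has $[T]=\sum_{l=0}^{t}[S^{l}]$. Setting $t=1$, $a=k+\tfrac12$, $x=0$, the single tensor factor is $\CW_{k+\frac12,0}^{(r)}=\CW_{r,k+\frac12}$, so $[T]=[D_{k,k+\frac12}^{(r,1)}][\CW_{r,k+\frac12}]$, and the two terms $[S^{0}],[S^{1}]$ are, by Claim~1 of that proof (and the definitions of $S^{l}$ with $t=1$, $l=0,1$), the tensor products $\CW_{k+\frac32,0}^{(r)}\wtimes(\wtimes_{s=r\pm1}\CW_{k,0}^{(s)})$ and $\CW_{k-\frac12,0}^{(r)}\wtimes(\wtimes_{s=r\pm1}\CW_{k+1,0}^{(s)})$ respectively; translating $\CW_{d,0}^{(s)}=\CW_{s,d}$ and using multiplicativity of $[\cdot]$ gives exactly the two summands on the right of Eq.~\eqref{equ: TQ three term}. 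The only real work is the indexing check — verifying that the spectral parameters $x=a+l-\tfrac12$ and $x=a+l-2$ appearing in the definition of $S^{l}$ for $t=1$ collapse correctly to $k\pm\tfrac12$, $k$, $k+1$ — which I expect to be the main (though routine) obstacle, together with keeping straight the two conventions $\CW_{d,x}^{(r)}$ versus $\CW_{r,x}=\CW_{x,0}^{(r)}$ under the spectral shift $\Phi_{x-a}$ from Eq.~\eqref{def: spectral shift}.
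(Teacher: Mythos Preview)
Your ``cleanest route'' is correct and is essentially the paper's argument: both extract the identity $[T]=\sum_{l=0}^{t}[S^{l}]$ (equivalently $\qc(T)=\sum_{l}\qc(S^{l})$) proved inside Theorem~\ref{thm: Baxter TQ} and specialize to $t=1$, $a=k+\tfrac12$, $x=0$, then use $\CW_{d,0}^{(s)}=\CW_{s,d}$ and multiplicativity of $[\,\cdot\,]$. The paper packages this slightly differently by first recording the more general three-term recursion Eq.~\eqref{equ: asymptotic Baxter} (with a free parameter $b$) as a direct consequence of Eq.~\eqref{for: asym Demazure}, and then specializing $(t,a,b)=(1,k+\tfrac12,0)$ together with the decomposition $D_{k',a'}^{(r,0)}\cong\wtimes_{s=r\pm1}\CW_{k',a'-k'-\frac12}^{(s)}$ from Theorem~\ref{thm: Baxter TQ}; this buys a slightly stronger statement (a generic analog of Eq.~\eqref{equ: Demazure T-system}) but is not needed for the corollary itself.

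One small slip in your first paragraph: with $a=k+\tfrac12$ you have $a-k-\tfrac12=0$, so the factor is $\prod_{s=r\pm1}\qc(\CW_{0,0}^{(s)})$, not $\prod_{s=r\pm1}\qc(\CW_{0,k}^{(s)})$. Your second (preferred) route avoids this and gets the indices right, so this does not affect the argument.
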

\begin{proof}
From Eq.\eqref{for: asym Demazure} and the injectivity of the q-character map we obtain
\begin{equation}  \label{equ: asymptotic Baxter}
 [D_{k,a}^{(r,t)}][\CW_{a-b+t-1,b}^{(r)}] = [D_{k+t,a+t}^{(r,0)}][\CW_{a-b-1,b}^{(r)}] + [D_{k,a}^{(r,t-1)}][\CW_{a-b+t,b}^{(r)}]
\end{equation}
for $a,b \in \BC$ and $t \in \BZ_{>0}$. Eq.\eqref{equ: TQ three term} is the special case $(t,a,b) = (1,k+\frac{1}{2},0)$ of this identity in view of the tensor product decomposition of $D_{k,a}^{(r,0)}$ in Theorem \ref{thm: Baxter TQ}.
\end{proof}
 Eq.\eqref{equ: asymptotic Baxter} can be viewed as a generic version of Eq.\eqref{equ: Demazure T-system}.

\section{Transfer matrices and Baxter operators} \label{sec: Q}
We have obtained three types of identities Eq.\eqref{equ: sov}, \eqref{equ: generalized TQ}, and \eqref{equ: TQ three term} in the Grothendieck ring $K_0(\BGG)$. These are viewed as universal functional relations \cite{BLZ2,BLZ3,BT} in the sense that when specialized to quantum integrable systems they imply functional relations of transfer matrices. In this section, we study one such example, with the quantum space being a tensor product of vector representations \cite{HSY}. 

Fix $\ell := N\kappa$ with $\kappa \in \BZ_{>0}$ and  $a_1,a_2,\cdots, a_{\ell} \in \BC \setminus \Gamma$. Set $I := \{1,2,\cdots,N\}$.  Let $I_0^{\ell}$ be the subset of $I^{\ell}$ formed of $\underline{i}$ such that $\epsilon_{i_1} + \epsilon_{i_2} + \cdots + \epsilon_{i_{\ell}} = 0 \in \Hlie$. Upon identification $\underline{i} := v_{i_1} \wtimes v_{i_2} \wtimes \cdots \wtimes v_{i_{\ell}}$, the weight space $\BV^{\wtimes \ell}[0]$ has basis $I_0^{\ell}$. 

Let $\BD_p$ be the set of formal sums $\sum_{\alpha \in \Hlie} p^{\alpha} T_{\alpha}  f_{\alpha}(z;\lambda)$ such that: the $f_{\alpha}(z;\lambda)$ are meromorphic functions of $(z,\lambda) \in \BC \times \Hlie$;  the set $\{\alpha: f_{\alpha} \neq 0 \}$ is contained in a finite union of cones $\nu + \BQ_-$ with $\nu \in \Hlie$. Make $\BD_p$ into a ring: addition is the usual one of formal sums; multiplication is induced from 
\begin{equation} \label{equ: difference ring}
p^{\alpha} T_{\alpha} f(z;\lambda)  \times p^{\beta} T_{\beta} g(z;\lambda) = p^{\alpha+\beta} T_{\alpha+\beta} f(z;\lambda+\hbar \beta) g(z;\lambda). 
\end{equation}
As in \cite{FV2,FZ}, we construct a ring morphism $[X] \mapsto t_X(z)$ from $K_0(\BGG)$ to the ring $\mathrm{M}(I_0^{\ell};\BD_p)$ of $I_0^{\ell} \times I_0^{\ell}$ matrices with coefficients in $\BD_p$. (We think of $\mathrm{M}(I_0^{\ell};\BD_p)$ as a ring of formal difference operators on $\BV^{\wtimes \ell}[0]$.)

Let $X$ be an object of category $\BGG$. To $\underline{i}, \underline{j} \in I_0^{\ell}$ we associate
$$ L_{\underline{i}\underline{j}}^X(z) := L_{i_1j_1}^X(z+a_1) L_{i_2j_2}^X(z+a_2) \cdots L_{i_{\ell}j_{\ell}}^X(z+a_{\ell}) \in (D_X)_{0,0}. $$ 
Since $(D_X)_{0,0} \subseteq \End_{\BM}(X)$, one can take trace of $L_{\underline{i}\underline{j}}^X(z)$ over weight spaces of $X$.

\begin{defi} \label{def: transfer matrix}
The {\it transfer matrix} associated to an object $X$ in category $\BGG$ is the matrix $t_X(z) \in \mathrm{M}(I_0^{\ell};\BD_p)$ whose $(\underline{i},\underline{j})$-th entry for $\underline{i}, \underline{j} \in I_0^{\ell}$ is 
$$  \sum_{\alpha\in \wt(X)} p^{\alpha} T_{\alpha} \times  \mathrm{Tr}_{X[\alpha]}\left(L_{\underline{i}\underline{j}}^X(z)|_{X[\alpha]}\right) \in \BD_p.  $$
\end{defi}

Almost all of the results and comments in \cite[\S 5]{FZ} hold true after slight modification in our present situation. In the following, we focus on the modification of these results, referring to \cite{FZ} for their proofs.

We remark that $t_{\BV}(z)|_{p=1}$ can be identified with the transfer matrix $T(z)$ in \cite[Eq.(2.22)]{HSY} where the $E_{\tau,\eta}(sl_n)$-module $W$ is $V_{\Lambda_1}(a_1) \otimes V_{\Lambda_1}(a_2) \otimes \cdots \otimes V_{\Lambda_1}(a_{\ell})$.

The transfer matrix associated to the one-dimensional module of highest weight $g(z) \in \BM_{\BC}^{\times}$ is the scalar matrix $\prod_{i=1}^{\ell} g(z+a_i)$.

For $1\leq r \leq N$ and $x \in \BC$, consider the $\CE$-module $\CW_{r,x}' := \CW_{r,x} \wtimes S(\theta(z-\ell_r\hbar))$ in category $\BGG$. By Lemma \ref{lem: asymptotic property}, the matrix entries of the difference operators $L_{ij}(z)$ for $1\leq i,j \leq N$, with respect to any basis of $\CW_{r,x}'$, are entire functions of $z \in \BC$.
\begin{defi} \label{definition: Baxter Q operator}
The  $r$-th {\it Baxter Q-operator} for $1\leq r \leq N$ is defined to be 
\begin{equation}  \label{equ: Q operators}
Q_r(u) := t_{\CW_{r,u\hbar^{-1}}'}(z)|_{z=0} \quad \mathrm{for}\ u \in \BC.
\end{equation}
\end{defi} 
Since $\CW_{N,x}' = S(\theta(z+(x+\frac{1}{2})\hbar))$ is one-dimensional, $Q_N(z) = \prod_{i=1}^{\ell} \theta(z+a_i+\frac{1}{2}\hbar)$.

Let $1\leq r < N$. Then $Q_r(z) = p^{z\hbar^{-1} \varpi_r} T_{z\hbar^{-1}\varpi_r} \widetilde{Q}(z)$ and $\widetilde{Q}(z)$ is a power series in the $p^{-\alpha_i}T_{-\alpha_i}$ for $1\leq i < N$. The leading term $\widetilde{Q}_0(z)$ of $\widetilde{Q}(z)$ is invertible. Indeed $\widetilde{Q}_0(0)$ is the scalar matrix $\prod_{j=1}^{\ell}\theta(a_j) \in \mathrm{M}(I_{\ell};\BC)$, which is invertible because $\theta(a_j) \neq 0$ by assumption. (One can prove furthermore that with respect to certain order on $I_0^{\ell}$, the matrix $\widetilde{Q}_0(z)$ is upper triangular, whose entries are meromorphic functions of $(z;\lambda) \in \BC \times \Hlie$ and entire on $z$.) Therefore $Q_r(z) \in \mathrm{GL}(I_0^{\ell};\BD_p)$.

Similarly one can show that $t_{\CW_{r,x}'}(z)$ is invertible for $x \in \BC$.

\begin{prop} \label{prop: transfer matrices}
Let $X,Y $ be in category $\BGG$ and let $x,u \in \BC$. 
\begin{itemize}
\item[(i)] $t_{\Psi_u^*X}(z) = t_X(z+u\hbar)$.
\item[(ii)] $t_X(z) t_Y(z) = t_{X\wtimes Y}(z)$.
\item[(iii)] $t_{\CW_{r,x}}(z) t_{\CW_{r,0}}(z+u\hbar) = t_{\CW_{r,x-u}}(z+u\hbar) t_{\CW_{r,u}}(z)$. 
\item[(iv)] $t_X(z)t_Y(w) = t_Y(w)t_X(z)$.
\end{itemize}
\end{prop}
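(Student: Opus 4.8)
Proposition \ref{prop: transfer matrices} collects the standard "functional relations of transfer matrices come from the Grothendieck ring" statements, so the plan is to deduce each item from the identities already established in $K_0(\BGG)$ together with the compatibility of the trace construction with the relevant algebraic operations.

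\textbf{Strategy.} The key point is that $X \mapsto t_X(z)$ is defined via the coproduct $\Delta$ and the weight-graded trace, and one wants to show it descends to the Grothendieck ring and converts $\wtimes$ into the (noncommutative, by Eq.\eqref{equ: difference ring}) product of $\mathrm{M}(I_0^{\ell};\BD_p)$. First I would record the basic observations: (a) for a short exact sequence $0 \to X' \to X \to X'' \to 0$ in category $\BGG$ the difference operators $L_{\underline{i}\underline{j}}^X(z)$ are block-triangular with respect to the sub-object, so $\mathrm{Tr}_{X[\alpha]} = \mathrm{Tr}_{X'[\alpha]} + \mathrm{Tr}_{X''[\alpha]}$, hence $t_X(z) = t_{X'}(z) + t_{X''}(z)$ and $[X] \mapsto t_X(z)$ is a well-defined additive map $K_0(\BGG) \to \mathrm{M}(I_0^{\ell};\BD_p)$ (the completeness of $K_0(\BGG)$ causes no trouble since for each fixed $\underline{i},\underline{j},\alpha$ only finitely many summands contribute, by Definition \ref{def: O} and the cone condition); (b) part (i) is immediate from the definition of $\Phi_u$ in Eq.\eqref{def: spectral shift}, since $L_{ij}^{\Phi_u^*X}(z) = L_{ij}^X(z+u\hbar)$ and the weight grading is unchanged.

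\textbf{Core steps.} Part (ii) is the heart: using $L_{ij}^{X\wtimes Y}(z) = \sum_k L_{ik}^X(z)\wtimes L_{kj}^Y(z)$ and iterating along the $\ell$ tensor slots, one expands $L_{\underline{i}\underline{j}}^{X\wtimes Y}(z) = \sum_{\underline{k}\in I^{\ell}} L_{\underline{i}\underline{k}}^X(z)\wtimes L_{\underline{k}\underline{j}}^Y(z)$; since $\underline{i},\underline{j}\in I_0^{\ell}$ forces the intermediate $\underline{k}$ to range over $I_0^{\ell}$ as well (weights must match for the trace over $(X\wtimes Y)[\gamma] = \bigoplus_{\alpha+\beta=\gamma} X[\alpha]\wtimes Y[\beta]$ to be nonzero), the trace over $(X\wtimes Y)[\gamma]$ factors as a product of traces over $X[\alpha]$ and $Y[\beta]$; matching this against Eq.\eqref{equ: difference ring} — paying attention to the shift $g(z;\lambda) \mapsto g(z;\lambda+\hbar\beta)$ coming precisely from the definition of the $\BM$-action on $X[\alpha]\wtimes Y[\beta]$ and the fact that $L_{\underline{i}\underline{k}}^X$ acts with the dynamical shift by $\mathrm{wt}$ of the $Y$-factor — gives $t_{X\wtimes Y}(z) = t_X(z) t_Y(z)$ in the stated order. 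Then (iii) is just part (ii) combined with (i) applied to the isomorphism $[\CW_{d,0}^{(r)}][\CW_{0,x}^{(r)}] = [\CW_{d-x,x}^{(r)}][\CW_{x,0}^{(r)}]$ of Eq.\eqref{equ: sov} (after the spectral shift $\Phi_u$, i.e. $z \mapsto z+u\hbar$, and using $\CW_{s,x} = \CW_{x,0}^{(s)}$). Finally (iv) follows because $K_0(\BGG)$ is commutative by Corollary \ref{cor: injectivity}, so $[X\wtimes Y] = [Y\wtimes X]$, hence $t_X(z)t_Y(z) = t_{X\wtimes Y}(z) = t_{Y\wtimes X}(z) = t_Y(z)t_X(z)$; to get the two different spectral parameters $z,w$ one first tensors with one-dimensional modules / applies $\Phi$ to shift, or more directly notes that the inhomogeneities $a_1,\dots,a_\ell$ can be absorbed so that the commutation with independent spectral parameters reduces to the equal-parameter case by analytic continuation in the shifts (equivalently, run the (ii)-computation for $X$ at parameter $z$ and $Y$ at parameter $w$ with $\underline{k}\in I_0^\ell$ and observe the product is symmetric in $(X,z)\leftrightarrow(Y,w)$ using commutativity of the Grothendieck ring).

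\textbf{Main obstacle.} The delicate point is (ii): getting the dynamical shift bookkeeping exactly right, i.e. verifying that the trace of $L_{\underline{i}\underline{k}}^X(z)\wtimes L_{\underline{k}\underline{j}}^Y(z)$ over $(X\wtimes Y)[\gamma]$ really produces the $p^{\alpha+\beta}T_{\alpha+\beta}$ with the argument shift $\lambda \mapsto \lambda+\hbar\beta$ in the $X$-part and not in the $Y$-part, matching Eq.\eqref{equ: difference ring}; this is exactly where the asymmetry of $\dt$ and the definition of the moment maps on $\BD^X\dt\BD^Y$ from Section \ref{ss-h algebras} enter, and it is the reason the product in $\mathrm{M}(I_0^\ell;\BD_p)$ is noncommutative while the one on $K_0$ is not. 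I expect the argument to be essentially identical to \cite[\S 5]{FZ}, so beyond carefully transplanting that computation with the elliptic $\BR$ and the $\BQ_-$-cone finiteness (which guarantees all the relevant sums and traces converge slot-by-slot), there should be no new difficulty; the proof will therefore mostly consist of pointing to \cite{FZ} and indicating the minor modifications (the use of $\CW_{r,x}$ in place of the $L_{r,a}^\pm$, and Corollary \ref{cor: highest weight in O} for the cone condition).
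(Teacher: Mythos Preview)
Your proposal is correct and follows essentially the same approach as the paper, which itself is terse and refers to \cite[\S 5]{FZ} for (i)--(ii) and to \cite[Theorem 5.3]{FH} for (iv), deriving (iii) by applying the ring homomorphism $[X]\mapsto t_X(z)$ to Eq.\eqref{equ: sov}. One small cleanup for (iv): rather than invoking analytic continuation or absorbing inhomogeneities, simply replace $Y$ by $\Phi_{(w-z)\hbar^{-1}}^*Y$ and apply your equal-parameter argument together with (i), which immediately yields $t_X(z)t_Y(w)=t_Y(w)t_X(z)$; this is exactly what the reference to \cite{FH} encodes.
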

In (iv), we replace one of the $z$ in Eq.\eqref{equ: difference ring} with $w$ to define the multiplication. It is proved as in \cite[Theorem 5.3]{FH}:  the commutativity of transfer matrices is a consequence of the commutativity of the Grothendieck ring $K_0(\BGG)$. The standard proof by using the Yang--Baxter equation \cite{Baxter72} would require braiding in category $\BGG$, whose existence is not clear.

 (ii) and the fact that $t_X(z)$ only depends on the isomorphism class $[X]$ of $X$ imply that $[X] \mapsto t_X(z)$ is a ring homomorphism $\mathrm{tr}_p: K_0(\BGG) \longrightarrow \mathrm{M}(I_0^{\ell};\BD_p)$. Applying $\mathrm{tr}_p$ to Eq.\eqref{equ: sov} we obtain (iii). Replace $(\CW, x,u,z)$ with $(\CW', z\hbar^{-1}+x,z\hbar^{-1},0)$ in (iii) and take the inverse of $Q_r(z)$ and $t_{\CW_{r,0}'}(z)$. We have
\begin{equation}  \label{equ: generalized TQ transfer}
\frac{t_{\CW_{r,x}}(z)}{t_{\CW_{r,0}}(z)} = \frac{t_{\CW_{r,x}'}(z)}{t_{\CW_{r,0}'}(z)} = \frac{Q_r(z+x\hbar)}{Q_r(z)}
\end{equation}
as in \cite[Theorem 5.6 (i)]{FZ}. Now applying $\mathrm{tr}_p$ to Eq.\eqref{equ: generalized TQ}, we obtain
\begin{cor} \label{cor: generalized TQ transfer}
Let $V$ be a finite-dimensional $\CE$-module in category $\BGG$. Then in Eq.\eqref{equ: generalized TQ} replacing $V, \ S(\Bd_j)$ and the $\frac{[\CW_{r,a}]}{[\CW_{r,b}]}$ with $t_V(z), \ t_{S(\Bd_j)}(z)$ and $\frac{Q_r(z+a\hbar)}{Q_r(z+b\hbar)}$ respectively, we obtain an identity in $\mathrm{M}(I_0^{\ell};\BD_p)$.  
\end{cor}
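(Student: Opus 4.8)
The plan is to apply the ring morphism $\mathrm{tr}_p\colon K_0(\BGG)\longrightarrow \mathrm{M}(I_0^{\ell};\BD_p)$ of Proposition~\ref{prop: transfer matrices}, after extending it to the fraction ring in which \eqref{equ: generalized TQ} lives, to the identity of Theorem~\ref{thm: generalized TQ}, and then to rewrite the images of the factors $\frac{[\CW_{r,a}]}{[\CW_{r,b}]}$ by means of \eqref{equ: generalized TQ transfer}.

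First I would observe that $\mathrm{tr}_p(K_0(\BGG))$ is a \emph{commutative} subring of $\mathrm{M}(I_0^{\ell};\BD_p)$, since $K_0(\BGG)$ is commutative by Corollary~\ref{cor: injectivity} and $\mathrm{tr}_p$ is a ring morphism by Proposition~\ref{prop: transfer matrices}(ii) together with the fact that $t_X(z)$ depends only on $[X]$. Next, for $1\leq r<N$ and $x\in\BC$ the transfer matrix $t_{\CW_{r,x}}(z)$ is invertible in $\mathrm{M}(I_0^{\ell};\BD_p)$: indeed $t_{\CW_{r,x}'}(z)=t_{\CW_{r,x}}(z)\prod_{i=1}^{\ell}\theta(z+a_i-\ell_r\hbar)$ by Proposition~\ref{prop: transfer matrices}(ii) and the one-dimensional computation, the scalar factor is a non-zero meromorphic function and hence invertible in $\BD_p$, and $t_{\CW_{r,x}'}(z)$ is invertible as noted just before Proposition~\ref{prop: transfer matrices}. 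An element of the commutative subring $\mathrm{tr}_p(K_0(\BGG))$ that is invertible in the ambient ring has an inverse commuting with the whole subring, so the localization of $\mathrm{tr}_p(K_0(\BGG))$ at the $t_{\CW_{r,x}}(z)$ is an honest (commutative) subring of $\mathrm{M}(I_0^{\ell};\BD_p)$; by the universal property of localization $\mathrm{tr}_p$ therefore extends uniquely to the fraction ring of $K_0(\BGG)$ at the multiplicative set generated by the $[\CW_{r,x}]$.

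Applying this extended $\mathrm{tr}_p$ to \eqref{equ: generalized TQ} yields $t_V(z)=\sum_{j=1}^{\dim V}t_{S(\Bd_j)}(z)\,\mathrm{tr}_p(\Bm_j)$ in $\mathrm{M}(I_0^{\ell};\BD_p)$. Since each $\Bm_j$ is a product of factors $\frac{[\CW_{r,x}]}{[\CW_{r,y}]}$, the element $\mathrm{tr}_p(\Bm_j)$ is the corresponding product of the $\frac{t_{\CW_{r,x}}(z)}{t_{\CW_{r,y}}(z)}$. Writing each such factor as $\frac{t_{\CW_{r,x}}(z)}{t_{\CW_{r,0}}(z)}\cdot\frac{t_{\CW_{r,0}}(z)}{t_{\CW_{r,y}}(z)}$ and invoking \eqref{equ: generalized TQ transfer}, which reads $\frac{t_{\CW_{r,x}}(z)}{t_{\CW_{r,0}}(z)}=\frac{Q_r(z+x\hbar)}{Q_r(z)}$, the $Q_r(z)$ cancels and the factor becomes $\frac{Q_r(z+x\hbar)}{Q_r(z+y\hbar)}$. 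Hence $\mathrm{tr}_p(\Bm_j)$ is obtained from $\Bm_j$ by the substitution sending $\frac{[\CW_{r,a}]}{[\CW_{r,b}]}$ to $\frac{Q_r(z+a\hbar)}{Q_r(z+b\hbar)}$, and (using that $t_{S(\Bd_j)}(z)$ is the scalar matrix attached to the one-dimensional module $S(\Bd_j)$) this is precisely the asserted identity in $\mathrm{M}(I_0^{\ell};\BD_p)$.

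The substantive inputs here --- Theorem~\ref{thm: generalized TQ}, the invertibility of the $Q_r(z)$ and of the $t_{\CW_{r,x}}(z)$, and Proposition~\ref{prop: transfer matrices} --- are already in place, so the argument is essentially bookkeeping. The only point demanding a little care, and the one I expect to be the main (mild) obstacle, is the passage to the fraction ring: one must make sure that $\mathrm{tr}_p$ genuinely extends and that the target localization is a commutative subring of $\mathrm{M}(I_0^{\ell};\BD_p)$, so that the value $\mathrm{tr}_p(\Bm_j)$ is unambiguous regardless of the order in which the factors of $\Bm_j$ are written.
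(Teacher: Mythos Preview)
Your proof is correct and follows essentially the same approach as the paper, which simply states ``applying $\mathrm{tr}_p$ to Eq.\eqref{equ: generalized TQ}'' after having established \eqref{equ: generalized TQ transfer}. You have filled in the one point the paper leaves implicit---the extension of $\mathrm{tr}_p$ to the fraction ring via commutativity of the image and invertibility of the $t_{\CW_{r,x}}(z)$---and your treatment of that point is accurate.
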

This forms the generalized Baxter relations for transfer matrices. If the prefundamental modules $L_{r,a}^+$ before Example \ref{exam: sl3} existed, then we would have defined alternatively the $r$-th Baxter operator $Q_r^{\mathrm{FH}}(z) = t_{L_{r,0}^+}(z)$ as a {\it real} transfer matrix \cite[\S 5.5]{FH} and so $\frac{Q_r(z+a\hbar)}{Q_r(z+b\hbar)} = \frac{Q_r^{\mathrm{FH}}(z+a\hbar)}{Q_r^{\mathrm{FH}}(z+b\hbar)}$ based on $\frac{[\CW_{r,a}]}{[\CW_{r,b}]} = \frac{[L_{r,a}^+]}{[L_{r,b}^+]}$.

As an illustration of the corollary, let us be in the situation of Example \ref{exam: sl3}:
$$t_{\BV}(z) = \frac{Q_1(z+\frac{3}{2}\hbar)}{Q_1(z+\frac{1}{2}\hbar)} +  \frac{Q_1(z-\frac{1}{2}\hbar)}{Q_1(z+\frac{1}{2}\hbar)}  \frac{Q_2(z+\hbar)}{Q_2(z)} +  \frac{Q_2(z-\hbar)}{Q_2(z)} \prod_{j=1}^{\ell}\frac{\theta(z+a_j+\hbar)}{\theta(z+a_j)} . $$

Apply $\mathrm{tr}_p$ to Eq.\eqref{equ: TQ three term}, divide both sides by the second term, and then perform a change of variable $z + (k+\frac{1}{2})\hbar \mapsto w$. By Eq.\eqref{equ: generalized TQ transfer} and Proposition \ref{prop: transfer matrices} (i):
\begin{equation}   \label{equ: TQ transfer matrices}
X_k^{(r)}(w) \frac{Q_r(w)}{Q_r(w-\hbar)} = 1 + \frac{Q_r(w+\hbar)}{Q_r(w-\hbar)} \prod_{s=r\pm 1} \frac{Q_s(w-\frac{1}{2}\hbar)}{Q_s(w+\frac{1}{2}\hbar)}.
\end{equation}
This forms three-term Baxter TQ relations for transfer matrices, where
$$ X_k^{(r)}(w) = t_{D_{k,0}^{(r,1)}}(w) \prod_{s=r\pm 1} t_{\CW_{s,k+1}}^{-1}(w-(k+\frac{1}{2})\hbar). $$
By Eq.\eqref{equ: TQ transfer matrices}, $X_k^{(r)}(z) \in \mathrm{M}(I_0^{\ell};\BD_p)$ is independent of the choice of generic $k \in \BC$.

In the homogeneous case $a_1 = a_2 = \cdots = a_{\ell} = a$, the entries of the matrix $\widetilde{Q}_r(z)$, as entire functions of $z$, in general do not satisfy the uniform double periodicity of \cite[Theorem 5.6(ii)]{FZ}. By ``uniform" we mean the multipliers with respect to $z + 1$ and $z + \tau$ only depend on $(a, z, \ell)$. This is because the transfer matrix construction in \cite{FZ} is based on a slightly different elliptic quantum group; see Footnote \ref{ft: different R}.  

We follow \cite[\S 5]{FH2} to derive the Bethe Ansatz equations from Eq.\eqref{equ: TQ transfer matrices}. Let $u$ be a zero of $Q_r(z)$. Suppose $X_k^{(r)}(z), Q_r(z-\hbar), Q_s(z+\frac{1}{2}\hbar)$ for $s \neq r\pm 1$ have no poles at $z = u$. (This is a genericity condition.) Then as in \cite[Eq.(5.16)]{FH2}:
\begin{equation}  \label{equ: BAE}
\frac{Q_r(u+\hbar)}{Q_r(u-\hbar)} \prod_{s=r\pm 1} \frac{Q_s(u-\frac{1}{2}\hbar)}{Q_s(u+\frac{1}{2}\hbar)} = -1.
\end{equation}
To compare with \cite{FH2}, we can assume furthermore that eigenvalues of $Q_r(z)$ are of the form  $p^{z \hbar^{-1} \varpi_r} \prod_{i=1}^{d_r} \theta(z - u_{r;i})$ based on \cite[Remark 5.8]{FZ}. Then 
$$ p^{\alpha_r} \prod_{i=1}^{d_r} \frac{\theta(u_{r;k}+\hbar - u_{r;i})}{\theta(u_{r;k}-\hbar - u_{r;i})} \prod_{s=r\pm 1} \prod_{j=1}^{d_s} \frac{\theta(u_{r;k}+\frac{1}{2}\hbar - u_{s;j})}{\theta(u_{r;k}+\frac{1}{2}\hbar - u_{s;j})} = -1\quad \mathrm{for}\ 1 \leq k \leq d_r.  $$
We remark that similar Bethe Ansatz equations for $\CE$ appeared in \cite[Eq.(3.45)]{HSY}.

For affine quantum groups and toroidal $\mathfrak{gl}_1$, the genericity condition of Bethe Ansatz equations has been dropped in \cite{Jimbo1,Jimbo2}.

\medskip

{\bf Acknowledgments.} The author thanks Giovanni Felder, David Hernandez, Bernard Leclerc, Marc Rosso and Vitaly Tarasov for fruitful discussions, and  the anonymous referees for their valuable comments and suggestions.  This work was supported by the National Center of Competence in Research SwissMAP---The Mathematics of Physics of the Swiss National Science Foundation, during the author's postdoctoral stay at ETH Z\"urich.

%\appendix
%\section{Gauge transformations of elliptic quantum groups}

\end{document}